\documentclass[10pt,aps,pra,reprint,superscriptaddress,nofootinbib,floatfix,longbibliography]{revtex4-2}
\usepackage[T1]{fontenc}
\usepackage[utf8]{inputenc}
\usepackage{lmodern}
\usepackage{amsmath,amssymb,amsthm,mathtools,mathrsfs}
\usepackage{braket}
\usepackage{bbm}
\usepackage{microtype}
\usepackage{graphicx,booktabs}
\usepackage{tikz}
\usetikzlibrary{arrows.meta,positioning,decorations.pathreplacing}
\usepackage[colorlinks=true,linkcolor=blue,citecolor=blue,urlcolor=blue]{hyperref}
\DeclareMathOperator{\Tr}{Tr}
\DeclareMathOperator{\meas}{meas}
\newcommand{\Id}{\mathbbm{1}}
\newcommand{\Dt}{D_{\mathrm{tr}}}
\newcommand{\calK}{\mathcal K}
\newcommand{\calL}{\mathcal L}
\newcommand{\calV}{\mathcal V}
\newcommand{\norm}[1]{\left\lVert#1\right\rVert}
\newcommand{\abs}[1]{\left\lvert#1\right\rvert}
\newcommand{\Ewin}{\mathscr E}
\newcommand{\Gwin}{\mathscr G}

\usepackage{bm}
\newtheorem{theorem}{Theorem}
\newtheorem{lemma}[theorem]{Lemma}
\newtheorem{proposition}[theorem]{Proposition}
\newtheorem{corollary}[theorem]{Corollary}
\newtheorem{definition}[theorem]{Definition}
\newtheorem{assumption}[theorem]{Assumption}
\newtheorem{remark}[theorem]{Remark}

\DeclareMathOperator{\Ran}{Ran}

\newcommand{\calB}{\mathcal{B}}
\newcommand{\av}[1]{\overline{#1}}
\hypersetup{
  pdftitle={Finite-time quantum equilibration from local correlations},
  pdfauthor={Piotr Cwiklinski},
  pdfsubject={Main manuscript and Supplementary Information}
}
\begin{document}
\title{Finite-time quantum equilibration from local correlations}
\author{Piotr {\'C}wikli{\'n}ski}
\email{piotr.cwiklinski@ug.edu.pl}
\affiliation{International Centre for Theory of Quantum Technologies, University of Gda{\'n}sk, ul. prof. Marii Janion 7, 80-309 Gda\'nsk, Poland}

\begin{abstract}
An isolated quantum system evolves reversibly, yet a small subsystem
can lose nearly all information about its initial state. Conserved
quantities may retain some of this information, depending on how
strongly they overlap with local observables. Here we consider a
subsystem in an arbitrary initial state coupled to a maximally mixed
bath. We show that the worst-input mean-square equilibration error,
maximised over window origins, is bounded from above and below by
the mean square of a complete local autocorrelation, with the
stationary part subtracted. The two mean-square quantities are linearly comparable. A correlation record over a finite interval
therefore controls the equilibration error in every interval of the
same length, and the result extends to nonnegative observation
weights with nonnegative Fourier transform. The result allows
arbitrary degeneracies of energies and gaps. We apply it to two
static Hamiltonian families. For fixed spin subsystems in the
one-particle sector of a bounded-range periodic clock,
fixed-accuracy thermalisation is guaranteed on windows that grow
only slightly faster than linearly with system size,
despite exponentially small positive energies. A random commuting
model with logarithmic interaction range retains an extensive set
of conserved observables. Its expected mean-square error is bounded
by a finite-size term and an inverse-time term. The inverse-time
dependence is unavoidable because of the initial transient.
These examples show how local relaxation can coexist with exact
conservation laws and a finely spaced spectrum.
\end{abstract}
\maketitle

\section{Introduction}

It is usually assumed that, if left to itself, a piece of matter
will sooner or later reach equilibrium. Everyday experience makes
this seem quite natural: a hot cup of tea cools down, and a spinning
top eventually comes to rest. In both cases, the object exchanges
energy with its surroundings. If we include those surroundings in
our description, however, the underlying dynamics is reversible.
The difficulty is to understand how this reversible dynamics gives
rise to the steady behaviour seen in ordinary measurements.

In classical mechanics, one way to connect these descriptions is
to assume that the dynamics is ergodic on the relevant energy
surface. For almost every initial condition, time averages then
agree with equilibrium ensemble averages.\cite{Birkhoff1931}
But this is a statement about long times. It does not, by itself,
tell us how long a particular observable takes to relax.

For an isolated quantum system with a time-independent Hamiltonian,
the state evolves unitarily. Its populations in the energy basis
remain fixed, while coherences between distinct energies acquire
time-dependent phases. In finite dimension, this motion is
generally quasiperiodic and allows recurrences, so a nonstationary
state does not simply converge to a fixed equilibrium state.
A small subsystem can nevertheless behave quite differently from
the whole. Its observables may remain close to steady values for
most times, even though information about the preparation is still
present in the full state. How, then, does that information become
inaccessible to local measurements?\cite{GogolinEisert2016}

This question goes back at least to von Neumann's quantum ergodic
theorem,\cite{vonNeumann1929} but it is also accessible to experiment.
Ultracold-atom experiments have followed the relaxation of local
observables in nearly isolated systems,\cite{Trotzky2012} and have
observed local thermal behaviour while the complete quantum state
remains measurably pure.\cite{Kaufman2016}
They give a direct reason to ask both whether a subsystem
equilibrates and how long this takes.

General results show that, under suitable conditions, subsystems
and macroscopic observables spend most of their time close to their
temporal averages. The estimates depend on the initial energy
populations and on spectral resonances.\cite{Reimann2008,
LindenPopescuShortWinter2009,Short2011,
GoldsteinLebowitzMastrodonatoTumulkaZanghi2010}
This is equilibration. To conclude that the resulting state is
thermal requires further information about the Hamiltonian and the
preparation.\cite{GogolinEisert2016}
The eigenstate thermalisation hypothesis (ETH),
including its subsystem form, provides one way to obtain that
information.\cite{RigolDunjkoOlshanii2008,
DAlessioKafriPolkovnikovRigol2016,DymarskyLashkariLiu2018}
Nonintegrability alone does not ensure
thermalisation.\cite{Rigol2009,GogolinMuellerEisert2011}

The time needed to equilibrate remains a separate question.
Finite-time bounds can be obtained from the concentration of energy
gaps,\cite{ShortFarrelly2012} while observable-dependent estimates
show why the smallest global gap need not determine the relaxation
time.\cite{GarciaPintosLindenMalabarbaShortWinter2017}
Rapid equilibration has also been proved for typical nonequilibrium
subspaces and observables,\cite{GoldsteinHaraTasaki2013,
GoldsteinHaraTasaki2015,MalabarbaGarciaPintosLindenFarrellyShort2014}
and for random Hamiltonians or sufficiently complex
eigenvectors.\cite{MasanesRoncagliaAcin2013,Cramer2012,
VinayakZnidaric2012,Reimann2016}
Such random models need not have Markovian reduced
dynamics.\cite{ZnidaricPinedaGarciaMata2011}
In our earlier work, a Haar-random eigenbasis led to relaxation
times governed by averaged spectral
information.\cite{BrandaoCwiklinskiHorodecki2012}
The assumptions on preparations, observables, and ensembles matter
when applying these results to a physical system.

For local Hamiltonians, direct proofs are available in particular
lattice models.\cite{CramerDawsonEisertOsborne2008,CramerEisert2010}
Hara and Koike, for example, obtain an optimal equilibration time
proportional to system length for coarse-grained densities in a
class of free-fermion systems.\cite{HaraKoike2026}
Other results treat local disturbances, correlation functions,
typical low-entanglement preparations, and translation-invariant
systems at high temperature.\cite{FarrellyBrandaoCramer2017,
AlhambraRiddellGarciaPintos2020,BertoniWassnerGuarnieriEisert2025,
PilatowskyChoi2026}

Recent work by Vikram is directly related to the question considered
here. Refs.~\cite{Vikram2026,VikramNoAverages2026} show that
finite-time autocorrelation data, together with positive-Fourier time
weights, can constrain thermalisation in later or shifted observation
intervals. This framework already includes a subsystem coupled to a
statistical bath and a cloned construction for thermal equilibrium.
Ref.~\cite{VikramAperiodicity2026} gives a related description in
terms of subspace return probabilities and aperiodicity. Thus
finite-time forecasting from autocorrelations, positive-Fourier
weights, and subspace-return arguments are not new ingredients of the
present work.

Here we ask how much a fixed subsystem can remember when its initial
state is arbitrary. Relaxation of one selected observable is not
sufficient, because the information may remain in another local
observable. We therefore average the autocorrelation over a complete
traceless local operator basis. Its zero-frequency part measures the
overlap with exactly conserved operators, while the remaining part
measures the dynamical return of local information.

We consider the preparation \(\rho_S\otimes\Id_B/m\), where
\(\rho_S\) is arbitrary and the bath is maximally mixed and initially
uncorrelated with the subsystem. We show that the mean square of the
stationary-subtracted autocorrelation on \(0\le t\le T\) is linearly
comparable to the worst-window, worst-input equilibration error. The
input supremum is taken at each time, before the time average. The
upper estimate therefore applies separately to every translated
interval of length \(T\). The lower estimate uses the initial interval,
which is included in the supremum over window origins. The constants
depend only on the subsystem dimension, and no nondegeneracy of
energies or gaps is required.

The new step is to use the complete local operator space. The
Liouvillian spectral projectors compressed to this space are positive,
so the quadratic return of the same space can be compared directly
with the squared complete-local autocorrelation. This gives a linear
two-sided comparison at the same observation time, instead of the
generic Cauchy--Schwarz square root that appears for two different
operator directions. In Liouville space, the return is a particular
subspace-return probability. For the unital reduced dynamics
considered here, it also determines the purity of the normalized Choi
state, a quantity recently studied under the name Choi echo.
\cite{deLeonGonzalezDiazMejia2026} Channel formulations of
equilibration for generalized subsystems have also been developed in a
complementary infinite-time setting through effective-dimension
bounds.\cite{CorreiaVariziCarvalho2026}

The comparison itself is deterministic; spectral estimates enter
only when it is applied to a particular model. It keeps the spectral
weights seen by local observables, so the smallest global spacing
alone does not set the relaxation bound. This is important because
near-resonances of energy differences need not follow the level
statistics of the Hamiltonian.\cite{RiddellPagliaroli2024}
If both the stationary memory and the dynamical return are small, the
subsystem is close, in the stated time-averaged sense, to the maximally
mixed state. Figure~\ref{fig:information} shows the preparation and the
two contributions to the local error.

We illustrate the result with two static Hamiltonian families.
In the one-particle sector of a bounded-range periodic clock, a
fixed spin subsystem thermalises on polynomial observation windows,
even though the spectrum contains exponentially small positive
energies. A randomly dressed commuting-patch model retains an
extensive family of conserved observables and still relaxes
locally. Its interaction range grows logarithmically with system
size, and evolved local operators remain within neighborhoods of
that size. The expected mean-square error is bounded by a finite-size
term and an inverse-window term. These examples show how local
relaxation can coexist with exact conservation laws or a finely
spaced spectrum, and why their effect must be assessed through the
observables available on the subsystem.

\begin{figure*}[t]
\centering

\begin{tikzpicture}[x=1cm,y=1cm,>=Latex,
  every node/.style={font=\sffamily\fontsize{8}{10}\selectfont},
  line width=1pt]
\node[anchor=west,font=\sffamily\bfseries\small] at (0,4.25)
 {a\quad Local preparation, isolated evolution};
\foreach \x in {0.5,1.35,2.2,3.05,3.9,4.75,5.6,6.45}{
 \draw (\x,3.45) circle (0.13);
}
\foreach \x in {0.5,1.35,2.2,3.05,3.9,4.75,5.6}{
 \draw (\x+0.13,3.45)--(\x+0.72,3.45);
}
\draw[dashed,rounded corners] (0.16,3.1) rectangle (1.7,3.8);
\node[anchor=north,align=center] at (0.95,3.02) {arbitrary input\\$\rho_S$};
\node[anchor=north,align=center] at (4.42,3.02)
 {maximally mixed bath\\$I_B/m$};
\draw[->] (7.0,3.45)--(8.0,3.45) node[midway,above] {$e^{-itH}$};
\node[draw,rounded corners,align=center,minimum width=4.65cm,
minimum height=0.9cm] at (10.65,3.45)
 {Information remains in the full system\\but may be inaccessible on $S$};

\node[anchor=west,font=\sffamily\bfseries\small] at (0,1.8)
 {b\quad Stationary memory and dynamical return};
\node[draw,rounded corners,align=center,text width=3.7cm,
minimum height=1.0cm] (mem) at (2.0,0.86)
 {Conserved local memory\\$\pi_{0,S}$};
\node[draw,rounded corners,align=center,text width=3.7cm,
minimum height=1.0cm] (ret) at (6.35,0.86)
 {Nonstationary local\\autocorrelation\\$A_S(T)$};
\node[draw,rounded corners,align=center,text width=3.95cm,
minimum height=1.0cm] (err) at (11.2,0.86)
 {Worst local error\\$\sqrt{\mathscr E_H(T)}\lesssim_S\pi_{0,S}+\sqrt{A_S(T)}$};
\draw[->] (mem.south)--++(0,-0.37)-|(err.south);
\draw[->] (ret.east)--(err.west);

\node[anchor=west,font=\sffamily\bfseries\small] at (0,-0.80)
 {c\quad A finite correlation record controls every translated window};
\draw[->] (0.2,-2.25)--(13.25,-2.25) node[right] {$t$};
\draw[line width=3pt] (0.45,-2.25)--(2.85,-2.25);
\node[anchor=south,align=center] at (1.65,-2.05)
 {correlation record\\$0\le t\le T$};
\node[anchor=north] at (0.45,-2.40) {$0$};
\node[anchor=north] at (2.85,-2.40) {$T$};
\draw[decorate,decoration={brace,amplitude=3pt}]
 (5.0,-2.09)--(7.4,-2.09);
\node[anchor=south,align=center] at (6.2,-1.93)
 {any window\\$[t_0,t_0+T]$};
\draw[decorate,decoration={brace,amplitude=3pt}]
 (9.8,-2.09)--(12.2,-2.09);
\node[anchor=south,align=center] at (11.0,-1.93)
 {same bound\\at later origins};
\end{tikzpicture}

\caption{
\textbf{Local preparation, conserved memory, and finite observation time.}
\textbf{a}, An arbitrary input on a fixed subsystem evolves under the
full Hamiltonian, with the remaining degrees of freedom initially
maximally mixed.  The global evolution remains reversible.
\textbf{b}, The complete local autocorrelation separates exact
stationary memory, \(\pi_{0,S}\), from nonzero-frequency return,
measured by \(A_S(T)\).  For fixed subsystem dimension, the latter is
linearly comparable to the worst-window equilibration error, while
stationary and dynamical contributions add at the level of
root-mean-square distinguishability.
\textbf{c}, Correlation information from \(0\le t\le T\), together
with the stationary component when it is subtracted, bounds the error
in every translated window of length \(T\).  The input supremum is
taken at each time, so the good times apply simultaneously to all
subsystem inputs.  The figure is schematic.
}
\label{fig:information}
\end{figure*}

\section{Results}

\subsection{Local preparation and equilibration}

Let \(\mathcal H=\mathcal H_S\otimes\mathcal H_B\), with
\(d=\dim\mathcal H_S\ge2\), \(m=\dim\mathcal H_B\), and \(D=dm\).
We allow arbitrary interactions between the subsystem and the bath.
Starting from a product state, the reduced dynamics is
\begin{equation}
 \rho_0=\rho_S\otimes\frac{\Id_B}{m},\qquad
 \Phi_t(\rho_S)=
 \Tr_B\!\left[e^{-itH}\rho_0 e^{itH}\right].
 \label{eq:prep}
\end{equation}
We set \(\hbar=1\). The global maximally mixed state is stationary,
so \(\tau_S=\Id_S/d\) is a fixed point of every \(\Phi_t\).
The subsystem dimension is held fixed as the bath grows.

To quantify the information remaining on \(S\), we take the largest
trace distance between \(\Phi_t(\rho_S)\) and \(\tau_S\) over all
subsystem inputs. A small value means that no initial state leaves
a large signal distinguishable by a measurement on \(S\).
Finite systems can recur, so we average the squared distance over
a window and take the supremum over its starting time:
\begin{equation}
 \Ewin_H(T)=
 \sup_{t_0\ge0}\frac1T\int_{t_0}^{t_0+T}
 \sup_{\rho_S}\Dt\!\left(\Phi_t(\rho_S),\tau_S\right)^2dt.
 \label{eq:E}
\end{equation}
The input supremum is inside the integral. Thus, in every window,
the fraction of times at which \emph{any} input has error at least
\(\eta\) is at most \(\Ewin_H(T)/\eta^2\). The remaining times
work simultaneously for all local inputs.

For equilibration, the reference state is the temporal state
\(\overline\Phi(\rho_S)\), which may depend on the input.
We write \(\Gwin_H(T)\) for Eq.~\eqref{eq:E} with this target.
Infinite-temperature local thermalisation additionally requires
\(\overline\Phi(\rho_S)\) to be close to \(\tau_S\).
All these statements concern the maximally mixed product-bath
preparation and measurements on \(S\) alone.

\subsection{Local autocorrelations and finite observation times}

Choose a Hermitian basis \(P_1,\ldots,P_q\) of traceless operators
on \(S\), with \(\Tr(P_aP_b)=d\delta_{ab}\) and \(q=d^2-1\).
For qubits, these can be the nonidentity Pauli strings. Set
\(O_a=P_a\otimes\Id_B\) and define
\begin{equation}
 \varphi_S(t)=\frac1q\sum_{a=1}^{q}
       \frac1D\Tr\!\left[O_a e^{-itH}O_a e^{itH}\right].
 \label{eq:phi}
\end{equation}
This is a single scalar quantity built from the complete local
operator space and is independent of the chosen orthonormal basis. For
a fixed subsystem, the number of correlators does not grow with the
bath. Physically, the complete basis matters because every difference
between two possible initial states of \(S\) lies in this traceless
operator space. If an evolved local operator has a component that
returns to the same space, some measurement on \(S\) can still retain
information about the preparation.

Let \(\pi_{0,S}\) be the infinite-time average of
\(\varphi_S(t)\).  It is the average weight of local operators in the
conserved operator subspace and is nonnegative.  Subtracting this
stationary part gives
\begin{equation}
 A_S(T)=\frac1{2T}\int_{-T}^{T}
       |\varphi_S(t)-\pi_{0,S}|^2dt.
 \label{eq:A}
\end{equation}
We also use the full mean square
\begin{equation}
 \mathcal A_S(T)=\frac1{2T}\int_{-T}^{T}
       |\varphi_S(t)|^2dt.
 \label{eq:A-full-main}
\end{equation}
The average over operators is taken before the modulus is squared.
Both integrands are even, so these quantities are also their averages
on \([0,T]\).  No monotonic decay of individual correlators is assumed.

With \(q=d^2-1\), our main rectangular-window result is
\begin{equation}
 \frac{d-1}{2d}A_S(T)
 \le \Gwin_H(T)
 \le (d-1)q^2 A_S(T).
 \label{eq:central}
\end{equation}
The same comparison holds for thermalisation to the maximally mixed
state when the stationary contribution is retained:
\begin{equation}
 \frac{d-1}{2d}\mathcal A_S(T)
 \le \Ewin_H(T)
 \le (d-1)q^2\mathcal A_S(T).
 \label{eq:central-full}
\end{equation}
For separating exact conserved memory from dynamical return we obtain
the sharper form
\begin{equation}
 \Ewin_H(T)
 \le
 (d-1)q^2
 \left[
 \sqrt{A_S(T)}+\frac{\pi_{0,S}}{2}
 \right]^2.
 \label{eq:central-separated}
\end{equation}
It follows by applying the trace-distance triangle inequality at the
level of root-mean-square errors. Thus conserved local memory and
dynamical return combine as distinguishability amplitudes, rather than
through a separate factor-two estimate for their squares. The two-sided
comparisons in Eqs.~\eqref{eq:central} and \eqref{eq:central-full}
concern the worst-window functionals. Their upper bounds hold
separately in every translated window of length \(T\). Their lower
bounds use the window beginning at zero, which is included in the
supremum over window origins. All constants depend only on the fixed
subsystem dimension, and the inequalities hold for every
finite-dimensional Hamiltonian and every \(T>0\).

The rectangular-window statement is one form of a more general
result. Let \(w(t)\) describe how observations are weighted in time and
assume
\[
\begin{aligned}
 w(t)&\ge0,\qquad
 \int_{\mathbb R}w(t)\,dt=1,\\
 \widehat w(\omega)&=\int_{\mathbb R}w(t)e^{it\omega}dt\in[0,\infty).
\end{aligned}
\]
For the stationary-subtracted correlation define
\[
 \mathcal C_w=
 \int_{\mathbb R}w(t)|\varphi_S(t)-\pi_{0,S}|^2dt,
\]
and let
\[
 \mathcal G_w=
 \sup_{c\in\mathbb R}\int_{\mathbb R}w(t-c)
 \sup_{\rho_S}
 \Dt\!\left(\Phi_t(\rho_S),\overline\Phi(\rho_S)\right)^2dt.
\]
Then
\begin{equation}
 \frac{d-1}{2d}\mathcal C_w
 \le \mathcal G_w
 \le \frac{(d-1)q^2}{4}\mathcal C_w.
 \label{eq:general-positive-kernel-main}
\end{equation}
The same statement holds for the full correlation and the distance
from \(\Id_S/d\). Taking a triangular weight gives the
rectangular-window upper estimate above. The lower estimate follows
instead from the pointwise lower comparison on the initial window.

The complete local average is stronger than a one-observable
autocorrelation. In Ref.~\cite{VikramNoAverages2026}, the
autocorrelation of one operator controls a different correlator. The
two operator directions need not coincide, so the proof uses
Cauchy--Schwarz and acquires a square root. Here both quantities concern
the return of the \emph{same complete local operator space}. If \(P_S\)
projects onto this space and \(Q_\omega\) is a Liouvillian spectral
projector, then
\[
 A_\omega=P_SQ_\omega P_S\ge0,
\]
and
\[
 0\le\Tr(A_\omega A_\nu)
 \le\Tr(A_\omega)\Tr(A_\nu).
\]
This coefficientwise positivity gives the direct comparison of the two
quadratic quantities. In the reverse direction, averaging over pure
subsystem inputs gives the Frobenius norm of the local return matrix.
Methods and Supplementary Note~S2 contain both calculations.

The trace-distance result concerns measurements on the subsystem
itself. We also obtain a diamond-norm corollary, which allows the
subsystem to be initially entangled with an arbitrary external
reference. Let
\[
 \mathcal D(X)=\Tr(X)\frac{\Id_S}{d},
 \qquad
 \Delta_\diamond(\Phi,\Psi)
 =\frac12\|\Phi-\Psi\|_\diamond,
\]
and define
\begin{align}
 \Gwin_H^\diamond(T)
 &=\sup_{t_0\ge0}\frac1T\int_{t_0}^{t_0+T}
   \Delta_\diamond(\Phi_t,\overline\Phi)^2dt,\\
 \Ewin_H^\diamond(T)
 &=\sup_{t_0\ge0}\frac1T\int_{t_0}^{t_0+T}
   \Delta_\diamond(\Phi_t,\mathcal D)^2dt.
\end{align}
Then
\begin{equation}
 \begin{aligned}
 \frac{d-1}{2d}A_S(T)
 &\le \Gwin_H^\diamond(T)
 \le d^2q^2 A_S(T),\\
 \frac{d-1}{2d}\mathcal A_S(T)
 &\le \Ewin_H^\diamond(T)
 \le d^2q^2 \mathcal A_S(T).
 \end{aligned}
 \label{eq:diamond-main}
\end{equation}
This is an additional, stronger operational corollary, not a
replacement for the trace-distance statement. The diamond distance
allows an arbitrary reference system, including one initially
entangled with \(S\).\cite{ChenDingHuang2022,Watrous2018} The norm
comparison used below is the upper inequality in Eq.~(2.2) of
Ref.~\cite{ChenDingHuang2022}. The larger dimension-dependent constant
is the cost of this stronger test. For the target \(\mathcal D\), small
\(\Ewin_H^\diamond(T)\) means that the subsystem is approximately
decoupled from the reference. Methods and Supplementary Note~S2 give
the proof.

When only the stationary and fluctuating pieces are estimated
separately, Eq.~\eqref{eq:central-separated} is useful.  For random
Hamiltonians one first squares the correlation for each realization and
then averages over disorder; averaging the correlation before squaring
would discard the fluctuations needed in the bound.

\subsection{Conserved local memory}

The positive spectral measure of the complete local autocorrelation
can be written as
\(\mu_S=\pi_{0,S}\delta_0+\widetilde\mu_S\).
Its weights measure the overlap of each Bohr-frequency sector with
the local operator space. They are therefore sensitive to which
transitions can be seen by the subsystem. The zero-frequency part
contains the information retained by exactly conserved operators.

We show that the largest stationary trace-distance error is bounded
by a constant depending on \(d\) times \(\pi_{0,S}\). Its
contribution to the mean-square thermalisation bound is consequently
quadratic. Conservation laws are thus compatible with local
thermalisation when their total overlap with the local operator
space is small. The projection onto conserved quantities is closely
related to the Mazur bound and Suzuki's description of long-time
correlations.\cite{Mazur1969,Suzuki1971}

Energy conservation gives a simple example. For the bounded-degree
ensemble of independently Haar-rotated local terms specified in
Methods, and a fixed subsystem containing a nontrivial interaction
support, we obtain \(\mathbb E\pi_{0,S}\ge c_S/N\).
A matching upper bound requires an independent argument. If it gives
\(\pi_{0,S}=O(N^{-1})\), the
stationary contribution to Eq.~\eqref{eq:central} is
\(O_S(N^{-2})\). The time-dependent contribution must still be
estimated from \(A_S(T)\).

For nondegenerate energies, weak ETH provides one route to a small
stationary weight; Methods gives the corresponding
infinite-temperature estimate. Weak-ETH results for
translation-invariant systems are relevant here,\cite{Mori2016,
PilatowskyChoi2026} while energy degeneracies require control of the
whole conserved block rather than diagonal matrix elements in one
chosen eigenbasis.\cite{DabelowEideckerDunkelReimann2026}
Subsystem ETH and return-to-equilibrium results address related
questions about the equilibrium state.\cite{DymarskyLashkariLiu2018,
FarrellyBrandaoCramer2017} They do not, by themselves, supply the
finite-time correlation estimate needed in Eq.~\eqref{eq:central}.

\begin{figure*}[t]
\centering

\begin{tikzpicture}[x=1cm,y=1cm,>=Latex,
 every node/.style={font=\sffamily\fontsize{8}{10}\selectfont},
 line width=1pt]
\node[anchor=west,font=\sffamily\bfseries\small] at (0,4.2)
 {a\quad Resonant periodic clock};
\foreach \x in {0.65,1.6,2.55,3.5,4.45,5.4}{
 \draw (\x,3.42) circle (0.12);
 \draw (\x-0.12,2.47) rectangle (\x+0.12,2.71);
 \draw (\x,2.71)--(\x,3.3);
}
\foreach \x in {0.65,1.6,2.55,3.5,4.45}{
 \draw (\x+0.12,3.42)--(\x+0.83,3.42);
 \draw (\x+0.12,2.59)--(\x+0.83,2.59);
}
\draw[rounded corners,dashed] (0.25,3.08) rectangle (2.0,3.82);
\node[anchor=west] at (2.3,3.85) {spin-only $S$};
\draw[->] (2.35,3.85)--(2.04,3.65);
\draw[<->,dashed] (0.49,2.59) to[out=210,in=330] (5.56,2.59);
\node[fill=white,inner sep=2pt] at (3.0,2.1) {periodic clock, one particle};
\node[align=flush left,text width=6.0cm,anchor=north west] at (0.2,1.56)
 {Static, bounded-range interactions\\
 Fixed spin subsystem; whole clock in the bath\\[4pt]
 Every-window error:
 $O_s\!\left(n\log(eT)/T+2^{-n}\right)$\\[4pt]
 Fixed-accuracy windows: $O(n\log n)$\\
 Microscopic inverse energy: $\Theta(n^2 4^n)$};

\draw[gray] (6.65,-1.03)--(6.65,4.12);
\node[anchor=west,font=\sffamily\bfseries\small] at (7.05,4.2)
 {b\quad Randomly dressed commuting patches};
\foreach \x/\i in {7.15/1,8.13/2,9.11/3,10.09/4,11.07/5,12.05/6}{
 \draw[rounded corners=1pt] (\x,2.44) rectangle ++(0.79,0.45);
 \node at (\x+0.395,2.665) {$D_{\i}$};
}
\foreach \x in {7.15,9.11,11.07}{
 \draw[rounded corners=1pt] (\x,3.08) rectangle ++(1.77,0.31);
 \node at (\x+0.885,3.235) {$V_{\rm odd}$};
}
\foreach \x in {8.13,10.09}{
 \draw[rounded corners=1pt] (\x,3.54) rectangle ++(1.77,0.31);
 \node at (\x+0.885,3.695) {$V_{\rm even}$};
}
\node[align=center] at (10.02,2.05)
 {Basis dressing is sampled once\\$H=V^\dagger(\sum_iD_i)V$};
\node[align=flush left,text width=6.0cm,anchor=north west] at (7.12,1.56)
 {Static commuting terms; physical sites are qubits\\
 Range and body order: $O(\log N)$\\[4pt]
 Expected worst-window error:
 $O_s(N^{-b}+1/T)$\\[4pt]
 Extensive conserved algebra, weak local overlap\\
 Evolution remains within logarithmic neighborhoods};
\end{tikzpicture}

\caption{
\textbf{Local relaxation in two static Hamiltonian families.}
\textbf{a}, The periodic spin--clock Hamiltonian has bounded-range
interactions.  The theorem concerns spin-only subsystems within its
one-clock-particle sector; the bath contains the full one-particle
clock.  The sufficient fixed-accuracy observation scale is
\(O(n\log n)\), whereas the inverse smallest positive sector energy is
\(\Theta(n^2 4^n)\).  An \(O(n^{-1})\) mean-square error is guaranteed on
\(O(n^2\log n)\) windows.
\textbf{b}, Random basis dressing of independent patches produces
commuting Hamiltonian terms with logarithmically growing range and body
order.  The dressing circuit specifies a static eigenbasis, not a
time-dependent drive.  The model retains an extensive conserved
algebra, yet its expected worst-window local error is
\(O_S(N^{-b}+T^{-1})\), where \(b>0\) is fixed by the patch-size
choice.  Local operators remain within a logarithmic-size neighborhood.
Both panels are schematic illustrations of the analytic models.
}
\label{fig:models}
\end{figure*}

\subsection{Local relaxation in a periodic clock}

The periodic Feynman--Kitaev construction of Ippoliti and Long
embeds a circuit in a static, geometrically local spin--clock
Hamiltonian.\cite{IppolitiLong2026} It builds on the
Hamiltonian-computation constructions of Feynman and
Kitaev.\cite{Feynman1985,KitaevShenVyalyi2002}
We use the maximal linear-feedback shift-register (LFSR) circuit
on \(n\ge2\) spin qubits. The dynamics is restricted to the
invariant sector with one clock particle, and \(S\) contains a
fixed number of spin qubits. The bath consists of the remaining
spins and the whole one-particle clock space, each maximally mixed.
This gives the tensor-product preparation of Eq.~\eqref{eq:prep}
without splitting the fixed-particle clock sector into spatial parts.

The circuit eigenphases lie on a regular grid. The clock maps them
to a cosine energy spectrum in a sector of dimension \(n2^n\).
Including the two exceptional \(+1\) Floquet eigenvectors, the
one-clock-particle sector contains the complete
\(n(2^n-1)\)-point cosine grid.  Its smallest positive energy is
therefore exactly
\begin{equation}
 \Delta_n
 =
 1-\cos\left[
 \frac{2\pi}{n(2^n-1)}
 \right],
 \qquad
 \Delta_n^{-1}=\Theta(n^2 4^n).
 \label{eq:clock-exact-small-energy-main}
\end{equation}
The hopping strengths are fixed and the sector bandwidth stays
bounded, so this fine energy scale is not introduced by rescaling
the Hamiltonian.

Nevertheless, for every fixed spin subsystem,
\begin{equation}
 \Ewin_{H_{\rm FK}}(T)
 \le C_s\!\left[\frac{n\log(eT)}{T}+2^{-n}\right],
 \qquad T\ge1.
 \label{eq:clock}
\end{equation}
The largest stationary local trace-distance error is
\(O_s(2^{-n})\). The conserved energy direction has exactly zero
overlap with spin-local traceless operators because every hopping
term has zero clock trace. Other conserved directions remain, and
their total local weight is bounded separately.

Equation~\eqref{eq:clock} gives two observation scales. Windows of
length \(n^2\log(en)\) give mean-square error \(O_s(n^{-1})\).
If the trace-distance tolerance and the allowed fraction of bad
times are fixed, windows of order \(n\log n\) suffice for all
large enough \(n\). Both are much shorter than
\(\Delta_n^{-1}\). The distinction between fixed accuracy and
an error that vanishes with system size is useful when interpreting
these time bounds.

The proof combines the LFSR Pauli matrix-element estimate of
Ref.~\cite{IppolitiLong2026} with a count of differences between
cosine energies. The matrix-element estimate limits the weight of
each regular transition. Counting the transitions in an interval
of width \(\delta\) then gives local spectral weight
\(O(n\delta\log(e/\delta))+O(2^{-n})\).
Taking \(\delta=1/T\) yields Eq.~\eqref{eq:clock}.
The logarithm comes from the unweighted cosine-difference density;
this argument does not establish a matching lower bound on the
physical relaxation time. Supplementary Notes~S4--S5 give the
proof, a structural clock theorem, and a prime-power extension.

The example shows that an exactly solvable, resonant local
Hamiltonian can lose the information stored in a fixed spin
subsystem on polynomial observation windows. Its fine spectrum
alone does not obstruct local thermalisation.

This application is distinct from the Floquet-circuit examples used
in Ref.~\cite{Vikram2026}: the circuit here is embedded in a static
periodic Feynman--Kitaev Hamiltonian, and the relaxation bound is
proved from the spectrum and local matrix elements of that static
Hamiltonian.

\subsection{Local relaxation with many conserved observables}

For the second example, divide a chain of \(N=M\xi\) qubits into
an even number \(M\) of equal patches. Following Cui and
collaborators,\cite{CuiSchusterMaoHuangBrandao2025} take
\begin{equation}
H_N=V^\dagger\left(\sum_{i=1}^{M}D_i(J)\right)V,
 \qquad
 D_i(J)=\sum_z J_{i,z}|z\rangle\langle z|_i.
 \label{eq:patch-H}
\end{equation}
Here \(V\) is a two-layer circuit of independent Haar unitaries
on neighboring patch pairs. Initially we take all energies
\(J_{i,z}\) to be independent and uniform on \([-1,1]\).
The circuit and energies are sampled once; they define a fixed
Hamiltonian for the subsequent evolution.

The terms \(V^\dagger D_iV\) commute and have norm at most one.
Each is supported on a bounded number of adjacent patches.
There are also \(N\) independent conserved observables
\(V^\dagger Z_jV\), obtained by dressing the single-qubit Pauli
operators \(Z_j\). With \(\xi=O(\log N)\), the physical
sites remain qubits, while the interaction range and the number of
qubits in a term both grow logarithmically.

For every fixed \(b>0\), an admissible logarithmic patch size
can be chosen so that
\begin{equation}
 \mathbb E\,\Ewin_{H_N}(T)
 \le C_s\left[N^{-b}+\frac1T\right],
 \qquad T>0.
 \label{eq:ensemble}
\end{equation}
Methods specifies the allowed sequence of system sizes and patch
lengths. The expectation is taken after the suprema over inputs
and window origins. Markov's inequality therefore gives an event
on which the local error is controlled for all inputs and all
starting times. For example, with \(b=2\) and \(T=N^2\),
\(\Ewin_{H_N}(N^2)\le N^{-1}\) except with probability
\(O_s(N^{-1})\). On the same event,
\(\Ewin_{H_N}(L)\le2N^{-1}\) for every \(L\ge N^2\).

The proof uses two properties of the ensemble. At the reference
time \(t=\pi\), its evolution forms an approximate unitary
design in the sense established by Theorem~3 of
Ref.~\cite{CuiSchusterMaoHuangBrandao2025}. This bounds a
positive sum of squared spectral weights. At other times,
averaging over the independent continuous energies bounds the
dephasing remainder by \(t^{-2}\). The spectral weights add to
one, so the remainder carries no factor growing with the patch
dimension or the number of patches. Integrating it and applying
the local-return theorem gives Eq.~\eqref{eq:ensemble}.
The design property is used at one reference time and for the
ensemble; it is not a statement about design formation along
the trajectory of a single Hamiltonian.

The inverse-window contribution is sharp. A pure local input
starts a finite distance from \(\tau_S\), and only a bounded
number of bounded-norm terms meet the subsystem. The initial
signal therefore persists for a nonzero time. Its contribution
to the window beginning at zero gives a lower bound \(c_s/T\)
for every realization. Together with Eq.~\eqref{eq:ensemble},
this yields
\(\mathbb E\Ewin_{H_N}(T)=\Theta_s(T^{-1})\) for
\(1\le T\le N^b\). This is a statement about the integrated
initial transient, not a pointwise late-time decay law.

The argument also covers independent, nonidentically distributed
energy densities on \([-1,1]\), provided they have a common
Fourier-decay bound. A norm-small perturbation \(W\), including
one that breaks commutativity, contributes
\(O_s(\|W\|_\infty^2T^2)\) to the estimate. This controls
perturbations at the chosen observation scale; it does not cover
an extensive sum of fixed-strength additional interactions.

The physical mechanism is local dephasing within growing patch
neighborhoods. The conserved observables are spread over those
neighborhoods, so their overlap with a fixed subsystem can be
small. Since the Hamiltonian terms commute, an evolved local
operator remains supported on \(O(\log N)\) qubits for all
times. Local relaxation therefore occurs without transport
through the whole chain. Figure~\ref{fig:models} compares this
model with the bounded-range clock.

\section{Discussion}

For an arbitrary subsystem state coupled to a maximally mixed bath,
we have related local equilibration to the mean square of a complete
local autocorrelation. The worst-window functional is bounded from
both sides at the same observation time. Its upper estimate holds in
every translated window, while the lower estimate follows from the
initial window included in the supremum. Subtracting the stationary
part separates memory retained by conserved operators from the
dynamical return of the remaining local information.

The periodic clock and the commuting model realise these two effects
in different ways. In the clock, a fixed spin subsystem thermalises on
\(O(n\log n)\) fixed-accuracy windows despite exponentially small
positive energies. In the commuting model, the conserved operators
are spread over logarithmic neighborhoods and have a small expected
overlap with a fixed subsystem. The inverse-window contribution is
sharp because the worst-window criterion includes the initial
transient. These statements concern the two model families and the
preparations used here; they are not a general relaxation theorem for
local spin chains.

The general finite-time mechanism is closely related to the work of
Vikram. Refs.~\cite{Vikram2026,VikramNoAverages2026} already use
finite-time autocorrelations and positive-Fourier weights to constrain
thermalisation in later or shifted intervals, including the
subsystem--statistical-bath setting. Ref.~\cite{VikramAperiodicity2026}
gives the related formulation in terms of return probabilities and
aperiodicity. We therefore do not claim the forecasting mechanism,
positive-Fourier weights, subspace return, or converse reasoning as
new.

Our result uses this mechanism for the complete traceless local
operator algebra. A small autocorrelation of one observable can miss
memory stored in another local direction, while the complete local
return contains every direction in which two subsystem preparations
can differ. Positivity of the compressed Liouvillian spectral blocks
then gives the linear same-time comparison with the worst-input
reduced-state error. In Liouville space this is a particular
subspace-return probability; for the present unital channel it is also
a shifted and rescaled Choi-state purity.\cite{deLeonGonzalezDiazMejia2026}
Related channel-based formulations of equilibration for generalized
subsystems derive effective-dimension bounds in a complementary
infinite-time setting.\cite{CorreiaVariziCarvalho2026}

The local-return criterion can be compared with eigenoperator
thermalisation and studies of eigenstate correlations at fixed
frequencies.\cite{Buca2023,HahnLuitzChalker2024}
Full-ETH and nonlinear-hydrodynamic descriptions address dynamical
correlations beyond a single-eigenstate picture,\cite{PappalardiFritzschProsen2025,
WangMishraYangDelacretazPappalardi2026} while Lanczos-coefficient
methods estimate correlation equilibration
times.\cite{WangFullgrafGemmer2025}
Within this finite-time autocorrelation setting, our theorem
identifies a complete local correlation whose decay controls every
subsystem input, and is in turn bounded by the worst-input local
error. Establishing that decay for a broad family of constant-range,
noncommuting Hamiltonians remains the main dynamical problem.

Our criterion requires decay of a \emph{temporal} local correlation.
Any independent dynamical estimate of this decay can therefore be
translated into an equilibration time. For example, if
for some \(C,\tau>0\)
\[
 |\varphi_S(t)-\pi_{0,S}|
 \le C\left(1+\frac{|t|}{\tau}\right)^{-\alpha},
\]
then Eq.~\eqref{eq:A} gives, for \(T\gg\tau\),
\begin{equation}
 A_S(T)\lesssim C^2
 \begin{cases}
 (\tau/T)^{2\alpha},&0<\alpha<1/2,\\
 (\tau/T)\log(T/\tau),&\alpha=1/2,\\
 \tau/T,&\alpha>1/2,
 \end{cases}
 \label{eq:temporal-decay-discussion}
\end{equation}
and the same scaling then bounds \(\Gwin_H(T)\). Exponential temporal
decay gives the last, \(O_S(\tau/T)\), behaviour for this every-window
mean-square criterion.

Spatial clustering and entanglement area laws concern a different
property. In one dimension, exponential decay of spatial correlations
is sufficient for an entanglement area law,
\cite{BrandaoHorodecki2015AreaLaw} and area laws are closely connected
to efficient tensor-network descriptions of many-body states.
\cite{EisertCramerPlenio2010} These static properties do not by
themselves imply the temporal decay required here. Many-body
localized systems give an important counterexample: they fail to
thermalise and can retain local information about their initial
conditions for arbitrarily long times.
\cite{NandkishoreHuse2015} For local Hamiltonians, however, combining
locality and spatial clustering, or efficient tensor-network
approximations of local Heisenberg observables, may provide a useful
route to estimating the temporal autocorrelation in our theorem. We
leave this connection for future work.

Random spatially local interactions and local random-matrix
models provide possible settings in which to pursue this
question.\cite{MorampudiLaumann2019,
PollockKrothPagliaroliIadecolaRiddell2025}
Recent commutator-gap results constrain approximately conserved
local directions under their assumptions.\cite{CedilloCotlerHuang2026}
They do not yet give the estimate required here: an operator may
change substantially while remaining within the local operator
space. A bound on its return to that whole space would connect
such dynamical results directly to local equilibration.

Recent propagation algorithms provide controlled approximations to
local Heisenberg observables in weakly interacting systems,
including finite intervals of noiseless evolution.\cite{FacelliEtAl2026Propagation}
Such estimates may help evaluate the autocorrelation entering our
bounds. The computational and dynamical questions are distinct:
an accurate approximation to an evolved observable does not by
itself imply small return to the subsystem. Applying these methods
to our criterion would also require control of time-integration
errors and of the stationary contribution.

The fixed subsystem size and maximally mixed, initially
uncorrelated bath are essential to our statements. The constants
can grow with the observed region, and finite-temperature bath
states require a separate treatment. The time averages also allow
recurrences; they bound the fraction of bad times in each window.
Within this setting, the local autocorrelation provides a way to
study finite-time relaxation without resolving the microscopic
spectrum.

\section{Methods}

\subsection{Operational quantities and exact constants}

For density matrices, \(\Dt(\rho,\sigma)=\|\rho-\sigma\|_1/2\).
The temporal channel is
\[
 \overline\Phi(X)=
\lim_{L\to\infty}\frac1L\int_0^L\Phi_t(X)\,dt.
\]
The limit exists in finite dimension, including degenerate energies.
Define
\[
\Gwin_H(T)=\sup_{t_0\ge0}\frac1T\int_{t_0}^{t_0+T}
\sup_{\rho_S}\Dt(\Phi_t(\rho_S),\overline\Phi(\rho_S))^2dt.
\]
With \(q=d^2-1\), the exact rectangular-window comparisons used in the
main text are
\begin{align}
 \frac{d-1}{2d}A_S(T)
 &\le\Gwin_H(T)\le(d-1)q^2A_S(T),
 \label{eq:exact-G}\\
 \frac{d-1}{2d}\mathcal A_S(T)
 &\le\Ewin_H(T)\le(d-1)q^2\mathcal A_S(T),
 \label{eq:exact-E-full}\\
 \Ewin_H(T)
 &\le
 (d-1)q^2
 \left[
 \sqrt{A_S(T)}+\frac{\pi_{0,S}}2
 \right]^2.
 \label{eq:exact-E}
\end{align}
The lower constants improve the more elementary probe-state estimate:
they follow from an exact second-moment identity over pure subsystem
states and the trace-zero structure of the channel difference.  For
the separated upper bound, put
\[
 s_0=
 \sup_{\rho_S}
 \Dt\!\left(
 \overline\Phi(\rho_S),\frac{\Id_S}{d}
 \right).
\]
The trace-distance triangle inequality followed by Minkowski's
inequality in the time variable gives
\begin{equation}
 \sqrt{\Ewin_H(T)}
 \le
 \sqrt{\Gwin_H(T)}+s_0.
 \label{eq:Minkowski-separated-methods}
\end{equation}
Using
\(s_0\le q\sqrt{d-1}\,\pi_{0,S}/2\) and
Eq.~\eqref{eq:exact-G} gives Eq.~\eqref{eq:exact-E}.

More generally, for every normalised \(w\ge0\) with
\(\widehat w(\omega)\ge0\), define
\[
 \mathcal C_w=
 \int w(t)|\varphi_S(t)-\pi_{0,S}|^2dt
\]
and
\[
 \mathcal G_w=
 \sup_{c\in\mathbb R}\int w(t-c)
 \sup_{\rho_S}
 \Dt(\Phi_t(\rho_S),\overline\Phi(\rho_S))^2dt.
\]
Then
\begin{equation}
 \frac{d-1}{2d}\mathcal C_w
 \le \mathcal G_w
 \le \frac{(d-1)q^2}{4}\mathcal C_w.
 \label{eq:general-positive-kernel-methods}
\end{equation}
The same statement holds for the full correlation and the error to
\(\Id_S/d\).

\subsection{Reference-assisted channel distinguishability}

For a linear Hermiticity-preserving map \(\Lambda\) on a
\(d\)-dimensional system, write
\[
 \|\Lambda\|_{2\to2}
 =
 \sup_{X\ne0}\frac{\|\Lambda(X)\|_2}{\|X\|_2}.
\]
A standard finite-dimensional norm comparison is
\begin{equation}
 \|\Lambda\|_\diamond
 \le
 d\,\|\Lambda\|_{2\to2}.
 \label{eq:diamond-vs-2to2}
\end{equation}
This is the upper inequality in Eq.~(2.2) of Chen, Ding and
Huang.\cite{ChenDingHuang2022}  They formulate it for real
superoperators on a \(d\)-dimensional system.  For the operational
channel-discrimination interpretation of the diamond norm see, for
example, Ref.~\cite{Watrous2018}.  We use this only as an additional
corollary of the trace-distance result above, not as a replacement for
it.  For
\(\Lambda_t=\Phi_t-\overline\Phi\), or
\(\Lambda_t=\Phi_t-\mathcal D\), both
\(\Lambda_t(\Id_S)=0\) and
\(\Tr\Lambda_t(X)=0\).  The identity/traceless decomposition therefore
gives
\[
 \|\Lambda_t\|_{2\to2}
 =
 \|K_*(t)\|_{\rm op},
\]
where \(K_*(t)\) is respectively the nonzero-frequency or full local
transfer block.  Hence, pointwise,
\begin{equation}
 \Delta_\diamond(\Phi_t,\Psi)^2
 \le
 \frac{d^2}{4}\|K_*(t)\|_{\rm op}^2
 \le
 \frac{d^2}{4}\|K_*(t)\|_{\rm F}^2,
 \label{eq:diamond-by-local-return}
\end{equation}
with \(\Psi=\overline\Phi\) or \(\mathcal D\).
Combining Eq.~\eqref{eq:diamond-by-local-return} with the same
positive-kernel return estimate used above proves
Eq.~\eqref{eq:diamond-main}.  This is an immediate channel-norm
corollary of the complete-local theorem; the norm inequality itself is
standard.

The trace-distance functionals above are unassisted; the diamond-norm
corollary allows an arbitrary external reference.  The functionals are
Borel measurable in \(H\):
the input set is compact, the finite-time maps are continuous, and the
window-origin supremum can be taken over nonnegative rational origins.
The temporal channel is a pointwise limit of continuous finite-time
averages.  Supplementary Notes~S1 and~S2 provide the full argument and
all norm estimates.

\subsection{Positive local spectral weights and the time-window argument}

Give operator space the inner product
\(\langle X,Y\rangle=D^{-1}\Tr(X^\dagger Y)\).
Let \(P_S\) project onto
\(\calV_S=\{X_S\otimes\Id_B:\Tr X_S=0\}\).
The Liouvillian \(\calL=[H,\cdot]\) is self-adjoint in this inner product,
with spectral projectors \(Q_\omega\). Put
\[
 A_\omega=P_SQ_\omega P_S,\qquad
 \pi_\omega=q^{-1}\Tr_{\calK}A_\omega.
\]
Here \(\Tr_{\calK}\) is the trace on operator space. Each \(A_\omega\)
is positive, \(\sum_\omega\pi_\omega=1\), and
\(\mu_S=\sum_\omega\pi_\omega\delta_\omega\) has characteristic
function \(\varphi_S\).

The local return matrix is
\[
 C_{ba}(t)=D^{-1}\Tr[O_b e^{-itH}O_a e^{itH}],
\]
and \(R_S(t)=\|C(t)\|_{\rm F}^2\).
Expansion of a subsystem state in the \(P_a\) basis gives, pointwise,
\[
 \sup_{\rho_S}\Dt(\Phi_t(\rho_S),\tau_S)^2
 \le\frac{d-1}{4}R_S(t).
\]
The normalised quantity \(R_S(t)/q\) is also the unitarity of the
traceless transfer block in the terminology of channel
characterisation.\cite{WallmanGranadeHarperFlammia2015}
Here it describes information returning to the complete local algebra.
Let
\[
 \ket{\Omega}=\frac1{\sqrt d}\sum_{j=1}^{d}\ket j\ket j,
 \qquad
 J(\Phi_t)=(\Phi_t\otimes\mathrm{id})(\ket{\Omega}\!\bra{\Omega}).
\]
With this normalized Choi convention,
\begin{equation}
 \Tr\!\left[J(\Phi_t)^2\right]
 =\frac{1+R_S(t)}{d^2}.
 \label{eq:choi-purity-return-methods}
\end{equation}
Indeed, the transfer matrix has an identity block equal to one and the
traceless block \(C(t)\), while normalized Choi purity is \(d^{-2}\)
times its squared Hilbert--Schmidt norm. Related channel-purity
dynamics has recently been studied under the name Choi echo.
\cite{deLeonGonzalezDiazMejia2026}

For the reverse direction, let \(\Lambda_t\) denote either
\[
 X\mapsto\Phi_t(X)-\Tr(X)\Id_S/d
\]
or \(\Phi_t-\overline\Phi\), with traceless transfer matrix \(C_*(t)\).
For a Haar-random pure state \(\psi\),
\[
 \mathbb E_\psi\|\Lambda_t(\psi)\|_2^2
 =\frac{\|C_*(t)\|_{\rm F}^2}{d(d+1)}.
\]
This is the same second-moment structure that underlies the unitarity
of a quantum channel.\cite{WallmanGranadeHarperFlammia2015}
Moreover every \(\Lambda_t(\psi)\) is Hermitian and traceless, so
\(\|X\|_1^2\ge2\|X\|_2^2\).  Since the worst-input error is at least its
Haar average,
\begin{equation}
 \sup_{\rho_S}\Dt(\Lambda_t(\rho_S),0)^2
 \ge
 \frac{\|C_*(t)\|_{\rm F}^2}{2d(d+1)}.
 \label{eq:haar-pointwise-lower-methods}
\end{equation}
Supplementary Note~S2 gives the elementary proof.

The use of positive-Fourier time weights follows the same basic idea as
Refs.~\cite{Vikram2026,VikramNoAverages2026}. Appendix~A.1 and
Theorem~A.1 of Ref.~\cite{VikramNoAverages2026} give the general
inner-product autocorrelation-to-correlator construction, and
Appendix~C.2 gives its proof. In the present problem the target is more
specific: it is the quadratic return of the complete local operator
subspace itself. The spectral expansion of \(R_S\) has coefficients
\[
 0\le\Tr_{\calK}(A_\omega A_\nu)
       \le q^2\pi_\omega\pi_\nu .
\]
The first inequality follows from \(A_\omega,A_\nu\ge0\). For the
second, \(A_\omega\le(\Tr A_\omega)\Id\) gives
\[
 \Tr(A_\omega A_\nu)
 \le\Tr(A_\omega)\Tr(A_\nu).
\]
For a normalised positive-Fourier weight \(w\), this coefficientwise
bound gives Eq.~\eqref{eq:general-positive-kernel-methods} uniformly
under translations of \(w\). The generic Cauchy--Schwarz step in
Appendix~C.2 of Ref.~\cite{VikramNoAverages2026} is replaced here by
a linear comparison of two quadratic quantities built from the same
positive compressed spectral blocks.

For the triangular weight
\[
 \mathcal F_S(T)=\frac1T\int_{-T}^{T}
    \left(1-\frac{|t|}{T}\right)|\varphi_S(t)|^2dt,
\]
a translated tent is at least \(1/2\) on a rectangular window of length
\(T\).  Its Fourier transform is a phase times
\(\operatorname{sinc}^2(T\Delta/2)\).  Hence
\begin{equation}
 \Ewin_H(T)\le\mathfrak c_d\mathcal F_S(T),\qquad
 \mathfrak c_d=\frac{(d-1)q^2}{2}.
 \label{eq:kernel-main}
\end{equation}
Restricting both frequencies to nonzero values gives the corresponding
bound for \(\Gwin_H\).

For spectral applications, define
\[
 \chi_S(\delta)=\sup_x\mu_S([x,x+\delta)).
\]
This is the L\'evy concentration function of the Liouvillian
probability measure seen by the complete local operator algebra; see
Ref.~\cite{Petrov1975} for general Fourier bounds on concentration
functions.  It is related to, but not identical with, the energy-band
quantities of Ref.~\cite{Vikram2026}: the supremum is over every
Liouvillian-frequency interval of width \(\delta\), and the measure is
defined for the complete local algebra rather than one selected
observable.

Partitioning the positive sinc-squared kernel into shells of width
\(\delta\) gives
\begin{equation}
 \Ewin_H(T)\le(d-1)q^2\chi_S(\delta)
       \left[1+\frac{2\pi^2}{3(T\delta)^2}\right].
 \label{eq:band-main}
\end{equation}
The shell series is summable, which allows the direct choice
\(\delta=1/T\).  This differs from replacing the entire outside-band
contribution by a single worst Fourier tail.  For example, the
rectangular-output/tent-input specialisation discussed after
Theorem~A.1 of Ref.~\cite{VikramNoAverages2026} has
\(w_0=T_{\rm obs}/(\xi T)\).  The shell-by-shell local concentration
estimate remains useful already at \(T\delta=1\).  We use this
same-scale form for the clock application.

\subsection{Stationary memory and the energy-conservation floor}

The temporal traceless block is \(A_0\), and
\[
 \sup_{\rho_S}\Dt(\overline\Phi(\rho_S),\tau_S)^2
 \le\frac{d-1}{4}\Tr_{\calK}A_0^2
 \le\frac{d-1}{4}q^2\pi_{0,S}^2.
\]
For nondegenerate energies, define
\[
 \delta_S(H)=\frac1D\sum_j
 \Dt\!\left(\Tr_B|j\rangle\langle j|,\tau_S\right).
\]
Parseval gives \(\pi_{0,S}\le2d\,\delta_S(H)/(d^2-1)\).
For degenerate energies, a bound in one chosen eigenbasis need not
control the temporal channel. We then use the basis-independent
block \(A_0\).

For the energy-floor statement, take
\(H_N=\sum_e U_e\Lambda_eU_e^\dagger\), with independent Haar \(U_e\)
on bounded local supports. Centre each term to zero trace.
Assume at most \(zN\) terms, at most \(\Delta_*\) other overlapping
supports per term, and uniformly bounded normalised Hilbert--Schmidt
norm \(\bar h_2\). Suppose \(S\) contains one support \(e_0\) with
centred spectral variance \(a>0\), independent of \(N\).
Disjoint centred terms are orthogonal, giving
\(\|\widetilde H_N\|_{2,D}^2\le C_1N\), where
\(C_1=z(\Delta_*+1)\bar h_2^2\).
Projection onto the conserved direction \(\widetilde H_N\), followed by
Haar centring of the \(e_0\) term, yields
\[
 \mathbb E\pi_{0,S}\ge\frac{a}{q C_1N}.
\]
The proof, including zero-norm cases, is in Supplementary Note~\hyperref[sec:supp3]{S3}.
This ensemble is distinct from the commuting-patch ensemble.

\subsection{Periodic-clock model and spectral input}

In the one-particle clock sector, take
\[
 H_{\rm FK}=\Id-\frac12\sum_{t=0}^{n-1}
 \left(u_t\otimes|t+1\rangle\langle t|
       +u_t^\dagger\otimes|t\rangle\langle t+1|\right),
\]
with \(n\ge2\), periodic clock labels, and the maximal-LFSR Clifford gate sequence
of Ref.~\cite{IppolitiLong2026}. The spin--clock realization is
geometrically local; only its invariant one-clock-particle sector is
used. For spin-only \(S\),
\[
 \tau_B=\frac{\Id_{B_s}}{2^{n-s}}
             \otimes\frac{\Id_{\rm clock}^{(1)}}n .
\]
Every input in Eq.~\eqref{eq:prep} has mean energy one in this sector.
Each hopping term has zero clock trace, so the centred energy direction
is exactly orthogonal to the spin-local algebra.

Put \(L=2^n-1\) and \(M_c=nL\). The regular sector has energies
\(1-\cos[2\pi(mL+k)/(nL)]\), with \(0\le m<n\), \(1\le k<L\).
Two exceptional circuit eigenvectors produce \(2n\) exceptional static
states. The external Pauli matrix-element bound is
\(c_n=2^{1+n/2}/(2^n-1)\) between regular eigenstates; Clifford
conjugation and clock orthogonality transfer it from the circuit to
the static eigenstates.\cite{IppolitiLong2026}

For the complete \(M_c\)-point cosine grid, our difference count is
\[
 N_{M_c}(\delta)\le
 2M_c^2\delta\log(4e/\delta)+8M_c,\qquad 0<\delta\le1.
\]
Combining this with \(c_n\), and bounding exceptional rows and columns
by Pauli unitarity, gives
\[
 \chi_S(\delta)\le
 8n\delta\log(4e/\delta)
       +\frac{32}{2^n-1}+\frac4{2^n}.
\]
The limit \(\delta\downarrow0\) controls the zero atom.
Equations~\eqref{eq:band-main} and \eqref{eq:kernel-main} use pointwise
bounds uniform in the input, so the input supremum may be taken inside
the time integral. Supplementary Notes~S4 and~S5 give the full
diagonalisation, counting proof, and prime-power extension.

\subsection{Random commuting ensemble and its reference-time input}

In Eq.~\eqref{eq:patch-H}, \(M\) is even and patches have exactly
\(\xi\) qubits; we consider system sizes \(N=M\xi\) without a leftover
patch. The first matching of \(V\) pairs patches
\((2,3),(4,5),\ldots\); the second pairs
\((1,2),(3,4),\ldots\).
All gates are independent Haar unitaries and are independent of \(J\).
Each \(h_i=V^\dagger D_iV\) is supported within patches
\(i-2,\ldots,i+2\), truncated at a boundary. Each physical qubit
belongs to at most five term supports.

We use Theorem~3 of Ref.~\cite{CuiSchusterMaoHuangBrandao2025}
at time \(\pi\), specialised to exact Haar gates and two copies.
It gives
\[
 \xi\ge2\log_2(2N/\varepsilon)+C_0
 \ \Longrightarrow\
 \|\mathcal M^{(2)}_{\pi}-\mathcal M^{(2)}_{\rm Haar}\|_\diamond
 \le\varepsilon .
\]
Here \(\mathcal M^{(2)}(X)=\mathbb E U^{\otimes2}XU^{\dagger\otimes2}\).
For each fixed \(b>0\), arbitrarily large admissible system sizes
satisfy this condition with \(\varepsilon=N^{-b}\) and
\(\xi=O_b(\log N)\). Supplementary Note~\hyperref[sec:supp6]{S6} gives an explicit sequence.

Write each normalised local correlation as
\[
 c_a(U)=D^{-1}\Tr(O_aUO_aU^\dagger).
\]
The required two-copy scalar moment is
\[
 |\varphi_S(U)|^2=
 \frac1{q^2D^2}\sum_{a,b}
 \Tr[(O_a\otimes O_b)U^{\otimes2}
     (O_a\otimes O_b)U^{\dagger\otimes2}].
\]
Its Haar mean is \(1/[q(D^2-1)]\); the design error changes it by at
most \(\varepsilon\).

To relate this estimate to other times, write the eigenvectors as
\(V^\dagger|\mathbf z\rangle\), define
\[
 w_{\mathbf z\mathbf w}=
 \frac1{qD}\sum_a
 |\langle\mathbf z|VO_aV^\dagger|\mathbf w\rangle|^2,
\]
and group these positive weights by the integer vector \(\ell\) in
\(E_{\mathbf z}-E_{\mathbf w}=\ell\cdot J\). The resulting
\(W_\ell\) sum to one. The quantity
\(\beta(V)=\sum_\ell W_\ell^2\) obeys
\[
 \mathbb E_V\beta(V)\le
 b_N:=\varepsilon+\frac1{q(D^2-1)}.
\]
Indeed, uniform energies at time \(\pi\) annihilate every nonzero
integer multiplicity. Independent absolutely continuous energies
also imply \(\pi_{0,S}=W_0\) almost surely, and
\(\mathbb E\pi_{0,S}\le\varepsilon\).

For uniform energies, any nonidentical pair of labels contributes
at least two nonzero sinc factors, giving
\[
 \left|\mathbb E_J|\varphi_S(t)|^2-\beta(V)\right|
 \le \min\{1,t^{-2}\}.
\]
More generally, assume independent energy densities \(p_{i,z}\) on
\([-1,1]\), absolutely continuous with integrable derivative and
\[
 |p_{i,z}(-1)|+|p_{i,z}(1)|
       +\int_{-1}^1|p'_{i,z}(x)|dx\le L_0
\]
for a common \(L_0\ge1\). The remainder becomes
\(\min\{1,L_0^2/t^2\}\), whose positive-half-line integral is \(2L_0\).
Equation~\eqref{eq:kernel-main} then gives
\begin{equation}
 \mathbb E\Ewin_{H_N}(T)
 \le\mathfrak c_d\left[b_N+\frac{4L_0}{T}\right].
 \label{eq:ensemble-exact}
\end{equation}
The reference design estimate is used only to bound \(\beta(V)\)
and \(W_0(V)\); these do not depend on the actual energy law.
The probability and longer-window consequences follow by Markov's
inequality applied to \(\Ewin_H(T)\), not by exchanging an expectation
and a supremum.

\subsection{Transient lower bound and perturbation control}

For a pure input on \(s\) qubits,
\(\Dt(\rho_S,\tau_S)=a_d:=1-1/d\).
At most \(5s\) norm-one terms meet \(S\), hence
\(\|[H,\rho_0]\|_1\le10s\) and
\(\Dt(\Phi_t(\rho_S),\rho_S)\le5st\).
Integrating over \(0\le t\le a_d/(10s)\) proves
\[
 \Ewin_H(T)\ge\frac{a_d^3}{40sT},\qquad T\ge1.
\]

For arbitrary finite Hamiltonians \(G,H\) on the same space with
\(\|H-G\|_\infty\le\delta\), Duhamel's identity gives
\(|\varphi_S(t;H)-\varphi_S(t;G)|\le2\delta|t|\).
The triangular weight has second moment \(T^2/6\), so
\[
 \Ewin_H(T)\le\mathfrak c_d
 \left[2\mathcal F_S(T;G)+\frac43\delta^2T^2\right].
\]
For \(G\) drawn from the patch ensemble this becomes
\[
 \mathbb E\Ewin_H(T)\le\mathfrak c_d
 \left[2b_N+\frac{8L_0}{T}+\frac43\delta^2T^2\right].
\]
The argument compares correlations only on \([-T,T]\). Uniformity
over arbitrarily late window origins comes from
Eq.~\eqref{eq:kernel-main}, not from applying Duhamel's estimate
at the origin of a late window. Supplementary Note~\hyperref[sec:supp6]{S6} contains the
complete proofs.

\subsection{Reproducibility and assisted preparation}

The results are analytic. 

ChatGPT (OpenAI) assisted with proof-checking derivations, preparing
consistency scripts, and editing the manuscript. The author is
responsible for the scientific content. The assumptions, external
inputs, and proofs are stated in the article and Supplementary
Information.

\section*{Data availability}

No experimental data were used. The derivations are contained in
this article and its Supplementary Information. 

\section*{Code availability}
The analytic proofs do not depend on numerical computations.

\section*{Author contributions}
P.\'{C}. formulated the research question and directed the theoretical
development and manuscript preparation.

\section*{Acknowledgements}
P.\'{C}. acknowledges discussions with Amit Vikram about the
results presented in this work, and especially thanks him for helpful
discussions of his contributions to finite-time quantum thermalisation,
for drawing attention to relevant earlier results, and for clarifying
their relation to the results presented here.
This work is supported by the IRAP/MAB program, project no.
FENG.02.01IP.05-0006/23, financed by the MAB FENG program 2021--2027,
Priority FENG.02, Measure FENG.02.01, with the support of the FNP
(Foundation for Polish Science).

\section*{Competing interests}
The author declares no competing interests.

\clearpage
\section*{References}
\bibliography{references}

\clearpage
\onecolumngrid

\setcounter{section}{0}
\setcounter{subsection}{0}
\setcounter{subsubsection}{0}
\setcounter{equation}{0}
\setcounter{theorem}{0}
\setcounter{figure}{0}
\setcounter{table}{0}
\renewcommand{\thesection}{S\arabic{section}}
\renewcommand{\theequation}{S\arabic{equation}}
\renewcommand{\thetheorem}{S\arabic{theorem}}
\renewcommand{\thefigure}{S\arabic{figure}}
\renewcommand{\thetable}{S\arabic{table}}
\providecommand{\theHsection}{}
\providecommand{\theHequation}{}
\providecommand{\theHtheorem}{}
\renewcommand{\theHsection}{supp.\arabic{section}}
\renewcommand{\theHequation}{supp.\arabic{equation}}
\renewcommand{\theHtheorem}{supp.\arabic{theorem}}
\pdfbookmark[0]{Supplementary Information}{supplementary-information}
\begin{center}
{\Large\bfseries Supplementary Information\par}
\vspace{0.7em}
{\large Finite-time quantum equilibration from local correlations\par}
\vspace{0.7em}
Piotr {\'C}wikli{\'n}ski
\end{center}
\vspace{0.7em}

\par\medskip

\noindent
In these Supplementary Notes we give the proofs of the results in
the main article, using the setup and model definitions from
Methods. Notes~S1--S2 establish the channel bounds and the
positive-kernel argument. We then bound the stationary memory in
Note~S3 and treat the periodic clocks and prime-power extension in
Notes~S4--S5. Note~S6 contains the random commuting-ensemble
result, its probability bounds, and the transient and perturbation
estimates. Each external theorem is identified where it is used.
No generic noncommuting-ensemble assumption is needed.

\section*{Contents}
\begingroup
\renewcommand{\arraystretch}{1.55}
\begin{tabular}{@{}p{0.07\textwidth}p{0.80\textwidth}r@{}}
S1 & \hyperref[sec:supp1]{Local channels and complete operator return} & \pageref{sec:supp1}\\
S2 & \hyperref[sec:supp2]{Finite-window bounds and the autocorrelation characterisation} & \pageref{sec:supp2}\\
S3 & \hyperref[sec:supp3]{Conserved local memory} & \pageref{sec:supp3}\\
S4 & \hyperref[sec:supp4]{The periodic LFSR clock} & \pageref{sec:supp4}\\
S5 & \hyperref[sec:supp5]{Structural clock extension, prime powers, and sector preparation} & \pageref{sec:supp5}\\
S6 & \hyperref[sec:supp6]{The structured random commuting ensemble} & \pageref{sec:supp6}\\
\end{tabular}
\endgroup
\clearpage

\section{Local channels and complete operator return}\label{sec:supp1}
\subsection{Finite-dimensional operator-space setup}

Let
\[
  \mathcal H=\mathcal H_S\otimes\mathcal H_B,
  \qquad
  d=\dim\mathcal H_S,
  \qquad
  m=\dim\mathcal H_B,
  \qquad
  D=dm.
\]
Let \(H=H^\dagger\) be an arbitrary Hamiltonian on \(\mathcal H\), and put
\[
  U_t=e^{-itH}.
\]
We assume \(d\ge2\); a one-dimensional subsystem has zero trace-distance
error and needs no traceless-algebra construction.
No locality assumption is needed for the operator-space results in the present part.

We equip \(\calB(\mathcal H)\) with the normalised Hilbert--Schmidt inner
product
\begin{equation}
  \langle X,Y\rangle
  :=
  \frac{1}{D}\Tr(X^\dagger Y),
  \qquad
  \norm{X}_{2,D}
  :=
  \sqrt{\langle X,X\rangle}.
  \label{eq:normalized-HS-global}
\end{equation}
On \(\calB(\mathcal H_S)\) we use
\begin{equation}
  \langle A,B\rangle_S
  :=
  \frac{1}{d}\Tr(A^\dagger B).
  \label{eq:normalized-HS-system}
\end{equation}

Choose a Hermitian operator basis
\[
  P_0=\Id_S,\quad P_1,\ldots,P_q,
  \qquad
  q=d^2-1,
\]
such that
\begin{equation}
  \Tr(P_aP_b)=d\,\delta_{ab},
  \qquad
  \Tr P_a=0
  \quad
  (a,b=1,\ldots,q).
  \label{eq:local-operator-basis}
\end{equation}
For qubits, these may be chosen as Pauli strings.

Define the embedded traceless local operators
\begin{equation}
  O_a=P_a\otimes\Id_B,
  \qquad
  a=1,\ldots,q.
  \label{eq:embedded-local-basis}
\end{equation}
By Eqs.~\eqref{eq:normalized-HS-global} and
\eqref{eq:local-operator-basis},
\[
  \langle O_a,O_b\rangle=\delta_{ab}.
\]
Thus \(\{O_a\}_{a=1}^q\) is an orthonormal basis of the traceless local
operator subspace
\begin{equation}
  \calV_S
  :=
  \left\{
  A_S\otimes\Id_B:
  \Tr A_S=0
  \right\}
  \subset\calB(\mathcal H).
  \label{eq:local-operator-subspace}
\end{equation}
Let \(\mathsf P_S\) denote the orthogonal projection onto \(\calV_S\) with
respect to Eq.~\eqref{eq:normalized-HS-global}.

\begin{lemma}[Explicit formula for the local operator projection]
\label{lem:local-projection}
For every \(X\in\calB(\mathcal H)\),
\begin{equation}
  \mathsf P_S(X)
  =
  \left[
  \frac{1}{m}\Tr_B X
  -
  \frac{\Tr X}{D}\Id_S
  \right]\otimes\Id_B .
  \label{eq:local-projection-formula}
\end{equation}
\end{lemma}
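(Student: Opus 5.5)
The plan is to verify the two defining properties of an orthogonal projection for the map on the right-hand side of Eq.~\eqref{eq:local-projection-formula}. Call this map \(\Pi\). First, its range lies in \(\calV_S\). Second, the residual \(X-\Pi(X)\) is orthogonal to \(\calV_S\) in the inner product \eqref{eq:normalized-HS-global}. Since the orthogonal projection onto a subspace is unique, these two facts identify \(\Pi\) with \(\mathsf P_S\).

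For the range, write \(Y_S=m^{-1}\Tr_BX-D^{-1}(\Tr X)\Id_S\). Its trace is
\[
\Tr Y_S=m^{-1}\Tr X-D^{-1}d\,\Tr X=0,
\]
because \(D=dm\). Hence \(\Pi(X)=Y_S\otimes\Id_B\in\calV_S\).

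For orthogonality, take an arbitrary element \(A_S\otimes\Id_B\in\calV_S\) with \(\Tr A_S=0\). The key computation is
\[
\Tr\bigl[(A_S\otimes\Id_B)^\dagger X\bigr]=\Tr\bigl[A_S^\dagger\,\Tr_BX\bigr].
\]
This is the partial-trace duality identity. On the other side,
\[
\Tr\bigl[(A_S^\dagger\otimes\Id_B)(Y_S\otimes\Id_B)\bigr]=m\,\Tr\bigl(A_S^\dagger Y_S\bigr)=\Tr\bigl(A_S^\dagger\Tr_BX\bigr)-m D^{-1}\Tr X\,\Tr A_S^\dagger .
\]
The last term vanishes since \(\Tr A_S^\dagger=\overline{\Tr A_S}=0\). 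The two expressions therefore agree, so \(\langle A_S\otimes\Id_B,\,X-\Pi(X)\rangle=0\).

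An equivalent route would expand in the orthonormal basis \(\{O_a\}\), writing \(\mathsf P_S(X)=\sum_a\langle O_a,X\rangle O_a\). One would then use the completeness relation \(\sum_{a=0}^qP_a\Tr(P_aZ)/d=Z\) for \(d\times d\) matrices, with the \(a=0\) term removed to give the trace subtraction. I will use the direct verification instead, since it is shorter.

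There is no real obstacle. The only points needing care are bookkeeping ones: the factor \(m\) coming from \(\Tr\Id_B\), the normalization \(D^{-1}\) in the inner product (it cancels on both sides, so it plays no role), and the fact that \(\calV_S\) is a complex subspace. For the last point, the orthogonality check has to cover non-Hermitian traceless \(A_S\), and that is exactly where \(\Tr A_S^\dagger=0\) is used.
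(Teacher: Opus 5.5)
Your proposal is correct and follows the paper's proof essentially step for step. The paper likewise checks that the bracket is traceless and then verifies orthogonality of \(X-Y\) to every \(A_S\otimes\Id_B\) via partial-trace duality. The only difference is bookkeeping: the paper first computes \(\Tr_B(X-Y)=(\Tr X/d)\Id_S\) and then uses \(\Tr A_S^\dagger=0\), whereas you evaluate the two inner products separately and compare them.
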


\begin{proof}
Define the right-hand side of Eq.~\eqref{eq:local-projection-formula} as
\(Y\).  First,
\[
  \Tr_S\left[
  \frac1m\Tr_B X-\frac{\Tr X}{D}\Id_S
  \right]
  =
  \frac{\Tr X}{m}
  -
  \frac{\Tr X}{dm}d
  =0.
\]
Therefore \(Y\in\calV_S\).

Let \(A_S\otimes\Id_B\in\calV_S\), so that \(\Tr A_S=0\).  Then
\begin{align}
  \langle A_S\otimes\Id_B,X-Y\rangle
  &=
  \frac1D
  \Tr\left[
  (A_S^\dagger\otimes\Id_B)(X-Y)
  \right]
  \nonumber\\
  &=
  \frac1D
  \Tr_S\left[
  A_S^\dagger\Tr_B(X-Y)
  \right].
  \label{eq:projection-orthogonality-step1}
\end{align}
Using Eq.~\eqref{eq:local-projection-formula},
\[
  \Tr_B Y
  =
  m\left[
  \frac1m\Tr_B X-\frac{\Tr X}{D}\Id_S
  \right]
  =
  \Tr_B X-\frac{\Tr X}{d}\Id_S.
\]
Hence
\[
  \Tr_B(X-Y)
  =
  \frac{\Tr X}{d}\Id_S.
\]
Substitution into Eq.~\eqref{eq:projection-orthogonality-step1} gives
\[
  \langle A_S\otimes\Id_B,X-Y\rangle
  =
  \frac{\Tr X}{dD}\Tr A_S^\dagger
  =0.
\]
Thus \(X-Y\) is orthogonal to \(\calV_S\), while \(Y\in\calV_S\).
Therefore \(Y=\mathsf P_S(X)\).
\end{proof}

\subsection{Infinite-temperature reduced dynamics}

Let the bath start in the maximally mixed state
\[
  \tau_B=\frac{\Id_B}{m}.
\]
The reduced channel is
\begin{equation}
  \Phi_t(X)
  :=
  \Tr_B\left[
  U_t
  \left(X\otimes\frac{\Id_B}{m}\right)
  U_t^\dagger
  \right].
  \label{eq:infinite-temp-channel}
\end{equation}
Because \(U_t\) is unitary,
\begin{equation}
  \Phi_t\left(\frac{\Id_S}{d}\right)
  =
  \frac{\Id_S}{d}.
  \label{eq:maximally-mixed-fixed}
\end{equation}

\begin{definition}[Local return matrix]
\label{def:return-matrix}
The \(q\times q\) local return matrix \(C(t)\) is defined by
\begin{equation}
  C_{ba}(t)
  :=
  \frac1d\Tr\left[
  P_b\Phi_t(P_a)
  \right],
  \qquad
  a,b=1,\ldots,q.
  \label{eq:return-matrix-def}
\end{equation}
\end{definition}

The same matrix can be written as an overlap of evolved and local
operators.

\begin{lemma}[Return matrix as a local operator overlap]
\label{lem:return-overlap}
For every \(a,b\),
\begin{equation}
  C_{ba}(t)
  =
  \frac1D
  \Tr\left[
  O_b\,U_t O_aU_t^\dagger
  \right]
  =
  \langle O_b,\mathcal U_t(O_a)\rangle,
  \label{eq:return-matrix-overlap}
\end{equation}
where
\[
  \mathcal U_t(X):=U_tXU_t^\dagger.
\]
\end{lemma}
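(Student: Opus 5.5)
The plan is to insert the definition \eqref{eq:infinite-temp-channel} of \(\Phi_t\) into \eqref{eq:return-matrix-def} and then move the partial trace over \(B\) into a full trace on \(\mathcal H\). I will use the defining property of the partial trace: for any \(Y\in\calB(\mathcal H)\) and \(A_S\in\calB(\mathcal H_S)\),
\[
  \Tr_S\!\left[A_S\Tr_B Y\right]=\Tr\!\left[(A_S\otimes\Id_B)Y\right].
\]
Taking \(A_S=P_b\) and \(Y=U_t(P_a\otimes\Id_B/m)U_t^\dagger\), this gives
\[
  C_{ba}(t)=\frac1d\Tr\!\left[(P_b\otimes\Id_B)\,U_t\!\left(P_a\otimes\frac{\Id_B}{m}\right)U_t^\dagger\right].
\]

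The next step is to pull the scalar \(1/m\) out of the trace and combine the prefactors as \(1/(dm)=1/D\). Recognising \(O_a=P_a\otimes\Id_B\) and \(O_b=P_b\otimes\Id_B\) from \eqref{eq:embedded-local-basis}, this yields the first equality in \eqref{eq:return-matrix-overlap}.

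For the second equality, I will compare with the inner product \eqref{eq:normalized-HS-global}:
\[
  \langle O_b,\mathcal U_t(O_a)\rangle=D^{-1}\Tr\!\left(O_b^\dagger U_tO_aU_t^\dagger\right).
\]
Since each \(P_b\) is Hermitian, \(O_b^\dagger=O_b\), and the two expressions coincide.

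There is no real obstacle here; the only point requiring care is the normalisation bookkeeping. The bath factor \(1/m\) from \(\tau_B\) combines with the system factor \(1/d\) in \eqref{eq:return-matrix-def} to give exactly the global factor \(1/D\). This is also why \(C_{ba}\) is a matrix element in the orthonormal basis \(\{O_a\}\).
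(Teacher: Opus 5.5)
Your proposal is correct and matches the paper's proof: substitute the definition of $\Phi_t$, convert the partial trace into a full trace against $P_b\otimes\Id_B$, and combine $1/(dm)=1/D$. Your explicit use of $O_b^\dagger=O_b$ to obtain the inner-product form is a step the paper leaves implicit.
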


\begin{proof}
By Eq.~\eqref{eq:infinite-temp-channel},
\begin{align}
  \frac1d\Tr_S[P_b\Phi_t(P_a)]
  &=
  \frac1d
  \Tr_{SB}\left[
  (P_b\otimes\Id_B)
  U_t
  \left(P_a\otimes\frac{\Id_B}{m}\right)
  U_t^\dagger
  \right]
  \nonumber\\
  &=
  \frac1{dm}
  \Tr_{SB}\left[
  O_bU_tO_aU_t^\dagger
  \right].
\end{align}
Since \(D=dm\), this is Eq.~\eqref{eq:return-matrix-overlap}.
\end{proof}

\begin{lemma}[Expansion of a traceless operator]
\label{lem:traceless-expansion}
For every Hermitian \(X\in\calB(\mathcal H_S)\) with \(\Tr X=0\),
\begin{equation}
  X
  =
  \frac1d\sum_{a=1}^q x_aP_a,
  \qquad
  x_a=\Tr(P_aX),
  \label{eq:traceless-expansion}
\end{equation}
and
\begin{equation}
  \sum_{a=1}^q x_a^2
  =
  d\,\Tr(X^2).
  \label{eq:parseval-system}
\end{equation}
\end{lemma}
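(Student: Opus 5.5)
The statement is the standard expansion of a traceless Hermitian operator in an orthogonal basis. I would prove it by showing that \(\{P_0=\Id_S,P_1,\ldots,P_q\}\) is an orthogonal basis of \(\calB(\mathcal H_S)\) with respect to the Hilbert--Schmidt pairing \(\Tr(A^\dagger B)\), and then reading off coefficients.

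\textbf{Basis property.} By Eq.~\eqref{eq:local-operator-basis}, the \(P_a\) are pairwise orthogonal with \(\Tr(P_a^2)=d\). Each is also orthogonal to \(P_0\), since \(\Tr(\Id_S P_a)=\Tr P_a=0\), and \(\Tr(P_0^2)=d\). This gives \(q+1=d^2\) nonzero, mutually orthogonal operators. They are therefore linearly independent and span the \(d^2\)-dimensional space \(\calB(\mathcal H_S)\).

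\textbf{Coefficients.} Write any \(X\) as \(X=\sum_{a=0}^q c_aP_a\). Taking the trace against \(P_b\) and using orthogonality gives \(c_b=\Tr(P_bX)/d\). The \(P_0\) coefficient is \(\Tr X/d\), which vanishes because \(X\) is traceless. This yields Eq.~\eqref{eq:traceless-expansion} with \(x_a=\Tr(P_aX)\). Since \(P_a\) and \(X\) are both Hermitian, \(\Tr(P_aX)=\overline{\Tr(XP_a)}=\overline{\Tr(P_aX)}\), so each \(x_a\) is real.

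\textbf{Parseval identity.} Substitute the expansion into \(\Tr(X^2)\) and use orthogonality:
\[
\Tr(X^2)=\frac1{d^2}\sum_{a,b}x_ax_b\Tr(P_aP_b)=\frac1d\sum_a x_a^2 .
\]
Rearranging gives Eq.~\eqref{eq:parseval-system}. Because the \(x_a\) are real, \(x_a^2=|x_a|^2\), so no modulus is needed.

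There is no genuine obstacle. The only points that need care are the dimension count that makes the orthogonal family a basis, and the reality of the coefficients, which justifies writing \(x_a^2\) rather than \(|x_a|^2\).
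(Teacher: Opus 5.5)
Your proof is correct and follows essentially the same route as the paper: expand $X$ in the orthogonal basis $\{\Id_S,P_1,\ldots,P_q\}$, note that the identity coefficient $\Tr X/d$ vanishes, and obtain the Parseval identity by substituting the expansion and using $\Tr(P_aP_b)=d\,\delta_{ab}$. The only difference is that you spell out the dimension count that makes this family a basis and check that the $x_a$ are real, both of which the paper takes for granted.
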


\begin{proof}
Because
\(\{\Id_S/\sqrt d,P_1/\sqrt d,\ldots,P_q/\sqrt d\}\)
is an orthonormal basis with respect to the unnormalised Hilbert--Schmidt
inner product, every \(X\) has the expansion
\[
  X
  =
  \frac{\Tr X}{d}\Id_S
  +
  \frac1d\sum_{a=1}^q\Tr(P_aX)P_a.
\]
The first term vanishes because \(\Tr X=0\), proving
Eq.~\eqref{eq:traceless-expansion}.  Squaring the expansion and using
\(\Tr(P_aP_b)=d\delta_{ab}\) gives
\begin{align}
  \Tr(X^2)
  &=
  \frac1{d^2}
  \sum_{a,b}x_ax_b\Tr(P_aP_b)
  \nonumber\\
  &=
  \frac1d\sum_a x_a^2.
\end{align}
This is Eq.~\eqref{eq:parseval-system}.
\end{proof}

\begin{lemma}[Trace norm versus Hilbert--Schmidt norm]
\label{lem:trace-HS}
For every operator \(X\) on a \(d\)-dimensional Hilbert space,
\begin{equation}
  \norm{X}_1
  \le
  \sqrt d\,\norm{X}_2.
  \label{eq:trace-HS}
\end{equation}
\end{lemma}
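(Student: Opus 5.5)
The plan is to reduce the inequality to Cauchy--Schwarz on the singular values of \(X\). First I would take the singular value decomposition \(X=\sum_{j=1}^{d}s_j\ket{u_j}\bra{v_j}\), with singular values \(s_1,\ldots,s_d\ge0\); some of these may vanish, which is harmless. In this notation
\[
  \norm{X}_1=\sum_{j=1}^{d}s_j,
  \qquad
  \norm{X}_2=\Bigl(\sum_{j=1}^{d}s_j^2\Bigr)^{1/2}.
\]
The second identity holds because \(\Tr(X^\dagger X)=\sum_j s_j^2\).

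Next I would apply the Cauchy--Schwarz inequality in \(\mathbb R^d\) to the vectors \((1,\ldots,1)\) and \((s_1,\ldots,s_d)\). This gives
\[
  \sum_{j=1}^{d}s_j
  \le
  \Bigl(\sum_{j=1}^{d}1\Bigr)^{1/2}\Bigl(\sum_{j=1}^{d}s_j^2\Bigr)^{1/2}
  =
  \sqrt d\,\norm{X}_2,
\]
which is Eq.~\eqref{eq:trace-HS}.

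An equivalent route uses trace-norm duality. One writes \(\norm{X}_1=\Tr(W^\dagger X)\), with \(W\) the unitary from the polar decomposition, and applies Hilbert--Schmidt Cauchy--Schwarz together with \(\norm{W}_2=\sqrt d\). There is no real obstacle in either route. The only point needing care is to use the dimension \(d\) of the space on which \(X\) acts, since that dimension is the number of singular values. Where the lemma is applied to subsystem operators, this dimension is \(d=\dim\mathcal H_S\).
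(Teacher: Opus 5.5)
Your proof is correct and is the same as the paper's: both write \(\norm{X}_1\) and \(\norm{X}_2\) in terms of the \(d\) singular values of \(X\) and apply the scalar Cauchy--Schwarz inequality against the all-ones vector. The polar-decomposition route you mention also works, but the argument does not need it.
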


\begin{proof}
Let \(s_1,\ldots,s_d\ge0\) be the singular values of \(X\).  Then
\[
  \norm{X}_1=\sum_{j=1}^d s_j,
  \qquad
  \norm{X}_2=\left(\sum_{j=1}^d s_j^2\right)^{1/2}.
\]
The scalar Cauchy--Schwarz inequality gives
\[
  \left(\sum_{j=1}^d s_j\right)^2
  \le
  \left(\sum_{j=1}^d1^2\right)
  \left(\sum_{j=1}^ds_j^2\right)
  =
  d\,\norm{X}_2^2.
\]
Taking square roots proves Eq.~\eqref{eq:trace-HS}.
\end{proof}

\subsection{From local return to trace distance}

\begin{theorem}[Local operator return controls infinite-temperature thermalisation]
\label{thm:return-controls-thermalisation}
For every Hamiltonian \(H\), every subsystem input state \(\rho_S\), and
every \(t\),
\begin{equation}
  \Dt\left(
  \Phi_t(\rho_S),
  \frac{\Id_S}{d}
  \right)^2
  \le
  \frac{d-1}{4}
  \norm{C(t)}_{\mathrm{op}}^2
  \le
  \frac{d-1}{4}
  \norm{C(t)}_{\mathrm F}^2.
  \label{eq:return-controls-thermalisation}
\end{equation}
Here \(\norm{\cdot}_{\mathrm{op}}\) and
\(\norm{\cdot}_{\mathrm F}\) are respectively the Euclidean operator and
Frobenius norms of the \(q\times q\) matrix \(C(t)\).
\end{theorem}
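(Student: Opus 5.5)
We first write the deviation as the image of a traceless operator. Since \(\Tr\rho_S=1\), we have \(\rho_S=\Id_S/d+X\) with \(X=\rho_S-\Id_S/d\) Hermitian and traceless. Because \(\Phi_t\) fixes \(\Id_S/d\) by Eq.~\eqref{eq:maximally-mixed-fixed}, linearity gives \(\Phi_t(\rho_S)-\Id_S/d=\Phi_t(X)\). This operator is also traceless, since \(\Phi_t\) is trace preserving.

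Next we pass to the coefficient vectors of Lemma~\ref{lem:traceless-expansion}. Write \(X=d^{-1}\sum_a x_aP_a\) with \(x_a=\Tr(P_aX)\), and expand \(Y=\Phi_t(X)\) in the same way, with coefficients \(y_b=\Tr(P_bY)\). Linearity together with Definition~\ref{def:return-matrix} gives
\[
y_b=\frac1d\sum_a x_a\Tr\left[P_b\Phi_t(P_a)\right]=\sum_a C_{ba}(t)\,x_a,
\]
so that \(y=C(t)x\). Applying the Parseval identity~\eqref{eq:parseval-system} to both \(X\) and \(Y\) then yields
\[
\Tr(Y^2)=\frac1d\|C(t)x\|^2\le\frac1d\|C(t)\|_{\rm op}^2\|x\|^2=\|C(t)\|_{\rm op}^2\Tr(X^2).
\]

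The trace-norm step is the one real point of care. A naive use of Lemma~\ref{lem:trace-HS} on both sides would produce a factor \(d\), not \(d-1\), so a sharper estimate is needed. On the output side we use the fact that \(Y\) is Hermitian and traceless. Its positive and negative parts then have equal trace \(\|Y\|_1/2\), and Cauchy--Schwarz applied separately to the \(k\) positive and \(d-k\) negative eigenvalues gives
\[
\|Y\|_1^2/4\le\frac{k(d-k)}{d}\Tr(Y^2)\le\frac d4\Tr(Y^2).
\]
On the input side we compute directly \(\Tr(X^2)=\Tr\rho_S^2-1/d\le1-1/d\). Combining the two estimates,
\[
\Dt^2=\frac{\|Y\|_1^2}{4}\le\frac d4\,\|C(t)\|_{\rm op}^2\left(1-\frac1d\right)=\frac{d-1}{4}\|C(t)\|_{\rm op}^2.
\]

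The second inequality in Eq.~\eqref{eq:return-controls-thermalisation} follows from the standard bound \(\|C\|_{\rm op}\le\|C\|_{\rm F}\). The expected obstacle is only the constant. The main thing to verify is that the traceless refinement \(k(d-k)/d\le d/4\) combines with the purity bound \(\Tr X^2\le(d-1)/d\) to give exactly \((d-1)/4\); as computed above, it does.
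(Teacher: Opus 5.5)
Your proof is correct and is essentially the paper's argument: the same decomposition $X=\rho_S-\Id_S/d$, the same coefficient relation $y=C(t)x$, and the same purity bound $\Tr X^2\le 1-1/d$, which is where the factor $d-1$ comes from. The traceless refinement is unnecessary, since after $k(d-k)\le d^2/4$ it is exactly $\|Y\|_1\le\sqrt d\,\|Y\|_2$ (Lemma~\ref{lem:trace-HS}); the paper applies that bound to the output alone and loses nothing, so the ``naive'' route does not in fact cost a factor $d$.
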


\begin{proof}
Put
\[
  \Delta
  =
  \rho_S-\frac{\Id_S}{d}.
\]
Then \(\Delta=\Delta^\dagger\) and \(\Tr\Delta=0\).  By
Lemma~\ref{lem:traceless-expansion},
\begin{equation}
  \Delta
  =
  \frac1d\sum_{a=1}^q x_aP_a,
  \qquad
  x_a=\Tr(P_a\Delta).
  \label{eq:delta-expansion-proof}
\end{equation}
Since \(\Tr P_a=0\),
\[
  x_a=\Tr(P_a\rho_S).
\]
Again by Lemma~\ref{lem:traceless-expansion},
\begin{align}
  \norm{x}_2^2
  &=
  d\,\Tr(\Delta^2)
  \nonumber\\
  &=
  d\Tr\left[
  \rho_S^2-\frac{2}{d}\rho_S+\frac{\Id_S}{d^2}
  \right]
  \nonumber\\
  &=
  d\Tr(\rho_S^2)-2+\frac{\Tr\Id_S}{d}
  \nonumber\\
  &=
  d\Tr(\rho_S^2)-1
  \nonumber\\
  &\le d-1,
  \label{eq:x-bound-proof}
\end{align}
because every density operator satisfies
\(\Tr(\rho_S^2)\le1\).

By linearity and Eq.~\eqref{eq:maximally-mixed-fixed},
\begin{equation}
  \Phi_t(\rho_S)-\frac{\Id_S}{d}
  =
  \Phi_t(\Delta).
  \label{eq:output-difference-proof}
\end{equation}
Expand the output as
\[
  \Phi_t(\Delta)
  =
  \frac1d\sum_{b=1}^q y_bP_b.
\]
Its coefficients are
\begin{align}
  y_b
  &=
  \Tr[P_b\Phi_t(\Delta)]
  \nonumber\\
  &=
  \frac1d\sum_a x_a
  \Tr[P_b\Phi_t(P_a)]
  \nonumber\\
  &=
  \sum_aC_{ba}(t)x_a.
\end{align}
Thus
\begin{equation}
  y=C(t)x.
  \label{eq:y-Cx}
\end{equation}

Applying Lemma~\ref{lem:trace-HS} and then
Lemma~\ref{lem:traceless-expansion},
\begin{align}
  \Dt\left(
  \Phi_t(\rho_S),\frac{\Id_S}{d}
  \right)^2
  &=
  \frac14\norm{\Phi_t(\Delta)}_1^2
  \nonumber\\
  &\le
  \frac d4\norm{\Phi_t(\Delta)}_2^2
  \nonumber\\
  &=
  \frac d4
  \left(
  \frac1d\sum_b y_b^2
  \right)
  \nonumber\\
  &=
  \frac14\norm{y}_2^2
  \nonumber\\
  &=
  \frac14\norm{C(t)x}_2^2
  \nonumber\\
  &\le
  \frac14\norm{C(t)}_{\mathrm{op}}^2\norm{x}_2^2
  \nonumber\\
  &\le
  \frac{d-1}{4}
  \norm{C(t)}_{\mathrm{op}}^2.
\end{align}
Finally,
\(\norm{C}_{\mathrm{op}}\le\norm{C}_{\mathrm F}\),
because the largest singular value is no larger than the square root of
the sum of squared singular values.  This proves
Eq.~\eqref{eq:return-controls-thermalisation}.
\end{proof}

\begin{remark}
The quantity
\[
  \frac{1}{d^2-1}\norm{C(t)}_{\mathrm F}^2
\]
is the unitarity of the traceless block of the reduced channel in the
standard channel-transfer-matrix terminology; see
Ref.~\cite{WallmanGranadeHarperFlammia2015}.  Here it is interpreted
dynamically as the mean return of a complete local operator basis.
\end{remark}

\subsection{Coordinate-free local return}

Define the unitary conjugation superoperator
\begin{equation}
  \mathcal U_t(X)=U_tXU_t^\dagger .
  \label{eq:heisenberg-superoperator}
\end{equation}
It is unitary on the operator Hilbert space
\((\calB(\mathcal H),\langle\cdot,\cdot\rangle)\), because
\begin{align}
  \langle\mathcal U_t(X),\mathcal U_t(Y)\rangle
  &=
  \frac1D
  \Tr\left[
  U_tX^\dagger U_t^\dagger
  U_tYU_t^\dagger
  \right]
  \nonumber\\
  &=
  \frac1D\Tr(X^\dagger Y)
  =
  \langle X,Y\rangle.
\end{align}

\begin{definition}[Many-body local return superoperator]
\label{def:manybody-return}
Define
\begin{equation}
  \mathsf K_S(t)
  :=
  \mathsf P_S\mathcal U_t\mathsf P_S
  :
  \calV_S\longrightarrow\calV_S.
  \label{eq:manybody-return}
\end{equation}
Its squared Hilbert--Schmidt norm, which sums the return over all
\(q\) basis directions, is
\begin{equation}
  R_S(t)
  :=
  \norm{\mathsf K_S(t)}_{\mathrm F}^2.
  \label{eq:return-probability}
\end{equation}
\end{definition}

\begin{proposition}
\label{prop:return-equals-C}
For every \(t\),
\begin{equation}
  R_S(t)=\norm{C(t)}_{\mathrm F}^2.
  \label{eq:return-equals-C}
\end{equation}
\end{proposition}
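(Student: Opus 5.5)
The plan is to compute the Hilbert--Schmidt norm of \(\mathsf K_S(t)\) in the orthonormal basis \(\{O_a\}_{a=1}^q\) of \(\calV_S\). By Eq.~\eqref{eq:normalized-HS-global} and Eq.~\eqref{eq:local-operator-basis}, this basis is orthonormal. The Hilbert--Schmidt (Frobenius) norm of an operator on a finite-dimensional inner-product space can be evaluated in any orthonormal basis, and it equals the sum of the squared moduli of its matrix elements. So
\[
  R_S(t)=\sum_{a,b=1}^q\abs{\langle O_b,\mathsf K_S(t)O_a\rangle}^2 .
\]

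The key step is to identify these matrix elements with the entries \(C_{ba}(t)\). Since \(O_a\in\calV_S\), we have \(\mathsf P_SO_a=O_a\). Since \(\mathsf P_S\) is an orthogonal projection, it is self-adjoint with respect to \(\langle\cdot,\cdot\rangle\), and \(\mathsf P_SO_b=O_b\). Together these give
\[
  \langle O_b,\mathsf P_S\mathcal U_t\mathsf P_SO_a\rangle
  =\langle \mathsf P_SO_b,\mathcal U_t(O_a)\rangle
  =\langle O_b,\mathcal U_t(O_a)\rangle ,
\]
and Lemma~\ref{lem:return-overlap} identifies the right-hand side with \(C_{ba}(t)\). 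Summing the squared moduli then yields \(R_S(t)=\norm{C(t)}_{\mathrm F}^2\). The entries \(C_{ba}(t)\) are in fact real, because the \(O_a\) are Hermitian and \(\mathcal U_t\) preserves Hermiticity, but this fact is not needed.

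There is no genuine obstacle. The only point requiring care is the basis-independence of the Frobenius norm of \(\mathsf K_S(t)\). I would justify it by writing \(\norm{\mathsf K}_{\mathrm F}^2=\Tr_{\calV_S}(\mathsf K^\dagger\mathsf K)\), where the adjoint is taken with respect to \(\langle\cdot,\cdot\rangle\), and then expanding this trace in the orthonormal basis \(\{O_a\}\). This also confirms that \(R_S\) is independent of the choice of local basis, as claimed after Eq.~\eqref{eq:phi}.
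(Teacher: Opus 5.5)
Your proposal is correct and is essentially the paper's own proof. Both arguments compute the matrix entries of $\mathsf K_S(t)$ in the orthonormal basis $\{O_a\}$, use $\mathsf P_SO_b=O_b$ and Lemma~\ref{lem:return-overlap} to identify them with $C_{ba}(t)$, and then sum the squared moduli. (A minor point: the basis-independence remark after Eq.~\eqref{eq:phi} is about $\varphi_S$, whereas $R_S$ is basis-free by definition, so what your computation shows is that $\norm{C(t)}_{\mathrm F}^2$ does not depend on the orthonormal traceless basis chosen.)
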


\begin{proof}
In the orthonormal basis \(\{O_a\}_{a=1}^q\) of \(\calV_S\), the matrix
entries of \(\mathsf K_S(t)\) are
\begin{align}
  \langle O_b,\mathsf K_S(t)O_a\rangle
  &=
  \langle O_b,\mathsf P_S\mathcal U_tO_a\rangle
  \nonumber\\
  &=
  \langle \mathsf P_SO_b,\mathcal U_tO_a\rangle
  \nonumber\\
  &=
  \langle O_b,\mathcal U_tO_a\rangle
  =
  C_{ba}(t),
\end{align}
where we used \(\mathsf P_SO_b=O_b\) and
Lemma~\ref{lem:return-overlap}.  The squared Frobenius norm is the sum of
the squared moduli of the matrix entries, hence
\[
  R_S(t)
  =
  \sum_{a,b=1}^q\abs{C_{ba}(t)}^2
  =
  \norm{C(t)}_{\mathrm F}^2.
\]
\end{proof}

Combining Theorem~\ref{thm:return-controls-thermalisation} and
Proposition~\ref{prop:return-equals-C} gives
\begin{equation}
  \Dt\left(
  \Phi_t(\rho_S),\frac{\Id_S}{d}
  \right)^2
  \le
  \frac{d-1}{4}R_S(t).
  \label{eq:trace-distance-by-return-probability}
\end{equation}

\subsection{Exact Liouvillian spectral representation}

Define the Liouvillian
\begin{equation}
  \calL(X)=[H,X].
  \label{eq:Liouvillian}
\end{equation}
Because \(H=H^\dagger\), the Liouvillian is self-adjoint with respect to
Eq.~\eqref{eq:normalized-HS-global}:
\begin{align}
  \langle X,\calL(Y)\rangle
  &=
  \frac1D\Tr[X^\dagger(HY-YH)]
  \nonumber\\
  &=
  \frac1D\Tr[(X^\dagger H-HX^\dagger)Y]
  \nonumber\\
  &=
  \frac1D\Tr([H,X]^\dagger Y)
  =
  \langle\calL(X),Y\rangle.
\end{align}
Therefore \(\calL\) has an orthogonal spectral decomposition
\begin{equation}
  \calL
  =
  \sum_{\omega}\omega\,\mathsf Q_\omega,
  \qquad
  \sum_\omega\mathsf Q_\omega=\Id_{\calB(\mathcal H)},
  \label{eq:Liouvillian-spectral-decomp}
\end{equation}
where distinct values of \(\omega\) are the distinct Bohr frequencies,
including \(\omega=0\), and \(\mathsf Q_\omega\) are the corresponding
orthogonal projectors.  Consequently,
\begin{equation}
  \mathcal U_t=e^{-it\calL}
  =
  \sum_\omega e^{-i\omega t}\mathsf Q_\omega.
  \label{eq:Liouvillian-time-evolution}
\end{equation}

For a bounded scalar function \(f\), write
\begin{equation}
  \av{f(t)}
  :=
  \lim_{T\to\infty}
  \frac1T\int_0^T f(t)\,dt
  \label{eq:Cesaro-average}
\end{equation}
whenever the limit exists.  Every function below is a finite sum of
exponentials, so the limit exists.

\begin{lemma}[Cesaro orthogonality]
\label{lem:Cesaro-orthogonality}
For real numbers \(\omega,\nu\),
\begin{equation}
  \lim_{T\to\infty}
  \frac1T\int_0^T
  e^{it(\omega-\nu)}\,dt
  =
  \begin{cases}
  1,&\omega=\nu,\\
  0,&\omega\ne\nu.
  \end{cases}
  \label{eq:Cesaro-orthogonality}
\end{equation}
\end{lemma}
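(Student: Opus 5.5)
The plan is a direct computation of the elementary integral, followed by a passage to the limit, treated in two cases.

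For \(\omega=\nu\) the integrand is identically equal to one. Hence \(\frac1T\int_0^T1\,dt=1\) for every \(T>0\), and the limit is \(1\).

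For \(\omega\neq\nu\), put \(\Delta=\omega-\nu\neq0\). The antiderivative of \(e^{it\Delta}\) is \(e^{it\Delta}/(i\Delta)\). This gives
\[
 \frac1T\int_0^T e^{it\Delta}\,dt=\frac{e^{iT\Delta}-1}{iT\Delta}.
\]
Since \(\abs{e^{iT\Delta}-1}\le2\), the modulus of this expression is at most \(2/(T\abs{\Delta})\), which tends to zero as \(T\to\infty\). This proves the second case. The bound \(2/(T\abs\Delta)\) is worth recording explicitly, because the same estimate gives a quantitative rate later, when finite windows are compared with the temporal average.

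There is no real obstacle; the only point to watch is that \(\Delta\) is fixed and nonzero, so the bound is not uniform in \(\Delta\). That is irrelevant here, since only finitely many Bohr frequencies occur in Eq.~\eqref{eq:Liouvillian-time-evolution}.

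The lemma will be used as follows. Expanding any finite exponential sum \(\sum_\omega c_\omega e^{-i\omega t}\) and its products termwise shows that its Cesàro average picks out exactly the \(\omega=0\) coefficient. This justifies the existence claim after Eq.~\eqref{eq:Cesaro-average} and identifies \(\av{\mathcal U_t}=\mathsf Q_0\).
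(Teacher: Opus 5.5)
Your proof is correct and matches the paper's: you treat \(\omega=\nu\) as the trivial case, and for \(\omega\ne\nu\) you integrate explicitly to \((e^{iT\Delta}-1)/(iT\Delta)\) and bound its modulus by \(2/(T|\Delta|)\to0\). Your closing remark, that the Ces\`aro average of a finite exponential sum keeps exactly the zero-frequency coefficient, is also how the paper applies the lemma.
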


\begin{proof}
If \(\omega=\nu\), the integrand is \(1\), and the average equals \(1\).
If \(\omega\ne\nu\), direct integration gives
\[
  \frac1T\int_0^T
  e^{it(\omega-\nu)}dt
  =
  \frac{e^{iT(\omega-\nu)}-1}
  {iT(\omega-\nu)}.
\]
The numerator has absolute value at most \(2\), while the denominator has
absolute value \(T\abs{\omega-\nu}\).  The ratio therefore converges to
zero.
\end{proof}

\begin{theorem}[Exact infinite-time local return formula]
\label{thm:exact-return-spectral}
The infinite-time average of the total local return is
\begin{equation}
  \av{R_S(t)}
  =
  \sum_\omega
  \norm{
  \mathsf P_S\mathsf Q_\omega\mathsf P_S
  }_{\mathrm F}^2.
  \label{eq:exact-return-spectral}
\end{equation}
Equivalently, in the basis \(\{O_a\}\),
\begin{equation}
  \av{R_S(t)}
  =
  \sum_{a,b=1}^q
  \sum_\omega
  \abs{
  \langle O_b,\mathsf Q_\omega O_a\rangle
  }^2.
  \label{eq:exact-return-spectral-components}
\end{equation}
\end{theorem}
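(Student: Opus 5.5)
The plan is to expand $R_S(t)=\norm{\mathsf K_S(t)}_{\mathrm F}^2$ as a finite trigonometric sum in the Bohr frequencies and then apply Lemma~\ref{lem:Cesaro-orthogonality} term by term. First, insert the spectral form~\eqref{eq:Liouvillian-time-evolution} into Definition~\ref{def:manybody-return}. This gives
\[
\mathsf K_S(t)=\sum_\omega e^{-i\omega t}A_\omega,
\qquad
A_\omega:=\mathsf P_S\mathsf Q_\omega\mathsf P_S ,
\]
with the sum finite because $\calB(\mathcal H)$ is finite dimensional.

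Next, write the squared Hilbert--Schmidt norm on $\calV_S$ as $\Tr_{\calK}(\mathsf K_S(t)^\dagger\mathsf K_S(t))$. Each $A_\omega$ is self-adjoint, since $\mathsf P_S$ and $\mathsf Q_\omega$ are orthogonal projectors in the inner product~\eqref{eq:normalized-HS-global}. Hence
\[
R_S(t)=\sum_{\omega,\nu}e^{it(\omega-\nu)}\,\Tr_{\calK}(A_\omega A_\nu).
\]
This is a finite sum of bounded exponentials, so the Cesàro average~\eqref{eq:Cesaro-average} exists and commutes with the finite sum. Lemma~\ref{lem:Cesaro-orthogonality} then kills every off-diagonal term $\omega\ne\nu$ and leaves $\sum_\omega\Tr_{\calK}(A_\omega^2)=\sum_\omega\norm{A_\omega}_{\mathrm F}^2$. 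This is Eq.~\eqref{eq:exact-return-spectral}.

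For the component form, work in the orthonormal basis $\{O_a\}$ of $\calV_S$, exactly as in the proof of Proposition~\ref{prop:return-equals-C}. Because $\mathsf P_SO_a=O_a$ and $\mathsf P_S$ is self-adjoint, the matrix entries are $\langle O_b,A_\omega O_a\rangle=\langle O_b,\mathsf Q_\omega O_a\rangle$. Summing their squared moduli gives Eq.~\eqref{eq:exact-return-spectral-components}.

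The computation is routine, so the only point needing care is the use of \emph{distinct} frequencies. The projectors $\mathsf Q_\omega$ must be grouped by distinct Bohr frequencies, as fixed in Eq.~\eqref{eq:Liouvillian-spectral-decomp}, so that the diagonal condition $\omega=\nu$ in Lemma~\ref{lem:Cesaro-orthogonality} matches the index coincidence exactly. With this grouping, degenerate energies or gaps need no separate treatment: coinciding gaps simply merge into one projector.
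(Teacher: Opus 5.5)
Your proposal is correct and follows essentially the paper's own route. You expand $R_S(t)$ via the Liouvillian spectral decomposition into $\sum_{\omega,\nu}e^{it(\omega-\nu)}\Tr_{\calK}(A_\omega A_\nu)$, keep only $\omega=\nu$ by Ces\`aro orthogonality, and read off the component form in the basis $\{O_a\}$ using $\mathsf P_S O_b=O_b$; the only cosmetic difference is that you first write $\mathsf K_S(t)=\sum_\omega e^{-i\omega t}A_\omega$, whereas the paper expands $\Tr_{\calK}[\mathsf P_S\mathcal U_t^\dagger\mathsf P_S\mathcal U_t\mathsf P_S]$ directly.
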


\begin{proof}
By Definition~\ref{def:manybody-return},
\[
  R_S(t)
  =
  \norm{\mathsf P_S\mathcal U_t\mathsf P_S}_{\mathrm F}^2.
\]
For a finite-dimensional operator \(A\),
\(\norm A_{\mathrm F}^2=\Tr_{\calK}(A^\dagger A)\), where the trace is
over operator Hilbert space.  Hence
\begin{align}
  R_S(t)
  &=
  \Tr_{\calK}\left[
  \mathsf P_S
  \mathcal U_t^\dagger
  \mathsf P_S
  \mathcal U_t
  \mathsf P_S
  \right]
  \nonumber\\
  &=
  \sum_{\omega,\nu}
  e^{it(\omega-\nu)}
  \Tr_{\calK}\left[
  \mathsf P_S\mathsf Q_\omega
  \mathsf P_S\mathsf Q_\nu
  \mathsf P_S
  \right],
  \label{eq:return-before-Cesaro}
\end{align}
where Eq.~\eqref{eq:Liouvillian-time-evolution} was used.

Applying Lemma~\ref{lem:Cesaro-orthogonality} term by term leaves only
\(\omega=\nu\):
\begin{align}
  \av{R_S(t)}
  &=
  \sum_\omega
  \Tr_{\calK}\left[
  \mathsf P_S\mathsf Q_\omega
  \mathsf P_S\mathsf Q_\omega
  \mathsf P_S
  \right]
  \nonumber\\
  &=
  \sum_\omega
  \Tr_{\calK}\left[
  (\mathsf P_S\mathsf Q_\omega\mathsf P_S)^2
  \right].
\end{align}
The operator
\(\mathsf P_S\mathsf Q_\omega\mathsf P_S\) is self-adjoint, so its squared
Frobenius norm equals the last trace.  This proves
Eq.~\eqref{eq:exact-return-spectral}.

To obtain Eq.~\eqref{eq:exact-return-spectral-components}, use the
orthonormal basis \(\{O_a\}\) of \(\calV_S\).  For each \(\omega\),
\[
  \norm{
  \mathsf P_S\mathsf Q_\omega\mathsf P_S
  }_{\mathrm F}^2
  =
  \sum_{a,b}
  \abs{
  \langle O_b,\mathsf Q_\omega O_a\rangle
  }^2.
\]
Summing over \(\omega\) proves the result.
\end{proof}

\section{Finite-window bounds and the autocorrelation characterisation}\label{sec:supp2}
\subsection{A positive majorant of an observation window}
\begin{lemma}[Tent domination]
\label{lem:tent-domination}
Fix \(T>0\), \(t_0\in\mathbb R\), and put
\[
  c=t_0+\frac T2,
  \qquad
  w_T(t)
  =
  \left(
  1-\frac{|t-c|}{T}
  \right)_+.
\]
For every nonnegative measurable \(f\),
\begin{equation}
  \frac1T\int_{t_0}^{t_0+T}f(t)\,dt
  \le
  \frac2T\int_{\mathbb R}w_T(t)f(t)\,dt.
  \label{eq:tent-domination}
\end{equation}
Moreover, for every real \(\Delta\),
\begin{equation}
  \frac1T
  \int_{\mathbb R}
  w_T(t)e^{it\Delta}\,dt
  =
  e^{ic\Delta}
  \operatorname{sinc}^2
  \left(
  \frac{T\Delta}{2}
  \right),
  \label{eq:tent-fourier}
\end{equation}
where \(\operatorname{sinc}(0)=1\) and
\(\operatorname{sinc}(x)=\sin x/x\) for \(x\ne0\).
\end{lemma}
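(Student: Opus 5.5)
The lemma has two independent parts, and I will handle them separately.

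\emph{Domination inequality.} I will prove the pointwise bound \(w_T(t)\ge\tfrac12\) on the window \([t_0,t_0+T]\). For \(t\) in this window, \(|t-c|\le T/2\), because \(c\) is the midpoint. Hence
\[
  w_T(t)=1-\frac{|t-c|}{T}\ge\frac12 .
\]
Since \(f\ge0\) and \(w_T\ge0\) everywhere, we get
\[
  \int_{t_0}^{t_0+T}f\,dt
  \le 2\int_{t_0}^{t_0+T}w_Tf\,dt
  \le 2\int_{\mathbb R}w_Tf\,dt .
\]
Dividing by \(T\) gives Eq.~\eqref{eq:tent-domination}. The integrals may be infinite; measurability and nonnegativity make this harmless.

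\emph{Fourier formula.} I will first substitute \(t=c+s\). This extracts the phase \(e^{ic\Delta}\) and leaves
\[
  \frac1T\int_{-T}^{T}\Bigl(1-\frac{|s|}{T}\Bigr)e^{is\Delta}\,ds .
\]
The tent is even, so this integral equals
\[
  \frac2T\int_0^T\Bigl(1-\frac sT\Bigr)\cos(s\Delta)\,ds .
\]
For \(\Delta=0\) the integral is \(1\), which matches \(\operatorname{sinc}(0)=1\).

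For \(\Delta\ne0\), I will integrate by parts once. The boundary term vanishes at both ends: at \(s=T\) the tent factor is zero, and at \(s=0\) we have \(\sin 0=0\). What remains is
\[
  \frac{2}{T^2\Delta}\int_0^T\sin(s\Delta)\,ds
  =\frac{2(1-\cos T\Delta)}{T^2\Delta^2}
  =\frac{4\sin^2(T\Delta/2)}{T^2\Delta^2},
\]
which is \(\operatorname{sinc}^2(T\Delta/2)\).

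An alternative route is to write the normalised tent \(w_T/T\) as the self-convolution of \(T^{-1}\mathbbm 1_{[-T/2,T/2]}\). The transform of that indicator is \(\operatorname{sinc}(T\Delta/2)\), and the transform of a convolution is the product of the transforms. I would mention this only as a cross-check.

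There is no genuine obstacle here. The only points needing care are the centring phase \(e^{ic\Delta}\), the normalisation factor \(1/T\), and the separate treatment of \(\Delta=0\).
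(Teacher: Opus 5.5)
Your proposal is correct and follows the paper's proof essentially step for step. Both use the pointwise bound \(w_T\ge\tfrac12\) on the window, and both substitute \(s=t-c\) and reduce to a cosine integral by evenness. Your explicit integration by parts, and your direct evaluation at \(\Delta=0\) in place of the paper's continuity remark, fill in exactly the ``direct integration'' step that the paper leaves implicit.
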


\begin{proof}
For \(t\in[t_0,t_0+T]\),
\[
  |t-c|\le\frac T2,
\]
so \(w_T(t)\ge1/2\).  Hence
\[
  \mathbbm 1_{[t_0,t_0+T]}(t)
  \le2w_T(t),
\]
which proves Eq.~\eqref{eq:tent-domination} after multiplication by
\(f(t)\ge0\) and integration.

For the Fourier transform, substitute \(s=t-c\).  Then
\begin{align}
  &\frac1T
  \int_{\mathbb R}
  w_T(t)e^{it\Delta}\,dt
  \nonumber\\
  &\quad=
  \frac{e^{ic\Delta}}{T}
  \int_{-T}^{T}
  \left(
  1-\frac{|s|}{T}
  \right)e^{is\Delta}\,ds
  \nonumber\\
  &\quad=
  \frac{2e^{ic\Delta}}{T}
  \int_0^T
  \left(
  1-\frac{s}{T}
  \right)\cos(s\Delta)\,ds.
  \label{eq:tent-transform-step}
\end{align}
If \(\Delta\ne0\), direct integration gives
\[
  2\int_0^T
  \left(
  1-\frac{s}{T}
  \right)\cos(s\Delta)\,ds
  =
  T\left[
  \frac{\sin(T\Delta/2)}{T\Delta/2}
  \right]^2.
\]
The identity extends continuously to \(\Delta=0\), proving
Eq.~\eqref{eq:tent-fourier}.
\end{proof}
For every real \(x\), with \(\operatorname{sinc}(0)=1\),
\begin{equation}
 \operatorname{sinc}^2x\le\min\{1,x^{-2}\}.
 \label{eq:sinc-square-bound}
\end{equation}
At \(x=0\) the second bound is omitted.  For \(x\ne0\), the two
bounds follow from \(|\sin x|\le |x|\) and \(|\sin x|\le1\),
respectively.

\subsection{Hamiltonian-only local Liouvillian measure}

Let
\[
  q=d^2-1.
\]
For each distinct Liouvillian frequency define the positive operator on
\(\calV_S\)
\begin{equation}
  A_\omega
  :=
  \mathsf P_S\mathsf Q_\omega\mathsf P_S,
  \qquad
  \pi_\omega
  :=
  \frac1q\Tr_{\calK}A_\omega.
  \label{eq:local-A-pi-new}
\end{equation}
For a Borel set \(B\subset\mathbb R\), define
\[
  \mathsf Q_B
  :=
  \sum_{\omega\in B}\mathsf Q_\omega
\]
and
\begin{equation}
  \mu_S(B)
  :=
  \frac1q
  \Tr_{\calK}
  \left[
  \mathsf P_S\mathsf Q_B\mathsf P_S
  \right]
  =
  \sum_{\omega\in B}\pi_\omega.
  \label{eq:local-mu-new}
\end{equation}
Define
\begin{equation}
  \chi_S(\varepsilon;H)
  :=
  \sup_{x\in\mathbb R}
  \mu_S([x,x+\varepsilon)).
  \label{eq:local-chi-new}
\end{equation}

\begin{lemma}[Local spectral weights]
\label{lem:local-spectral-weights-new}
The probability measure \(\mu_S\) is supported in
\([-2\norm H_\infty,2\norm H_\infty]\).  Moreover, for all
Liouvillian frequencies \(\omega,\nu\),
\begin{equation}
  0
  \le
  \Tr_{\calK}(A_\omega A_\nu)
  \le
  q^2\pi_\omega\pi_\nu.
  \label{eq:local-cross-new}
\end{equation}
\end{lemma}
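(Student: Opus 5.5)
The plan has three parts: show that \(\mu_S\) is a probability measure, locate its support, and prove the two bounds in Eq.~\eqref{eq:local-cross-new}. All of them use only the facts that the \(\mathsf Q_\omega\) are orthogonal projectors resolving the identity and that \(\mathsf P_S\) is an orthogonal projector of rank \(q\).

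\emph{Normalisation and positivity.} Each \(A_\omega=\mathsf P_S\mathsf Q_\omega\mathsf P_S=(\mathsf Q_\omega\mathsf P_S)^\dagger(\mathsf Q_\omega\mathsf P_S)\) is positive semidefinite on operator space, since it is the compression of a projector. Hence \(\pi_\omega\ge0\). Summing over \(\omega\) and using \(\sum_\omega\mathsf Q_\omega=\Id\) gives \(\sum_\omega A_\omega=\mathsf P_S\). Therefore \(\sum_\omega\pi_\omega=q^{-1}\Tr_{\calK}\mathsf P_S=q^{-1}\dim\calV_S=1\), so \(\mu_S\) is a probability measure.

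\emph{Support.} I would note that \(\mu_S\) only charges Liouvillian frequencies \(\omega\) with \(\pi_\omega>0\). Every eigenvalue of \(\calL=[H,\cdot]\) has the form \(E_j-E_k\) for eigenvalues \(E_j,E_k\) of \(H\): the operators \(|j\rangle\langle k|\) built from an eigenbasis of \(H\) diagonalise \(\calL\). Since \(|E_j-E_k|\le2\norm H_\infty\), every \(\omega\) lies in \([-2\norm H_\infty,2\norm H_\infty]\). Alternatively one can use \(\norm{\calL}\le2\norm H_\infty\) in the operator-space norm, which follows from \(\norm{HX-XH}_{2,D}\le2\norm H_\infty\norm X_{2,D}\).

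\emph{Cross bounds.} For the lower bound, \(\Tr(A_\omega A_\nu)=\Tr(A_\omega^{1/2}A_\nu A_\omega^{1/2})\ge0\), because \(A_\omega^{1/2}A_\nu A_\omega^{1/2}\) is positive. For the upper bound, I would use \(A_\nu\le\lambda_{\max}(A_\nu)\Id\le(\Tr A_\nu)\Id\), which holds because \(A_\nu\ge0\). Conjugating by \(A_\omega^{1/2}\) preserves the order, so \(\Tr(A_\omega A_\nu)\le\Tr(A_\nu)\Tr(A_\omega)\). Inserting \(\Tr A_\omega=q\pi_\omega\) gives \(q^2\pi_\omega\pi_\nu\).

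I expect no real obstacle. The only point needing care is that all traces are operator-space traces \(\Tr_{\calK}\), with the \(A_\omega\) regarded as operators on \(\calV_S\), or equivalently on \(\calB(\mathcal H)\), where they vanish off \(\calV_S\). The positivity arguments are insensitive to this choice.
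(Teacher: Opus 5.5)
Your proposal is correct and follows the paper's proof essentially step by step. The steps match: positivity of \(A_\omega\) as a compressed projector, normalisation from \(\sum_\omega A_\omega=\mathsf P_S\) and \(\Tr_{\calK}\mathsf P_S=q\), the support from Bohr frequencies being energy differences, and the cross bound from \(A_\nu\le(\Tr A_\nu)\Id\). The only differences are cosmetic: you write \(A_\omega=(\mathsf Q_\omega\mathsf P_S)^\dagger(\mathsf Q_\omega\mathsf P_S)\) where the paper uses the quadratic form \(\langle X,\mathsf Q_\omega X\rangle=\|\mathsf Q_\omega X\|_{2,D}^2\), and you make the trace inequality explicit by conjugating with \(A_\omega^{1/2}\).
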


\begin{proof}
For \(X\in\calV_S\),
\[
  \langle X,A_\omega X\rangle
  =
  \langle X,\mathsf Q_\omega X\rangle
  =
  \norm{\mathsf Q_\omega X}_{2,D}^2
  \ge0,
\]
so \(A_\omega\ge0\) and \(\pi_\omega\ge0\).  Since
\[
  \sum_\omega A_\omega=\mathsf P_S
\]
and \(\Tr_{\calK}\mathsf P_S=q\),
\[
  \sum_\omega\pi_\omega=1.
\]
Every Liouvillian frequency is an energy difference and therefore has
absolute value at most \(2\norm H_\infty\).

For positive operators \(A,B\), one has
\[
  A\le(\Tr A)\Id
\]
because every eigenvalue of \(A\) is nonnegative and no eigenvalue exceeds
their sum.  Thus
\[
  0\le\Tr(AB)\le(\Tr A)(\Tr B).
\]
Apply this to \(A_\omega,A_\nu\) and use
\(\Tr A_\omega=q\pi_\omega\).
\end{proof}

\begin{remark}[Relation to finite-time energy-band thermalisation]
The positive time weight used below is not a new ingredient.
Positive-Fourier weights are central to the finite-time autocorrelation
framework of Ref.~\cite{Vikram2026} and to the general inner-product
formulation in Appendix~A.1 and Theorem~A.1 of
Ref.~\cite{VikramNoAverages2026}; Appendix~C.2 of the latter gives the
proof. The tent function is one of the examples used there.

The theorem below applies this Fourier positivity to a different
spectral object: the probability measure obtained by compressing the
Liouvillian spectral projectors to the complete local operator algebra.
It keeps the full shell dependence of the sinc-squared kernel instead
of replacing all frequencies outside one band by a single worst-tail
parameter.
\end{remark}

\begin{theorem}[Local Liouvillian concentration theorem]
\label{thm:fejer-local-band}
For every Hamiltonian \(H\) and every
\(t_0,T,\varepsilon>0\),
\begin{align}
  \frac1T
  \int_{t_0}^{t_0+T}
  R_S(t)\,dt
 &\le
  4q^2\chi_S(\varepsilon;H)
  \left[
  1+
  \frac{2\pi^2}{3(T\varepsilon)^2}
  \right],
  \label{eq:fejer-local-return}\\
  \frac1T
  \int_{t_0}^{t_0+T}
  \sup_{\rho_S}
  \Dt\left(
  \Phi_t(\rho_S),\frac{\Id_S}{d}
  \right)^2dt
 &\le
  (d-1)q^2\chi_S(\varepsilon;H)
  \left[
  1+
  \frac{2\pi^2}{3(T\varepsilon)^2}
  \right].
  \label{eq:fejer-local-trace}
\end{align}
No nondegeneracy assumption on energies or gaps is required.
\end{theorem}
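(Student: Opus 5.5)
The plan is to use the tent majorant of Lemma~\ref{lem:tent-domination}, expand \(R_S\) spectrally, and control the resulting double sum by the coefficientwise bound of Lemma~\ref{lem:local-spectral-weights-new}. The trace bound~\eqref{eq:fejer-local-trace} then follows from~\eqref{eq:fejer-local-return}, because Eq.~\eqref{eq:trace-distance-by-return-probability} is pointwise and uniform in \(\rho_S\). Taking the supremum inside the integral and multiplying by \((d-1)/4\) turns \(4q^2\) into \((d-1)q^2\). So it suffices to prove~\eqref{eq:fejer-local-return}.

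From Eq.~\eqref{eq:return-before-Cesaro} and the cyclicity \(\Tr_{\calK}[\mathsf P_S\mathsf Q_\omega\mathsf P_S\mathsf Q_\nu\mathsf P_S]=\Tr_{\calK}(A_\omega A_\nu)\), we have
\[
R_S(t)=\sum_{\omega,\nu}e^{it(\omega-\nu)}\Tr_{\calK}(A_\omega A_\nu).
\]
Since \(R_S\ge0\), Lemma~\ref{lem:tent-domination} bounds the window average by \(\frac2T\int w_T R_S\). Inserting the expansion and Eq.~\eqref{eq:tent-fourier} gives
\[
2\sum_{\omega,\nu}e^{ic(\omega-\nu)}\operatorname{sinc}^2\!\big(T(\omega-\nu)/2\big)\Tr_{\calK}(A_\omega A_\nu).
\]
This total is real and nonnegative. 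Each coefficient is nonnegative, so taking absolute values term by term kills the phase. Applying Eq.~\eqref{eq:local-cross-new} then gives the bound
\[
2q^2\sum_{\omega,\nu}\pi_\omega\pi_\nu\,k(\omega-\nu),\qquad k(\Delta)=\operatorname{sinc}^2(T\Delta/2).
\]
This step is the heart of the argument, and it is uniform in \(t_0\).

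It remains to show that
\[
\sum_{\omega,\nu}\pi_\omega\pi_\nu k(\omega-\nu)\le 2\chi_S(\varepsilon)\Big[1+\tfrac{2\pi^2}{3(T\varepsilon)^2}\Big].
\]
Fix \(\omega\) and partition the \(\nu\)-axis into shells relative to \(\omega\): \(I_0=[\omega,\omega+\varepsilon)\), \(I_{-1}=[\omega-\varepsilon,\omega)\), and more generally \(I_j=[\omega+j\varepsilon,\omega+(j+1)\varepsilon)\). Each shell carries \(\mu_S\)-mass at most \(\chi_S(\varepsilon)\).

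\begin{itemize}
\item On \(I_0\) and \(I_{-1}\), use \(k\le1\). Together they contribute at most \(2\chi_S\).
\item For \(j\ge1\), every \(\nu\in I_j\) has \(|\omega-\nu|\ge j\varepsilon\). The same holds for \(\nu\in I_{-j-1}\).
\item By Eq.~\eqref{eq:sinc-square-bound}, \(k(\omega-\nu)\le 4/(Tj\varepsilon)^2\) on those shells.
\end{itemize}

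Summing over both signs and all \(j\ge1\) gives the tail bound
\[
2\chi_S\sum_{j\ge1}\frac{4}{(T\varepsilon j)^2}=\chi_S\,\frac{8\pi^2}{6(T\varepsilon)^2}=\chi_S\,\frac{4\pi^2}{3(T\varepsilon)^2}.
\]
Adding the two inner shells, the inner sum over \(\nu\) is at most \(2\chi_S\big[1+\tfrac{2\pi^2}{3(T\varepsilon)^2}\big]\). Summing over \(\omega\) with weights \(\pi_\omega\), which total one, gives the same bound. Multiplying by \(2q^2\) yields exactly \(4q^2\chi_S[1+2\pi^2/(3(T\varepsilon)^2)]\), which is~\eqref{eq:fejer-local-return}.

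The main care point is the shell bookkeeping: \(k\) must be controlled by the shell index, with the inner shells on either side of \(\omega\) kept separate so that the constants match exactly. Degeneracies cause no difficulty, since only the projectors \(\mathsf Q_\omega\) and the total weights \(\pi_\omega\) enter the argument.
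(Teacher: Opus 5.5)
Your proof is correct and follows essentially the same route as the paper. The common steps are tent domination, the spectral expansion of $R_S$ with the coefficientwise bound $0\le\Tr_{\calK}(A_\omega A_\nu)\le q^2\pi_\omega\pi_\nu$, a shell decomposition of the sinc-squared kernel, and the pointwise, input-uniform conversion for the trace-distance bound. The only cosmetic difference is that you tile the line by one-sided half-open shells $[\omega+j\varepsilon,\omega+(j+1)\varepsilon)$, whereas the paper uses symmetric shells $\ell\varepsilon\le|\omega-\nu|<(\ell+1)\varepsilon$, each charged $2\chi_S$; your tiling gives the identical constant and matches the half-open intervals in the definition of $\chi_S$ exactly.
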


\begin{proof}
The exact Liouvillian expansion gives
\[
  \mathsf K_S(t)
  =
  \sum_\omega e^{-it\omega}A_\omega
\]
and hence
\begin{equation}
  R_S(t)
  =
  \sum_{\omega,\nu}
  e^{it(\omega-\nu)}
  \Tr_{\calK}(A_\omega A_\nu).
  \label{eq:local-return-spectrum-new}
\end{equation}
By Lemma~\ref{lem:tent-domination} and nonnegativity of \(R_S(t)\),
the rectangular average is no larger than twice its tent-weighted
average.  Using Eq.~\eqref{eq:tent-fourier},
Eq.~\eqref{eq:local-return-spectrum-new}, and
Eq.~\eqref{eq:local-cross-new} gives
\begin{align}
  \frac1T
  \int_{t_0}^{t_0+T}R_S(t)dt
  &\le
  2q^2
  \sum_{\omega,\nu}
  \pi_\omega\pi_\nu
  \operatorname{sinc}^2
  \left(
  \frac{T(\omega-\nu)}2
  \right).
  \label{eq:local-fejer-double}
\end{align}
Fix \(\omega\).  The region
\[
  |\omega-\nu|<\varepsilon
\]
is covered by two intervals of width \(\varepsilon\), so its
\(\mu_S\)-mass is at most \(2\chi_S(\varepsilon;H)\).  For each
integer \(\ell\ge1\), the shell
\[
  \ell\varepsilon
  \le
  |\omega-\nu|
  <
  (\ell+1)\varepsilon
\]
has mass at most \(2\chi_S(\varepsilon;H)\), while
Eq.~\eqref{eq:sinc-square-bound} gives
\[
  \operatorname{sinc}^2
  \left(
  \frac{T(\omega-\nu)}2
  \right)
  \le
  \frac4{T^2\ell^2\varepsilon^2}.
\]
Therefore
\begin{align}
  \sum_\nu
  \pi_\nu
  \operatorname{sinc}^2
  \left(
  \frac{T(\omega-\nu)}2
  \right)
  &\le
  2\chi_S(\varepsilon;H)
  \left[
  1+
  \frac{2\pi^2}{3(T\varepsilon)^2}
  \right].
  \label{eq:local-fejer-fixedomega}
\end{align}
Average over \(\omega\) with total weight one and substitute into
Eq.~\eqref{eq:local-fejer-double}.  This proves
Eq.~\eqref{eq:fejer-local-return}.  The pointwise estimate
\eqref{eq:trace-distance-by-return-probability} is uniform in
\(\rho_S\).  Taking the input supremum before the time integral therefore
gives Eq.~\eqref{eq:fejer-local-trace}.
\end{proof}

\begin{corollary}[Same-scale local concentration bound]
\label{cor:no-time-resolution-overhead}
For every \(T>0\),
\begin{equation}
  \sup_{t_0\ge0}
  \frac1T
  \int_{t_0}^{t_0+T}
  \sup_{\rho_S}
  \Dt\left(
  \Phi_t(\rho_S),\frac{\Id_S}{d}
  \right)^2dt
  \le
  C_d\,
  \chi_S(T^{-1};H),
  \label{eq:no-time-resolution-overhead}
\end{equation}
where
\[
  C_d
  =
  (d-1)(d^2-1)^2
  \left(
  1+\frac{2\pi^2}{3}
  \right).
\]
Consequently, if a fixed subsystem satisfies
\[
  \chi_S(N^{-c};H_N)=O(N^{-\kappa}),
\]
then windows \(T_N=N^c\) already give
\[
  \sup_{t_0\ge0}
  \frac1{T_N}\int_{t_0}^{t_0+T_N}
  \sup_{\rho_S}
  \Dt\left(
  \Phi_t(\rho_S),\frac{\Id_S}{d}
  \right)^2dt
  =
  O_S(N^{-\kappa}).
\]
\end{corollary}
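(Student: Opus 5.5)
The plan is to specialise Theorem~\ref{thm:fejer-local-band}, Eq.~\eqref{eq:fejer-local-trace}, to the same-scale choice \(\varepsilon=T^{-1}\). That theorem holds for every \(t_0,T,\varepsilon>0\), and its right-hand side is independent of \(t_0\). With \(T\varepsilon=1\), the bracket becomes \(1+2\pi^2/3\), so every window \([t_0,t_0+T]\) with \(t_0>0\) satisfies
\[
\frac1T\int_{t_0}^{t_0+T}\sup_{\rho_S}\Dt\!\left(\Phi_t(\rho_S),\frac{\Id_S}{d}\right)^2dt
\le (d-1)q^2\left(1+\frac{2\pi^2}{3}\right)\chi_S(T^{-1};H).
\]
Substituting \(q=d^2-1\) gives exactly \(C_d\chi_S(T^{-1};H)\).

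Two small points need checking.

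\textbf{The origin \(t_0=0\).} The theorem is stated for \(t_0>0\), but nothing in its proof uses positivity. Lemma~\ref{lem:tent-domination} holds for every real \(t_0\), and the Fourier bound is uniform in the centre \(c\). So I would either invoke the proof directly at \(t_0=0\), or note that the window average is continuous in \(t_0\) and take \(t_0\downarrow0\). The integrand is bounded, so the integral over a shifted window converges.

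\textbf{Measurability.} The integrand must be measurable so that the supremum over \(\rho_S\) may sit inside the integral. This was already asserted in the Methods section, and the pointwise bound \eqref{eq:trace-distance-by-return-probability} is uniform in the input. With both points settled, taking \(\sup_{t_0\ge0}\) gives Eq.~\eqref{eq:no-time-resolution-overhead}.

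\textbf{The consequence.} Set \(T_N=N^c\), so \(T_N^{-1}=N^{-c}\). The first part then gives a worst-window error of at most \(C_d\chi_S(N^{-c};H_N)=O(N^{-\kappa})\). The constant \(C_d\) depends only on the fixed subsystem dimension, which justifies writing \(O_S\).

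I expect no real obstacle. The only delicate point is that the shell sum in the earlier theorem is summable with a constant that does not depend on \(T\varepsilon\) beyond the displayed bracket. This is what lets us set \(T\varepsilon=1\) without any logarithmic overhead. That summability has already been established in Eq.~\eqref{eq:local-fejer-fixedomega}, via \(\sum_{\ell}\ell^{-2}=\pi^2/6\).
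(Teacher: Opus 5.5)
Your proposal is correct and matches the paper's proof: the paper also just sets \(\varepsilon=T^{-1}\) in Theorem~\ref{thm:fejer-local-band}, so that \(T\varepsilon=1\), and your substitution \(q=d^2-1\) reproduces \(C_d\). Your extra remarks on the origin \(t_0=0\) (Lemma~\ref{lem:tent-domination} holds for every real \(t_0\)) and on measurability are correct, and are details the paper leaves implicit.
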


\begin{proof}
Set \(\varepsilon=T^{-1}\) in
Theorem~\ref{thm:fejer-local-band}.  Then \(T\varepsilon=1\).
\end{proof}

\begin{remark}[Why the same scale is available here]
The absence of an additional time-scale separation follows from the
shell-by-shell local concentration estimate, not from the Fej\'er
kernel alone.  In the forecasting specialisation of Theorem~A.1 of
Ref.~\cite{VikramNoAverages2026}, a rectangular output window of
length \(T\) and an input tent of width \(T_{\rm obs}\) give
\(w_0=T_{\rm obs}/(\xi T)\).  Here each successive local Liouvillian
shell is separately bounded by \(\chi_S(\varepsilon)\), and the
sinc-squared tail is summable.  This permits the direct choice
\(\varepsilon=T^{-1}\).
\end{remark}

\subsection{Exact separation of stationary memory and dynamical return}

Define
\begin{equation}
  \pi_{0,S}
  :=
  \mu_S(\{0\}),
  \qquad
  \widetilde\mu_S
  :=
  \mu_S-\pi_{0,S}\delta_0,
  \qquad
  m_S
  :=
  1-\pi_{0,S}.
  \label{eq:mu-split-new}
\end{equation}
For \(\varepsilon>0\), define the nonzero-frequency concentration
\begin{equation}
  \widetilde\chi_S(\varepsilon;H)
  :=
  \sup_x
  \widetilde\mu_S([x,x+\varepsilon)).
  \label{eq:tilde-chi-new}
\end{equation}
Let
\[
  A_0
  =
  \mathsf P_S\mathsf Q_0\mathsf P_S
\]
and
\begin{equation}
  \widetilde{\mathsf K}_S(t)
  :=
  \sum_{\omega\ne0}e^{-it\omega}A_\omega.
  \label{eq:K-tilde-new}
\end{equation}
Then
\[
  \mathsf K_S(t)
  =
  A_0+\widetilde{\mathsf K}_S(t).
\]

Let
\begin{equation}
  \overline{\Phi}(X)
  :=
  \lim_{\mathcal T\to\infty}
  \frac1{\mathcal T}
  \int_0^{\mathcal T}\Phi_t(X)\,dt.
  \label{eq:temporal-channel-new}
\end{equation}
The limit exists because all matrix elements are finite sums of
exponentials.

\begin{lemma}[Transfer block of the temporal channel]
\label{lem:temporal-transfer}
On the traceless local operator space, the transfer matrix of
\(\overline{\Phi}\) is \(A_0\).  The transfer matrix of
\(\Phi_t-\overline{\Phi}\) is
\(\widetilde{\mathsf K}_S(t)\).
\end{lemma}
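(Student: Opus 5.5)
The plan is to work entirely in the orthonormal basis \(\{O_a\}\) of \(\calV_S\) and to identify transfer matrices with matrix elements of compressed superoperators, as in Lemma~\ref{lem:return-overlap} and Proposition~\ref{prop:return-equals-C}. By Definition~\ref{def:return-matrix}, the transfer matrix of \(\Phi_t\) on the traceless sector has entries \(C_{ba}(t)=d^{-1}\Tr[P_b\Phi_t(P_a)]\). The proof of Proposition~\ref{prop:return-equals-C} shows these are the entries \(\langle O_b,\mathsf K_S(t)O_a\rangle\) of \(\mathsf K_S(t)=\mathsf P_S\mathcal U_t\mathsf P_S\). By linearity, the transfer matrix of any Cesàro limit of \(\Phi_t\) is the corresponding limit of \(C(t)\), because the map \(X\mapsto d^{-1}\Tr[P_bX]\) is linear and continuous.

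Next, expand with Eq.~\eqref{eq:Liouvillian-time-evolution}:
\[
C_{ba}(t)=\sum_\omega e^{-i\omega t}\langle O_b,\mathsf Q_\omega O_a\rangle=\sum_\omega e^{-i\omega t}\langle O_b,A_\omega O_a\rangle,
\]
where we used \(\mathsf P_SO_a=O_a\). This expansion is a finite sum of exponentials. Lemma~\ref{lem:Cesaro-orthogonality}, applied with \(\nu=0\), kills every term with \(\omega\ne0\) and keeps the \(\omega=0\) term. Hence the traceless transfer matrix of \(\overline\Phi\) in Eq.~\eqref{eq:temporal-channel-new} is \(\langle O_b,A_0O_a\rangle\), that is, the block \(A_0\). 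Interchanging the limit with the matrix element is legitimate because the limit defining \(\overline\Phi(P_a)\) exists entrywise in finite dimension.

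For the second claim, subtract the two expansions. The transfer matrix of \(\Phi_t-\overline\Phi\) is
\[
C(t)-A_0=\sum_{\omega\ne0}e^{-i\omega t}A_\omega=\widetilde{\mathsf K}_S(t)
\]
by Eq.~\eqref{eq:K-tilde-new}. For completeness I would also note the identity and mixed blocks. Both \(\Phi_t\) and \(\overline\Phi\) are unital by Eq.~\eqref{eq:maximally-mixed-fixed} and trace preserving. The identity-to-traceless and traceless-to-identity entries therefore vanish for both maps, and the difference vanishes on \(\Id_S\) and has traceless outputs. So the traceless block is the entire nontrivial content of \(\Phi_t-\overline\Phi\).

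I expect no real obstacle here. The only point needing care is that degenerate Bohr frequencies cause no trouble: each \(\mathsf Q_\omega\) is the full spectral projector for a distinct \(\omega\), so Cesàro orthogonality applies to distinct frequencies without any nondegeneracy assumption on the energies.
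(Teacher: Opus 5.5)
Your proposal is correct and follows essentially the same route as the paper. You identify the traceless transfer matrix of \(\Phi_t\) with \(\mathsf K_S(t)=\sum_\omega e^{-it\omega}A_\omega\) via Proposition~\ref{prop:return-equals-C}, let Ces\`aro orthogonality remove the nonzero frequencies to leave \(A_0\), and subtract to obtain \(\widetilde{\mathsf K}_S(t)\); your remarks on the vanishing identity and mixed blocks and on degenerate frequencies are harmless additions.
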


\begin{proof}
In the orthonormal basis \(\{O_a\}\) of \(\calV_S\), the transfer matrix
of \(\Phi_t\) is the matrix of
\[
  \mathsf K_S(t)
  =
  \sum_\omega e^{-it\omega}A_\omega
\]
by Proposition~\ref{prop:return-equals-C}.  Ces\`aro orthogonality,
Lemma~\ref{lem:Cesaro-orthogonality}, removes every term with
\(\omega\ne0\), leaving \(A_0\).  Subtracting the temporal transfer
matrix from the time-\(t\) transfer matrix leaves
\(\widetilde{\mathsf K}_S(t)\).
\end{proof}

\begin{theorem}[Separated stationary-memory theorem]
\label{thm:separated-stationary}
For every local input state,
\begin{equation}
  \Dt\left(
  \overline{\Phi}(\rho_S),\frac{\Id_S}{d}
  \right)^2
  \le
  \frac{d-1}{4}q^2\pi_{0,S}^2.
  \label{eq:separated-stationary}
\end{equation}
Thus exact conserved local memory enters quadratically in the
mean-square trace-distance error.
\end{theorem}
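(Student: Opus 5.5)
The plan is to repeat the proof of Theorem~\ref{thm:return-controls-thermalisation} with $\Phi_t$ replaced by the temporal channel $\overline\Phi$. By Lemma~\ref{lem:temporal-transfer}, the transfer matrix of $\overline\Phi$ on the traceless local space is $A_0=\mathsf P_S\mathsf Q_0\mathsf P_S$.

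First I check that $\overline\Phi$ is linear and fixes $\Id_S/d$. This follows from Eq.~\eqref{eq:maximally-mixed-fixed} by taking the Ces\`aro limit. Hence, for $\Delta=\rho_S-\Id_S/d$,
\[
\overline\Phi(\rho_S)-\frac{\Id_S}{d}=\overline\Phi(\Delta).
\]
Next I expand $\Delta=d^{-1}\sum_a x_aP_a$ as in Lemma~\ref{lem:traceless-expansion}, with $\|x\|_2^2\le d-1$ as in Eq.~\eqref{eq:x-bound-proof}. The output coefficients of $\overline\Phi(\Delta)$ are then $y=A_0x$, where $A_0$ is viewed as a $q\times q$ matrix in the basis $\{O_a\}$.

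The chain of inequalities from the proof of Theorem~\ref{thm:return-controls-thermalisation} then applies unchanged: Lemma~\ref{lem:trace-HS}, followed by Parseval in Eq.~\eqref{eq:parseval-system}. It gives
\[
\Dt\bigl(\overline\Phi(\rho_S),\Id_S/d\bigr)^2\le\frac14\|A_0x\|_2^2\le\frac{d-1}{4}\|A_0\|_{\rm op}^2 .
\]

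The last step bounds $\|A_0\|_{\rm op}$ by the stationary weight. Lemma~\ref{lem:local-spectral-weights-new} shows that $A_0\ge0$. So its operator norm is its largest eigenvalue, which is at most its trace: $\|A_0\|_{\rm op}\le\Tr_{\calK}A_0=q\pi_0$. By the definition in Eq.~\eqref{eq:mu-split-new}, $\pi_0=\mu_S(\{0\})=\pi_{0,S}$. Squaring yields Eq.~\eqref{eq:separated-stationary}. One could alternatively pass through $\|A_0\|_{\rm F}^2=\Tr_{\calK}A_0^2\le(\Tr_{\calK}A_0)^2$, as in the Methods display; this gives the same constant.

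The only point needing care is the identification in Lemma~\ref{lem:temporal-transfer}. One must confirm that the Ces\`aro limit commutes with taking matrix elements $\frac1d\Tr[P_b\,\cdot\,]$, which holds because these are finite sums of exponentials. The same lemma handles degenerate Bohr frequencies automatically, since $\mathsf Q_0$ is the full zero-frequency projector. No further obstacle is expected: positivity of the compressed projector $A_0$ is what makes the bound linear in $\pi_{0,S}$.
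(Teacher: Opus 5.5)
Your proposal is correct and follows the paper's proof. You transfer the argument of Theorem~\ref{thm:return-controls-thermalisation} to $\overline\Phi$ via Lemma~\ref{lem:temporal-transfer}, then use positivity of $A_0$ to bound it by its trace $q\pi_{0,S}$; the paper uses $\|A_0\|_{\rm F}^2\le(\Tr_{\calK}A_0)^2$ where you use $\|A_0\|_{\rm op}\le\Tr_{\calK}A_0$, and as you note both give the same constant.
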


\begin{proof}
Repeat the proof of
Theorem~\ref{thm:return-controls-thermalisation}, replacing the transfer
matrix \(C(t)\) by the temporal transfer matrix \(A_0\) from
Lemma~\ref{lem:temporal-transfer}.  This gives
\[
  \Dt\left(
  \overline{\Phi}(\rho_S),\frac{\Id_S}{d}
  \right)^2
  \le
  \frac{d-1}{4}\norm{A_0}_{\mathrm F}^2.
\]
Since \(A_0\ge0\),
\[
  \norm{A_0}_{\mathrm F}^2
  =
  \sum_j\lambda_j(A_0)^2
  \le
  \left(
  \sum_j\lambda_j(A_0)
  \right)^2
  =
  (\Tr_{\calK}A_0)^2
  =
  q^2\pi_{0,S}^2.
\]
Substitution proves the theorem.
\end{proof}

\begin{theorem}[Separated finite-window theorem]
\label{thm:separated-window}
For every \(t_0,T,\varepsilon>0\),
\begin{align}
  &\frac1T
  \int_{t_0}^{t_0+T}
  \sup_{\rho_S}
  \Dt\left(
  \Phi_t(\rho_S),
  \overline{\Phi}(\rho_S)
  \right)^2dt
  \nonumber\\
  &\quad\le
  (d-1)q^2
  m_S
  \widetilde\chi_S(\varepsilon;H)
  \left[
  1+
  \frac{2\pi^2}{3(T\varepsilon)^2}
  \right],
  \label{eq:separated-fluctuation}
\end{align}
and
\begin{align}
  &\frac1T
  \int_{t_0}^{t_0+T}
  \sup_{\rho_S}
  \Dt\left(
  \Phi_t(\rho_S),\frac{\Id_S}{d}
  \right)^2dt
  \nonumber\\
  &\quad\le
  (d-1)q^2
  \left[
  \sqrt{
  m_S\widetilde\chi_S(\varepsilon;H)
  \left(
  1+\frac{2\pi^2}{3(T\varepsilon)^2}
  \right)}
  +
  \frac{\pi_{0,S}}2
  \right]^2.
  \label{eq:separated-total}
\end{align}
\end{theorem}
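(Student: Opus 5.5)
The proof will rerun the argument of Theorem~\ref{thm:fejer-local-band}, with the full return superoperator $\mathsf K_S(t)$ replaced by its nonzero-frequency part $\widetilde{\mathsf K}_S(t)$.

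\textbf{Pointwise bound.} By Lemma~\ref{lem:temporal-transfer}, the traceless transfer block of $\Phi_t-\overline\Phi$ is $\widetilde{\mathsf K}_S(t)$. Both $\Phi_t$ and $\overline\Phi$ fix $\Id_S/d$ and preserve trace. Hence $\Phi_t(\rho_S)-\overline\Phi(\rho_S)=(\Phi_t-\overline\Phi)(\Delta)$ with $\Delta=\rho_S-\Id_S/d$. Repeating the proof of Theorem~\ref{thm:return-controls-thermalisation} verbatim then gives, uniformly in $\rho_S$,
\[
 \Dt\bigl(\Phi_t(\rho_S),\overline\Phi(\rho_S)\bigr)^2\le\frac{d-1}{4}\widetilde R_S(t),
 \qquad
 \widetilde R_S(t):=\norm{\widetilde{\mathsf K}_S(t)}_{\mathrm F}^2 .
\]
Expanding the Frobenius norm as in Eq.~\eqref{eq:local-return-spectrum-new} gives
\[
 \widetilde R_S(t)=\sum_{\omega,\nu\ne0}e^{it(\omega-\nu)}\Tr_{\calK}(A_\omega A_\nu).
\]

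\textbf{Window bound.} Lemma~\ref{lem:tent-domination} bounds the rectangular average by twice the tent average. The tent Fourier transform has modulus $\operatorname{sinc}^2$, and Lemma~\ref{lem:local-spectral-weights-new} gives $0\le\Tr(A_\omega A_\nu)\le q^2\pi_\omega\pi_\nu$. Together these yield
\[
 \frac1T\int_{t_0}^{t_0+T}\widetilde R_S\,dt\le 2q^2\sum_{\omega,\nu\ne0}\pi_\omega\pi_\nu\operatorname{sinc}^2\!\left(\frac{T(\omega-\nu)}2\right).
\]
The key step is that the phase and the positivity combine so that the modulus can be taken termwise. Now fix $\omega\ne0$ and sum over $\nu\ne0$ using the shell decomposition of Theorem~\ref{thm:fejer-local-band}, with $\mu_S$ replaced by $\widetilde\mu_S$. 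Each shell has $\widetilde\mu_S$-mass at most $2\widetilde\chi_S(\varepsilon)$, so the inner sum is at most $2\widetilde\chi_S(\varepsilon)\bigl[1+2\pi^2/(3(T\varepsilon)^2)\bigr]$.

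The only change from before is the outer sum over $\omega\ne0$, which now has total weight $m_S=1-\pi_{0,S}$ instead of $1$. This produces the factor $m_S$. Multiplying by $(d-1)/4$, and noting that the pointwise bound is uniform in $\rho_S$ so the input supremum can sit inside the integral, gives Eq.~\eqref{eq:separated-fluctuation}.

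\textbf{Total error.} For Eq.~\eqref{eq:separated-total}, apply the trace-distance triangle inequality pointwise:
\[
 \sup_{\rho_S}\Dt\bigl(\Phi_t(\rho_S),\Id_S/d\bigr)\le g(t)+s_0,
\]
where $g(t)=\sup_{\rho_S}\Dt\bigl(\Phi_t(\rho_S),\overline\Phi(\rho_S)\bigr)$. Theorem~\ref{thm:separated-stationary} gives $s_0\le \tfrac{\sqrt{d-1}}{2}q\pi_{0,S}$. Square, average over the window, and apply Minkowski's inequality in $L^2([t_0,t_0+T],dt/T)$. This gives $\sqrt{\text{LHS}}\le\sqrt{\text{Eq.~\eqref{eq:separated-fluctuation}}}+s_0$.

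Factoring $\sqrt{(d-1)q^2}$ out of both terms, the stationary term becomes $\pi_{0,S}/2$, and squaring gives the stated form.

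\textbf{Main obstacle.} The only delicate point is the bookkeeping that justifies restricting both frequency sums to nonzero frequencies. This rests on Lemma~\ref{lem:temporal-transfer}, which holds even under degeneracy because $A_0$ collects the entire zero-frequency block. It is also what yields the $m_S$ factor rather than $1$.
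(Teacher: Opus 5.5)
Your proposal is correct and follows the paper's own proof essentially step for step. You bound the error pointwise by $\frac{d-1}{4}\|\widetilde{\mathsf K}_S(t)\|_{\rm F}^2$ via the temporal-transfer lemma, run the tent-domination and shell argument over nonzero frequencies only (the outer sum giving $m_S$), and then combine the pointwise triangle inequality, Minkowski's inequality in the normalised window $L^2$ norm, and $s_0\le\frac{\sqrt{d-1}}{2}q\pi_{0,S}$, with constants that match the stated ones.
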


\begin{proof}
By Lemma~\ref{lem:temporal-transfer}, the traceless transfer block of
\(\Phi_t-\overline{\Phi}\) is
\(\widetilde{\mathsf K}_S(t)\).  The norm conversion in the proof of
Theorem~\ref{thm:return-controls-thermalisation} therefore gives
\begin{equation}
  \Dt\left(
  \Phi_t(\rho_S),
  \overline{\Phi}(\rho_S)
  \right)^2
  \le
  \frac{d-1}{4}
  \norm{
  \widetilde{\mathsf K}_S(t)
  }_{\mathrm F}^2.
  \label{eq:fluctuation-pointwise}
\end{equation}
Apply Lemma~\ref{lem:tent-domination} to the nonnegative right-hand side.
The same spectral calculation as in
Theorem~\ref{thm:fejer-local-band}, but summing only over
\(\omega,\nu\ne0\), yields
\begin{align}
  &\frac1T
  \int_{t_0}^{t_0+T}
  \norm{
  \widetilde{\mathsf K}_S(t)
  }_{\mathrm F}^2dt
  \nonumber\\
  &\quad\le
  4q^2m_S\widetilde\chi_S(\varepsilon;H)
  \left[
  1+
  \frac{2\pi^2}{3(T\varepsilon)^2}
  \right].
  \label{eq:tilde-K-window}
\end{align}
Indeed, for fixed nonzero \(\omega\), every central region or shell has
\(\widetilde\mu_S\)-mass at most
\(2\widetilde\chi_S(\varepsilon;H)\), and integration over the first
frequency has total mass \(m_S\).  The pointwise estimate
\eqref{eq:fluctuation-pointwise} is uniform in \(\rho_S\).  Taking the
input supremum before integrating, and then using
Eq.~\eqref{eq:tilde-K-window}, proves
Eq.~\eqref{eq:separated-fluctuation}.

For the total error, the trace-distance triangle inequality and then
the input supremum give
\[
  \sup_{\rho_S}
  \Dt\left(
  \Phi_t(\rho_S),\frac{\Id_S}{d}
  \right)
  \le
  \sup_{\rho_S}
  \Dt\left(
  \Phi_t(\rho_S),
  \overline{\Phi}(\rho_S)
  \right)
  +s_0,
\]
where
\[
  s_0=\sup_{\rho_S}
  \Dt\left(
  \overline\Phi(\rho_S),\frac{\Id_S}{d}
  \right).
\]
Minkowski's inequality for the normalised \(L^2\) norm on
\([t_0,t_0+T]\) therefore gives
\begin{align}
 &\left[
 \frac1T\int_{t_0}^{t_0+T}
 \sup_{\rho_S}
 \Dt\left(
 \Phi_t(\rho_S),\frac{\Id_S}{d}
 \right)^2dt
 \right]^{1/2}
 \nonumber\\
 &\quad\le
 \left[
 \frac1T\int_{t_0}^{t_0+T}
 \sup_{\rho_S}
 \Dt\left(
 \Phi_t(\rho_S),\overline\Phi(\rho_S)
 \right)^2dt
 \right]^{1/2}
 +s_0.
\end{align}
Apply Eq.~\eqref{eq:separated-fluctuation} to the first term and
Theorem~\ref{thm:separated-stationary} to \(s_0\), then square. This
proves Eq.~\eqref{eq:separated-total}.
\end{proof}

\begin{corollary}[Same-scale separated form]
\label{cor:separated-same-scale}
Setting \(\varepsilon=T^{-1}\) gives
\begin{equation}
  \sup_{t_0\ge0}
  \frac1T
  \int_{t_0}^{t_0+T}
  \sup_{\rho_S}
  \Dt\left(
  \Phi_t(\rho_S),\frac{\Id_S}{d}
  \right)^2dt
  \le
  (d-1)q^2
  \left[
  \sqrt{
  \left(1+\frac{2\pi^2}{3}\right)
  \widetilde\chi_S(T^{-1};H)}
  +
  \frac{\pi_{0,S}}2
  \right]^2.
  \label{eq:separated-same-scale}
\end{equation}
\end{corollary}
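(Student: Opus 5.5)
The plan is to specialise Theorem~\ref{thm:separated-window} at $\varepsilon=T^{-1}$ and then remove the factor $m_S$. Setting $\varepsilon=T^{-1}$ in Eq.~\eqref{eq:separated-total} gives $T\varepsilon=1$, so the bracket $1+2\pi^2/(3(T\varepsilon)^2)$ becomes the constant $1+2\pi^2/3$. For every $t_0>0$ this yields
\[
\frac1T\int_{t_0}^{t_0+T}\sup_{\rho_S}\Dt\left(\Phi_t(\rho_S),\frac{\Id_S}{d}\right)^2dt
\le (d-1)q^2\left[\sqrt{m_S\left(1+\frac{2\pi^2}{3}\right)\widetilde\chi_S(T^{-1};H)}+\frac{\pi_{0,S}}{2}\right]^2 .
\]
Since $m_S=1-\pi_{0,S}\in[0,1]$ by Lemma~\ref{lem:local-spectral-weights-new} and the definition \eqref{eq:mu-split-new}, I would drop $m_S\le1$ inside the square root. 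The right-hand side is nondecreasing in the quantity under the root, so this only enlarges the bound.

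Two routine points then remain.

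\textbf{Uniformity in $t_0$.} The right-hand side does not depend on $t_0$. Taking the supremum over $t_0\ge0$ is therefore immediate.

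\textbf{The endpoint $t_0=0$.} Theorem~\ref{thm:separated-window} is stated for $t_0>0$, so this case needs a separate justification. I would note that the proof of that theorem, via Lemma~\ref{lem:tent-domination}, holds for every real $t_0$. Alternatively, I would use continuity of the window average in $t_0$: the integrand is bounded and measurable, so the integral over $[t_0,t_0+T]$ is continuous in $t_0$. Letting $t_0\downarrow0$ then gives the same bound at $t_0=0$.

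The only step needing care is this measurability and boundedness of the integrand $t\mapsto\sup_{\rho_S}\Dt(\cdot)^2$, which justifies the continuity argument. I would handle it as in Methods. The input set is compact and $\Phi_t$ is continuous in $t$, so the supremum is a continuous function of $t$. It is also bounded by $1$.
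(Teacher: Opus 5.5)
Your proposal is correct and is the paper's argument: substitute $\varepsilon=T^{-1}$ into Eq.~\eqref{eq:separated-total} so that $T\varepsilon=1$, drop $m_S\le1$ under the square root, and take the supremum over $t_0$ of a bound that does not depend on $t_0$. Your extra remark on the endpoint $t_0=0$ is a harmless refinement that the paper leaves implicit, since Lemma~\ref{lem:tent-domination} already holds for every real $t_0$.
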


\begin{proof}
Use \(m_S\le1\) and \(T\varepsilon=1\) in
Eq.~\eqref{eq:separated-total}.
\end{proof}

\subsection{Same-scale autocorrelation--band equivalence}
\label{sec:same-scale-equivalence}

The local spectral criterion has an exact finite-time dynamical
interpretation.  Define
\begin{equation}
  \varphi_S(t)
  :=
  \int e^{-it\omega}\,\mu_S(d\omega)
  =
  \frac1q
  \Tr_{\calK}
  \left[
  \mathsf P_Se^{-it\calL}\mathsf P_S
  \right].
  \label{eq:phi-local-new}
\end{equation}
If \(O_1,\ldots,O_q\) is any orthonormal basis of \(\calV_S\), then
\begin{equation}
  \varphi_S(t)
  =
  \frac1q
  \sum_{a=1}^q
  \langle O_a,e^{-it\calL}O_a\rangle.
  \label{eq:phi-basis-new}
\end{equation}
Thus \(\varphi_S\) is the mean infinite-temperature autocorrelation of the
complete traceless local operator algebra.  Remove the exact stationary
atom:
\begin{equation}
  \widetilde\varphi_S(t)
  :=
  \varphi_S(t)-\pi_{0,S}
  =
  \int e^{-it\omega}\,\widetilde\mu_S(d\omega).
  \label{eq:phi-tilde-new}
\end{equation}
Define the rectangular mean square
\begin{equation}
  A_S(T)
  :=
  \frac1{2T}
  \int_{-T}^{T}
  |\widetilde\varphi_S(t)|^2dt
  \label{eq:AS-new}
\end{equation}
and the Fej\'er-weighted mean square
\begin{equation}
  A_S^{\rm F}(T)
  :=
  \frac1T
  \int_{-T}^{T}
  \left(
  1-\frac{|t|}{T}
  \right)
  |\widetilde\varphi_S(t)|^2dt.
  \label{eq:AF-new}
\end{equation}

\begin{lemma}[Autocorrelation representation]
\label{lem:phi-autocorr-new}
Equations~\eqref{eq:phi-local-new} and
\eqref{eq:phi-basis-new} hold, and
\begin{equation}
  A_S^{\rm F}(T)
  =
  \iint
  \operatorname{sinc}^2
  \left(
  \frac{T(\omega-\nu)}2
  \right)
  \widetilde\mu_S(d\omega)
  \widetilde\mu_S(d\nu).
  \label{eq:AF-convolution-new}
\end{equation}
\end{lemma}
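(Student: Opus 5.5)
The plan is to verify the two formulas for \(\varphi_S\) directly from the spectral decomposition, and then obtain Eq.~\eqref{eq:AF-convolution-new} by expanding the square and integrating against the tent using Lemma~\ref{lem:tent-domination}.

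\emph{Step 1: the two expressions for \(\varphi_S\).} Start from Eq.~\eqref{eq:Liouvillian-time-evolution}, \(e^{-it\calL}=\sum_\omega e^{-i\omega t}\mathsf Q_\omega\). Compressing by \(\mathsf P_S\) and taking \(\Tr_{\calK}/q\) gives
\[
\frac1q\Tr_{\calK}\bigl[\mathsf P_S e^{-it\calL}\mathsf P_S\bigr]=\sum_\omega e^{-i\omega t}\pi_\omega=\int e^{-it\omega}\mu_S(d\omega),
\]
by Eqs.~\eqref{eq:local-A-pi-new} and \eqref{eq:local-mu-new}. This proves Eq.~\eqref{eq:phi-local-new}. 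Evaluating the trace over \(\calV_S\) in any orthonormal basis \(\{O_a\}\), and using \(\mathsf P_SO_a=O_a\) with self-adjointness of \(\mathsf P_S\), gives Eq.~\eqref{eq:phi-basis-new}. Basis independence follows because the expression is a trace. By Lemma~\ref{lem:return-overlap}, this also agrees with the definition in Eq.~\eqref{eq:phi}. The sign convention \(\mathcal U_t=e^{-it\calL}\) needs only a consistency check, since \(|\varphi_S|^2\) is insensitive to it. Removing the atom at zero then gives Eq.~\eqref{eq:phi-tilde-new}, because \(\pi_{0,S}=\mu_S(\{0\})\) contributes the constant term.

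\emph{Step 2: the convolution formula.} Since \(\widetilde\mu_S\) is a finite positive discrete measure, expand the square:
\[
|\widetilde\varphi_S(t)|^2=\sum_{\omega,\nu\neq0}\pi_\omega\pi_\nu e^{-it(\omega-\nu)}.
\]
Integrate this finite sum term by term against \(T^{-1}(1-|t|/T)_+\). Apply Eq.~\eqref{eq:tent-fourier} with \(c=0\) and \(\Delta=-(\omega-\nu)\), which gives the factor \(\operatorname{sinc}^2(T(\omega-\nu)/2)\). The result, written as a double integral against \(\widetilde\mu_S\otimes\widetilde\mu_S\), is Eq.~\eqref{eq:AF-convolution-new}.

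\emph{Expected obstacle.} There is no genuine obstacle. The only care needed is applying Lemma~\ref{lem:tent-domination} with the centre placed at zero, i.e.\ with \(t_0=-T/2\) so that the phase \(e^{ic\Delta}\) is trivial, and noting that the sinc factor is even in \(\Delta\). Everything else is an exchange of finite sums with integrals.
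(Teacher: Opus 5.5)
Your proposal is correct and takes essentially the same approach as the paper's proof: you compress the spectral decomposition $e^{-it\calL}=\sum_\omega e^{-i\omega t}\mathsf Q_\omega$ by $\mathsf P_S$ and take the trace to get both formulas for $\varphi_S$, then expand $|\widetilde\varphi_S|^2$ as a double sum over nonzero frequencies and integrate against the tent centred at zero using Eq.~\eqref{eq:tent-fourier} to obtain Eq.~\eqref{eq:AF-convolution-new}. Your term-by-term exchange of a finite sum with the integral plays the role of the paper's appeal to Fubini, which is equivalent because $\widetilde\mu_S$ has finitely many atoms.
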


\begin{proof}
The spectral theorem gives
\[
  e^{-it\calL}
  =
  \sum_\omega e^{-it\omega}\mathsf Q_\omega.
\]
Taking the operator-space trace after compression by \(\mathsf P_S\)
gives Eq.~\eqref{eq:phi-local-new}; evaluating that trace in the
orthonormal basis \(\{O_a\}\) gives
Eq.~\eqref{eq:phi-basis-new}.

Next,
\[
  |\widetilde\varphi_S(t)|^2
  =
  \iint
  e^{-it(\omega-\nu)}
  \widetilde\mu_S(d\omega)
  \widetilde\mu_S(d\nu).
\]
The measure is finite and the integrand is bounded, so Fubini's theorem
applies.  Equation~\eqref{eq:tent-fourier} with centre zero gives
Eq.~\eqref{eq:AF-convolution-new}.
\end{proof}

\begin{theorem}[Same-scale two-sided equivalence]
\label{thm:same-scale-equivalence-new}
Let
\[
  c_*
  =
  \frac{\sin(1/2)}{1/2}.
\]
For every \(T>0\),
\begin{equation}
  \frac{c_*^2}{2}
  \widetilde\chi_S(T^{-1};H)^2
  \le
  A_S(T)
  \le
  4m_S
  \left(
  1+\frac{2\pi^2}{3}
  \right)
  \widetilde\chi_S(T^{-1};H).
  \label{eq:same-scale-equivalence-new}
\end{equation}
There is no logarithmic loss and no change of time scale.
\end{theorem}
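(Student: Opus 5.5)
The plan is to pass between the rectangular mean square $A_S(T)$ and the Fej\'er mean square $A_S^{\rm F}(T)$ by elementary kernel comparisons. On the spectral side I will use Lemma~\ref{lem:phi-autocorr-new}, which writes $A_S^{\rm F}(T)$ as a double integral of $\operatorname{sinc}^2(T(\omega-\nu)/2)$ against $\widetilde\mu_S\otimes\widetilde\mu_S$. I will then bound that double integral above and below by the concentration function $\widetilde\chi_S(T^{-1};H)$, arguing as in the proof of Theorem~\ref{thm:fejer-local-band}.

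\emph{Upper bound.} The integrand $|\widetilde\varphi_S|^2$ is even and nonnegative. On $[-T,T]$ the weight $1$ is at most $2(1-|t|/(2T))$, so
\[
A_S(T)=\frac1{2T}\int_{-T}^{T}|\widetilde\varphi_S|^2dt\le \frac{1}{T}\int_{-2T}^{2T}\Bigl(1-\frac{|t|}{2T}\Bigr)|\widetilde\varphi_S|^2dt=2A_S^{\rm F}(2T).
\]
Lemma~\ref{lem:phi-autocorr-new} with $T$ replaced by $2T$ then gives a double integral with kernel $\operatorname{sinc}^2(T(\omega-\nu))$. Fix $\omega$ and partition the $\nu$-line into the central region $|\omega-\nu|<\varepsilon$ and the shells $\ell\varepsilon\le|\omega-\nu|<(\ell+1)\varepsilon$, taking $\varepsilon=T^{-1}$. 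Each piece has $\widetilde\mu_S$-mass at most $2\widetilde\chi_S$. On the $\ell$th shell, Eq.~\eqref{eq:sinc-square-bound} bounds the kernel by $1/\ell^2$, since there $T|\omega-\nu|\ge\ell$. Summing over shells gives at most $2\widetilde\chi_S(1+\pi^2/6)$. Integrating over $\omega$ against total mass $m_S$ yields $A_S(T)\le 4m_S(1+\pi^2/6)\widetilde\chi_S$, which is within the stated constant $4m_S(1+2\pi^2/3)$. Alternatively, the bound $4/(T^2\ell^2\varepsilon^2)$ used in Theorem~\ref{thm:fejer-local-band} reproduces the stated constant exactly.

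\emph{Lower bound.} Here I reverse the kernel comparison. On $[-T,T]$ the Fej\'er weight satisfies $(1-|t|/T)\le1$, so
\[
A_S^{\rm F}(T)\le\frac1T\int_{-T}^{T}|\widetilde\varphi_S|^2dt=2A_S(T).
\]
It then suffices to show $A_S^{\rm F}(T)\ge c_*^2\widetilde\chi_S(T^{-1})^2$. Since $\widetilde\mu_S$ is atomic with finitely many atoms, the supremum defining $\widetilde\chi_S$ is attained on some interval $I=[x,x+T^{-1})$. Every integrand of the double integral is nonnegative, so I can restrict it to $I\times I$. For $\omega,\nu\in I$ one has $|T(\omega-\nu)/2|<1/2$. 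Because $\operatorname{sinc}^2$ is decreasing on $[0,\pi]$, the kernel on $I\times I$ is at least $\operatorname{sinc}^2(1/2)=c_*^2$. This gives $A_S^{\rm F}(T)\ge c_*^2\widetilde\mu_S(I)^2=c_*^2\widetilde\chi_S^2$, and hence $A_S(T)\ge (c_*^2/2)\widetilde\chi_S^2$.

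\emph{Main obstacle.} The only delicate point is matching the time scales in the upper bound. Converting the rectangle on $[-T,T]$ to a tent forces either a tent of half-width $2T$, as above, or a careful check that the shell constant still closes at $T\varepsilon=1$. I will verify that the kernel-argument rescaling only improves the shell bound, so no change of time scale is needed. Attainment of the supremum is immediate here, and in any case an approximating interval would suffice.
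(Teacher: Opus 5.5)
Your proof is correct and follows essentially the same route as the paper: the lower bound comes from $\operatorname{sinc}^2\ge c_*^2$ on $I\times I$ together with $A_S^{\rm F}(T)\le 2A_S(T)$, and the upper bound from $A_S(T)\le 2A_S^{\rm F}(2T)$ combined with the shell decomposition. The only difference is that you run the shells at width $T^{-1}$ directly on the kernel $\operatorname{sinc}^2(T(\omega-\nu))$, whereas the paper uses width $(2T)^{-1}$. Your choice gives the sharper constant $1+\pi^2/6$ and removes the paper's final monotonicity step $\widetilde\chi_S((2T)^{-1};H)\le\widetilde\chi_S(T^{-1};H)$.
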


\begin{proof}
Let \(I\) be an interval of length \(1/T\).  If
\(\omega,\nu\in I\), then
\[
  \left|
  \frac{T(\omega-\nu)}2
  \right|
  \le\frac12.
\]
The function \(\operatorname{sinc}(x)\) is even and decreasing on
\([0,1/2]\), so
\[
  \operatorname{sinc}^2
  \left(
  \frac{T(\omega-\nu)}2
  \right)
  \ge c_*^2.
\]
Equation~\eqref{eq:AF-convolution-new} therefore gives
\[
  A_S^{\rm F}(T)
  \ge
  c_*^2\,\widetilde\mu_S(I)^2.
\]
Taking the supremum over \(I\),
\begin{equation}
  A_S^{\rm F}(T)
  \ge
  c_*^2
  \widetilde\chi_S(T^{-1};H)^2.
  \label{eq:AF-lower-new}
\end{equation}
Since the triangular weight is at most one,
\[
  A_S^{\rm F}(T)
  \le
  \frac1T
  \int_{-T}^{T}
  |\widetilde\varphi_S(t)|^2dt
  =
  2A_S(T).
\]
This proves the lower inequality.

For the upper inequality, fix \(\omega\).  The same shell decomposition as
in Theorem~\ref{thm:fejer-local-band}, now for
\(\widetilde\mu_S\) and with \(\varepsilon=T^{-1}\), gives
\begin{align}
  &\int
  \operatorname{sinc}^2
  \left(
  \frac{T(\omega-\nu)}2
  \right)
  \widetilde\mu_S(d\nu)
  \nonumber\\
  &\qquad\le
  2\widetilde\chi_S(T^{-1};H)
  \left(
  1+\frac{2\pi^2}{3}
  \right).
  \label{eq:AF-upper-fixed-new}
\end{align}
Integrating over \(\omega\), whose total
\(\widetilde\mu_S\)-mass is \(m_S\), gives
\begin{equation}
  A_S^{\rm F}(T)
  \le
  2m_S
  \left(
  1+\frac{2\pi^2}{3}
  \right)
  \widetilde\chi_S(T^{-1};H).
  \label{eq:AF-upper-new}
\end{equation}
Apply this at time \(2T\).  For \(|t|\le T\),
\[
  1-\frac{|t|}{2T}\ge\frac12,
\]
so
\[
  A_S^{\rm F}(2T)
  \ge
  \frac1{4T}
  \int_{-T}^{T}
  |\widetilde\varphi_S(t)|^2dt
  =
  \frac12A_S(T).
\]
Consequently
\[
  A_S(T)
  \le
  2A_S^{\rm F}(2T)
  \le
  4m_S
  \left(
  1+\frac{2\pi^2}{3}
  \right)
  \widetilde\chi_S((2T)^{-1};H).
\]
The concentration function is nondecreasing in interval width, hence
\[
  \widetilde\chi_S((2T)^{-1};H)
  \le
  \widetilde\chi_S(T^{-1};H).
\]
This proves the upper inequality.
\end{proof}

\begin{remark}[Two different square-root conversions]
Equation~\eqref{eq:same-scale-equivalence-new} gives
\(\widetilde\chi_S(T^{-1};H)\le\sqrt{2A_S(T)}/c_*\).
This square root is a measure-theoretic cost of reconstructing a
\emph{maximum frequency-cell mass} from a mean-square characteristic
function.  A linear bound of a maximum cell mass by this mean square is
false for general positive measures.  For example, fix \(0<p<1/2\),
put \(M=\lceil p^{-2}\rceil\), and use
\[
 \mu_p=p\delta_{2\pi}+\frac{1-p}{M}
                   \sum_{j=2}^{M+1}\delta_{2\pi j}.
\]
All atoms are nonzero, every unit-width interval contains at most one,
and \((1-p)/M\le p^2<p\); thus its unit-width concentration equals
\(p\).  At \(T=1\), exact integration of the distinct integer-frequency
exponentials gives
\[
 \frac12\int_{-1}^{1}|\widehat\mu_p(t)|^2dt
 =p^2+\frac{(1-p)^2}{M}\le2p^2.
\]
The ratio of concentration to mean square is therefore at least
\(1/(2p)\), which is unbounded as \(p\downarrow0\).

This loss is different from the Cauchy--Schwarz square root in the
generic autocorrelation-to-cross-correlator bound of
Ref.~\cite{VikramNoAverages2026}; see Theorem~A.1 and its proof in
Appendix~C.2. The first arises when a scalar correlation is converted
into a maximum spectral-cell mass. The second arises when an
autocorrelator controls a different operator direction.  The direct
complete-local theorem below avoids both intermediate conversions by
keeping the positive operator-valued collision kernel and bounding the
physical local return directly.
\end{remark}

\subsection{Direct control by the local autocorrelation}
\label{sec:direct-positive-kernel}

The general principle that finite-time autocorrelation data can
constrain later thermalisation through positive Fourier weights is
developed in Ref.~\cite{Vikram2026} and formulated abstractly in
Theorem~A.1 of Ref.~\cite{VikramNoAverages2026}.  In that setting the
target is a generic cross-correlator, and Cauchy--Schwarz compares the
target operator direction with the spectral components of the
autocorrelated operator.

For the subsystem problem considered here, the complete local operator algebra defines positive operator-valued
spectral blocks \(A_\omega=\mathsf P_S\mathsf Q_\omega\mathsf P_S\),
while the physical quantity controlling the reduced-state error is the
quadratic return probability of that same local subspace.  This permits
a direct comparison of two quadratic quantities.

Besides \(A_S(T)\) and \(A_S^{\rm F}(T)\), which exclude the zero mode,
define their full-measure versions
\begin{align}
 \mathcal A_S(T;H)
 &=\frac1{2T}\int_{-T}^{T}|\varphi_S(t)|^2dt,\\
 \mathcal F_S(T;H)
 &=\frac1T\int_{-T}^{T}
       \left(1-\frac{|t|}{T}\right)|\varphi_S(t)|^2dt.
 \label{eq:full-correlation-functionals}
\end{align}
Set
\[
 \mathfrak c_d=\frac{(d-1)q^2}{2}.
\]
Define the pointwise worst-input errors and their worst-window averages by
\begin{align}
 e_H(t)&=\sup_{\rho_S}
     \Dt\!\left(\Phi_t(\rho_S),\Id_S/d\right)^2,\\
 f_H(t)&=\sup_{\rho_S}
     \Dt\!\left(\Phi_t(\rho_S),\overline\Phi(\rho_S)\right)^2,\\
 \mathscr E_H(T)&=\sup_{t_0\ge0}
       \frac1T\int_{t_0}^{t_0+T}e_H(t)\,dt,\\
 \mathscr G_H(T)&=\sup_{t_0\ge0}
       \frac1T\int_{t_0}^{t_0+T}f_H(t)\,dt.
 \label{eq:worst-input-window-functionals}
\end{align}
The input supremum is inside the time integral.

\begin{lemma}[Measurability of the every-window functionals]
\label{lem:window-measurability}
For fixed \(T>0\), \(\mathscr E_H(T)\) and \(\mathscr G_H(T)\) are Borel
functions of a finite-dimensional Hermitian matrix \(H\).  They take
values in \([0,1]\).
\end{lemma}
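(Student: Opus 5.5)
The plan is to prove Borel measurability by reducing each functional to countably many operations on continuous functions of \(H\), and to get the range \([0,1]\) from elementary trace-distance bounds.

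\emph{Range.} Every output \(\Phi_t(\rho_S)\) is a density operator. The temporal output \(\overline\Phi(\rho_S)\) is a limit of averages of density operators, hence also a density operator. Trace distance between density operators lies in \([0,1]\). Hence \(e_H(t),f_H(t)\in[0,1]\), every window average lies in \([0,1]\), and so does the supremum over window origins.

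\emph{Continuity of the integrand.} Work on the space of \(D\times D\) Hermitian matrices, identified with \(\mathbb R^{D^2}\). The map \((H,t,\rho_S)\mapsto\Phi_t(\rho_S)\) is jointly continuous, since \(e^{-itH}\) is continuous in \((H,t)\). The map \((H,t,\rho_S)\mapsto\Dt(\Phi_t(\rho_S),\Id_S/d)^2\) is therefore jointly continuous. The state space of \(S\) is compact, so taking the supremum over \(\rho_S\) preserves continuity: a supremum of a jointly continuous function over a compact set is continuous in the remaining variables. Thus \((H,t)\mapsto e_H(t)\) is continuous and bounded.

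\emph{The case of \(\mathscr E_H(T)\).} Fix \(T\) and \(t_0\). The window integral is continuous in \((H,t_0)\) by dominated convergence, since the integrand is bounded by \(1\). For fixed \(H\) it is also continuous in \(t_0\). Hence the supremum over \(t_0\ge0\) equals the supremum over nonnegative rationals. This exhibits \(\mathscr E_H(T)\) as a countable supremum of continuous functions of \(H\), hence Borel (indeed lower semicontinuous).

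\emph{The case of \(\mathscr G_H(T)\): the main obstacle.} The difficulty is that \(\overline\Phi\) depends discontinuously on \(H\): degeneracies of Bohr frequencies change the spectral projectors. I would write
\[
 \overline\Phi(X)=\lim_{k\to\infty}\frac1k\int_0^k\Phi_s(X)\,ds .
\]
Each finite average \(\overline\Phi^{(k)}_H\) is continuous in \(H\), so the linear map \(\overline\Phi_H\) is a pointwise limit of continuous maps, hence Borel. The plan is to parametrize \(\overline\Phi_H\) by its transfer matrix. This gives a Borel map \(H\mapsto M_H\) into a finite-dimensional matrix space. Then
\[
 g(H,t,\rho_S)=\Dt\bigl(\Phi_t(\rho_S),M_H(\rho_S)\bigr)^2
\]
is obtained by composing a jointly continuous function \(G(H,t,M,\rho_S)\) with the Borel map \(H\mapsto M_H\).

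Define \(F(H,M,t_0)=\frac1T\int_{t_0}^{t_0+T}\sup_{\rho_S}G(H,t,M,\rho_S)\,dt\). As in the previous step, \(F\) is continuous in \((H,M,t_0)\), using compactness of the state space and dominated convergence. Hence \(\sup_{t_0\in\mathbb Q_{\ge0}}F\) is a lower-semicontinuous, hence Borel, function of \((H,M)\). Finally,
\[
 \mathscr G_H(T)=\sup_{t_0\in\mathbb Q_{\ge0}}F(H,M_H,t_0),
\]
which is a composition of Borel maps and therefore Borel.

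The point needing care is that the intermediate supremum is taken over a compact set before substituting the Borel map \(M_H\). This ordering avoids any measurable-selection issue.
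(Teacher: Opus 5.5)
Your proof is correct. For \(\mathscr E_H(T)\) it matches the paper's argument: joint continuity, compactness of the state space, continuity in \(t_0\), and reduction to rational origins.

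For \(\mathscr G_H(T)\) you take a somewhat different route. The paper treats \(\overline\Phi_H\) as a Borel function of \(H\) from the start. It takes the input supremum over a countable dense set of states to obtain a Borel function of \((H,t)\). It then uses the Tonelli-type fact that a bounded, jointly Borel integrand has a Borel window integral in \((H,t_0)\), and finally reduces to rational origins. You instead treat the temporal channel as an independent parameter \(M\). The input supremum, the window integral and the rational-origin supremum are then all performed on functions continuous in \((H,M,t_0)\), and the Borel map \(H\mapsto M_H\) is substituted only at the end.

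What your route buys:
\begin{itemize}
\item It avoids both the dense-subset step and any measurability theorem for integrals of merely Borel integrands.
\item It shows \(\mathscr G_H(T)=\Psi(H,M_H)\) with \(\Psi\) lower semicontinuous, so the only source of discontinuity in \(H\) is the temporal channel.
\end{itemize}
The paper's version is shorter and stays closer to the definition.

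One small point: for an arbitrary linear map \(M\), the integrand is no longer bounded by \(1\). Your dominated-convergence step should therefore use the local bound that comes from continuity on compact sets. Alternatively, restrict \(M\) to the compact set of channels; it contains every \(M_H\), and on it the bound \(1\) holds again.
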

\begin{proof}
The exponential \(e^{-itH}\) is continuous in \(t\) and the matrix
entries of \(H\).  The density-operator set is compact.  Consequently
the supremum defining \(e_H(t)\) is continuous in \((H,t)\), and its
average over a finite window is continuous in \((H,t_0)\).  The
supremum over \(t_0\ge0\) equals the supremum over nonnegative rational
\(t_0\), by continuity in \(t_0\); a countable supremum of continuous
functions is Borel.

Every matrix entry of \(\overline\Phi\) is a pointwise limit of
finite-time averages of continuous functions of \(H\), hence is Borel.
For fixed \(H\), both the input and time dependence of the expression
defining \(f_H(t)\) are continuous.  Its input supremum can be taken
over a countable dense subset of density operators.  The resulting
function is Borel in \((H,t)\), and the bounded time integral is Borel
in \((H,t_0)\).  Taking the supremum over rational starting times proves
the second assertion.  Trace distance between states is at most one.
\end{proof}

\begin{definition}[Positive-Fourier observation kernel]
\label{def:positive-fourier-kernel}
A function \(w\in L^1(\mathbb R)\) is called an admissible
positive-Fourier observation kernel if
\[
 w(t)\ge0,\qquad
 \int_{\mathbb R}w(t)dt=1,\qquad
 \widehat w(x):=\int_{\mathbb R}w(t)e^{itx}dt\in[0,\infty)
\]
for every real \(x\).
\end{definition}

\begin{lemma}[Positive-block subspace comparison]
\label{lem:positive-block-subspace}
Let \(A_\alpha\ge0\) be a finite family of operators on an
\(r\)-dimensional Hilbert space, let \(\lambda_\alpha\in\mathbb R\), and
put
\[
 K(t)=\sum_\alpha e^{-it\lambda_\alpha}A_\alpha,\qquad
 a(t)=\frac1r\Tr K(t),\qquad
 p(t)=\frac1r\|K(t)\|_{\rm F}^2.
\]
For every admissible positive-Fourier kernel \(w\),
\begin{align}
 |a(t)|^2&\le p(t)\qquad\text{for every }t,
 \label{eq:abstract-pointwise}\\
 \int_{\mathbb R}w(t-c)p(t)dt
 &\le
 r\int_{\mathbb R}w(t)|a(t)|^2dt
 \qquad\text{for every }c\in\mathbb R.
 \label{eq:abstract-translated}
\end{align}
No normalisation assumption on \(\sum_\alpha A_\alpha\) is required.
\end{lemma}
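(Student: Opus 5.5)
The pointwise bound \eqref{eq:abstract-pointwise} is Cauchy--Schwarz for the Hilbert--Schmidt inner product on the $r\times r$ matrices. Writing $\Tr K(t)=\Tr(\Id\,K(t))$ gives $|\Tr K(t)|^2\le\Tr(\Id)\,\|K(t)\|_{\rm F}^2=r\|K(t)\|_{\rm F}^2$. Dividing by $r^2$ yields $|a(t)|^2\le p(t)$. Positivity of the blocks plays no role in this first step.

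For \eqref{eq:abstract-translated} I will expand both quadratic quantities spectrally, exactly as in the proof of Theorem~\ref{thm:fejer-local-band}. Since each $A_\alpha$ is Hermitian,
\[
 p(t)=\frac1r\sum_{\alpha,\beta}e^{it(\lambda_\alpha-\lambda_\beta)}\Tr(A_\alpha A_\beta),
 \qquad
 |a(t)|^2=\frac1{r^2}\sum_{\alpha,\beta}e^{it(\lambda_\alpha-\lambda_\beta)}\Tr A_\alpha\,\Tr A_\beta .
\]
Both are finite sums of bounded exponentials and $w\in L^1$, so the weighted integrals can be taken term by term. Substituting $s=t-c$ gives
\[
 \int w(t-c)\,e^{it\Delta}\,dt=e^{ic\Delta}\,\widehat w(\Delta).
\]
Hence
\[
 \int w(t-c)\,p(t)\,dt=\frac1r\sum_{\alpha,\beta}e^{ic(\lambda_\alpha-\lambda_\beta)}\,\widehat w(\lambda_\alpha-\lambda_\beta)\,\Tr(A_\alpha A_\beta).
\]
The left-hand side is real and nonnegative, so it equals its own modulus. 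The triangle inequality then removes the phases $e^{ic(\cdot)}$, and this is the only place where the translation $c$ enters. What remains is bounded by
\[
 \frac1r\sum_{\alpha,\beta}\widehat w(\lambda_\alpha-\lambda_\beta)\,\bigl|\Tr(A_\alpha A_\beta)\bigr|.
\]
Here I use $\widehat w\ge0$. Next I apply the coefficientwise bound from the proof of Lemma~\ref{lem:local-spectral-weights-new}: for positive operators, $0\le\Tr(A_\alpha A_\beta)\le\Tr A_\alpha\,\Tr A_\beta$. That proof did not use normalisation of $\sum A_\alpha$, which matches the remark in the statement.

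Combining, the weighted integral of $p$ is at most
\[
 \frac1r\sum_{\alpha,\beta}\widehat w(\lambda_\alpha-\lambda_\beta)\,\Tr A_\alpha\,\Tr A_\beta .
\]
This sum is exactly $r$ times the untranslated weighted expansion of $|a|^2$, namely $r\int w(t)|a(t)|^2\,dt$, which is what \eqref{eq:abstract-translated} asserts.

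The main point to get right is the ordering: the phases are dropped only after using that every coefficient $\widehat w(\lambda_\alpha-\lambda_\beta)\Tr(A_\alpha A_\beta)$ is nonnegative. This is where the hypotheses $A_\alpha\ge0$ and $\widehat w\ge0$ are both essential. The rest is routine, apart from two small checks. First, coinciding $\lambda_\alpha$ need not be merged, since the argument never uses distinctness. Second, $\widehat w$ is real by hypothesis, so no symmetrisation of the double sum is needed.
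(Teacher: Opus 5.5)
Your proof is correct and follows the paper's argument essentially step by step: Hilbert--Schmidt Cauchy--Schwarz against $\Id_r$ for the pointwise bound, then the double-sum spectral expansion, the translation phase $e^{ic\Delta}\widehat w(\Delta)$, the triangle inequality on the nonnegative left-hand side, and the coefficientwise bound $0\le\Tr(A_\alpha A_\beta)\le\Tr A_\alpha\,\Tr A_\beta$. One small correction to your closing remark: the triangle inequality itself needs no sign information, and $\widehat w\ge0$ is what lets you identify the resulting bound with the untranslated integral $r\int w|a|^2$.
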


\begin{proof}
Hilbert--Schmidt Cauchy--Schwarz gives
\[
 |\Tr K(t)|^2
 \le
 \|\Id_r\|_{\rm F}^2\|K(t)\|_{\rm F}^2
 =
 r\|K(t)\|_{\rm F}^2,
\]
which proves Eq.~\eqref{eq:abstract-pointwise}.

For
\[
 a_{\alpha\beta}=\Tr(A_\alpha A_\beta),
\]
positivity gives
\[
 0\le a_{\alpha\beta}
 \le(\Tr A_\alpha)(\Tr A_\beta).
\]
Indeed,
\(A_\alpha\le(\Tr A_\alpha)\Id_r\), and multiplication by
\(A_\beta\ge0\) followed by the trace proves the upper bound.
Expanding,
\[
 p(t)=\frac1r\sum_{\alpha,\beta}
 e^{it(\lambda_\alpha-\lambda_\beta)}
 a_{\alpha\beta}.
\]
Therefore
\[
 \int w(t-c)p(t)dt
 =
 \frac1r\sum_{\alpha,\beta}
 e^{ic(\lambda_\alpha-\lambda_\beta)}
 \widehat w(\lambda_\alpha-\lambda_\beta)
 a_{\alpha\beta}.
\]
The left-hand side is nonnegative.  Taking the absolute value of the
finite sum and using
\(\widehat w\ge0\), \(a_{\alpha\beta}\ge0\), and the upper bound above
gives
\[
 \int w(t-c)p(t)dt
 \le
 \frac1r\sum_{\alpha,\beta}
 \widehat w(\lambda_\alpha-\lambda_\beta)
 (\Tr A_\alpha)(\Tr A_\beta).
\]
Writing \(\pi_\alpha=r^{-1}\Tr A_\alpha\), the last expression is
\[
 r\sum_{\alpha,\beta}
 \widehat w(\lambda_\alpha-\lambda_\beta)
 \pi_\alpha\pi_\beta
 =
 r\int w(t)|a(t)|^2dt.
\]
\end{proof}

\begin{lemma}[Haar mean-square subsystem identity]
\label{lem:haar-mean-square-subsystem}
Let \(\Lambda\) be a Hermiticity-preserving linear map on
\(\mathcal B(\mathcal H_S)\) such that
\[
 \Lambda(\Id_S)=0,\qquad \Tr\Lambda(X)=0
\]
for every \(X\), and let \(C\) be its transfer matrix on the traceless
basis \(P_a\).  For a Haar-random pure state
\(\psi=\ket\psi\!\bra\psi\),
\begin{equation}
 \mathbb E_\psi\|\Lambda(\psi)\|_2^2
 =
 \frac{\|C\|_{\rm F}^2}{d(d+1)}.
 \label{eq:haar-mean-square-identity}
\end{equation}
Moreover, if \(X=X^\dagger\) and \(\Tr X=0\), then
\begin{equation}
 \|X\|_1^2\ge2\|X\|_2^2.
 \label{eq:traceless-1-2}
\end{equation}
\end{lemma}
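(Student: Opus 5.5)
For the identity \eqref{eq:haar-mean-square-identity}, I will expand the traceless part of the input in the basis of Lemma~\ref{lem:traceless-expansion}.

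Since \(\Lambda(\Id_S)=0\), we have \(\Lambda(\psi)=\Lambda(\psi-\Id_S/d)\). Writing \(\psi-\Id_S/d=d^{-1}\sum_a x_aP_a\) with \(x_a=\Tr(P_a\psi)=\bra\psi P_a\ket\psi\), linearity gives \(\Lambda(\psi)=d^{-1}\sum_b y_bP_b\). Here \(y=Cx\), since \(\Tr\Lambda(X)=0\) means the output has no identity component. Then Eq.~\eqref{eq:parseval-system} gives
\[
 \|\Lambda(\psi)\|_2^2=\frac1d\|Cx\|_2^2=\frac1d\sum_{a,b}C_{ba}C_{ba'}x_ax_{a'}
\]
after summing over \(a'\) as well. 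It therefore remains to compute the second moment \(\mathbb E_\psi x_ax_{a'}\).

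For this I use the standard Haar formula
\[
 \mathbb E\,\psi^{\otimes2}=\frac{\Id+\mathrm{SWAP}}{d(d+1)}.
\]
It gives
\[
 \mathbb E\,\Tr(P_a\psi)\Tr(P_{a'}\psi)=\frac{\Tr P_a\Tr P_{a'}+\Tr(P_aP_{a'})}{d(d+1)}=\frac{d\,\delta_{aa'}}{d(d+1)}.
\]
Substituting this yields \(\mathbb E\|\Lambda(\psi)\|_2^2=\frac1d\cdot\frac{d}{d(d+1)}\|C\|_{\rm F}^2\), which is the claimed identity. The entries \(C_{ba}\) are real because \(\Lambda\) is Hermiticity-preserving and the \(P_a\) are Hermitian, so no complex conjugates appear. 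The only step that needs care is this moment computation, and it is routine once the swap formula is invoked.

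For \eqref{eq:traceless-1-2}, let \(X\) have eigenvalues \(\lambda_j\). Split them into the positive part with sum \(s_+\) and the negative part with absolute sum \(s_-\). Tracelessness gives \(s_+=s_-=\|X\|_1/2\). Then
\[
 \|X\|_2^2=\sum\lambda_j^2\le s_+^2+s_-^2=\|X\|_1^2/2,
\]
because a sum of squares of nonnegative numbers is at most the square of their sum. This proves the inequality.
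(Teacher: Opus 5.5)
Your proposal is correct and follows essentially the same route as the paper. The paper likewise expands \(\Lambda(\psi)=d^{-1}\sum_b(Cx)_bP_b\) with \(x_a=\Tr(P_a\psi)\), uses \(\mathbb E_\psi\psi^{\otimes2}=(\Id+F)/(d(d+1))\) to obtain \(\mathbb E_\psi x_ax_{a'}=\delta_{aa'}/(d+1)\), and proves the norm inequality by splitting \(X\) into positive and negative parts of equal trace.
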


\begin{proof}
Since \(\Lambda(\Id_S)=0\),
\[
 \Lambda(\psi)
 =
 \frac1d\sum_b(Cx)_bP_b,\qquad
 x_a=\Tr(P_a\psi).
\]
The pure-state second moment
\[
 \mathbb E_\psi[\psi^{\otimes2}]
 =
 \frac{\Id+F}{d(d+1)}
\]
gives
\[
 \mathbb E_\psi[x_ax_b]
 =
 \frac{\Tr(P_aP_b)}{d(d+1)}
 =
 \frac{\delta_{ab}}{d+1},
\]
because each \(P_a\) is traceless and
\(\Tr(P_aP_b)=d\delta_{ab}\).  Hence
\[
 \mathbb E_\psi\|\Lambda(\psi)\|_2^2
 =
 \frac1d
 \mathbb E_\psi\|Cx\|_2^2
 =
 \frac{\|C\|_{\rm F}^2}{d(d+1)}.
\]

For Eq.~\eqref{eq:traceless-1-2}, write
\(X=X_+-X_-\) with orthogonal positive and negative parts.
Trace zero gives
\(\Tr X_+=\Tr X_-=:s\).  Thus
\[
 \|X\|_1=2s,
\]
while positivity implies
\[
 \|X\|_2^2
 =
 \Tr X_+^2+\Tr X_-^2
 \le
 (\Tr X_+)^2+(\Tr X_-)^2
 =
 2s^2.
\]
Therefore \(\|X\|_1^2\ge2\|X\|_2^2\).
\end{proof}

\begin{proposition}[Pointwise local return and worst-input error]
\label{prop:pointwise-two-sided-local-return}
Let \(K_*(t)\) be either the full local return block
\(\mathsf K_S(t)\) or the nonzero-frequency block
\(\widetilde{\mathsf K}_S(t)\), and put
\[
 p_*(t)=\frac1q\|K_*(t)\|_{\rm F}^2.
\]
Let \(g_*(t)\) be respectively \(e_H(t)\) or \(f_H(t)\).  Then
\begin{equation}
 \frac{d-1}{2d}\,p_*(t)
 \le
 g_*(t)
 \le
 \frac{(d-1)q}{4}\,p_*(t).
 \label{eq:pointwise-two-sided-local-return}
\end{equation}
\end{proposition}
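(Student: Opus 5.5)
The plan is to treat the two targets simultaneously through a single traceless map. Let \(\Lambda_t\) denote either \(X\mapsto\Phi_t(X)-\Tr(X)\Id_S/d\) (for \(g_*=e_H\)) or \(\Phi_t-\overline\Phi\) (for \(g_*=f_H\)). In both cases \(\Lambda_t(\Id_S)=0\), because \(\Phi_t\) and \(\overline\Phi\) are unital by Eq.~\eqref{eq:maximally-mixed-fixed}. Also \(\Tr\Lambda_t(X)=0\), since \(\Phi_t\) and \(\overline\Phi\) are trace preserving. By Proposition~\ref{prop:return-equals-C} and Lemma~\ref{lem:temporal-transfer}, the traceless transfer matrix of \(\Lambda_t\) in the basis \(P_a\) is the matrix of \(K_*(t)\): it is \(\mathsf K_S(t)\) in the first case and \(\widetilde{\mathsf K}_S(t)\) in the second. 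Hence \(\|C_*(t)\|_{\rm F}^2=\|K_*(t)\|_{\rm F}^2=q\,p_*(t)\). Moreover, in both cases \(g_*(t)=\sup_{\rho_S}\tfrac14\|\Lambda_t(\rho_S)\|_1^2\).

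For the upper bound I will repeat the proof of Theorem~\ref{thm:return-controls-thermalisation} verbatim with \(C(t)\) replaced by \(C_*(t)\). Write \(\rho_S=\Id_S/d+\Delta\). Because \(\Lambda_t(\Id_S)=0\), we have \(\Lambda_t(\rho_S)=\Lambda_t(\Delta)\), with coefficient vector \(y=C_*(t)x\) and \(\|x\|_2^2\le d-1\). Lemma~\ref{lem:trace-HS} then gives \(\tfrac14\|\Lambda_t(\rho_S)\|_1^2\le\tfrac14\|y\|_2^2\le\tfrac{d-1}{4}\|C_*(t)\|_{\rm op}^2\le\tfrac{d-1}{4}\|C_*(t)\|_{\rm F}^2=\tfrac{(d-1)q}{4}p_*(t)\). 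This bound is uniform in \(\rho_S\), so taking the supremum gives the right inequality in Eq.~\eqref{eq:pointwise-two-sided-local-return}.

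For the lower bound I will use Lemma~\ref{lem:haar-mean-square-subsystem}. Since \(\Lambda_t(\psi)\) is Hermitian and traceless, Eq.~\eqref{eq:traceless-1-2} gives \(\tfrac14\|\Lambda_t(\psi)\|_1^2\ge\tfrac12\|\Lambda_t(\psi)\|_2^2\) for each pure \(\psi\). The supremum over inputs dominates the Haar average over pure states. Combining this with Eq.~\eqref{eq:haar-mean-square-identity} yields
\[
 g_*(t)\ge\tfrac12\,\mathbb E_\psi\|\Lambda_t(\psi)\|_2^2=\frac{\|C_*(t)\|_{\rm F}^2}{2d(d+1)}=\frac{q}{2d(d+1)}p_*(t)=\frac{d-1}{2d}p_*(t),
\]
where the last step uses \(q=(d-1)(d+1)\).

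No step is really an obstacle. The only point needing care is confirming that, for the temporal target \(\overline\Phi\), the relevant transfer block is exactly \(\widetilde{\mathsf K}_S(t)\) and that the identity component cancels. Both facts are supplied by Lemma~\ref{lem:temporal-transfer} together with the unitality of \(\overline\Phi\), which follows as the Cesàro limit of unital maps.
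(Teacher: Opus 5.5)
Your proposal is correct and follows the paper's own proof essentially step for step. The upper bound comes from rerunning Theorem~\ref{thm:return-controls-thermalisation} with the traceless transfer block $C_*(t)$. The lower bound comes from dominating the worst input by the Haar average over pure states, then applying $\|X\|_1^2\ge2\|X\|_2^2$ for traceless Hermitian $X$, the second-moment identity $\|C_*(t)\|_{\rm F}^2/[d(d+1)]$, and $q=(d-1)(d+1)$.
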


\begin{proof}
The upper bound is the local-return estimate of
Theorem~\ref{thm:return-controls-thermalisation}, applied either to the
full transfer matrix or to the difference
\(\Phi_t-\overline\Phi\).

For the lower bound, let \(\Lambda_t\) be the corresponding
trace-annihilating map.  Its traceless transfer matrix is \(K_*(t)\).
The worst-input error is at least the Haar average over pure states.
Each \(\Lambda_t(\psi)\) is Hermitian and traceless, so by
Lemma~\ref{lem:haar-mean-square-subsystem},
\begin{align}
 g_*(t)
 &\ge
 \mathbb E_\psi\frac14\|\Lambda_t(\psi)\|_1^2\\
 &\ge
 \frac12\mathbb E_\psi\|\Lambda_t(\psi)\|_2^2\\
 &=
 \frac{\|K_*(t)\|_{\rm F}^2}{2d(d+1)}.
\end{align}
Since \(q=(d-1)(d+1)\), this equals
\[
 \frac{d-1}{2d}\,p_*(t).
\]
\end{proof}

\begin{theorem}[Complete-local equilibration for positive observation weights]
\label{thm:positive-kernel-general}
For an admissible positive-Fourier kernel \(w\), define
\begin{align}
 \mathcal C_w^{\rm full}
 &=\int_{\mathbb R}w(t)|\varphi_S(t)|^2dt,\\
 \mathcal C_w^{\rm fl}
 &=\int_{\mathbb R}w(t)|\widetilde\varphi_S(t)|^2dt,\\
 \mathcal E_{w,H}
 &=\sup_{c\in\mathbb R}
   \int_{\mathbb R}w(t-c)e_H(t)dt,\\
 \mathcal G_{w,H}
 &=\sup_{c\in\mathbb R}
   \int_{\mathbb R}w(t-c)f_H(t)dt.
\end{align}
Then
\begin{align}
 \frac{d-1}{2d}\mathcal C_w^{\rm full}
 \le
 \mathcal E_{w,H}
 \le
 \frac{(d-1)q^2}{4}\mathcal C_w^{\rm full},
 \label{eq:positive-kernel-general-full}\\
 \frac{d-1}{2d}\mathcal C_w^{\rm fl}
 \le
 \mathcal G_{w,H}
 \le
 \frac{(d-1)q^2}{4}\mathcal C_w^{\rm fl}.
 \label{eq:positive-kernel-general-fl}
\end{align}
No spectral nondegeneracy, locality, randomness, or Markovianity
assumption is required.
\end{theorem}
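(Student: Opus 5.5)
The plan is to combine the pointwise two-sided comparison of Proposition~\ref{prop:pointwise-two-sided-local-return} with the abstract positive-block comparison of Lemma~\ref{lem:positive-block-subspace}, applied on the $q$-dimensional space $\calV_S$. We take $r=q$. In the full case the family is $\{A_\omega\}$ with $\lambda_\omega=\omega$, which gives $K(t)=\mathsf K_S(t)$. In the fluctuating case the family is $\{A_\omega\}_{\omega\ne0}$, which gives $K(t)=\widetilde{\mathsf K}_S(t)$. Each $A_\omega=\mathsf P_S\mathsf Q_\omega\mathsf P_S$ is positive by Lemma~\ref{lem:local-spectral-weights-new}. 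By Eq.~\eqref{eq:phi-local-new} and Eq.~\eqref{eq:phi-tilde-new}, $a(t)=q^{-1}\Tr_{\calK}K(t)$ equals $\varphi_S(t)$, respectively $\widetilde\varphi_S(t)$. Also $p(t)=q^{-1}\|K(t)\|_{\rm F}^2=p_*(t)$. The lemma explicitly needs no normalisation of $\sum_\alpha A_\alpha$, so dropping the zero block in the fluctuating case is allowed.

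For the upper bounds, fix $c\in\mathbb R$. The pointwise upper bound $g_*(t)\le\frac{(d-1)q}{4}p_*(t)$ holds uniformly in the input, since the supremum over $\rho_S$ is already inside $g_*$. Integrating it against $w(t-c)\ge0$ and applying Eq.~\eqref{eq:abstract-translated} gives
\[
\int w(t-c)g_*(t)\,dt\le\frac{(d-1)q}{4}\int w(t-c)p_*(t)\,dt\le\frac{(d-1)q^2}{4}\int w(t)|a(t)|^2dt .
\]
The right-hand side does not depend on $c$, so taking the supremum over $c$ gives the upper inequalities in Eqs.~\eqref{eq:positive-kernel-general-full} and \eqref{eq:positive-kernel-general-fl}. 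The step to watch is that the translation invariance comes entirely from $|e^{ic(\cdot)}|=1$ together with the sign conditions $\widehat w\ge0$ and $a_{\alpha\beta}\ge0$, and the lemma already handles this.

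For the lower bounds, we use the lower half of Proposition~\ref{prop:pointwise-two-sided-local-return}, namely $g_*(t)\ge\frac{d-1}{2d}p_*(t)$, and then Eq.~\eqref{eq:abstract-pointwise}, $p_*(t)\ge|a(t)|^2$. Integrating against $w(t)$ is the choice $c=0$, which lies in the supremum, and this gives
\[
\mathcal E_{w,H}\ \text{or}\ \mathcal G_{w,H}\ \ge\int w(t)g_*(t)\,dt\ge\frac{d-1}{2d}\int w(t)|a(t)|^2dt .
\]
Both inequalities are used for all real $t$, including negative times. This is legitimate because $\Phi_t$ is defined for every $t$ and the pointwise bounds never use $t\ge0$.

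The only remaining issue is integrability and measurability. Each $g_*$ takes values in $[0,1]$ and is continuous in $t$ (for $f_H$ this follows from continuity of $\Phi_t-\overline\Phi$), and $w\in L^1$, so all integrals are finite. Since $a$ and $p$ are finite exponential sums, Fubini's theorem applies in the lemma. I expect no real obstacle. The one point needing care is identifying the traceless transfer block of $\Phi_t-\overline\Phi$ with $\widetilde{\mathsf K}_S(t)$, which is exactly Lemma~\ref{lem:temporal-transfer}.
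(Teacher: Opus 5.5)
Your proposal is correct and follows the paper's proof essentially step for step. You apply Lemma~\ref{lem:positive-block-subspace} with $r=q$ to the full or nonzero-frequency blocks, combine the pointwise comparison of Proposition~\ref{prop:pointwise-two-sided-local-return} with Eq.~\eqref{eq:abstract-pointwise} at $c=0$ for the lower bounds, and combine it with Eq.~\eqref{eq:abstract-translated} for the translation-uniform upper bounds; your remarks on negative times, measurability, and identifying the fluctuation block via Lemma~\ref{lem:temporal-transfer} are accurate and make explicit what the paper leaves implicit.
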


\begin{proof}
Apply Lemma~\ref{lem:positive-block-subspace} with \(r=q\) to either
all Liouvillian frequencies or the nonzero-frequency set.  In the full
case,
\[
 a(t)=\varphi_S(t),\qquad
 p(t)=q^{-1}\|\mathsf K_S(t)\|_{\rm F}^2;
\]
in the fluctuation case,
\[
 a(t)=\widetilde\varphi_S(t),\qquad
 p(t)=q^{-1}\|\widetilde{\mathsf K}_S(t)\|_{\rm F}^2.
\]
For the lower bounds choose \(c=0\) in the supremum, use the lower
pointwise inequality in
Proposition~\ref{prop:pointwise-two-sided-local-return}, and then
Eq.~\eqref{eq:abstract-pointwise}.  For the upper bounds use the upper
pointwise inequality and then
Eq.~\eqref{eq:abstract-translated}.  Combining the pointwise factor \((d-1)q/4\) with the
factor \(q\) in Eq.~\eqref{eq:abstract-translated} gives
\((d-1)q^2/4\), as claimed.
\end{proof}

\begin{remark}[Relation to Refs.~\cite{Vikram2026,VikramNoAverages2026}]
Positive Fourier weights are already central to the finite-time
autocorrelation results of Refs.~\cite{Vikram2026,VikramNoAverages2026}.
The step used here is specific to the complete local operator space:
the physical target is the quadratic return of this same subspace, with
coefficients
\[
 a_{\omega\nu}=\Tr(A_\omega A_\nu)
\]
satisfying
\[
 0\le a_{\omega\nu}
 \le\Tr(A_\omega)\Tr(A_\nu).
\]
This coefficientwise comparison avoids the Cauchy--Schwarz square
root that appears when an autocorrelator is used to control a different
operator direction. Positive Fourier weights themselves are not new.
\end{remark}

\begin{theorem}[Positive-kernel return theorem]
\label{thm:positive-kernel-return}
For every finite Hamiltonian, every subsystem of dimension \(d\ge2\),
and every \(T>0\),
\begin{align}
 \mathscr E_H(T)
     \le \mathfrak c_d\,\mathcal F_S(T;H)
     \le 2\mathfrak c_d\,\mathcal A_S(T;H),
 \label{eq:positive-kernel-full}\\
 \mathscr G_H(T)
     \le \mathfrak c_d\,A_S^{\rm F}(T)
     \le 2\mathfrak c_d\,A_S(T).
 \label{eq:positive-kernel-fluct}
\end{align}
The right-hand sides involve correlations only on \([-T,T]\), although
the bounds hold simultaneously in every translated window.
\end{theorem}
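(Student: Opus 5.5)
The plan is to specialise Theorem~\ref{thm:positive-kernel-general} to the triangular kernel and then convert the rectangular window to a tent by Lemma~\ref{lem:tent-domination}.

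\textbf{Step 1: the kernel.} Put
\[
 w(t)=\frac1T\left(1-\frac{|t|}{T}\right)_+ .
\]
This function is nonnegative and integrates to one. By Eq.~\eqref{eq:tent-fourier} with centre zero, its Fourier transform is $\operatorname{sinc}^2(Tx/2)\ge0$. So $w$ is an admissible positive-Fourier kernel in the sense of Definition~\ref{def:positive-fourier-kernel}. With this choice, $\mathcal C_w^{\rm full}=\mathcal F_S(T;H)$ and $\mathcal C_w^{\rm fl}=A_S^{\rm F}(T)$.

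\textbf{Step 2: window to tent.} Fix $t_0\ge0$ and let $c=t_0+T/2$. The tent in Lemma~\ref{lem:tent-domination} is $w_T(t)=T\,w(t-c)$. Since $e_H\ge0$, Eq.~\eqref{eq:tent-domination} gives
\[
 \frac1T\int_{t_0}^{t_0+T}e_H(t)\,dt\le 2\int w(t-c)e_H(t)\,dt .
\]
The right-hand side is at most $2\,\mathcal E_{w,H}$. Theorem~\ref{thm:positive-kernel-general} then bounds it by
\[
 2\cdot\frac{(d-1)q^2}{4}\,\mathcal F_S(T;H)=\mathfrak c_d\,\mathcal F_S(T;H).
\]
This bound does not depend on $t_0$. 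Taking the supremum over $t_0$ gives the first inequality of Eq.~\eqref{eq:positive-kernel-full}.

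The same argument with $f_H$ in place of $e_H$ uses the fluctuation version, Eq.~\eqref{eq:positive-kernel-general-fl}. It gives $\mathscr G_H(T)\le\mathfrak c_d A_S^{\rm F}(T)$.

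Alternatively, one can avoid Theorem~\ref{thm:positive-kernel-general} and argue pointwise. Bound $e_H(t)\le\frac{(d-1)q}{4}p(t)$ by Proposition~\ref{prop:pointwise-two-sided-local-return}. Then apply Eq.~\eqref{eq:abstract-translated} of Lemma~\ref{lem:positive-block-subspace} with $r=q$ to the blocks $A_\omega$, using all frequencies or only the nonzero ones. This yields the same constant.

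\textbf{Step 3: tent to rectangle.} The weight satisfies $1-|t|/T\le1$ on $[-T,T]$. Hence
\[
 \mathcal F_S(T;H)\le\frac1T\int_{-T}^T|\varphi_S(t)|^2\,dt=2\mathcal A_S(T;H),
\]
and in the same way $A_S^{\rm F}(T)\le 2A_S(T)$. These are exactly the second inequalities in Eqs.~\eqref{eq:positive-kernel-full} and~\eqref{eq:positive-kernel-fluct}.

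Only correlations on $[-T,T]$ appear, because $w$ is supported there. Translation invariance is supplied by the phase factor $e^{ic\Delta}$, which Lemma~\ref{lem:positive-block-subspace} absorbs by taking absolute values.

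The only delicate point is bookkeeping of the constants. The factor $2$ from tent domination and the factor $1/T$ normalisation of $w_T$ must combine with $(d-1)q^2/4$ to give exactly $\mathfrak c_d=(d-1)q^2/2$. One must also check that the fluctuation case uses the blocks $\widetilde{\mathsf K}_S$ from Lemma~\ref{lem:temporal-transfer}, so that $f_H$ is controlled by $\widetilde\varphi_S$.
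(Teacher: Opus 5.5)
Your proposal is correct and follows the paper's proof. Both use admissibility of the normalised triangular kernel, domination of every length-$T$ window by twice a translate of it (Lemma~\ref{lem:tent-domination}), the upper bounds of Theorem~\ref{thm:positive-kernel-general} with $2\cdot(d-1)q^2/4=\mathfrak c_d$, and the bound of the triangular weight by $1/T$ on $[-T,T]$, and your constant bookkeeping via $w_T(t)=T\,w(t-c)$ is right.
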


\begin{proof}
Use the triangular kernel
\[
 w_T(t)=\frac1T\left(1-\frac{|t|}{T}\right)_+.
\]
It is admissible and
\[
 \widehat w_T(x)=
 \operatorname{sinc}^2\!\left(\frac{Tx}{2}\right)\ge0.
\]
By Lemma~\ref{lem:tent-domination}, every rectangular window of length
\(T\) is bounded by twice a translate of \(w_T\).  Apply the upper
bounds of Theorem~\ref{thm:positive-kernel-general}.  Since
\[
 \mathfrak c_d=\frac{(d-1)q^2}{2},
\]
this gives the first inequalities above.  The triangular weight is at
most \(1/T\) on \([-T,T]\), which yields
\(\mathcal F_S\le2\mathcal A_S\) and
\(A_S^{\rm F}\le2A_S\).
\end{proof}

\begin{theorem}[Improved linear same-scale comparison]
\label{thm:linear-dynamical-equivalence}
For every \(T>0\),
\begin{align}
 \frac{d-1}{2d}\,\mathcal A_S(T;H)
 &\le \mathscr E_H(T)
 \le (d-1)q^2\,\mathcal A_S(T;H),\\
 \frac{d-1}{2d}\,A_S(T)
 &\le \mathscr G_H(T)
 \le (d-1)q^2\,A_S(T).
 \label{eq:linear-dynamical-equivalence}
\end{align}
Thus at fixed subsystem dimension the complete-local correlation mean
square and the every-window worst-input error are comparable linearly.
\end{theorem}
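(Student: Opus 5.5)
The upper bounds will come from Theorem~\ref{thm:positive-kernel-return}, and the lower bounds from the pointwise comparison on the initial window. For the upper inequalities, Eq.~\eqref{eq:positive-kernel-full} already gives $\mathscr E_H(T)\le 2\mathfrak c_d\,\mathcal A_S(T;H)$. With $\mathfrak c_d=(d-1)q^2/2$ this is exactly $(d-1)q^2\mathcal A_S(T;H)$. In the same way, Eq.~\eqref{eq:positive-kernel-fluct} gives $\mathscr G_H(T)\le (d-1)q^2A_S(T)$. Nothing beyond substituting the constant is needed here. That earlier theorem already handles uniformity over translated windows through the tent domination of Lemma~\ref{lem:tent-domination} and the positive-block estimate Eq.~\eqref{eq:abstract-translated}.

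For the lower inequalities I will work pointwise and then integrate over the initial window $[0,T]$. The supremum over $t_0\ge0$ includes $t_0=0$, so $\mathscr E_H(T)\ge T^{-1}\int_0^T e_H(t)\,dt$. Proposition~\ref{prop:pointwise-two-sided-local-return}, applied with $K_*=\mathsf K_S$, gives $e_H(t)\ge\frac{d-1}{2d}p(t)$, where $p(t)=q^{-1}\|\mathsf K_S(t)\|_{\rm F}^2$. Equation~\eqref{eq:abstract-pointwise} of Lemma~\ref{lem:positive-block-subspace}, with $r=q$, then gives $p(t)\ge|\varphi_S(t)|^2$. Combining these,
\[
\mathscr E_H(T)\ge\frac{d-1}{2d}\,\frac1T\int_0^T|\varphi_S(t)|^2dt .
\]
The fluctuation case is identical: use $K_*=\widetilde{\mathsf K}_S$, $g_*=f_H$, and $a(t)=\widetilde\varphi_S(t)$.

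The one remaining point is to identify $T^{-1}\int_0^T|\varphi_S|^2$ with the symmetric average $\mathcal A_S(T)$ over $[-T,T]$. For this I use evenness of the integrand. Each $A_\omega$ is self-adjoint, so $\varphi_S(-t)=\int e^{it\omega}\mu_S(d\omega)=\overline{\varphi_S(t)}$ because $\mu_S$ is a positive measure. Hence $|\varphi_S(-t)|=|\varphi_S(t)|$, and likewise for $\widetilde\varphi_S$. The $[0,T]$ average therefore equals $\frac1{2T}\int_{-T}^T$, which gives the stated constant $\frac{d-1}{2d}$.

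I do not expect any step to be a real obstacle, since all the ingredients are proved above. The only care needed is to check that the pointwise inequality of Proposition~\ref{prop:pointwise-two-sided-local-return} is used with the input supremum taken inside the integral, which matches the definitions of $e_H$ and $f_H$. One also has to note that the lower bound uses the rectangular window at the origin directly, rather than a positive-Fourier kernel. The rectangular weight has a sinc transform that can be negative, so Theorem~\ref{thm:positive-kernel-general} cannot be invoked with it; the pointwise argument sidesteps this.
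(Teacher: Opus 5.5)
Your proposal is correct and follows the paper's proof essentially step for step. The upper bounds are Theorem~\ref{thm:positive-kernel-return} with $2\mathfrak c_d=(d-1)q^2$, and the lower bounds come from the $t_0=0$ window, Proposition~\ref{prop:pointwise-two-sided-local-return} and Eq.~\eqref{eq:abstract-pointwise}, followed by evenness of $|\varphi_S|^2$ — which you justify via $\varphi_S(-t)=\overline{\varphi_S(t)}$ for the positive measures $\mu_S$ and $\widetilde\mu_S$, a detail the paper only asserts.
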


\begin{proof}
For the lower bounds use the window with \(t_0=0\) and
Proposition~\ref{prop:pointwise-two-sided-local-return}.  In the full
case,
\[
 \frac1T\int_0^T e_H(t)dt
 \ge
 \frac{d-1}{2d}\frac1T\int_0^T
 \frac{\|\mathsf K_S(t)\|_{\rm F}^2}{q}dt.
\]
Equation~\eqref{eq:abstract-pointwise}, applied to the full positive
blocks, gives
\[
 |\varphi_S(t)|^2
 \le
 \frac{\|\mathsf K_S(t)\|_{\rm F}^2}{q}.
\]
Since the squared correlation is even, its average on \([0,T]\) is
\(\mathcal A_S(T;H)\).  This proves the first lower bound.
The fluctuation case is identical with the zero frequency removed.

The upper bounds follow from
Theorem~\ref{thm:positive-kernel-return} and
\(\mathcal F_S\le2\mathcal A_S\),
\(A_S^{\rm F}\le2A_S\).
\end{proof}

\begin{corollary}[Reference-assisted complete-local equilibration]
\label{cor:diamond-complete-local}
Let
\[
 \mathcal D(X)=\Tr(X)\frac{\Id_S}{d},
 \qquad
 \Delta_\diamond(\Phi,\Psi)
 =\frac12\|\Phi-\Psi\|_\diamond,
\]
and define the every-window channel errors
\begin{align}
 \mathscr G_H^\diamond(T)
 &=
 \sup_{t_0\ge0}\frac1T\int_{t_0}^{t_0+T}
 \Delta_\diamond(\Phi_t,\overline\Phi)^2dt,\\
 \mathscr E_H^\diamond(T)
 &=
 \sup_{t_0\ge0}\frac1T\int_{t_0}^{t_0+T}
 \Delta_\diamond(\Phi_t,\mathcal D)^2dt.
\end{align}
Then
\begin{align}
 \frac{d-1}{2d}A_S(T)
 &\le
 \mathscr G_H^\diamond(T)
 \le
 d^2q^2A_S(T),
 \label{eq:diamond-G-supp}\\
 \frac{d-1}{2d}\mathcal A_S(T;H)
 &\le
 \mathscr E_H^\diamond(T)
 \le
 d^2q^2\mathcal A_S(T;H).
 \label{eq:diamond-E-supp}
\end{align}
\end{corollary}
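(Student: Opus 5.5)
The plan is to reduce the diamond-norm statement to the trace-distance comparisons already proved. The lower bounds come from a pointwise comparison of diamond and unassisted trace distance. The upper bounds come from the norm comparison \eqref{eq:diamond-vs-2to2} combined with the positive-kernel return estimate of Theorem~\ref{thm:positive-kernel-return}.

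\emph{Lower bounds.} For any channels \(\Phi,\Psi\), the diamond norm dominates the unassisted trace-norm distance. Taking the trivial reference gives, pointwise in \(t\),
\[
 \Delta_\diamond(\Phi_t,\Psi)^2\ge\sup_{\rho_S}\Dt(\Phi_t(\rho_S),\Psi(\rho_S))^2 .
\]
With \(\Psi=\overline\Phi\) this quantity is \(f_H(t)\). With \(\Psi=\mathcal D\) it is \(e_H(t)\), because \(\mathcal D(\rho_S)=\Id_S/d\). Integrating over any window and taking the supremum over \(t_0\) gives \(\mathscr G_H^\diamond\ge\mathscr G_H\) and \(\mathscr E_H^\diamond\ge\mathscr E_H\). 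The lower inequalities of Theorem~\ref{thm:linear-dynamical-equivalence} then give the claimed lower bounds with constant \((d-1)/(2d)\).

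\emph{Upper bounds.} Let \(\Lambda_t=\Phi_t-\Psi\). Both \(\Lambda_t(\Id_S)=0\) and \(\Tr\Lambda_t(X)=0\) hold, since \(\Phi_t\), \(\overline\Phi\) and \(\mathcal D\) are unital and trace preserving.

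First I will check that \(\|\Lambda_t\|_{2\to2}=\|K_*(t)\|_{\rm op}\). Decompose \(X=\frac{\Tr X}{d}\Id_S+X_0\) with \(X_0\) traceless. This splitting is orthogonal in Hilbert--Schmidt norm, and \(\Lambda_t(X)=\Lambda_t(X_0)\). In the orthonormal basis \(P_a/\sqrt d\) of traceless operators, \(\Lambda_t\) restricted to this space acts by the transfer matrix. By Proposition~\ref{prop:return-equals-C} and Lemma~\ref{lem:temporal-transfer}, that matrix is \(K_*(t)=\mathsf K_S(t)\) for \(\Psi=\mathcal D\) and \(\widetilde{\mathsf K}_S(t)\) for \(\Psi=\overline\Phi\). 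Hence
\[
 \|\Lambda_t(X)\|_2\le\|K_*(t)\|_{\rm op}\|X_0\|_2\le\|K_*(t)\|_{\rm op}\|X\|_2 ,
\]
with equality attained on traceless inputs. For complex \(X\), the map \(\Lambda_t\) is Hermiticity preserving, so the same transfer matrix acts on complex coefficients and the operator norm is unchanged.

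Then \eqref{eq:diamond-vs-2to2} gives \(\Delta_\diamond^2\le\frac{d^2}{4}\|K_*(t)\|_{\rm F}^2=\frac{d^2q}{4}p_*(t)\). This replaces the factor \((d-1)q/4\) in the upper half of Proposition~\ref{prop:pointwise-two-sided-local-return}.

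Next I rerun the argument of Theorem~\ref{thm:positive-kernel-general} with this new pointwise constant. Lemma~\ref{lem:positive-block-subspace} with \(r=q\) bounds every translate of a tent-weighted average of \(p_*\) by \(q\,\mathcal C_w\), which gives the constant \(d^2q^2/4\) for admissible kernels. Lemma~\ref{lem:tent-domination} then bounds each rectangular window by twice a translated tent. Finally, \(A_S^{\rm F}\le2A_S\) and \(\mathcal F_S\le2\mathcal A_S\) give
\[
 \frac{d^2q^2}{4}\cdot2\cdot2=d^2q^2 ,
\]
in both the fluctuation and full cases.

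\emph{Main obstacle.} The only non-routine step is the norm comparison \eqref{eq:diamond-vs-2to2}, which is taken as the external input from Ref.~\cite{ChenDingHuang2022}. Its hypotheses (real or Hermiticity-preserving superoperators) must be checked for \(\Lambda_t\), which is real in this sense. I also need to confirm that the supremum over complex inputs in \(\|\cdot\|_{2\to2}\) is still given by \(\|K_*(t)\|_{\rm op}\); this holds because \(K_*(t)\) is a real matrix, so complexification does not increase its operator norm. The remaining measurability issues are handled as in Lemma~\ref{lem:window-measurability}.
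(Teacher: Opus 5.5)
Your proposal is correct and follows the paper's proof essentially step for step. The lower bounds come from $f_H(t)\le\Delta_\diamond(\Phi_t,\overline\Phi)^2$ (resp.\ $e_H(t)\le\Delta_\diamond(\Phi_t,\mathcal D)^2$) together with Theorem~\ref{thm:linear-dynamical-equivalence}, and the upper bounds from $\|\Lambda_t\|_\diamond\le d\|\Lambda_t\|_{2\to2}=d\|K_*(t)\|_{\rm op}$ combined with the positive-kernel return estimate $\sup_{t_0}\frac1T\int_{t_0}^{t_0+T}\|K_*(t)\|_{\rm F}^2dt\le 4q^2A_S(T)$ (resp.\ $4q^2\mathcal A_S(T;H)$). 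The step you flag as the main obstacle is elementary and holds for every linear map, since on the $d^2$-dimensional doubled space $\|(\Lambda\otimes\mathrm{id}_d)(Y)\|_1\le d\,\|(\Lambda\otimes\mathrm{id}_d)(Y)\|_2\le d\,\|\Lambda\|_{2\to2}\|Y\|_2\le d\,\|\Lambda\|_{2\to2}\|Y\|_1$.
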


\begin{proof}
For a Hermiticity-preserving linear map on a \(d\)-dimensional system,
the standard norm comparison
\[
 \|\Lambda\|_\diamond
 \le d\,\|\Lambda\|_{2\to2}
\]
is the upper inequality in Eq.~(2.2) of
Ref.~\cite{ChenDingHuang2022}; see also
Ref.~\cite{Watrous2018} for the diamond norm and its channel
distinguishability interpretation.

Take first
\(\Lambda_t=\Phi_t-\overline\Phi\).  Since both channels are
trace preserving and unital,
\[
 \Lambda_t(\Id_S)=0,\qquad
 \Tr\Lambda_t(X)=0.
\]
The Hilbert--Schmidt identity/traceless decomposition therefore shows
that
\[
 \|\Lambda_t\|_{2\to2}
 =
 \|\widetilde{\mathsf K}_S(t)\|_{\rm op}.
\]
Consequently
\[
 \Delta_\diamond(\Phi_t,\overline\Phi)^2
 \le
 \frac{d^2}{4}
 \|\widetilde{\mathsf K}_S(t)\|_{\rm op}^2
 \le
 \frac{d^2}{4}
 \|\widetilde{\mathsf K}_S(t)\|_{\rm F}^2.
\]
The positive-kernel proof of
Theorem~\ref{thm:positive-kernel-return}, before the final
trace-distance conversion, gives
\[
 \sup_{t_0\ge0}\frac1T\int_{t_0}^{t_0+T}
 \|\widetilde{\mathsf K}_S(t)\|_{\rm F}^2dt
 \le
 4q^2A_S(T).
\]
This proves the upper bound in
Eq.~\eqref{eq:diamond-G-supp}.  The lower bound follows from
\[
 f_H(t)
 \le
 \Delta_\diamond(\Phi_t,\overline\Phi)^2
\]
and Theorem~\ref{thm:linear-dynamical-equivalence}.

For the full result replace
\(\Lambda_t\) by \(\Phi_t-\mathcal D\) and
\(\widetilde{\mathsf K}_S(t)\) by \(\mathsf K_S(t)\).  The same
argument gives Eq.~\eqref{eq:diamond-E-supp}.
\end{proof}

\begin{remark}
The diamond-norm conversion and its operational meaning are standard;
the corollary above is an immediate strengthening supplied by the
complete-local return theorem.  It says that the same scalar
autocorrelation controls distinguishability even when the subsystem is
part of an arbitrarily entangled state with an external reference.
For the full thermalisation statement,
\[
 (\mathcal D\otimes\mathrm{id}_R)(\rho_{SR})
 =
 \frac{\Id_S}{d}\otimes\rho_R,
\]
so small \(\mathscr E_H^\diamond(T)\) is an every-window decoupling
statement.

This is a stronger operational corollary of the complete-local theorem,
not a different autocorrelation mechanism.
\end{remark}

\begin{remark}[What is meant by the converse]
Ref.~\cite{Vikram2026} also contains converse-type statements relating
thermalisation of a sufficiently large fraction of a projector
eigenbasis to the corresponding autocorrelator. We therefore do not
claim converse reasoning itself as new. The statement above is more
specific: the same two complete-subsystem functionals are bounded
against one another with constants depending only on the fixed
subsystem dimension.
\end{remark}

\begin{remark}[Connection to subspace aperiodicity]
For an ordinary rank-\(r\) subspace with projector \(P\) and unitary
dynamics \(U_t\), define
\[
 a_P(t)=r^{-1}\Tr(PU_tP),\qquad
 p_P(t)=r^{-1}\|PU_tP\|_{\rm F}^2.
\]
The double-time average of \(p_P(t-t')\) over \([0,T]^2\) is exactly a
Fej\'er-weighted return probability.  Following the aperiodicity
language of Ref.~\cite{VikramAperiodicity2026}, define its reciprocal by
\[
 \mathfrak A_T[P]^{-1}
 :=
 \int_{\mathbb R}w_T(t)p_P(t)dt.
\]
Lemma~\ref{lem:positive-block-subspace} therefore gives
\[
 \int w_T(t)|a_P(t)|^2dt
 \le
 \mathfrak A_T[P]^{-1}
 \le
 r\int w_T(t)|a_P(t)|^2dt.
\]  The present subsystem theorem is
the Liouville-space specialisation \(P=\mathsf P_S\), supplemented by
the operational conversion from local-operator return to worst-input
reduced-state error.  We use this as a conceptual bridge, not as a
claim that subspace recurrence or projected operator-valued spectral
measures themselves are new; see, for example,
Ref.~\cite{BourgainGrunbaumVelazquezWilkening2014}.
\end{remark}

\begin{corollary}[Thermalisation directly from the complete local autocorrelation]
\label{cor:direct-from-autocorrelation-new}
For every \(T>0\),
\begin{equation}
 \mathscr E_H(T)
 \le
 \mathfrak c_d
 \left[
 \sqrt{A_S^{\rm F}(T)}
 +\frac{\pi_{0,S}}{\sqrt2}
 \right]^2
 \le
 (d-1)q^2
 \left[
 \sqrt{A_S(T)}
 +\frac{\pi_{0,S}}2
 \right]^2.
 \label{eq:direct-from-autocorrelation-new}
\end{equation}
\end{corollary}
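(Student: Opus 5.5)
We prove the first inequality by reusing the Minkowski decomposition from the proof of Theorem~\ref{thm:separated-window}, now taken relative to the triangular kernel rather than the rectangular window. For any translate \(w_T(\cdot-c)\) of the normalised tent \(w_T(t)=T^{-1}(1-|t|/T)_+\), the trace-distance triangle inequality gives, pointwise in \(t\),
\[
 \sqrt{e_H(t)}\le\sqrt{f_H(t)}+s_0 .
\]
Here
\[
 s_0=\sup_{\rho_S}\Dt\!\left(\overline\Phi(\rho_S),\Id_S/d\right).
\]
We then apply Minkowski's inequality in \(L^2(w_T(t-c)\,dt)\), which is legitimate because \(w_T\) is a probability density. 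This yields
\[
 \Bigl[\int w_T(t-c)e_H(t)dt\Bigr]^{1/2}
 \le\Bigl[\int w_T(t-c)f_H(t)dt\Bigr]^{1/2}+s_0 .
\]

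Next we convert the rectangular window to a tent average. By Lemma~\ref{lem:tent-domination}, every rectangular window average of \(e_H\) is at most \(2\int w_T(t-c)e_H(t)dt\) for a suitable \(c\). The first term on the right is controlled by Theorem~\ref{thm:positive-kernel-general} in its fluctuation form, \(\int w_T(t-c)f_H\le\frac{(d-1)q^2}{4}A_S^{\rm F}(T)\). The stationary term is controlled by Theorem~\ref{thm:separated-stationary}, \(s_0\le\frac{q\sqrt{d-1}}{2}\pi_{0,S}\). Combining these gives
\[
 \frac1T\int_{t_0}^{t_0+T}e_H
 \le2\Bigl[\tfrac{q\sqrt{d-1}}{2}\sqrt{A_S^{\rm F}(T)}+\tfrac{q\sqrt{d-1}}{2}\pi_{0,S}\Bigr]^2 .
\]
The right-hand side equals \(\frac{(d-1)q^2}{2}\bigl[\sqrt{A_S^{\rm F}}+\pi_{0,S}\bigr]^2\).

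This bound is slightly weaker than the claim, which has \(\pi_{0,S}/\sqrt2\) rather than \(\pi_{0,S}\). The main obstacle is therefore obtaining the \(1/\sqrt2\) on the stationary term. The point is that the factor \(2\) from tent domination is needed only for the fluctuating part. To exploit this, we do the triangle inequality and Minkowski step \emph{on the rectangular window} first, giving \(\sqrt{\text{rect avg of }e_H}\le\sqrt{\text{rect avg of }f_H}+s_0\). Tent domination is then applied only to \(f_H\): the rectangular average of \(f_H\) is at most \(2\cdot\frac{(d-1)q^2}{4}A_S^{\rm F}(T)=\mathfrak c_dA_S^{\rm F}(T)\). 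Meanwhile \(s_0\le\sqrt{\mathfrak c_d/2}\,\pi_{0,S}\). Squaring and taking \(\sup_{t_0}\) gives
\[
 \mathscr E_H(T)\le\mathfrak c_d\Bigl[\sqrt{A_S^{\rm F}(T)}+\pi_{0,S}/\sqrt2\Bigr]^2,
\]
which is the first inequality.

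For the second inequality we use \(A_S^{\rm F}(T)\le2A_S(T)\), which holds because the tent weight is at most \(1/T\) on \([-T,T]\). This gives
\[
 \mathfrak c_d\Bigl[\sqrt{2A_S}+\pi_{0,S}/\sqrt2\Bigr]^2=2\mathfrak c_d\Bigl[\sqrt{A_S}+\pi_{0,S}/2\Bigr]^2 .
\]
Since \(2\mathfrak c_d=(d-1)q^2\), this is exactly the stated right-hand side.
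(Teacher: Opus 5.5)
Your final argument is correct and is essentially the paper's proof. You apply the triangle inequality $\sqrt{e_H(t)}\le\sqrt{f_H(t)}+s_0$ and Minkowski on each rectangular window, and you bound the fluctuation average by $\mathfrak c_d A_S^{\rm F}(T)$ via tent domination plus the positive-kernel theorem, which is exactly Theorem~\ref{thm:positive-kernel-return}; you then bound $s_0$ by Theorem~\ref{thm:separated-stationary} and use $A_S^{\rm F}\le 2A_S$ for the second inequality. Your first, tent-level Minkowski attempt is a harmless detour: it gives only $\pi_{0,S}$ instead of $\pi_{0,S}/\sqrt2$ because it also charges the tent-domination factor $2$ to the stationary term.
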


\begin{proof}
Put
\[
 s_0=
 \sup_{\rho_S}
 \Dt\left(
 \overline\Phi(\rho_S),\frac{\Id_S}{d}
 \right).
\]
For every \(t\),
\[
 \sqrt{e_H(t)}
 \le
 \sqrt{f_H(t)}+s_0.
\]
Applying Minkowski's inequality in every length-\(T\) window and then
taking the supremum over its starting point gives
\[
 \sqrt{\mathscr E_H(T)}
 \le
 \sqrt{\mathscr G_H(T)}+s_0.
\]
Theorem~\ref{thm:positive-kernel-return} gives
\[
 \mathscr G_H(T)\le\mathfrak c_d A_S^{\rm F}(T),
\]
while Theorem~\ref{thm:separated-stationary} gives
\[
 s_0\le\frac{q\sqrt{d-1}}2\,\pi_{0,S}.
\]
Since
\(\mathfrak c_d=(d-1)q^2/2\), these inequalities give the first
bound.  Finally use \(A_S^{\rm F}(T)\le2A_S(T)\).
\end{proof}

\begin{corollary}[Ensemble means and probability bounds]
\label{cor:linear-annealed}
For any random Hamiltonian ensemble, if
\(\mathbb E A_S(T)\le a\) and
\(\mathbb E\pi_{0,S}^2\le p_2\), then
\begin{align}
 \mathbb E\mathscr E_H(T)
 &\le
 (d-1)q^2
 \left(
 \sqrt a+\frac{\sqrt{p_2}}2
 \right)^2,
 \label{eq:annealed-Minkowski}\\
 \mathbb P[\mathscr E_H(T)>u]
 &\le
 \frac{(d-1)q^2}{u}
 \left(
 \sqrt a+\frac{\sqrt{p_2}}2
 \right)^2
 \qquad (u>0).
\end{align}
The event concerns all starting times and all subsystem inputs
simultaneously.
\end{corollary}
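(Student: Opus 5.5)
The plan is to deduce both inequalities from the deterministic, realization-wise bound of Corollary~\ref{cor:direct-from-autocorrelation-new},
\[
 \mathscr E_H(T)\le(d-1)q^2\left[\sqrt{A_S(T)}+\frac{\pi_{0,S}}2\right]^2 .
\]
This bound holds for every fixed finite Hamiltonian. We apply it to each realization of the ensemble. Lemma~\ref{lem:window-measurability} guarantees that \(\mathscr E_H(T)\) is a Borel function of \(H\) with values in \([0,1]\), so its expectation is well defined. The quantities \(A_S(T)\) and \(\pi_{0,S}\) are likewise Borel functions of \(H\); the latter is a pointwise Ces\`aro limit of continuous functions, and both are bounded by one. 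Hence every expectation below exists.

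Taking expectations, we must bound \(\mathbb E\bigl[(X+Y)^2\bigr]\) with \(X=\sqrt{A_S(T)}\) and \(Y=\pi_{0,S}/2\). Minkowski's inequality in \(L^2(\mathbb P)\) gives
\[
 \bigl(\mathbb E(X+Y)^2\bigr)^{1/2}\le(\mathbb EX^2)^{1/2}+(\mathbb EY^2)^{1/2}.
\]
Here \(\mathbb EX^2=\mathbb EA_S(T)\le a\) and \(\mathbb EY^2=\mathbb E\pi_{0,S}^2/4\le p_2/4\). Squaring and multiplying by \((d-1)q^2\) yields Eq.~\eqref{eq:annealed-Minkowski}.

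This is the one step that needs care. The disorder average must be taken after squaring, consistent with the remark in the main text. Using Minkowski rather than \((x+y)^2\le2x^2+2y^2\) keeps the amplitude form without an extra factor of two.

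The probability bound then follows from Markov's inequality applied to the nonnegative random variable \(\mathscr E_H(T)\):
\[
 \mathbb P[\mathscr E_H(T)>u]\le\frac{\mathbb E\mathscr E_H(T)}{u}.
\]
Because \(\mathscr E_H(T)\) already contains the suprema over window origins \(t_0\ge0\) and, inside the time integral, over all inputs \(\rho_S\), its complement event \(\{\mathscr E_H(T)\le u\}\) controls all starting times and all inputs simultaneously. No exchange of expectation and supremum is ever performed.
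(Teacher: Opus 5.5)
Your proposal is correct and follows the paper's proof: both apply the realization-wise bound of Corollary~\ref{cor:direct-from-autocorrelation-new}, take expectations, and obtain the probability bound from Markov's inequality, with measurability supplied by Lemma~\ref{lem:window-measurability}. Your Minkowski step in \(L^2(\mathbb P)\) is the paper's argument in packaged form, since the paper expands the square and bounds the cross term by Cauchy--Schwarz, \(\mathbb E[\pi_{0,S}\sqrt{A_S(T)}]\le\sqrt{p_2a}\), which gives the same constant.
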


\begin{proof}
Take expectations in the second inequality of
Eq.~\eqref{eq:direct-from-autocorrelation-new}.  The cross term obeys
\[
 \mathbb E\!\left[\pi_{0,S}\sqrt{A_S(T)}\right]
 \le
 \sqrt{
 \mathbb E\pi_{0,S}^2\,
 \mathbb E A_S(T)}
 \le
 \sqrt{p_2a}
\]
by Cauchy--Schwarz.  This proves
Eq.~\eqref{eq:annealed-Minkowski}.  Measurability follows from
Lemma~\ref{lem:window-measurability}, and Markov's inequality gives the
probability bound.
\end{proof}

\section{Conserved local memory}\label{sec:supp3}
\subsection{Energy-conservation floor in bounded-degree Haar-local ensembles}

This section proves the energy-conservation floor under explicit locality and independence assumptions.

Consider \(N\) sites of fixed local dimension \(d_0\).  Let
\(\mathcal E_N\) be a family of interaction supports such that
\begin{enumerate}
\item every \(X\in\mathcal E_N\) contains at most \(k\) sites;
\item every site belongs to at most \(\Delta\) supports;
\item \(|\mathcal E_N|\le c_E N\).
\end{enumerate}
The constants \(d_0,k,\Delta,c_E\) are independent of \(N\).

For each support \(X\), fix a Hermitian local spectrum
\(\Lambda_X\) and let
\begin{equation}
  h_X
  =
  U_X\Lambda_XU_X^\dagger,
  \qquad
  H_N
  =
  \sum_{X\in\mathcal E_N}h_X,
  \label{eq:Haar-local-floor-ensemble}
\end{equation}
where the \(U_X\) are independent Haar unitaries on
\(\mathcal H_X\).  Subtracting
\[
  \frac{\Tr_X\Lambda_X}{d_X}\Id_X
\]
from every \(\Lambda_X\) changes \(H_N\) only by a scalar multiple of
the global identity and therefore does not change the Liouvillian.
Hence, without loss of generality, assume
\begin{equation}
  \Tr_X\Lambda_X=0
  \quad\text{for all }X.
  \label{eq:local-traceless-floor}
\end{equation}
Assume also
\begin{equation}
  \norm{\Lambda_X}_\infty\le J
  \label{eq:local-norm-floor}
\end{equation}
uniformly in \(X,N\).

The next projection estimate is a finite-dimensional Mazur bound;
the relation between conserved quantities and long-time correlations
goes back to Mazur and Suzuki.\cite{Mazur1969,Suzuki1971}
We give the proof with the normalization used here.

\begin{lemma}[Conserved energy lower-bounds the zero-frequency trace weight]
\label{lem:pi0-energy-lower}
Let
\[
  \widetilde H
  =
  H-\frac{\Tr H}{D}\Id.
\]
If \(\widetilde H\ne0\), then
\begin{equation}
  \pi_{0,S}
  \ge
  \frac1q
  \frac{
  \norm{\mathsf P_S\widetilde H}_{2,D}^2
  }{
  \norm{\widetilde H}_{2,D}^2
  },
  \qquad
  q=d^2-1.
  \label{eq:pi0-energy-lower}
\end{equation}
If \(\widetilde H=0\), the right-hand side is defined as zero and the
inequality remains valid.
\end{lemma}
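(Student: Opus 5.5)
The argument is a projection (Mazur-type) estimate carried out in operator space. The centred Hamiltonian \(\widetilde H\) is conserved: \(\calL(\widetilde H)=[H,\widetilde H]=0\), so \(\widetilde H\in\Ran\mathsf Q_0\). The plan is to bound \(\Tr_{\calK}A_0\) from below, with \(A_0=\mathsf P_S\mathsf Q_0\mathsf P_S\), by the weight of \(A_0\) along one well-chosen direction of \(\calV_S\). That direction is the normalised local projection of \(\widetilde H\).

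First dispose of the trivial cases. If \(\widetilde H=0\), the right-hand side is zero by convention, while \(\pi_{0,S}\ge0\) by Lemma~\ref{lem:local-spectral-weights-new}. If \(\mathsf P_S\widetilde H=0\), the right-hand side again vanishes and the same remark applies. Otherwise set
\[
 Y=\mathsf P_S\widetilde H\in\calV_S,\qquad e=Y/\norm{Y}_{2,D}.
\]
Since \(A_0\ge0\) on \(\calV_S\), its trace is at least its diagonal matrix element in any unit vector of \(\calV_S\). Hence
\[
 q\pi_{0,S}=\Tr_{\calK}A_0\ge\langle e,A_0e\rangle=\langle e,\mathsf Q_0e\rangle=\norm{\mathsf Q_0e}_{2,D}^2 .
\]

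The remaining step is to bound \(\norm{\mathsf Q_0Y}_{2,D}\) from below. Set \(u=\widetilde H/\norm{\widetilde H}_{2,D}\). Because \(u\in\Ran\mathsf Q_0\) is a unit vector, \(\norm{\mathsf Q_0Y}\ge\abs{\langle u,\mathsf Q_0Y\rangle}=\abs{\langle u,Y\rangle}\), using self-adjointness of \(\mathsf Q_0\) and \(\mathsf Q_0u=u\). Self-adjointness and idempotence of \(\mathsf P_S\) give
\[
 \langle\widetilde H,\mathsf P_S\widetilde H\rangle=\norm{\mathsf P_S\widetilde H}_{2,D}^2 ,
\]
so \(\abs{\langle u,Y\rangle}=\norm{Y}^2/\norm{\widetilde H}\). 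Dividing by \(\norm{Y}\), we get \(\norm{\mathsf Q_0e}\ge\norm{Y}/\norm{\widetilde H}\). Squaring and inserting into the trace bound yields
\[
 \pi_{0,S}\ge\frac1q\,\frac{\norm{\mathsf P_S\widetilde H}_{2,D}^2}{\norm{\widetilde H}_{2,D}^2},
\]
which is Eq.~\eqref{eq:pi0-energy-lower}.

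No step is a serious obstacle; the points needing care are bookkeeping. One must check that \(\Tr_{\calK}\) of the positive compressed block dominates a single Rayleigh quotient, which holds because \(A_0\) is a positive operator on \(\calV_S\). One must also handle the degenerate-norm conventions. Finally, note that centring is not strictly needed for \(\mathsf P_S\), since Lemma~\ref{lem:local-projection} removes the trace part anyway. It does matter for the denominator: the identity component would inflate \(\norm{\widetilde H}\) and weaken the bound.
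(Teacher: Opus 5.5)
Your proof is correct, and it reaches the bound by a route dual to the paper's. The paper works from the conserved side. It uses cyclicity, $\Tr_{\calK}(\mathsf P_S\mathsf Q_0\mathsf P_S)=\Tr_{\calK}(\mathsf Q_0\mathsf P_S\mathsf Q_0)$, and evaluates the latter in an orthonormal basis of $\Ran\mathsf Q_0$ that contains $X_H=\widetilde H/\norm{\widetilde H}_{2,D}$. Dropping the other basis terms gives $q\pi_{0,S}\ge\norm{\mathsf P_SX_H}_{2,D}^2$ in one line, with no Cauchy--Schwarz step and no separate case $\mathsf P_S\widetilde H=0$. You work from the local side instead. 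You take the unit vector $e\in\calV_S$ along $\mathsf P_S\widetilde H$, which gives the intermediate bound $q\pi_{0,S}\ge\norm{\mathsf Q_0e}_{2,D}^2$. You then need Cauchy--Schwarz against $u=\widetilde H/\norm{\widetilde H}_{2,D}$ to reach $\abs{\langle u,e\rangle}^2=\norm{\mathsf P_Su}_{2,D}^2$. Both routes end at the squared overlap between the same two unit vectors. The paper's version is shorter, and it gives the standard multi-direction Mazur form $q\pi_{0,S}\ge\sum_j\norm{\mathsf P_SX_j}_{2,D}^2$ for any orthonormal family of conserved operators at once. Your intermediate quantity $\norm{\mathsf Q_0e}_{2,D}^2$ records the full conserved weight of that particular local direction, so it can exceed the final bound. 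Your closing remark on centring is also right: $\mathsf P_S$ annihilates $\Id$, so using $H$ instead of $\widetilde H$ only enlarges the denominator.
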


\begin{proof}
Suppose first that \(\widetilde H\ne0\) and set
\[
  X_H
  =
  \frac{\widetilde H}{\norm{\widetilde H}_{2,D}}.
\]
Since \([H,\widetilde H]=0\), one has
\(X_H\in\Ran\mathsf Q_0\), and
\(\norm{X_H}_{2,D}=1\).
Extend \(X_H\) to an orthonormal basis
\(\{X_j\}\) of \(\Ran\mathsf Q_0\).  Then
\begin{align}
  q\pi_{0,S}
  &=
  \Tr_{\calK}
  \left[
  \mathsf P_S\mathsf Q_0\mathsf P_S
  \right]
  \nonumber\\
  &=
  \Tr_{\calK}
  \left[
  \mathsf Q_0\mathsf P_S\mathsf Q_0
  \right]
  \nonumber\\
  &=
  \sum_j
  \norm{\mathsf P_SX_j}_{2,D}^2
  \nonumber\\
  &\ge
  \norm{\mathsf P_SX_H}_{2,D}^2.
\end{align}
Substituting the definition of \(X_H\) proves
Eq.~\eqref{eq:pi0-energy-lower}.  If
\(\widetilde H=0\), the declared right-hand side is zero while
\(\pi_{0,S}\ge0\).
\end{proof}

\begin{lemma}[Deterministic extensive Hilbert--Schmidt upper bound]
\label{lem:HN-HS-upper}
For the ensemble
Eq.~\eqref{eq:Haar-local-floor-ensemble} under
Eqs.~\eqref{eq:local-traceless-floor} and
\eqref{eq:local-norm-floor}, there is a constant
\[
  C_H=C_H(k,\Delta,c_E,J)
\]
independent of \(N\) and of the Haar realization such that
\begin{equation}
  \norm{H_N}_{2,D}^2
  \le
  C_HN.
  \label{eq:HN-HS-upper}
\end{equation}
\end{lemma}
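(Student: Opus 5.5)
We expand the normalised Hilbert--Schmidt square directly,
\[
  \norm{H_N}_{2,D}^2=\sum_{X,Y\in\mathcal E_N}\langle h_X,h_Y\rangle ,
\]
and sort the pairs \((X,Y)\) into disjoint and overlapping ones. The bound will be deterministic: we use only the traceless condition \eqref{eq:local-traceless-floor} and the norm bound \eqref{eq:local-norm-floor}, which hold for every Haar realization, because \(U_X\) changes neither the trace nor the operator norm of \(\Lambda_X\).

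\emph{Disjoint supports.} If \(X\cap Y=\emptyset\), then \(h_X\otimes\Id\) and \(h_Y\otimes\Id\) act on different tensor factors. The normalised trace factorises:
\[
  \langle h_X,h_Y\rangle=\frac{\Tr_X h_X}{d_X}\,\frac{\Tr_Y h_Y}{d_Y}=0 .
\]
Hence only pairs with \(X\cap Y\neq\emptyset\) contribute.

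\emph{Overlapping supports.} For these pairs Cauchy--Schwarz in the normalised inner product gives
\[
  |\langle h_X,h_Y\rangle|\le\norm{h_X}_{2,D}\norm{h_Y}_{2,D}\le\norm{h_X}_\infty\norm{h_Y}_\infty\le J^2 .
\]
The middle step uses \(\norm{A}_{2,D}\le\norm{A}_\infty\), which holds because the trace is normalised by \(D\).

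\emph{Counting.} Fix \(X\). Each of its at most \(k\) sites lies in at most \(\Delta\) supports. So the number of \(Y\) meeting \(X\), including \(Y=X\), is at most \(k\Delta\). Summing over the at most \(c_EN\) supports \(X\) gives
\[
  \norm{H_N}_{2,D}^2\le c_E\,k\,\Delta\,J^2\,N ,
\]
so we may take \(C_H=c_Ek\Delta J^2\). This constant is independent of \(N\) and of the realization.

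The only step that needs care is the disjoint-pair orthogonality. It relies on the centring \eqref{eq:local-traceless-floor} being applied before the expansion. That centring shifts \(H_N\) only by a multiple of \(\Id\), as noted before the lemma, so the Liouvillian is unchanged. Without the centring the cross terms would add up to a quadratic-in-\(N\) contribution.
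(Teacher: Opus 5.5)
Your proposal is correct and follows the paper's proof essentially step for step. You use the same expansion into pair terms, kill the disjoint pairs by factorising the normalised trace and using tracelessness, bound each overlapping pair by Cauchy--Schwarz with $\norm{h_X}_{2,D}\le\norm{\Lambda_X}_\infty\le J$, and count at most $k\Delta$ overlapping partners per support, arriving at the same constant $C_H=c_Ek\Delta J^2$.
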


\begin{proof}
Expand
\[
  \norm{H_N}_{2,D}^2
  =
  \sum_{X,Y\in\mathcal E_N}
  \langle h_X,h_Y\rangle.
\]
If \(X\cap Y=\varnothing\), then the normalized trace factorizes:
\[
  \langle h_X,h_Y\rangle
  =
  \frac{\Tr_Xh_X}{d_X}
  \frac{\Tr_Yh_Y}{d_Y}
  =
  0
\]
by Eq.~\eqref{eq:local-traceless-floor}.
For arbitrary \(X,Y\), Cauchy--Schwarz gives
\[
  |\langle h_X,h_Y\rangle|
  \le
  \norm{h_X}_{2,D}
  \norm{h_Y}_{2,D}.
\]
Because \(h_X\) acts nontrivially only on \(X\),
\[
  \norm{h_X}_{2,D}^2
  =
  \frac1{d_X}\Tr_X(h_X^2)
  \le
  \norm{h_X}_\infty^2
  =
  \norm{\Lambda_X}_\infty^2
  \le
  J^2.
\]
Thus every nonzero summand has modulus at most \(J^2\).

Fix \(X\).  Since \(|X|\le k\) and every site belongs to at most
\(\Delta\) interaction supports, the number of \(Y\) with
\(Y\cap X\ne\varnothing\) is at most \(k\Delta\).  Hence the number of
ordered pairs \((X,Y)\) that can contribute is at most
\[
  |\mathcal E_N|\,k\Delta
  \le
  c_Ek\Delta\,N.
\]
Therefore
\[
  \norm{H_N}_{2,D}^2
  \le
  c_Ek\Delta J^2\,N.
\]
One may take
\[
  C_H=c_Ek\Delta J^2.
\]
\end{proof}

\begin{theorem}[Conservation-law floor for Haar-local ensembles]
\label{thm:Haar-local-pi0-floor}
Fix a subsystem \(S\), independent of \(N\), and suppose it contains one
interaction support \(X_0\in\mathcal E_N\).  Assume
\begin{equation}
  v_{X_0}
  :=
  \frac1{d_{X_0}}
  \Tr_{X_0}(\Lambda_{X_0}^2)
  \ge v_*>0
  \label{eq:vstar-floor}
\end{equation}
uniformly in \(N\).  Then
\begin{equation}
  \mathbb E\,\pi_{0,S}
  \ge
  \frac{v_*}{qC_H}\frac1N.
  \label{eq:Haar-local-pi0-floor}
\end{equation}
Thus, for the unseparated local concentration
\(\chi_S\) defined in Eq.~\eqref{eq:local-chi-new},
\[
  \mathbb E\,\chi_S(\varepsilon;H_N)
  \ge
  \frac{v_*}{qC_H}\frac1N
\]
for every \(\varepsilon>0\).
\end{theorem}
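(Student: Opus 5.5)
Using Lemma~\ref{lem:pi0-energy-lower} pointwise in the Haar realization, I obtain
\[
 \pi_{0,S}\ge\frac1q\frac{\norm{\mathsf P_S\widetilde H_N}_{2,D}^2}{\norm{\widetilde H_N}_{2,D}^2}.
\]
Since every \(\Lambda_X\) is traceless by Eq.~\eqref{eq:local-traceless-floor}, \(\Tr H_N=0\) and \(\widetilde H_N=H_N\). By Lemma~\ref{lem:HN-HS-upper} the denominator is at most \(C_HN\) for every realization. Hence
\[
 \pi_{0,S}\ge\frac{\norm{\mathsf P_SH_N}_{2,D}^2}{qC_HN}.
\]
This also holds when \(H_N=0\), because then the numerator vanishes. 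Because the denominator bound is deterministic, the expectation passes through the ratio. It therefore remains to show \(\mathbb E\norm{\mathsf P_SH_N}_{2,D}^2\ge v_*\).

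For the numerator, I use Lemma~\ref{lem:local-projection}, which gives \(\mathsf P_SH_N=\bigl[m^{-1}\Tr_BH_N\bigr]\otimes\Id_B\), since \(\Tr H_N=0\). Write \(Y=m^{-1}\Tr_BH_N=Y_0+Y_r\). Here \(Y_0=m^{-1}\Tr_Bh_{X_0}\) is the term from \(X_0\subset S\). The remainder \(Y_r=\sum_{X\ne X_0}m^{-1}\Tr_Bh_X\) is a sum of terms each supported on \(X\cap S\) and tensored with the identity elsewhere in \(S\). Because \(X_0\subset S\), the first term is \(Y_0=h_{X_0}\otimes\Id_{S\setminus X_0}\) as an operator on \(S\), so \(\langle Y_0,Y_0\rangle_S=v_{X_0}\ge v_*\).

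Expanding \(\norm{Y}_{2}^2\) gives the diagonal part plus a cross term \(2\,\mathrm{Re}\langle Y_0,Y_r\rangle_S\) plus \(\norm{Y_r}^2\ge0\). The key step is to kill the cross term in expectation. Conditioning on all \(U_X\) with \(X\ne X_0\), \(Y_r\) is fixed, and averaging \(U_{X_0}\) over Haar gives \(\mathbb E\,U_{X_0}\Lambda_{X_0}U_{X_0}^\dagger=(\Tr\Lambda_{X_0}/d_{X_0})\Id=0\). This is the ``Haar centring'' of the \(e_0\) term, and it uses the independence of \(U_{X_0}\) from the other unitaries. So \(\mathbb E\langle Y_0,Y_r\rangle=0\) and \(\mathbb E\norm{\mathsf P_SH_N}_{2,D}^2\ge v_*\), which proves Eq.~\eqref{eq:Haar-local-pi0-floor}.

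The only delicate points are bookkeeping. First, one must check that the normalisations of \(\norm{\cdot}_{2,D}\) on \(Y\otimes\Id_B\) and of \(\langle\cdot,\cdot\rangle_S\) agree. Second, the zero-norm case must be handled, which I have done above. The concentration statement then follows from \(\chi_S(\varepsilon)\ge\mu_S([0,\varepsilon))\ge\pi_{0,S}\).
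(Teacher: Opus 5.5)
Your proposal is correct and follows the paper's proof: the same pointwise Mazur-type bound with the deterministic denominator $C_HN$ from Lemma~\ref{lem:HN-HS-upper}, followed by Haar centring of the independent term $h_{X_0}$, which gives $\mathbb E\norm{\mathsf P_SH_N}_{2,D}^2\ge v_{X_0}$. The differences are cosmetic: you cancel only the cross terms involving $X_0$ and drop $\norm{Y_r}^2\ge0$, whereas the paper uses orthogonality in expectation of every pair $\mathsf P_Sh_X,\mathsf P_Sh_Y$ with $X\ne Y$, and you also treat the case $H_N=0$ explicitly, which the paper leaves implicit.
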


\begin{proof}
By Eq.~\eqref{eq:local-traceless-floor},
\[
  \Tr H_N=0,
\]
so \(\widetilde H_N=H_N\).
Lemma~\ref{lem:pi0-energy-lower} and
Lemma~\ref{lem:HN-HS-upper} give pointwise
\begin{equation}
  \pi_{0,S}
  \ge
  \frac{
  \norm{\mathsf P_SH_N}_{2,D}^2
  }{
  qC_HN
  }.
  \label{eq:pi0-floor-pointwise}
\end{equation}
It remains to lower-bound the expectation of the numerator.

Linearity of \(\mathsf P_S\) gives
\[
  \mathsf P_SH_N
  =
  \sum_X\mathsf P_Sh_X.
\]
For each traceless \(\Lambda_X\), Haar invariance implies
\[
  \mathbb E\,h_X
  =
  \frac{\Tr_X\Lambda_X}{d_X}\Id_X
  =
  0.
\]
The \(U_X\) are independent.  Hence for \(X\ne Y\),
conditioning on all variables except \(U_X\) gives
\[
  \mathbb E
  \langle
  \mathsf P_Sh_X,\mathsf P_Sh_Y
  \rangle
  =
  0.
\]
Therefore
\begin{align}
  \mathbb E
  \norm{\mathsf P_SH_N}_{2,D}^2
  &=
  \sum_X
  \mathbb E
  \norm{\mathsf P_Sh_X}_{2,D}^2
  \nonumber\\
  &\ge
  \mathbb E
  \norm{\mathsf P_Sh_{X_0}}_{2,D}^2.
  \label{eq:local-projection-expectation}
\end{align}
Because \(X_0\subseteq S\) and \(h_{X_0}\) is traceless,
\(h_{X_0}\), embedded by identities outside \(X_0\), belongs to
\(\calV_S\).  Hence
\[
  \mathsf P_Sh_{X_0}=h_{X_0}
\]
and
\[
  \norm{h_{X_0}}_{2,D}^2
  =
  \frac1{d_{X_0}}
  \Tr_{X_0}(h_{X_0}^2)
  =
  \frac1{d_{X_0}}
  \Tr_{X_0}(\Lambda_{X_0}^2)
  =
  v_{X_0}
  \ge
  v_*.
\]
Taking expectations in
Eq.~\eqref{eq:pi0-floor-pointwise} and using
Eq.~\eqref{eq:local-projection-expectation} proves
Eq.~\eqref{eq:Haar-local-pi0-floor}.

Finally every interval containing zero has
\[
  \chi_S(\varepsilon;H_N)\ge\pi_{0,S},
\]
so the same expectation lower bound holds for \(\chi_S\).
\end{proof}

\begin{remark}
The theorem explains why the zero mode must be separated before asking
for fast dynamical spectral spreading.  The \(N^{-1}\) floor is forced by
the conserved extensive energy in this ensemble, but
Theorem~\ref{thm:separated-stationary} charges stationary memory as
\(\pi_{0,S}^2\), so an \(O(N^{-1})\) zero-mode weight contributes only
\(O(N^{-2})\) to the mean-square stationary trace-distance error.
\end{remark}

\subsection{Control of the zero-frequency atom by weak ETH}
\label{sec:weak-ETH-zero-new}

Assume in this section that the energy eigenvalues are nondegenerate.
Define the infinite-temperature weak-ETH defect
\begin{equation}
  \delta_S(H)
  =
  \frac1D
  \sum_{n=1}^{D}
  \Dt\left(
  \rho_S^{(n)},\frac{\Id_S}{d}
  \right),
  \qquad
  \rho_S^{(n)}
  =
  \Tr_B\ket n\!\bra n.
  \label{eq:weak-ETH-defect}
\end{equation}

\begin{lemma}[Weak ETH controls exact local memory]
\label{lem:weak-ETH-zero-new}
Under nondegenerate energies,
\begin{equation}
  \pi_{0,S}
  \le
  \frac{2d}{d^2-1}
  \delta_S(H).
  \label{eq:weak-ETH-zero-new}
\end{equation}
\end{lemma}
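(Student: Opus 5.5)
With nondegenerate energies, \(\mathsf Q_0\) projects onto the operators diagonal in the energy eigenbasis, spanned by the orthonormal family \(X_n=\sqrt D\,\ket n\!\bra n\) with \(\langle X_n,X_m\rangle=\delta_{nm}\). The plan is to write \(q\pi_{0,S}=\Tr_{\calK}[\mathsf Q_0\mathsf P_S\mathsf Q_0]\), exactly as in the proof of Lemma~\ref{lem:pi0-energy-lower}, and evaluate it in this basis:
\[
 q\pi_{0,S}=\sum_{n=1}^D\norm{\mathsf P_S X_n}_{2,D}^2 .
\]
Each term is then computed with Lemma~\ref{lem:local-projection}.

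Since \(\Tr X_n=\sqrt D\) and \(\Tr_B X_n=\sqrt D\,\rho_S^{(n)}\), Lemma~\ref{lem:local-projection} gives
\[
 \mathsf P_SX_n=\frac{\sqrt D}{m}\Bigl(\rho_S^{(n)}-\frac{\Id_S}{d}\Bigr)\otimes\Id_B .
\]
Consequently
\[
 \norm{\mathsf P_SX_n}_{2,D}^2=\frac{D}{m^2}\cdot\frac{m}{D}\,\Tr\Bigl(\rho_S^{(n)}-\frac{\Id_S}{d}\Bigr)^2=\frac1m\,\Bigl\|\rho_S^{(n)}-\frac{\Id_S}{d}\Bigr\|_2^2 .
\]
Summing over \(n\) yields the Parseval-type identity
\[
 \pi_{0,S}=\frac{1}{qm}\sum_n\Bigl\|\rho_S^{(n)}-\frac{\Id_S}{d}\Bigr\|_2^2 .
\]

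Next, each summand must be bounded linearly in the trace distance. Write \(\Delta_n=\rho_S^{(n)}-\Id_S/d\) and use \(\|\Delta_n\|_2^2\le\|\Delta_n\|_\infty\|\Delta_n\|_1\). Because \(\Delta_n\) is a difference of two states, \(\|\Delta_n\|_\infty\le1\); a sharper bound, \(\|\Delta_n\|_\infty\le1-1/d\), is available but not needed. This gives \(\|\Delta_n\|_2^2\le 2\Dt(\rho_S^{(n)},\Id_S/d)\), and therefore
\[
 \pi_{0,S}\le\frac{2}{qm}\sum_n\Dt\bigl(\rho_S^{(n)},\Id_S/d\bigr)=\frac{2D}{qm}\,\delta_S(H)=\frac{2d}{d^2-1}\,\delta_S(H).
\]

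The main step to get right is the identification of \(\Ran\mathsf Q_0\) with the diagonal operators. This requires nondegeneracy only of the energies, not of the gaps: \([H,X]=0\) forces \(X\) to be diagonal exactly when no two energies coincide. It is also essential to keep the normalisation factors \(D\) and \(m\) straight, since the constant \(2d/(d^2-1)\) comes out only when they are tracked consistently. The remaining norm inequalities are elementary.
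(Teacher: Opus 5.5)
Your proof is correct and follows the paper's argument: both establish the exact identity $\pi_{0,S}=(qm)^{-1}\sum_n\|\rho_S^{(n)}-\Id_S/d\|_2^2$ (the paper writes the prefactor as $d/(qD)$) and then apply the same H\"older step with $\|\rho_S^{(n)}-\Id_S/d\|_\infty\le1$. The only difference is the order of evaluation. You compute the operator-space trace in the diagonal orthonormal basis of $\Ran\mathsf Q_0$ via Lemma~\ref{lem:local-projection}, whereas the paper computes it in the local basis $\{O_a\}$ through the diagonal matrix elements $\langle n|O_a|n\rangle$ and then sums over $a$ by Parseval; this is the same double sum taken in the other order.
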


\begin{proof}
Let \(P_1,\ldots,P_q\) be the Hermitian subsystem basis from
Eq.~\eqref{eq:local-operator-basis}, and
\(O_a=P_a\otimes\Id_B\).  Since the energies are nondegenerate,
\(\mathsf Q_0\) projects onto the span of
\(\{\ket n\!\bra n\}_n\).  Therefore
\[
  \norm{\mathsf Q_0O_a}_{2,D}^2
  =
  \frac1D
  \sum_n
  |\bra nO_a\ket n|^2.
\]
Set
\[
  X_n
  =
  \rho_S^{(n)}-\frac{\Id_S}{d}.
\]
Since \(P_a\) is traceless,
\[
  \bra nO_a\ket n
  =
  \Tr(P_aX_n).
\]
Parseval on the traceless subsystem operator space gives
\[
  \sum_{a=1}^{q}
  |\Tr(P_aX_n)|^2
  =
  d\,\Tr(X_n^2).
\]
Consequently
\begin{align}
  \pi_{0,S}
  &=
  \frac1q
  \sum_a
  \norm{\mathsf Q_0O_a}_{2,D}^2
  \nonumber\\
  &=
  \frac{d}{qD}
  \sum_n\Tr(X_n^2).
  \label{eq:pi0-Xn-new}
\end{align}
Every eigenvalue of \(X_n\) lies in
\([-1/d,1-1/d]\), so \(\norm{X_n}_\infty\le1\).  Schatten H\"older gives
\[
  \Tr(X_n^2)
  =
  \norm{X_n}_2^2
  \le
  \norm{X_n}_\infty\norm{X_n}_1
  \le
  2\Dt\left(
  \rho_S^{(n)},\frac{\Id_S}{d}
  \right).
\]
Substitute this into Eq.~\eqref{eq:pi0-Xn-new} and use
\(q=d^2-1\).
\end{proof}

\begin{corollary}[Same-time random-ensemble criterion]
\label{cor:ensemble-same-time-new}
Let \(H_N\) be random, let \(S\) be fixed, and put \(T_N=N^c\).
Suppose
\[
  \mathbb E A_S(T_N)
  \le
  N^{-2\kappa_0}.
\]
Then for every \(0<\kappa<\kappa_0\),
\begin{equation}
  \mathbb P\left[
  \widetilde\chi_S(N^{-c};H_N)>N^{-\kappa}
  \right]
  \le
  \frac{2}{c_*^2}
  N^{-2(\kappa_0-\kappa)}.
  \label{eq:ensemble-same-time-new}
\end{equation}
If in addition
\[
  \mathbb E\pi_{0,S}(H_N)\le p_N,
\]
then for every \(b_N>0\),
\begin{equation}
  \mathbb P[
  \pi_{0,S}(H_N)>b_N]
  \le
  \frac{p_N}{b_N}.
  \label{eq:pi0-Markov-new}
\end{equation}
Thus a polynomial second-moment autocorrelation theorem and independent
zero-mode control imply high-probability local equilibration on the same
windows \(T_N=N^c\).
\end{corollary}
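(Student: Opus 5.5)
The first inequality, Eq.~\eqref{eq:ensemble-same-time-new}, follows by combining the lower half of Theorem~\ref{thm:same-scale-equivalence-new} with Markov's inequality. For every realization \(H_N\), Eq.~\eqref{eq:same-scale-equivalence-new} at \(T=T_N=N^c\) gives, pointwise in the randomness,
\[
 \widetilde\chi_S(N^{-c};H_N)^2\le\frac{2}{c_*^2}A_S(T_N).
\]
Hence the event \(\widetilde\chi_S(N^{-c};H_N)>N^{-\kappa}\) is contained in the event \(A_S(T_N)>c_*^2N^{-2\kappa}/2\). Both \(A_S(T_N)\) and \(\widetilde\chi_S\) are Borel functions of \(H_N\). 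The first is an integral of squared finite exponential sums with continuous coefficients. For the second, I would note that \(\widetilde\chi_S\) can be written through the spectral projectors. If that is awkward, it suffices to work with the containing event defined by \(A_S\), which is Borel. Markov's inequality then gives
\[
 \mathbb P\bigl[A_S(T_N)>c_*^2N^{-2\kappa}/2\bigr]\le\frac{2}{c_*^2}N^{2\kappa}\,\mathbb E A_S(T_N)\le\frac{2}{c_*^2}N^{-2(\kappa_0-\kappa)},
\]
which is exactly the stated bound. The restriction \(\kappa<\kappa_0\) is needed only for this to decay.

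The second inequality, Eq.~\eqref{eq:pi0-Markov-new}, is Markov's inequality applied to the nonnegative random variable \(\pi_{0,S}(H_N)\). Nonnegativity holds because \(\pi_{0,S}=\mu_S(\{0\})\) and \(\mu_S\) is a probability measure by Lemma~\ref{lem:local-spectral-weights-new}. Measurability holds because \(\pi_{0,S}\) is the infinite-time Ces\`aro average of \(\varphi_S\), which is a pointwise limit of continuous functions of \(H_N\).

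The concluding sentence is interpretive. To justify it, I would intersect the two good events using a union bound. On their intersection, Corollary~\ref{cor:separated-same-scale} bounds the every-window worst-input error by \((d-1)q^2[\sqrt{(1+2\pi^2/3)N^{-\kappa}}+b_N/2]^2\), on windows of length exactly \(T_N\).

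The only point needing care is measurability of \(\widetilde\chi_S\) as a function of the random Hamiltonian. As noted above, I would sidestep it by phrasing the probability through the containing event defined by \(A_S\). No other step is an obstacle.
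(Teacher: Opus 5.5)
Your proposal is correct and follows the paper's proof exactly: the pointwise lower half of Theorem~\ref{thm:same-scale-equivalence-new} at $T=N^c$, the resulting event inclusion with Markov's inequality, Markov applied directly to $\pi_{0,S}\ge0$, and Corollary~\ref{cor:separated-same-scale} for the closing claim. Your measurability remarks go beyond the paper and are harmless, with one correction: $A_S(T_N)$ is Borel because $\varphi_S$ is continuous in $H$ and $\pi_{0,S}$ is a pointwise Ces\`aro limit of continuous functions, not because the spectral coefficients are continuous (they can jump where degeneracies change).
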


\begin{proof}
The first inequality of
Theorem~\ref{thm:same-scale-equivalence-new} gives
\[
  \widetilde\chi_S(N^{-c};H_N)^2
  \le
  \frac2{c_*^2}A_S(T_N).
\]
The event
\[
  \widetilde\chi_S(N^{-c};H_N)>N^{-\kappa}
\]
therefore implies
\[
  A_S(T_N)
  >
  \frac{c_*^2}{2}N^{-2\kappa}.
\]
Markov's inequality and the assumed expectation bound give
Eq.~\eqref{eq:ensemble-same-time-new}.  Equation
\eqref{eq:pi0-Markov-new} is Markov's inequality applied directly to
\(\pi_{0,S}\ge0\).  The final statement follows from
Corollary~\ref{cor:separated-same-scale}.
\end{proof}

\begin{corollary}[Weak ETH plus finite-time mixing, without exponent loss]
\label{cor:weak-ETH-plus-mixing-new}
Let \(S\) be fixed and suppose the energies are nondegenerate.
If \(\delta_S(H_N)=O(N^{-\gamma})\) and
\(A_S(N^c)=O(N^{-2\kappa})\), then
\begin{equation}
 \mathscr E_{H_N}(N^c)
 =O_S(N^{-2\gamma})+O_S(N^{-2\kappa}).
 \label{eq:weak-ETH-plus-mixing-new}
\end{equation}
\end{corollary}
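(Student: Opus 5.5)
The plan is to apply Corollary~\ref{cor:direct-from-autocorrelation-new}, which bounds \(\mathscr E_{H_N}(T)\) by the stationary weight \(\pi_{0,S}\) and the fluctuation mean square \(A_S(T)\). At \(T=N^c\) it gives
\[
 \mathscr E_{H_N}(N^c)\le(d-1)q^2\left[\sqrt{A_S(N^c)}+\frac{\pi_{0,S}}{2}\right]^2 .
\]
Both ingredients are then controlled by the two hypotheses. The constants are \(O_S(1)\) because \(d\) and \(q=d^2-1\) are fixed with the subsystem. Nondegeneracy is not needed here; the basis-independent zero-frequency block handles the stationary part.

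First, the mixing hypothesis \(A_S(N^c)=O(N^{-2\kappa})\) gives \(\sqrt{A_S(N^c)}=O(N^{-\kappa})\) directly.

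Second, for the stationary weight I invoke Lemma~\ref{lem:weak-ETH-zero-new}. This is the step where nondegeneracy of the energies enters: it makes \(\mathsf Q_0\) the diagonal projector. The lemma gives
\[
 \pi_{0,S}\le\frac{2d}{d^2-1}\delta_S(H_N)=O_S(N^{-\gamma}).
\]

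Combining, \(\sqrt{\mathscr E}=O_S(N^{-\kappa})+O_S(N^{-\gamma})\). Squaring and using \((x+y)^2\le2x^2+2y^2\) yields \(O_S(N^{-2\kappa})+O_S(N^{-2\gamma})\), which is Eq.~\eqref{eq:weak-ETH-plus-mixing-new}.

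The phrase ``without exponent loss'' refers to the fact that \(\pi_{0,S}\) enters squared, through Theorem~\ref{thm:separated-stationary}, rather than linearly. I expect no genuine obstacle in this argument. The only point needing care is to route the stationary term through the Minkowski form of Corollary~\ref{cor:direct-from-autocorrelation-new}. Adding \(\pi_{0,S}\) to \(A_S\) linearly, for instance by passing to the full correlation \(\mathcal A_S\), would produce an \(N^{-\gamma}\) term in place of \(N^{-2\gamma}\).
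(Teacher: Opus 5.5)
Your proof is correct and is essentially the paper's: bound \(\pi_{0,S}\) by Lemma~\ref{lem:weak-ETH-zero-new}, insert this bound and the mean-square hypothesis into the second inequality of Eq.~\eqref{eq:direct-from-autocorrelation-new}, and square. Your closing remark is off, however. Passing to the full correlation would not cost the \(\gamma\) exponent, because Minkowski gives \(\mathcal A_S(T)\le(\sqrt{A_S(T)}+\pi_{0,S})^2\); the ``exponent loss'' the paper avoids is the square-root conversion \(\widetilde\chi_S\lesssim\sqrt{A_S}\) through \(\widetilde\chi_S\), which would weaken \(N^{-2\kappa}\) to \(N^{-\kappa}\).
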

\begin{proof}
Lemma~\ref{lem:weak-ETH-zero-new} gives
\(\pi_{0,S}=O_S(N^{-\gamma})\).
Insert this and the assumed mean-square estimate directly into
Eq.~\eqref{eq:direct-from-autocorrelation-new}.
No conversion through \(\widetilde\chi_S\) is needed.
\end{proof}

\section{The periodic LFSR clock}\label{sec:supp4}
\subsection{Resonant static application: the periodic LFSR clock}
\label{sec:LFSR}

We now verify the local-band condition for an explicit static
geometrically local Hamiltonian whose spectrum is highly resonant.  The
construction is the periodic Feynman--Kitaev Hamiltonian associated with a
maximal linear-feedback shift-register (LFSR) circuit
\cite{IppolitiLong2026}.

Let \(n\ge2\) be the number of spin qubits and let
\[
  u_0,\ldots,u_{n-1}
\]
be the local two-qubit gates of the LFSR circuit.  In the one-clock-particle
sector the Hamiltonian is
\begin{equation}
  H_{\mathrm{FK}}
  =
  \Id
  -
  \frac12
  \sum_{t=0}^{n-1}
  \left(
  u_t\otimes\ket{t+1}\!\bra t
  +
  u_t^\dagger\otimes\ket t\!\bra{t+1}
  \right),
  \label{eq:FK}
\end{equation}
with periodic clock labels.  Its Hilbert-space dimension is
\begin{equation}
  D=n2^n.
  \label{eq:FK-D}
\end{equation}
The physical clock may be represented by \(n\) local clock modes; the
corresponding second-quantized Hamiltonian conserves clock number, so the
one-particle sector is dynamically invariant.

\subsection{Periodic-clock spectrum}

Let
\[
  U_F=u_{n-1}\cdots u_0
\]
and suppose
\[
  U_F\ket\phi=e^{i\phi}\ket\phi.
\]
Define
\[
  U_{t:0}
  =
  \begin{cases}
  \Id,&t=0,\\
  u_{t-1}\cdots u_0,&1\le t\le n,
  \end{cases}
\]
and
\[
  \ket{K_{t,\phi}}
  =
  U_{t:0}\ket\phi\otimes\ket t.
\]

\begin{proposition}[Periodic-clock eigenstates]
\label{prop:clock-spectrum}
For \(m=0,\ldots,n-1\), set
\[
  \theta_{m,\phi}
  =
  \frac{2\pi m}{n}-\frac{\phi}{n}.
\]
Then
\begin{equation}
  \ket{\Psi_{m,\phi}}
  =
  \frac1{\sqrt n}
  \sum_{t=0}^{n-1}
  e^{i\theta_{m,\phi}t}
  \ket{K_{t,\phi}}
  \label{eq:clock-state}
\end{equation}
is an eigenstate of \(H_{\mathrm{FK}}\) with
\begin{equation}
  E_{m,\phi}
  =
  1-\cos\theta_{m,\phi}.
  \label{eq:clock-energy}
\end{equation}
\end{proposition}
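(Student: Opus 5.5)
The plan is to apply \(H_{\mathrm{FK}}\) directly to the ansatz Eq.~\eqref{eq:clock-state} and compute its action on each clock component. First I record how the hopping terms act on the basis vectors \(\ket{K_{t,\phi}}\).

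For \(0\le t\le n-2\), the forward term gives
\[
(u_t\otimes\ket{t+1}\!\bra t)\ket{K_{t,\phi}}=u_tU_{t:0}\ket\phi\otimes\ket{t+1}=U_{t+1:0}\ket\phi\otimes\ket{t+1}=\ket{K_{t+1,\phi}}.
\]
The backward term acts on \(\ket{K_{t+1,\phi}}\) as \(u_t^\dagger U_{t+1:0}\ket\phi\otimes\ket t=\ket{K_{t,\phi}}\).

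The only nontrivial case is the wrap-around \(t=n-1\to0\). There
\[
u_{n-1}U_{n-1:0}\ket\phi=U_F\ket\phi=e^{i\phi}\ket\phi,
\]
so the forward hop maps \(\ket{K_{n-1,\phi}}\) to \(e^{i\phi}\ket{K_{0,\phi}}\). Correspondingly, the backward hop \(u_{n-1}^\dagger\otimes\ket{n-1}\!\bra0\) maps \(\ket{K_{0,\phi}}\) to \(U_{n-1:0}U_F^\dagger\ket\phi\otimes\ket{n-1}=e^{-i\phi}\ket{K_{n-1,\phi}}\), using \(u_{n-1}^\dagger=U_{n-1:0}U_F^\dagger\).

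It follows that the span of \(\{\ket{K_{t,\phi}}\}_t\) is invariant. On this span the hopping operator is a ring of \(n\) sites with a twist: all links have unit amplitude except the boundary link, which carries phase \(e^{\pm i\phi}\).

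Next I substitute the Bloch ansatz with amplitudes \(c_t=e^{i\theta t}/\sqrt n\). For interior sites, the coefficient of \(\ket{K_{t,\phi}}\) in \(H\Psi\) is
\[
c_t-\tfrac12(c_{t-1}+c_{t+1})=(1-\cos\theta)c_t.
\]
At the boundary sites \(t=0\) and \(t=n-1\), the neighbour amplitude contributed across the twisted link is \(e^{i\phi}c_{n-1}\) or \(e^{-i\phi}c_0\) respectively. This matches the Bloch form exactly when \(e^{i\phi}e^{i\theta(n-1)}=e^{-i\theta}\), that is, when
\[
e^{i(n\theta+\phi)}=1.
\]
The choice \(\theta_{m,\phi}=(2\pi m-\phi)/n\) satisfies this condition, and the eigenvalue \(E_{m,\phi}=1-\cos\theta_{m,\phi}\) follows.

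The main obstacle is bookkeeping the direction of the twist phase, i.e.\ which sign of \(\phi\) appears on which boundary link and which hop direction it accompanies. I would check it with the case \(n=1\) or a brief consistency check on both boundary rows; the statement's sign \(-\phi/n\) should emerge. Normalisation is immediate, since the \(\ket{K_{t,\phi}}\) are orthonormal: they have distinct clock labels and unit-norm spin parts.
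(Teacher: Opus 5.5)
Your proposal is correct and follows essentially the same route as the paper: direct action of \(H_{\mathrm{FK}}\) on the Krylov basis using \(u_tU_{t:0}=U_{t+1:0}\), the twist \(e^{i\phi}\) on the boundary link coming from \(U_F\ket\phi=e^{i\phi}\ket\phi\), and the Bloch consistency condition \(e^{i(n\theta+\phi)}=1\). Your explicit check of the backward boundary hop via \(u_{n-1}^\dagger=U_{n-1:0}U_F^\dagger\) spells out what the paper leaves implicit, and the sign is already fixed by the condition you derived; the proposed \(n=1\) check could not detect it anyway, since \(\cos\) is even.
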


\begin{proof}
For \(0\le t\le n-2\),
\[
  u_tU_{t:0}=U_{t+1:0}.
\]
At the periodic boundary,
\[
  U_{n:0}\ket\phi
  =
  U_F\ket\phi
  =
  e^{i\phi}\ket\phi.
\]
Thus the Krylov basis obeys the twisted boundary condition
\[
  \ket{K_{n,\phi}}
  =
  e^{i\phi}\ket{K_{0,\phi}}.
\]
The plane-wave coefficient is consistent with this twist because
\[
  e^{i\theta_{m,\phi}n}e^{i\phi}
  =
  e^{2\pi im}
  =
  1.
\]
Acting with Eq.~\eqref{eq:FK} on
Eq.~\eqref{eq:clock-state},
\begin{align}
  H_{\mathrm{FK}}\ket{\Psi_{m,\phi}}
  &=
  \left(
  1-\frac12e^{i\theta_{m,\phi}}
  -\frac12e^{-i\theta_{m,\phi}}
  \right)
  \ket{\Psi_{m,\phi}}
  \nonumber\\
  &=
  \left(
  1-\cos\theta_{m,\phi}
  \right)
  \ket{\Psi_{m,\phi}}.
\end{align}
\end{proof}

For a maximal \(n\)-bit LFSR, put
\[
  L=2^n-1.
\]

\begin{lemma}[Maximal-LFSR Floquet phases]
\label{lem:LFSR-cycle-phases}
The classical maximal LFSR fixes the all-zero bit string and forms one
cycle of length \(L\) on the nonzero bit strings.  Consequently its
permutation unitary \(U_F\) has a two-dimensional \(+1\) eigenspace and
the remaining eigenvalues are
\begin{equation}
  e^{i\phi_q}
  =
  e^{-2\pi iq/L},
  \qquad
  q=1,\ldots,L-1.
  \label{eq:LFSR-phases}
\end{equation}
\end{lemma}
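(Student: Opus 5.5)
The plan is to analyse the classical LFSR map on $\{0,1\}^n$ and then diagonalise its permutation unitary. A maximal $n$-bit LFSR is, by definition, the linear map $x\mapsto Ax$ over $\mathbb F_2^n$ whose companion matrix $A$ has a primitive characteristic polynomial of degree $n$. Linearity gives $A0=0$, so the all-zero string is fixed. For the nonzero strings, identify $\mathbb F_2^n$ with $\mathbb F_{2^n}\cong\mathbb F_2[x]/(p)$, where $p$ is the primitive polynomial; under this identification $A$ acts as multiplication by a root $\alpha$ of $p$. Because $\alpha$ generates the cyclic group $\mathbb F_{2^n}^\times$ of order $L=2^n-1$, the orbit $y,\alpha y,\alpha^2y,\ldots$ of any nonzero $y$ visits all $L$ nonzero elements before returning. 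So the nonzero strings form a single cycle of length $L$. In the circuit picture, $U_F=u_{n-1}\cdots u_0$ is the permutation unitary $\ket{x}\mapsto\ket{Ax}$; this is the content taken from Ref.~\cite{IppolitiLong2026}, and I would cite it there rather than re-derive the gate decomposition.

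The spectral statement then follows from the cycle decomposition. The permutation unitary is a direct sum of cyclic shifts, one for each cycle. The fixed point $\ket{0\cdots0}$ is a one-dimensional block with eigenvalue $1$. On the $L$-cycle, label the strings $y_k=A^ky_0$ for $k=0,\ldots,L-1$. The vectors
\[
\ket{\chi_q}=\frac1{\sqrt L}\sum_{k=0}^{L-1}e^{2\pi i qk/L}\ket{y_k}
\]
satisfy $U_F\ket{\chi_q}=e^{-2\pi iq/L}\ket{\chi_q}$ for $q=0,\ldots,L-1$. These are the $L$-th roots of unity, each appearing once. The case $q=0$ gives a second $+1$ eigenvector, the uniform superposition over the cycle, and together with $\ket{0\cdots0}$ it spans the two-dimensional $+1$ eigenspace. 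The remaining indices $q=1,\ldots,L-1$ give exactly the phases in Eq.~\eqref{eq:LFSR-phases}, and all of these differ from $1$.

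A dimension count closes the argument: $2+(L-1)=2^n$, so the eigenvalues listed account for the whole spin space. The only substantive input is that maximality means primitivity of the feedback polynomial, equivalently that $\alpha$ has multiplicative order exactly $L$. That is where a possible obstacle lies. If the paper's gate sequence implements the LFSR only up to a fixed relabelling or reordering of the bits, I would check that this amounts to conjugating $A$ by a permutation. Such a conjugation preserves the cycle structure, so the conclusion would be unaffected.
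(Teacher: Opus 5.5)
Your proposal is correct and follows the paper's proof. The zero string is fixed by linearity, the nonzero strings form one \(L\)-cycle, and the Fourier vectors on that cycle give the eigenvalues \(e^{-2\pi iq/L}\), with \(q=0\) supplying the second \(+1\) eigenvector. The only differences are that the paper takes ``maximal'' to mean directly that every nonzero state has period \(2^n-1\), whereas you derive this from primitivity via the cyclic group \(\mathbb F_{2^n}^\times\) (the argument the paper later uses for its prime-power companion map), and that you add an explicit dimension count \(2+(L-1)=2^n\).
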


\begin{proof}
Maximality of an \(n\)-bit LFSR means precisely that its classical update
has period \(2^n-1\) on every nonzero state.  Since there are exactly
\(2^n-1\) nonzero bit strings, they form one cycle.  The zero string is
fixed by linearity.

List the nonzero cycle as
\[
  x_0,x_1,\ldots,x_{L-1},
  \qquad
  U_F\ket{x_t}=\ket{x_{t+1}},
\]
with indices modulo \(L\).  The Fourier vectors
\[
  \ket{\psi_q}
  =
  \frac1{\sqrt L}
  \sum_{t=0}^{L-1}
  e^{2\pi iqt/L}\ket{x_t},
  \qquad
  q=0,\ldots,L-1,
\]
satisfy
\[
  U_F\ket{\psi_q}
  =
  e^{-2\pi iq/L}\ket{\psi_q}.
\]
The \(q=0\) vector and the fixed zero-string vector are independent
\(+1\) eigenvectors.  The cycle decomposition contains no other cycles,
so there are no further \(+1\) eigenvectors.  Removing \(q=0\) leaves
Eq.~\eqref{eq:LFSR-phases}.
\end{proof}

Combining Lemma~\ref{lem:LFSR-cycle-phases} with
Proposition~\ref{prop:clock-spectrum} gives the regular static energies
\begin{equation}
  E_{m,q}
  =
  1-\cos\left[
  \frac{2\pi}{nL}(mL+q)
  \right],
  \qquad
  0\le m<n,
  \quad
  1\le q<L.
  \label{eq:LFSR-energy}
\end{equation}

We use one non-elementary external input from
Ref.~\cite{IppolitiLong2026}.

\begin{theorem}[Maximal-LFSR Pauli matrix elements; external]
\label{thm:LFSR-external}
For every nonidentity spin Pauli operator \(P\), and every pair of
nontrivial Floquet eigenstates \(\ket{\psi_q},\ket{\psi_{q'}}\),
\begin{equation}
  \abs{\bra{\psi_q}P\ket{\psi_{q'}}}
  \le
  c_n,
  \qquad
  c_n
  :=
  \frac{2^{1+n/2}}{2^n-1}.
  \label{eq:LFSR-Pauli-input}
\end{equation}
\end{theorem}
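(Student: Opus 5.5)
We would prove this bound by rewriting every Pauli matrix element between Floquet eigenstates as a character sum over the finite field $\mathbb F_{2^n}$, and then using the Gauss, Jacobi and Weil bounds. Since this is an external input from Ref.~\cite{IppolitiLong2026}, the plan only reconstructs the mechanism. It does not redo their bookkeeping.

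\emph{Step 1: field model of the circuit.} A maximal LFSR built from Clifford (CNOT/SWAP-type) gates acts on computational basis states as an invertible linear map $x\mapsto Ax$ on $\mathbb F_2^n$. Its characteristic polynomial is primitive. Identify $\mathbb F_2^n\cong\mathbb F_{2^n}$ so that $A$ becomes multiplication by a primitive element $\alpha$. The cycle of Lemma~\ref{lem:LFSR-cycle-phases} is then $x_t=\alpha^t x_0$; taking $x_0=1$ gives $x_t=\alpha^t$. Each phase $e^{2\pi i q t/L}$ defines a multiplicative character $\chi_q(\alpha^t)=e^{2\pi i qt/L}$ of $\mathbb F_{2^n}^\times$, and
\[
\ket{\psi_q}=L^{-1/2}\sum_{y\ne0}\chi_q(y)\ket y .
\]
Up to a phase, write $P=X^{a}Z^{b}$. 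Identify $b\cdot y$ with $\mathrm{Tr}(\beta y)$ for a unique $\beta$, so that $(-1)^{b\cdot y}=\psi(\beta y)$ is an additive character.

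\emph{Step 2: reduction to character sums.} Compute
\[
\bra{\psi_q}P\ket{\psi_{q'}}=\frac1L\sum_{y\ne0,\,y+a\ne0}\overline{\chi_q(y+a)}\,\chi_{q'}(y)\,\psi(\beta y).
\]
We split into cases.
\begin{itemize}
\item If $a=0$ (so $\beta\ne0$), the sum is a Gauss sum for the character $\overline{\chi_q}\chi_{q'}$. Its modulus is $2^{n/2}$ when the character is nontrivial, and $1$ when $q=q'$.
\item If $a\ne0$ and $\beta=0$, substituting $y=a u$ turns it into a Jacobi-type sum $\sum_u\overline{\chi_q}(1+u)\chi_{q'}(u)$. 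Since $q,q'\ne0$, this has modulus at most $2^{n/2}$, or $1$ in the degenerate case $\chi_q=\chi_{q'}$ (only where it collapses).
\item If $a\ne0$ and $\beta\ne0$, we have a mixed sum $\sum_u\chi_1(f(u))\psi(g(u))$ with rational $f(u)=u^{q'}(1+u)^{-q}$ in character form and linear $g$. The Weil bound for mixed sums gives modulus at most $(\deg$-count$)\sqrt{2^n}$ with count $2$. The nontriviality hypotheses hold automatically because $\psi(\beta\,\cdot)$ is nontrivial.
\end{itemize}
In every case the modulus is at most $2\cdot2^{n/2}$. Dividing by $L=2^n-1$ gives $c_n$.

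\emph{Main obstacle.} The main difficulty is the third case. One must invoke the Weil bound in the correct form (Castro–Moreno / Perel'muter type for $\sum\chi(f)\psi(g)$) and check that the constant is exactly $2$, not larger. One must also verify that the removed points $y=0,a$ and the normalisation of the excluded $+1$ eigenvectors do not add terms. Those terms contribute $O(1)$, which is absorbed since $2\cdot2^{n/2}$ dominates. If that check fails, we would fall back to the cruder bound $2^{n/2}+O(1)$ over $L$. That fallback still suffices, up to constants, for the downstream estimate of $\chi_S(\delta)$. Finally, the bound transfers to the static clock eigenstates exactly as described in Methods: Clifford conjugation by $U_{t:0}$ maps Paulis to Paulis, and clock orthogonality averages the Krylov components.
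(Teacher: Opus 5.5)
Your proposal is correct and follows essentially the argument the paper gives. The binary statement is only cited externally, but Theorem~\ref{thm:character-sum-external} proves the same bound for every prime power $q_0$ (at $q_0=2$ its $c_Q=2\sqrt Q/(Q-1)$ is exactly $c_n$). It uses the same steps as you: powers of a primitive element, the reduction to $\frac1L\sum_{x\ne0,-a}\chi^{j'}(x)\chi^{-j}(x+a)\psi_w(x)$, and the Gauss/Jacobi/mixed case split, with the mixed case bounded by $(1+2-1)\sqrt Q$ via \cite[Theorem~1.1]{CochranePinner2006}.

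One point in your ``main obstacle'' needs tidying. With multiplicative characters extended by zero, the mixed sum is already over exactly $x\ne0,-a$, so the removed points contribute nothing and $2\sqrt Q$ is exact. An $O(1)$ correction could not have been ``absorbed'' into the exact $c_n$, but none arises, so no fallback is needed.
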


\begin{remark}
Theorem~\ref{thm:LFSR-external} is an external literature input.  The
Floquet phase grid and exceptional multiplicity were proved above, and
all static spectral consequences needed below are derived explicitly.

\end{remark}

Each of the two exceptional Floquet eigenvectors produces \(n\) clock
momenta.  Therefore the exceptional static sector has dimension
\begin{equation}
  r=2n.
  \label{eq:LFSR-r}
\end{equation}

\begin{lemma}[Static local matrix elements]
\label{lem:LFSR-static-matrix}
Let \(P\ne\Id\) be a spin Pauli operator, extended trivially to the clock.
For regular static eigenstates,
\begin{equation}
  \abs{
  \bra{\Psi_{m,q}}
  P\otimes\Id_{\rm clock}
  \ket{\Psi_{m',q'}}
  }
  \le
  c_n.
  \label{eq:LFSR-static-matrix}
\end{equation}
\end{lemma}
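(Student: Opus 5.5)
The plan is to compute the matrix element directly from the explicit eigenstates of Proposition~\ref{prop:clock-spectrum}, reduce it to a single Floquet-level matrix element, and then apply Theorem~\ref{thm:LFSR-external}. Substituting Eq.~\eqref{eq:clock-state} and using $\langle t|t'\rangle=\delta_{tt'}$ on the clock factor gives
\[
 \bra{\Psi_{m,q}}P\otimes\Id_{\rm clock}\ket{\Psi_{m',q'}}
 =\frac1n\sum_{t=0}^{n-1}e^{i(\theta_{m',q'}-\theta_{m,q})t}\,
 \bra{\psi_q}U_{t:0}^\dagger P\,U_{t:0}\ket{\psi_{q'}} .
\]
By the triangle inequality, the modulus of the left-hand side is at most the average over $t$ of $|\bra{\psi_q}U_{t:0}^\dagger PU_{t:0}\ket{\psi_{q'}}|$. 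It therefore suffices to bound each term by $c_n$, uniformly in $t$.

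For each fixed $t$, the partial circuit $U_{t:0}$ is a product of Clifford gates, because the LFSR gates of Ref.~\cite{IppolitiLong2026} are Clifford. Hence $U_{t:0}^\dagger PU_{t:0}=\pm P_t$ for some Pauli string $P_t$. The sign is real because $P$ is Hermitian and conjugation preserves Hermiticity. Conjugation by a unitary also maps the identity only to itself, so $P\ne\Id$ forces $P_t\ne\Id$. Applying Theorem~\ref{thm:LFSR-external} to $P_t$ and the nontrivial Floquet eigenvectors $\ket{\psi_q},\ket{\psi_{q'}}$ with $1\le q,q'<L$ bounds each term by $c_n$. Averaging over $t$ then gives Eq.~\eqref{eq:LFSR-static-matrix}.

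The main point to check is the Clifford step. One must confirm that every partial product $U_{t:0}$, and not only the full Floquet unitary $U_F$, is Clifford, so that the conjugated operator is again a nonidentity Pauli up to a phase. Because the gates $u_t$ are individually Clifford, this holds for each prefix. A secondary check concerns degeneracy: $\ket{\psi_q}$ must be the specific Fourier eigenvector used in Theorem~\ref{thm:LFSR-external}, and not an arbitrary vector in a degenerate eigenspace. This is automatic here, since the nontrivial phases $e^{-2\pi iq/L}$ in Lemma~\ref{lem:LFSR-cycle-phases} are distinct for $1\le q<L$.
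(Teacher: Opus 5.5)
Your proposal is correct and follows essentially the same route as the paper: use clock orthogonality to reduce the static matrix element to an average over $t$ of Floquet-level elements $\bra{\psi_q}U_{t:0}^\dagger P U_{t:0}\ket{\psi_{q'}}$. The Clifford property of each prefix $U_{t:0}$ then makes the conjugated operator a nonidentity Pauli up to sign, and you conclude by applying Theorem~\ref{thm:LFSR-external} term by term together with the triangle inequality.
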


\begin{proof}
Using Eq.~\eqref{eq:clock-state} and clock orthogonality,
\begin{align}
  &\bra{\Psi_{m,q}}
  P\otimes\Id_{\rm clock}
  \ket{\Psi_{m',q'}}
  \nonumber\\
  &\qquad=
  \frac1n
  \sum_{t=0}^{n-1}
  e^{i(\theta_{m',q'}-\theta_{m,q})t}
  \bra{\psi_q}
  U_{t:0}^\dagger P U_{t:0}
  \ket{\psi_{q'}}.
\end{align}
The LFSR gates are Clifford, so
\(U_{t:0}^\dagger P U_{t:0}\)
is again a nonidentity Pauli operator.  Apply
Eq.~\eqref{eq:LFSR-Pauli-input} term by term and use the triangle
inequality.
\end{proof}

\subsection{Cosine-grid difference anti-concentration}

Define
\[
  e(\theta)=1-\cos\theta,
  \qquad
  0\le\theta<2\pi.
\]
Let \(\Theta\) be uniform on \([0,2\pi)\), and let \(\nu\) denote the law
of \(e(\Theta)\).

\begin{lemma}[Arcsine density for the cosine energy]
\label{lem:arcsine}
The measure \(\nu\) is supported on \([0,2]\) and has density
\begin{equation}
  f_E(u)
  =
  \frac{
  \mathbbm 1_{(0,2)}(u)
  }{
  \pi\sqrt{u(2-u)}
  }.
  \label{eq:energy-density}
\end{equation}
\end{lemma}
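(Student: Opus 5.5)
The plan is to compute the law of \(e(\Theta)=1-\cos\Theta\) directly from the change-of-variables formula, treating the two monotone branches of \(\theta\mapsto1-\cos\theta\) on \([0,2\pi)\) separately. The support claim is immediate, since \(\cos\theta\in[-1,1]\) gives \(e(\theta)\in[0,2]\). For the density, I would compute the distribution function \(F(u)=\mathbb P[e(\Theta)\le u]\) for \(u\in(0,2)\) and differentiate it.

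First, on \([0,\pi]\) the map \(\theta\mapsto1-\cos\theta\) is strictly increasing from \(0\) to \(2\). On \([\pi,2\pi)\) it is strictly decreasing, and by the symmetry \(\theta\mapsto2\pi-\theta\) it takes the same values. Hence, for \(u\in[0,2]\),
\[
 \{\theta\in[0,2\pi):1-\cos\theta\le u\}
 =[0,\arccos(1-u)]\cup[2\pi-\arccos(1-u),2\pi),
\]
up to the endpoint at \(2\pi\), which is excluded and has measure zero. Its Lebesgue measure is \(2\arccos(1-u)\), so
\[
 F(u)=\frac{\arccos(1-u)}{\pi}.
\]

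Second, I would differentiate on \((0,2)\). Using \(\frac{d}{dx}\arccos x=-1/\sqrt{1-x^2}\), we get
\[
 F'(u)=\frac{1}{\pi\sqrt{1-(1-u)^2}}=\frac{1}{\pi\sqrt{u(2-u)}}.
\]
Since \(F\) is continuous on \([0,2]\), with \(F(0)=0\) and \(F(2)=1\), and is \(C^1\) in the interior, \(\nu\) has no atoms. Moreover \(F(u)=\int_0^u F'(s)\,ds\), because the integrand is integrable: its singularities at the endpoints are of order \(s^{-1/2}\) and \((2-s)^{-1/2}\). This identifies \(f_E\) as the stated density.

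There is no real obstacle here. The only point needing care is justifying absolute continuity despite the endpoint blow-ups of \(f_E\). Continuity of \(F\) together with integrability of \(F'\) near \(0\) and \(2\) settles this, for example via the fundamental theorem of calculus applied on \([\epsilon,2-\epsilon]\) followed by \(\epsilon\downarrow0\). As a consistency check, \(\int_0^2 f_E=1\) follows from the substitution \(u=1-\cos\theta\).
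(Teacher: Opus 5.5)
Your proof is correct and follows essentially the same route as the paper, which applies the two-branch change of variables to \(X=\cos\Theta\) to obtain \(f_X(x)=1/(\pi\sqrt{1-x^2})\) and then shifts by \(E=1-X\); you instead work with the distribution function \(F(u)=\arccos(1-u)/\pi\) and differentiate. That distribution-function form is the one the paper itself uses later in its Kolmogorov-discrepancy lemma, and your endpoint-integrability argument makes the absolute-continuity step more explicit than the paper does.
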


\begin{proof}
Let
\[
  X=\cos\Theta.
\]
For \(x\in(-1,1)\), the equation \(\cos\theta=x\) has two solutions in
\([0,2\pi)\), and on each branch
\[
  \left|\frac{d\theta}{dx}\right|
  =
  \frac1{\sqrt{1-x^2}}.
\]
Since \(\Theta\) has density \(1/(2\pi)\), summing the two branches gives
\[
  f_X(x)
  =
  \frac1{\pi\sqrt{1-x^2}},
  \qquad -1<x<1.
\]
Now \(E=1-X\).  Therefore
\[
  f_E(u)
  =
  f_X(1-u)
  =
  \frac1{
  \pi\sqrt{1-(1-u)^2}
  }
  =
  \frac1{
  \pi\sqrt{u(2-u)}
  },
\]
for \(0<u<2\).
\end{proof}

\subsection{Discrete cosine-grid discrepancy}

For an integer \(M\ge1\), put
\[
  \theta_a=\frac{2\pi a}{M},
  \qquad
  e_a=1-\cos\theta_a,
  \qquad
  a=0,\ldots,M-1,
\]
and define the empirical energy measure
\begin{equation}
  \nu_M
  =
  \frac1M\sum_{a=0}^{M-1}\delta_{e_a}.
  \label{eq:nuM}
\end{equation}

\begin{lemma}[Counting equally spaced points in an interval]
\label{lem:grid-interval}
Let \(\Theta_M=\{2\pi a/M:a=0,\ldots,M-1\}\).
For every interval \(J\subset[0,2\pi)\), with any included or excluded
finite endpoints,
\begin{equation}
 \left|\#(\Theta_M\cap J)-\frac{M|J|}{2\pi}\right|\le1.
 \label{eq:grid-interval}
\end{equation}
\end{lemma}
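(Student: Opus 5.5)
The plan is to reduce the interval $J$ to a half-open interval and count lattice points on the real line, without any wrap-around. First fix the interval: $J\subset[0,2\pi)$ is an interval with endpoints $\alpha\le\beta$ in $[0,2\pi]$. Changing whether an endpoint is included only changes the count. Rather than treat the four endpoint conventions one by one, I would bound the count for the open interval $(\alpha,\beta)$ from below and for the closed interval $[\alpha,\beta]$ from above. Every other convention lies between these two, and all share the same length $|J|=\beta-\alpha$.

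Next, rescale by $M/(2\pi)$. The points of $\Theta_M$ become the integers $0,1,\ldots,M-1$, and $J$ becomes an interval with endpoints $x=M\alpha/(2\pi)$ and $y=M\beta/(2\pi)$, where $0\le x\le y\le M$ and $y-x=M|J|/(2\pi)=:\ell$. Since $J\subset[0,2\pi)$, the rescaled interval lies inside $[0,M)$, apart possibly from the endpoint $y=M$ when $J$ is open on the right. In that case the integer $M$ is excluded anyway. So counting points of $\Theta_M$ in $J$ is the same as counting all integers in the rescaled interval.

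The count itself is elementary. The number of integers in $[x,y]$ is $\lfloor y\rfloor-\lceil x\rceil+1$, which is at most $y-x+1=\ell+1$. The number of integers in $(x,y)$ is at least $\lceil y\rceil-\lfloor x\rfloor-1$, which is at least $\ell-1$. Both inequalities follow from $\lfloor y\rfloor\le y$ and $\lceil x\rceil\ge x$, and their counterparts. This gives $\ell-1\le\#(\Theta_M\cap J)\le\ell+1$, which is Eq.~\eqref{eq:grid-interval}.

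The only step needing care is the closed case with $\beta=2\pi$. That cannot occur, since $J\subset[0,2\pi)$, so the integer $M$, which corresponds to the angle $0$, is never double counted. Degenerate intervals (empty sets or single points) also satisfy the bound, because then $\ell=0$ and the count is $0$ or $1$.
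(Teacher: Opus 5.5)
Your proof is correct and follows the paper's argument: rescale by $M/(2\pi)$, count integers with floor/ceiling expressions that differ from $y-x$ by at most one, and use $J\subset[0,2\pi)$ to rule out integers outside $0,\ldots,M-1$. Sandwiching every endpoint convention between the open and closed counts just compresses the paper's four explicit formulas into two. When $\beta=2\pi$, the closed comparison interval $[x,M]$ can contain $M$, but you use it only as an upper bound, so that overcount is harmless.
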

\begin{proof}
Rescale the endpoints by \(M/(2\pi)\).
For an interval with endpoints \(a\le b\), the integer counts for
\([a,b]\), \([a,b)\), \((a,b]\), and \((a,b)\) are, respectively,
\[
 \max(0,\lfloor b\rfloor-\lceil a\rceil+1),\quad
 \max(0,\lceil b\rceil-\lceil a\rceil),\quad
 \max(0,\lfloor b\rfloor-\lfloor a\rfloor),\quad
 \max(0,\lceil b\rceil-\lfloor a\rfloor-1).
\]
The inequalities \(x-1<\lfloor x\rfloor\le x\) and
\(x\le\lceil x\rceil<x+1\), including the equality cases at integers,
show that each count differs from \(b-a\) by at most one.
The empty-interval and singleton cases obey the same bound.
Since \(J\subset[0,2\pi)\), rescaling includes no grid integers outside
\(0,\ldots,M-1\).  This proves the claim.
\end{proof}

\begin{lemma}[Kolmogorov discrepancy of the cosine grid]
\label{lem:kolmogorov}
Let
\[
  F_M(y)=\nu_M((-\infty,y]),
  \qquad
  F(y)=\nu((-\infty,y]).
\]
Then
\begin{equation}
  \sup_{y\in\mathbb R}
  |F_M(y)-F(y)|
  \le
  \frac2M.
  \label{eq:kolmogorov}
\end{equation}
\end{lemma}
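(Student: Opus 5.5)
The plan is to reduce the statement to a point count on the unit circle, using the symmetry of the cosine, and then apply Lemma~\ref{lem:grid-interval}. Fix $y\in\mathbb R$. For $y<0$ both $F_M(y)$ and $F(y)$ vanish, since $e_a\ge0$ and $\nu$ is supported on $[0,2]$. For $y\ge2$ both equal one. So only $0\le y<2$ needs work.

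For such $y$, write $y=1-\cos\alpha$ with a unique $\alpha\in[0,\pi)$. Since $e(\theta)=1-\cos\theta$ on $[0,2\pi)$, we have $e(\theta)\le y$ exactly when $\theta\in J_y:=[0,\alpha]\cup[2\pi-\alpha,2\pi)$. When $\alpha=0$ this is just $\{0\}$, and the degenerate second piece can be taken empty. Because $\Theta$ is uniform,
\[
  F(y)=\frac{|J_y|}{2\pi}=\frac{\alpha}{\pi},
\]
which agrees with integrating the arcsine density of Lemma~\ref{lem:arcsine}. On the grid side,
\[
  F_M(y)=\frac1M\#(\Theta_M\cap J_y).
\]

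The one technical point is counting grid points in $J_y$. Applying Lemma~\ref{lem:grid-interval} separately to $[0,\alpha]$ and to $[2\pi-\alpha,2\pi)$ gives an error of at most $1$ for each piece. The two pieces are disjoint for $\alpha<\pi$, so the total count satisfies
\[
  \left|\#(\Theta_M\cap J_y)-\frac{M|J_y|}{2\pi}\right|\le2 .
\]
Dividing by $M$ gives $|F_M(y)-F(y)|\le 2/M$. This bound is uniform in $y$, so taking the supremum proves Eq.~\eqref{eq:kolmogorov}.

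The case $\alpha=0$, i.e.\ $y=0$, needs a direct check. There $J_y=\{0\}$ contains exactly one grid point, so $F_M(0)=1/M$, while $F(0)=0$. The difference is $1/M\le 2/M$, so the bound still holds.
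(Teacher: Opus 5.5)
Your proposal is correct and follows essentially the same route as the paper: set $\alpha=\arccos(1-y)$, identify the sublevel set as $[0,\alpha]\cup[2\pi-\alpha,2\pi)$ of length $2\alpha$, and apply the grid-interval counting lemma to each piece for a total counting error of at most two. Your separate check at $y=0$ is harmless but unnecessary, because the grid-interval lemma already covers the singleton and empty intervals that occur at $\alpha=0$.
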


\begin{proof}
For \(y<0\) or \(y\ge2\), the claim is immediate.  Let \(0\le y<2\) and
define
\[
  \alpha=\arccos(1-y)\in[0,\pi).
\]
The condition
\[
  1-\cos\theta\le y
\]
is equivalent to
\[
  \theta\in[0,\alpha]\cup[2\pi-\alpha,2\pi).
\]
The two intervals are disjoint and have total length \(2\alpha\).
At \(\alpha=0\) the first is a singleton and the second is empty.
Lemma~\ref{lem:grid-interval} bounds each interval's counting error in
absolute value by one, so their union has counting error at most two.  Dividing by \(M\) gives
\[
  |F_M(y)-\alpha/\pi|
  \le\frac2M.
\]
Since \(\alpha/\pi=F(y)\), the result follows.
\end{proof}

\begin{lemma}[Integration against a Kolmogorov perturbation]
\label{lem:BV}
Let \(\mu_1,\mu_2\) be probability measures with distribution functions
\(G_1,G_2\), and suppose
\(\sup_y|G_1(y)-G_2(y)|\le\eta\).
If \(h\) is absolutely continuous, has compact support, and
\(h'\in L^1(\mathbb R)\), then
\begin{equation}
 \left|\int h\,d\mu_1-\int h\,d\mu_2\right|
 \le\eta\int_{\mathbb R}|h'(y)|dy.
 \label{eq:BV}
\end{equation}
\end{lemma}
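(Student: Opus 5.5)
The plan is to integrate by parts against the difference of distribution functions. Put $G=G_1-G_2$. Since $\mu_1,\mu_2$ are probability measures, $G$ is bounded by $\eta$ everywhere and vanishes at $\pm\infty$. Choose $R$ so that $\operatorname{supp}h\subset(-R,R)$. Because $h$ is absolutely continuous with compact support, for every $y$
\[
 h(y)=\int_{-\infty}^{y}h'(s)\,ds=-\int_{y}^{\infty}h'(s)\,ds ,
\]
where the integrals converge absolutely since $h'\in L^1$. Using the first representation I will write
\[
 \int h\,d\mu_i=\int\!\!\int \mathbbm 1_{\{s\le y\}}\,h'(s)\,ds\,\mu_i(dy).
\]

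The second step is to exchange the order of integration. This is justified by Fubini--Tonelli, because $\int\!\!\int |h'(s)|\,ds\,\mu_i(dy)\le\|h'\|_{L^1}<\infty$. The exchange gives
\[
 \int h\,d\mu_i=\int h'(s)\,\mu_i([s,\infty))\,ds=\int h'(s)\bigl(1-G_i(s^-)\bigr)ds .
\]
The left limit $G_i(s^-)$ differs from $G_i(s)$ only at the at most countably many atoms of $\mu_i$, a Lebesgue-null set, so it may be replaced by $G_i(s)$. Subtracting the two identities, the constant terms cancel and
\[
 \int h\,d\mu_1-\int h\,d\mu_2=-\int h'(s)\bigl(G_1(s)-G_2(s)\bigr)ds .
\]
Bounding $|G_1-G_2|\le\eta$ pointwise then yields Eq.~\eqref{eq:BV}.

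The only delicate point is the half-open versus closed interval convention at atoms. This is handled by the null-set remark above: since the paper applies the lemma to atomic empirical measures such as $\nu_M$, I will state it explicitly rather than rely on a continuity assumption. Compact support is used only to guarantee that $h$ vanishes at $-\infty$, so that the first representation holds, and that $h$ is bounded, so that it is integrable against the probability measures. No boundary terms arise.
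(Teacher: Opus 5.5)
Your proposal is correct and follows essentially the same route as the paper: write $h(y)=\int_{-\infty}^{y}h'(s)\,ds$, exchange integrals by Fubini, and bound $\int h'(s)\,[\mu_1-\mu_2]([s,\infty))\,ds$ by $\eta\int|h'|$. The only differences are cosmetic. You apply Fubini to each $\mu_i$ separately rather than to the signed measure $\mu_1-\mu_2$. You also replace $G_i(s^-)$ by $G_i(s)$ off a countable null set, whereas the paper simply notes that the left limits also differ by at most $\eta$. Either way, the argument is complete.
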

\begin{proof}
Put \(\sigma=\mu_1-\mu_2\).  Since \(h\) has compact support and is
absolutely continuous,
\(h(x)=\int_{-\infty}^x h'(y)dy\).
Absolute integrability with respect to \(|\sigma|\) and Lebesgue
measure follows from
\(|\sigma|(\mathbb R)\int|h'|\le2\int|h'|<\infty\).
Interchanging integrals consequently gives
\[
 \int h\,d\sigma
 =\int h'(y)\sigma([y,\infty))dy
 =-\int h'(y)\sigma((-\infty,y))dy.
\]
The two distribution functions have left limits, whose difference is
also bounded by \(\eta\).  Taking absolute values proves the claim.
The absolute-continuity hypothesis is sufficient for all uses below;
it avoids any convention about simultaneous Stieltjes jumps.
\end{proof}

\subsection{Difference of two arcsine variables}

Let \(X,Y\) be independent random variables with density
\[
  f(x)
  =
  \frac{
  \mathbbm 1_{(-1,1)}(x)
  }{
  \pi\sqrt{1-x^2}
  }.
\]
Let
\[
  Z=X-Y.
\]

\begin{lemma}[Density of the difference]
\label{lem:difference-density}
The random variable \(Z\) has an even density \(g\) supported on
\([-2,2]\).  For \(0<z<2\),
\begin{equation}
  g(z)
  =
  \frac1{\pi^2}
  \int_{z-1}^{1}
  \frac{dx}{
  \sqrt{
  (1-x^2)
  (1-(x-z)^2)
  }
  }.
  \label{eq:g-convolution}
\end{equation}
Furthermore,
\begin{equation}
  g(z)
  \le
  \log\left(\frac4{|z|}\right),
  \qquad
  0<|z|<2.
  \label{eq:g-log}
\end{equation}
\end{lemma}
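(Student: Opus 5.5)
The proof has three parts: the convolution formula for $Z=X-Y$, evenness and support of $g$, and the logarithmic bound obtained by splitting the convolution integral at its two endpoint singularities.

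\emph{Convolution formula.} Since $X$ and $Y$ are independent with density $f$, $Z$ has density $g(z)=\int f(x)f(x-z)\,dx$. The integrand is nonzero only when $x\in(-1,1)$ and $x-z\in(-1,1)$. For $0<z<2$ these two conditions together are equivalent to $x\in(z-1,1)$, which gives Eq.~\eqref{eq:g-convolution} directly. Evenness follows because $f$ is even, so $Z$ and $-Z$ have the same law. The support lies in $[-2,2]$ because $|X-Y|<2$.

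\emph{Logarithmic bound.} Fix $0<z<2$. On $(z-1,1)$ factor the square root as
\[
(1-x^2)(1-(x-z)^2)=(1-x)(1+x)(1-x+z)(1+x-z).
\]
Substitute $u=1-x$, so that $u$ runs over $(0,2-z)$. Writing $L=2-z$, the integrand becomes
\[
\bigl[u(2-u)(u+z)(L-u)\bigr]^{-1/2}.
\]
The integral is symmetric under $u\mapsto L-u$; one checks this by mapping $x\mapsto z-x$. It therefore suffices to bound twice the integral over $(0,L/2)$. On that half-interval we have $L-u\ge L/2$ and $2-u\ge 1$, and the latter holds since $u\le L/2<1$.

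This gives
\[
g(z)\le \frac{2}{\pi^2}\sqrt{\frac{2}{L}}\int_0^{L/2}\frac{du}{\sqrt{u(u+z)}}.
\]
The elementary antiderivative is $2\,\mathrm{arsinh}\sqrt{u/z}$, so the integral equals
\[
2\,\mathrm{arsinh}\sqrt{L/(2z)}\le 2\log\bigl(2\sqrt{L/(2z)}+1\bigr).
\]

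\emph{Main obstacle.} The difficulty is the prefactor $L^{-1/2}$, which blows up as $z\to2$, so the estimate above is useful only away from $z=2$. I would therefore split into two regimes.

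For $z\le 1$ we have $L\ge1$. Then
\[
g(z)\le \frac{4\sqrt2}{\pi^2}\log\bigl(1+\sqrt{2/z}\bigr),
\]
and $1+\sqrt{2/z}\le 4/z$ holds for $z\le1$. Since $4\sqrt2/\pi^2<1$, this yields $g(z)\le\log(4/z)$.

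For $1<z<2$ I would instead bound $2-u\ge 1$ and $u+z\ge 1$ on the whole range. The remaining integral is $\int_0^L[u(L-u)]^{-1/2}du=\pi$, which gives $g(z)\le 1/\pi$. Since $\log(4/z)>\log 2>1/\pi$ in this regime, the claim holds there as well.

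If the constant in the first regime turns out tight, a fallback is to shift the splitting point to $z\le 3/2$. That choice still leaves the second regime bounded by $1/\pi<\log(8/3)$. Either way the logarithm $\log(4/|z|)$ covers both regimes, and evenness extends the bound to negative $z$.
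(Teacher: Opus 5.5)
Your argument is essentially the paper's proof: the same reflection symmetry, the same $\operatorname{arsinh}$ antiderivative, and the same split at $z=1$ with the comparisons $4\sqrt2/\pi^2<1$ and $1/\pi<\log 2$, except that the paper rescales $x=z-1+Lt$ with $t\in[0,1]$, so that $L\ge1$ enters only through the favourable bound $z+Lt\ge z+t$. In your unscaled version one step needs repair: $L\ge1$ alone does not give $\frac{4}{\pi^2}\sqrt{2/L}\,\log\bigl(1+\sqrt{2L/z}\bigr)\le\frac{4\sqrt2}{\pi^2}\log\bigl(1+\sqrt{2/z}\bigr)$, because the logarithm increases with $L$; instead use $L<2$ to get $\log(1+2/\sqrt z)$ and note that $1+2/\sqrt z\le4/z$ on $(0,1]$, or check that $L^{-1/2}\log(1+c\sqrt L)$ is decreasing in $L$.
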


\begin{proof}
The convolution formula gives
\[
  g(z)
  =
  \int_{\mathbb R}f(x)f(x-z)\,dx.
\]
For \(0<z<2\), both factors are nonzero precisely when
\[
  x\in[z-1,1],
\]
which proves Eq.~\eqref{eq:g-convolution}.  Since \(X-Y\) and \(Y-X\)
have the same distribution, \(g\) is even.

Fix \(0<z<2\), put
\[
  L=2-z,
\]
and make the change of variables
\[
  x=z-1+Lt,
  \qquad 0\le t\le1.
\]
Then
\begin{align*}
  1-x&=L(1-t),\\
  1+x&=z+Lt,\\
  1-x+z&=2-Lt,\\
  1+x-z&=Lt.
\end{align*}
Consequently,
\begin{equation}
  g(z)
  =
  \frac1{\pi^2}
  \int_0^1
  \frac{dt}{
  \sqrt{
  t(1-t)(z+Lt)(2-Lt)
  }
  }.
  \label{eq:g-t}
\end{equation}

First assume \(0<z\le1\), so \(L\ge1\).  On \(0<t\le1/2\),
\[
  1-t\ge\frac12,
  \qquad
  2-Lt
  \ge
  2-\frac L2
  =
  1+\frac z2
  \ge1,
\]
and
\[
  z+Lt\ge z+t.
\]
Thus the integrand in Eq.~\eqref{eq:g-t} is at most
\[
  \frac{\sqrt2}{\sqrt{t(z+t)}}.
\]
The integrand in Eq.~\eqref{eq:g-t} is invariant under
\(t\mapsto1-t\), because this transformation exchanges
\(z+Lt\) and \(2-Lt\).  Hence
\begin{align*}
  g(z)
  &\le
  \frac{2\sqrt2}{\pi^2}
  \int_0^{1/2}
  \frac{dt}{\sqrt{t(z+t)}}.
\end{align*}
With \(t=z\sinh^2u\),
\[
  \int_0^a
  \frac{dt}{\sqrt{t(z+t)}}
  =
  2\,\operatorname{arsinh}\sqrt{\frac az}.
\]
Therefore
\[
  g(z)
  \le
  \frac{4\sqrt2}{\pi^2}
  \operatorname{arsinh}
  \frac1{\sqrt{2z}}.
\]
Using
\[
  \operatorname{arsinh}y
  =
  \log(y+\sqrt{1+y^2})
  \le
  \log(1+2y)
\]
for \(y\ge0\), and
\[
  1+\sqrt{\frac2z}
  \le
  \frac4z
  \qquad (0<z\le1),
\]
we obtain
\[
  g(z)
  \le
  \frac{4\sqrt2}{\pi^2}
  \log\frac4z
  \le
  \log\frac4z,
\]
because \(4\sqrt2/\pi^2<1\).

Now assume \(1\le z<2\), so \(0<L\le1\).  For every \(t\in[0,1]\),
\[
  z+Lt\ge1,
  \qquad
  2-Lt\ge2-L=z\ge1.
\]
Equation~\eqref{eq:g-t} gives
\[
  g(z)
  \le
  \frac1{\pi^2}
  \int_0^1\frac{dt}{\sqrt{t(1-t)}}
  =
  \frac1\pi.
\]
Since
\[
  \log\frac4z\ge\log2>\frac1\pi
\]
for \(1\le z<2\), Eq.~\eqref{eq:g-log} follows in this case as well.
Evenness gives the negative-\(z\) half.
\end{proof}

\begin{lemma}[Continuum difference concentration]
\label{lem:continuum-concentration}
For every interval \(I\subset\mathbb R\) of length
\(0<\delta\le1\),
\begin{equation}
  \mathbb P(Z\in I)
  \le
  2\delta
  \log\left(\frac{4e}{\delta}\right).
  \label{eq:continuum-concentration}
\end{equation}
\end{lemma}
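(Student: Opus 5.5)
We integrate the density bound of Lemma~\ref{lem:difference-density} over the interval. Since \(g\) is even and, by Eq.~\eqref{eq:g-log}, dominated by the function \(z\mapsto\log(4/|z|)\), which is even and decreasing in \(|z|\), the mass of an interval of length \(\delta\) is largest when the interval is centred at the origin. We therefore aim for
\[
 \mathbb P(Z\in I)=\int_I g(z)\,dz\le\int_I \log\frac4{|z|}\,dz
 \le\int_{-\delta/2}^{\delta/2}\log\frac4{|z|}\,dz .
\]
The integrand is integrable at zero because the logarithmic singularity is integrable, so the absolutely continuous law of \(Z\) assigns no atom there.

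First we justify the rearrangement step. Write \(\ell(z)=\log(4/|z|)\), extended by zero outside \((-2,2)\). Actually \(\ell\) is positive on \((-2,2)\), since \(4/|z|>2\) there, and \(g\) vanishes outside \([-2,2]\). Hence it suffices to bound \(\int_I\ell\,\mathbbm 1_{(-2,2)}\). For a symmetric decreasing nonnegative function, the integral over any interval of fixed length is maximised by the centred interval. To see this, compare the set \(I\setminus[-\delta/2,\delta/2]\) with \([-\delta/2,\delta/2]\setminus I\): they have equal measure, and the function is at most \(\ell(\delta/2)\) on the first set and at least \(\ell(\delta/2)\) on the second. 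Since \(\delta\le1<2\), the centred interval lies inside \((-2,2)\), so the truncation causes no difficulty.

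Next we compute the centred integral:
\[
 2\int_0^{\delta/2}\bigl(\log4-\log z\bigr)\,dz
 =\delta\log4-2\Bigl[z\log z-z\Bigr]_0^{\delta/2}
 =\delta\log4-\delta\log\frac\delta2+\delta
 =\delta\log\frac{8e}{\delta}.
\]
The final step compares this with the claimed bound. We need \(\delta\log(8e/\delta)\le2\delta\log(4e/\delta)\), that is, \(8e/\delta\le(4e/\delta)^2\). This is equivalent to \(\delta\le2e\), which holds since \(\delta\le1\).

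The only mildly delicate point is the rearrangement argument, in particular checking that the truncation of \(\ell\) to \((-2,2)\) preserves monotonicity. Because \(\ell\ge0\) on the support and the centred interval sits inside the support, the inequality goes through, and the remaining steps are elementary calculus.
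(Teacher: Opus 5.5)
Your proof is correct and follows the paper's argument in essence: you dominate $g$ by the truncated envelope $\log(4/|z|)$ and use its monotonicity in $|z|$. The only difference is that the paper bounds the positive and negative parts of $I$ separately by $\int_0^\delta\log(4/z)\,dz$ each, which gives exactly $2\delta\log(4e/\delta)$, whereas your centred-interval rearrangement gives the sharper $\delta\log(8e/\delta)$, which you then correctly relax to the stated bound.
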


\begin{proof}
By Lemma~\ref{lem:difference-density},
\[
  g(z)\le\log\frac4{|z|}
\]
on the support \([-2,2]\), away from \(z=0\).  Use the
nonnegative envelope
\(h(z)=\mathbbm1_{\{0<|z|\le2\}}\log(4/|z|)\), with its value
at zero immaterial for integration.  It is even and nonincreasing
on the positive half-line.  The positive and negative parts of an
interval of length \(\delta\) each have length at most \(\delta\).
Their integrals are therefore each bounded by
\(\int_0^\delta h(z)\,dz\).  Consequently
\[
  \mathbb P(Z\in I)\le\int_I h(z)\,dz
  \le 2\int_0^\delta\log\frac4z\,dz.
\]
The integral is explicit:
\[
  \int_0^\delta\log\frac4z\,dz
  =
  \delta\log\frac4\delta+\delta
  =
  \delta\log\frac{4e}{\delta}.
\]
This proves Eq.~\eqref{eq:continuum-concentration}.
\end{proof}

\subsection{Discrete cosine-difference concentration}

Define the full-grid ordered-pair concentration
\begin{equation}
  N_M(\delta)
  =
  \sup_{x\in\mathbb R}
  \#\left\{
  (a,b)\in\mathbb Z_M^2:
  e_a-e_b\in[x,x+\delta)
  \right\}.
  \label{eq:NM}
\end{equation}

\begin{lemma}[Two-sided difference-law discrepancy, with half-open intervals]
\label{lem:cosine-difference-discrepancy}
Let \(U,V\) be independent with law \(\nu_M\), and \(E_1,E_2\)
independent with the continuous law \(\nu\).
For every \(I=[a,b)\), \(a<b\),
\begin{equation}
 \left|\mathbb P(U-V\in I)-\mathbb P(E_1-E_2\in I)\right|
 \le\frac8M.
 \label{eq:two-sided-difference-discrepancy}
\end{equation}
\end{lemma}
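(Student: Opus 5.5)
The plan is to reduce the difference law to one-dimensional distribution functions by conditioning on one variable, and then use the Kolmogorov discrepancy of Lemma~\ref{lem:kolmogorov} twice. We interpolate through the mixed pair \((U,E_2)\):
\[
 \mathbb P(U-V\in I)-\mathbb P(E_1-E_2\in I)
 =\bigl[\mathbb P(U-V\in I)-\mathbb P(U-E_2\in I)\bigr]
 +\bigl[\mathbb P(U-E_2\in I)-\mathbb P(E_1-E_2\in I)\bigr].
\]
It then suffices to bound each bracket by \(4/M\).

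For the first bracket, we condition on \(U=u\). The event \(u-V\in[a,b)\) is \(V\in(u-b,u-a]\), and likewise for \(E_2\). The probability of a half-open interval \((\alpha,\beta]\) is \(F(\beta)-F(\alpha)\), where \(F\) is the relevant distribution function. Hence the conditional difference is \([F_M(u-a)-F(u-a)]-[F_M(u-b)-F(u-b)]\). By Lemma~\ref{lem:kolmogorov}, each term is at most \(2/M\) in absolute value, so the conditional difference is at most \(4/M\). Averaging over \(u\) preserves this bound.

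For the second bracket, we condition on \(E_2=y\). The event is now \(U\in[a+y,b+y)\), a left-closed interval. Its probability is \(F_M((b+y)^-)-F_M((a+y)^-)\), using left limits, and the same holds for \(E_1\). The continuous law \(\nu\) has no atoms, so for it the left limits equal the values. The sup bound \(2/M\) of Lemma~\ref{lem:kolmogorov} passes to left limits, because a uniform bound on \(|F_M-F|\) persists under pointwise limits from the left. The conditional difference is therefore again at most \(4/M\), and integrating over \(y\) gives the second bracket.

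Adding the two brackets gives the stated \(8/M\). The only step requiring care is the endpoint bookkeeping, which is the main obstacle. The discrete law \(\nu_M\) has atoms, possibly of mass \(2/M\) from pairs \(\theta_a,2\pi-\theta_a\), so open versus closed endpoints matter. The half-open convention in the statement is exactly what lets every interval probability be written as a difference of distribution functions or their left limits. Checking that the left-limit version of the Kolmogorov bound holds, with the same constant \(2/M\), is the point to verify explicitly.
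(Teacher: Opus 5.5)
Your proof is correct. Your interpolation through the mixed pair \((U,E_2)\) is exactly the paper's decomposition. With \(h_M(u)=F_M(u-a)-F_M(u-b)\) and \(h(u)=F(u-a)-F(u-b)\), the paper writes \(\int h_M\,d\nu_M-\int h\,d\nu=\int(h_M-h)\,d\nu_M+\int h\,d(\nu_M-\nu)\). The first term is your first bracket, bounded by \(4/M\) in the same way, and the second is your second bracket.

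The two arguments differ only in how that second bracket is bounded.
\begin{itemize}
\item The paper uses Lemma~\ref{lem:BV}, an integration by parts against the Kolmogorov perturbation. Since \(h\) is absolutely continuous with \(h'(u)=f_E(u-a)-f_E(u-b)\) almost everywhere, \(\int|h'|\le2\), which gives \((2/M)\cdot 2\).
\item You instead use Fubini to condition on the continuous variable \(E_2\), and apply Lemma~\ref{lem:kolmogorov} a second time through left limits, at the same cost \(4/M\).
\end{itemize}

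Your route is more elementary and more general. It needs no density, no absolute continuity and no integration-by-parts lemma. It shows that \(4\sup_y|F_M(y)-F(y)|\) bounds the difference for arbitrary laws and any interval type. The reason is that every interval probability is a difference of values or left limits of distribution functions. Also, averaging over a fixed independent variable, i.e. convolving with a probability law, cannot increase the worst-interval discrepancy.

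The left-limit point you flag as the step to verify is the same one the paper handles inside the proof of Lemma~\ref{lem:BV}. The paper's route simply packages it into a reusable tool, which this lemma does not actually require.
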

\begin{proof}
For fixed \(u\), the event \(u-V\in[a,b)\) is
\(V\in(u-b,u-a]\).  Its probability is
\[
 h_M(u)=F_M(u-a)-F_M(u-b).
\]
For the continuous law it is
\(h(u)=F(u-a)-F(u-b)\).
Lemma~\ref{lem:kolmogorov} gives
\(|h_M(u)-h(u)|\le4/M\).
The function \(h\) has compact support, is absolutely continuous, and
has derivative \(f_E(u-a)-f_E(u-b)\) almost everywhere.
Since \(\int f_E=1\), its derivative has \(L^1\) norm at most two.
Lemma~\ref{lem:BV} therefore gives
\[
 \left|\int h_M\,d\nu_M-\int h\,d\nu\right|
 \le\int|h_M-h|\,d\nu_M
       +\left|\int h\,d(\nu_M-\nu)\right|
 \le\frac4M+\frac2M\,2=\frac8M.
\]
These two integrals are the difference probabilities in question.
\end{proof}

\begin{theorem}[Cosine-grid difference count]
\label{thm:cosine-difference}
For every integer \(M\ge1\) and \(0<\delta\le1\),
\begin{equation}
 N_M(\delta)\le
 2M^2\delta\log(4e/\delta)+8M.
 \label{eq:cosine-difference}
\end{equation}
\end{theorem}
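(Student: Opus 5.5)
The count is a rescaling of a probability. By definition,
\[
 N_M(\delta)=M^2\sup_x\mathbb P(U-V\in[x,x+\delta)),
\]
where \(U,V\) are independent with law \(\nu_M\). The sum over ordered pairs \((a,b)\in\mathbb Z_M^2\) is \(M^2\) times the probability that two independent uniform indices produce a difference \(e_a-e_b\) in the interval. Repeated energies \(e_a=e_{M-a}\) are counted with multiplicity by this indexing, which is exactly what \(\nu_M\) does. So I will bound \(\mathbb P(U-V\in I)\) uniformly over half-open intervals \(I=[x,x+\delta)\) of length \(\delta\le1\).

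I will do this in two steps. First, Lemma~\ref{lem:cosine-difference-discrepancy} compares with the continuous law:
\[
 \mathbb P(U-V\in I)\le\mathbb P(E_1-E_2\in I)+\frac8M ,
\]
with \(E_1,E_2\) independent of law \(\nu\). Second, I identify \(E_1-E_2\) with the difference of two arcsine variables. Writing \(E_i=1-X_i\), with \(X_i=\cos\Theta_i\) arcsine-distributed on \((-1,1)\) by Lemma~\ref{lem:arcsine}, gives \(E_1-E_2=X_2-X_1\). Because the law of \(Z=X-Y\) is even (Lemma~\ref{lem:difference-density}), \(X_2-X_1\) has the same distribution as \(Z\). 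Lemma~\ref{lem:continuum-concentration} then gives
\[
 \mathbb P(E_1-E_2\in I)\le2\delta\log(4e/\delta)
\]
for \(0<\delta\le1\).

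Combining the two steps and multiplying by \(M^2\):
\[
 N_M(\delta)\le M^2\!\left[2\delta\log\frac{4e}{\delta}+\frac8M\right]=2M^2\delta\log\frac{4e}{\delta}+8M,
\]
uniformly in \(x\). Taking the supremum over \(x\) proves Eq.~\eqref{eq:cosine-difference}.

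The analytic work is already contained in the earlier lemmas, so I do not expect a real obstacle. The one step to check carefully is that the interval convention in the definition of \(N_M\) matches the half-open intervals in Lemma~\ref{lem:cosine-difference-discrepancy}, since the discrete law has atoms and the endpoint convention matters there. The definition uses \([x,x+\delta)\), which is exactly the lemma's \([a,b)\) with \(b-a=\delta>0\), so the lemma applies directly. The continuum lemma holds for every interval, so its endpoint convention is immaterial.
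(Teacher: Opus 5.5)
Your proof is correct and follows the paper's argument exactly: you write the ordered-pair count as \(M^2\,\mathbb P(U-V\in[x,x+\delta))\), pay \(8/M\) via Lemma~\ref{lem:cosine-difference-discrepancy} to pass to the continuous law, bound the continuous difference probability by \(2\delta\log(4e/\delta)\) with Lemma~\ref{lem:continuum-concentration}, and multiply by \(M^2\). Checking that the half-open convention in \(N_M\) matches the discrepancy lemma is the right check; the appeal to evenness is harmless but not needed, since \(X_2-X_1\) is already a difference of two independent arcsine variables.
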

\begin{proof}
For \(I=[x,x+\delta)\), its normalized ordered-pair count is
\(\mathbb P(U-V\in I)\).
Lemma~\ref{lem:cosine-difference-discrepancy} bounds this by the
continuous difference probability plus \(8/M\).
Since \(E_1-E_2\) is a difference of two arcsine variables,
Lemma~\ref{lem:continuum-concentration} bounds that probability by
\(2\delta\log(4e/\delta)\).
Multiply by \(M^2\) and take the supremum over \(x\).
\end{proof}

\begin{remark}
Counting one energy at a time gives \(O(M^2\sqrt\delta)\).  Theorem
\ref{thm:cosine-difference} replaces the square root by
\(\delta\log(1/\delta)\).  The logarithm is natural: it is the singularity
of the difference density at zero.
\end{remark}

\subsection{Local frequency concentration for the LFSR clock}

Fix a subsystem \(S\) of \(s\) spin qubits only, with \(s\) independent of
\(n\).  Thus
\[
  d=2^s,
  \qquad
  q=d^2-1.
\]

For a normalized nonidentity local Pauli \(O_a\), and a frequency interval
\(I\), the local spectral weight is
\begin{equation}
  \langle O_a,\mathsf Q_IO_a\rangle
  =
  \frac1D
  \sum_{\substack{j,k\\E_j-E_k\in I}}
  |\bra jO_a\ket k|^2.
  \label{eq:local-weight-matrix}
\end{equation}

\begin{theorem}[LFSR local-band concentration]
\label{thm:improved-chi}
Under Theorem~\ref{thm:LFSR-external} and Lemma~\ref{lem:LFSR-static-matrix}, for every \(0<\delta\le1\),
\begin{equation}
  \chi_S(\delta;H_{\rm FK})
  \le
  8n\delta
  \log\left(\frac{4e}{\delta}\right)
  +
  \frac{32}{2^n-1}
  +
  \frac4{2^n}.
  \label{eq:improved-chi}
\end{equation}
In particular,
\begin{equation}
  \chi_S(\delta;H_{\rm FK})
  =
  O\!\left(
  n\delta\log\frac e\delta
  \right)
  +
  O(2^{-n}).
  \label{eq:improved-chi-O}
\end{equation}
\end{theorem}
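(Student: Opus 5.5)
We work directly from the matrix-element formula \eqref{eq:local-weight-matrix}. Fix an interval \(I=[x,x+\delta)\) and write
\[
\mu_S(I)=\frac1q\sum_a\langle O_a,\mathsf Q_IO_a\rangle=\frac1{qD}\sum_a\sum_{E_j-E_k\in I}|\bra jO_a\ket k|^2 ,
\]
where \(O_a\) runs over normalized nonidentity spin Paulis on \(S\), extended by the identity on the other spins and on the clock.

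The first step is to choose a static eigenbasis adapted to the structure. The regular states \(\ket{\Psi_{m,q}}\) from Proposition~\ref{prop:clock-spectrum} with Lemma~\ref{lem:LFSR-cycle-phases} number \(n(L-1)\). The \(2n\) exceptional states \(\ket{\Psi_{m,\phi=0}}\) are built from the two \(+1\) Floquet vectors. For each \(a\), the pairs \((j,k)\) then split into regular--regular pairs and pairs with at least one exceptional index.

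\textbf{Exceptional pairs.} For fixed exceptional \(j\), unitarity of \(O_a\) gives \(\sum_k|\bra jO_a\ket k|^2=1\), and the same holds with the roles of \(j,k\) exchanged. Hence these pairs contribute at most \(2\cdot 2n/D=4/2^n\) per \(a\), uniformly in \(I\). After averaging over \(a\), this is the term \(4/2^n\).

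\textbf{Regular--regular pairs.} Lemma~\ref{lem:LFSR-static-matrix} gives \(|\bra{\Psi_{m,q}}O_a\ket{\Psi_{m',q'}}|^2\le c_n^2=2^{2+n}/(2^n-1)^2\). The contribution is therefore at most \(c_n^2/D\) times the number of regular ordered pairs whose energy difference lies in \(I\). Every regular energy in \eqref{eq:LFSR-energy} is a point \(e_a\) of the complete \(M_c=nL\)-point cosine grid, and distinct labels \((m,q)\) give distinct grid indices \(mL+q\). The count is therefore bounded by \(N_{M_c}(\delta)\), and Theorem~\ref{thm:cosine-difference} gives
\[
\#\le 2M_c^2\delta\log(4e/\delta)+8M_c .
\]
Multiplying by \(c_n^2/D=2^{2+n}/\bigl(n2^n(2^n-1)^2\bigr)=4/\bigl(nL^2\bigr)\) gives
\[
\frac{4}{nL^2}\bigl(2n^2L^2\delta\log(4e/\delta)+8nL\bigr)=8n\delta\log(4e/\delta)+\frac{32}{L}.
\]

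\textbf{Assembly.} The two bounds are independent of \(a\) and of \(x\), so averaging over \(a\) and taking the supremum over \(x\) yields \eqref{eq:improved-chi}. The \(O\)-form \eqref{eq:improved-chi-O} follows from \(\log(4e/\delta)\le\log 4+\log(e/\delta)\le C\log(e/\delta)\) for \(\delta\le1\).

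The main obstacle is checking that the regular--regular count really reduces to \(N_{M_c}(\delta)\). This requires two things. First, the map \((m,q)\mapsto mL+q\) into \(\mathbb Z_{M_c}\) must be injective. Second, the eigenbasis must be orthonormal even where the cosine energies are degenerate, since \(e_a=e_{M_c-a}\) means two grid indices can share an energy. Neither issue is a real problem. Degeneracy is harmless because \eqref{eq:local-weight-matrix} holds in any orthonormal eigenbasis: \(\mathsf Q_I\) is the sum of \(\ket j\!\bra j\cdot\ket k\!\bra k\) over pairs with \(E_j-E_k\in I\). Orthonormality holds across distinct \((m,\phi)\), by Floquet orthogonality of the \(\ket{\psi_q}\) together with clock-momentum orthogonality. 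For \(\phi=0\), the two exceptional Floquet vectors must be chosen orthonormal.
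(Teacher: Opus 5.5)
Your proof is correct and follows the paper's argument essentially step for step. You use the same split into regular--regular pairs, bounded by $c_n^2/D$ times the cosine-grid difference count of Theorem~\ref{thm:cosine-difference}, and pairs with an exceptional index, bounded by Pauli unitarity, and you obtain the same exact coefficients $8n\delta\log(4e/\delta)$, $32/(2^n-1)$ and $4/2^n$. Your explicit check that $(m,q)\mapsto mL+q$ is injective is a useful sharpening of the paper's remark that the regular energies ``form a subset'' of the grid, since $N_{M_c}(\delta)$ counts grid labels rather than energy values.
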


\begin{proof}
Let \(I\) be a frequency interval of width \(\delta\).
The regular energies form a subset of the full \(M\)-point cosine grid.
Therefore the number \(N_{\mathcal R}(I)\) of ordered regular pairs
\((j,k)\) with \(E_j-E_k\in I\) is no larger than the full-grid count of
Theorem~\ref{thm:cosine-difference}:
\begin{equation}
  N_{\mathcal R}(I)
  \le
  2M^2\delta
  \log\left(\frac{4e}{\delta}\right)
  +
  8M.
  \label{eq:NR-improved}
\end{equation}

For regular pairs, Lemma~\ref{lem:LFSR-static-matrix} gives
\[
  |\bra jO_a\ket k|^2\le c_n^2.
\]
Hence their contribution to
Eq.~\eqref{eq:local-weight-matrix} is at most
\[
  \frac{N_{\mathcal R}(I)c_n^2}{D}.
\]

Now consider pairs for which at least one index is exceptional.  Since
\(O_a\) is unitary,
\[
  \sum_k|\bra jO_a\ket k|^2
  =
  \bra jO_a^\dagger O_a\ket j
  =
  1.
\]
There are \(r=2n\) exceptional choices of \(j\), so all pairs with
exceptional first index contribute at most \(r/D\).  The same argument
with the second index contributes at most another \(r/D\).  Counting their
intersection twice only increases the upper bound.  Therefore
\begin{equation}
  \langle O_a,\mathsf Q_IO_a\rangle
  \le
  \frac{N_{\mathcal R}(I)c_n^2}{D}
  +
  \frac{2r}{D}.
  \label{eq:weight-before-algebra}
\end{equation}

We now evaluate the coefficients exactly.  Recall
\[
  M=n(2^n-1),
  \qquad
  D=n2^n,
  \qquad
  c_n^2
  =
  \frac{2^{n+2}}{(2^n-1)^2}.
\]
Then
\begin{align}
  \frac{M^2c_n^2}{D}
  &=
  \frac{
  n^2(2^n-1)^2
  }{
  n2^n
  }
  \frac{
  2^{n+2}
  }{
  (2^n-1)^2
  }
  =
  4n,
  \label{eq:M2-exact}\\
  \frac{8Mc_n^2}{D}
  &=
  8
  \frac{
  n(2^n-1)
  }{
  n2^n
  }
  \frac{
  2^{n+2}
  }{
  (2^n-1)^2
  }
  =
  \frac{32}{2^n-1},
  \label{eq:M-exact}\\
  \frac{2r}{D}
  &=
  \frac{4n}{n2^n}
  =
  \frac4{2^n}.
  \label{eq:r-exact}
\end{align}
Substituting Eq.~\eqref{eq:NR-improved} into
Eq.~\eqref{eq:weight-before-algebra} and then using
Eqs.~\eqref{eq:M2-exact}--\eqref{eq:r-exact} gives
\[
  \langle O_a,\mathsf Q_IO_a\rangle
  \le
  8n\delta
  \log\left(\frac{4e}{\delta}\right)
  +
  \frac{32}{2^n-1}
  +
  \frac4{2^n}.
\]
The right-hand side is independent of \(a\).  Averaging over the
\(q\) local Pauli strings and taking the supremum over \(I\) proves
Eq.~\eqref{eq:improved-chi}.
\end{proof}

\begin{corollary}[Residual zero-frequency weight]
\label{cor:zero}
Let
\[
  \pi_{0,S}
  =
  \mu_S(\{0\}).
\]
Then
\begin{equation}
  \pi_{0,S}
  \le
  \frac{32}{2^n-1}
  +
  \frac4{2^n}
  =
  O(2^{-n}).
  \label{eq:zero}
\end{equation}
\end{corollary}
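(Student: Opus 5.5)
The plan is to read off the zero atom from Theorem~\ref{thm:improved-chi} by letting the interval width shrink. For every \(0<\delta\le1\), the interval \([-\delta/2,\delta/2)\) has width \(\delta\) and contains \(0\). By the definitions in Eqs.~\eqref{eq:local-mu-new} and \eqref{eq:local-chi-new},
\[
 \pi_{0,S}=\mu_S(\{0\})\le\mu_S([-\delta/2,\delta/2))\le\chi_S(\delta;H_{\rm FK}).
\]
Inserting Eq.~\eqref{eq:improved-chi} then gives, for every such \(\delta\),
\[
 \pi_{0,S}\le 8n\delta\log\!\left(\frac{4e}{\delta}\right)+\frac{32}{2^n-1}+\frac4{2^n}.
\]

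The left-hand side does not depend on \(\delta\). Moreover, \(\delta\log(4e/\delta)\to0\) as \(\delta\downarrow0\), since \(\delta\log(1/\delta)\to0\). Taking \(\delta\downarrow0\) with \(n\) fixed therefore removes the first term and yields Eq.~\eqref{eq:zero}. The \(O(2^{-n})\) form follows because \(32/(2^n-1)\le 64/2^n\) for \(n\ge1\).

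I expect no real obstacle here. The only point to check is that Theorem~\ref{thm:improved-chi} holds uniformly for all small \(\delta\) with \(n\)-dependent but \(\delta\)-independent residual terms, so that the limit is legitimate. This is the case, because the count in Theorem~\ref{thm:cosine-difference} holds for every \(0<\delta\le1\).

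A direct alternative is to apply the proof of Theorem~\ref{thm:improved-chi} to the degenerate interval \(\{0\}\) itself. Regular pairs with \(E_j=E_k\) would then be counted by the \(\delta\downarrow0\) form of the bound, which is at most \(8M\), and exceptional rows and columns would contribute \(2r/D\). The limiting argument avoids having to restate the counting for a zero-width interval, so I would use it.
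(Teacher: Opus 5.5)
Your proposal is correct and follows essentially the same route as the paper. The paper also bounds \(\pi_{0,S}\le\chi_S(\delta;H_{\rm FK})\) using an interval of width \(\delta\) containing zero (it takes \([0,\delta)\) rather than your \([-\delta/2,\delta/2)\)), inserts Theorem~\ref{thm:improved-chi}, and lets \(\delta\downarrow0\) using \(\delta\log(4e/\delta)\to0\).
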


\begin{proof}
For every \(\delta>0\), the interval \([0,\delta)\) contains the zero
frequency, hence
\[
  \pi_{0,S}\le\chi_S(\delta;H_{\rm FK}).
\]
Apply Theorem~\ref{thm:improved-chi} and take
\(\delta\downarrow0\).  Since
\[
  \lim_{\delta\downarrow0}
  \delta\log\frac{4e}{\delta}
  =0,
\]
the claim follows.
\end{proof}

\subsection{Consequences for the maximal LFSR clock}

\begin{theorem}[Infinite-time LFSR equilibration]
\label{thm:LFSR-infinite-improved}
For every fixed spin-only subsystem \(S\),
\begin{equation}
  \av{
  \sup_{\rho_S}
  \Dt\left(
  \Phi_t(\rho_S),\frac{\Id_S}{2^s}
  \right)^2
  }
  =
  O_s(2^{-n}).
  \label{eq:LFSR-infinite-improved}
\end{equation}
\end{theorem}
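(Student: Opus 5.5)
Starting from the pointwise estimate \eqref{eq:trace-distance-by-return-probability}, which holds uniformly in \(\rho_S\), the worst-input error satisfies \(e_H(t)\le\frac{d-1}{4}R_S(t)\) for every \(t\). Taking the infinite-time average of both sides then gives
\[
 \av{e_H(t)}\le\frac{d-1}{4}\,\av{R_S(t)},
\]
so it suffices to bound the Ces\`aro mean of the total local return. Both averages exist. \(R_S\) is a finite trigonometric sum. For the left-hand side, \(e_H\) is bounded and the statement can be read with \(\limsup\) if needed; any finite-window average bound that is uniform in the window length also controls it.

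Theorem~\ref{thm:exact-return-spectral} gives \(\av{R_S(t)}=\sum_\omega\Tr_{\calK}A_\omega^2\). Lemma~\ref{lem:local-spectral-weights-new} gives \(\Tr A_\omega^2\le q^2\pi_\omega^2\), so
\[
 \av{R_S(t)}\le q^2\sum_\omega\pi_\omega^2\le q^2\max_\omega\pi_\omega,
\]
using \(\sum_\omega\pi_\omega=1\). For each \(\omega\), the singleton mass satisfies \(\pi_\omega\le\chi_S(\delta;H_{\rm FK})\) for every \(\delta>0\). Letting \(\delta\downarrow0\) in Theorem~\ref{thm:improved-chi} then yields
\[
 \pi_\omega\le\frac{32}{2^n-1}+\frac{4}{2^n},
\]
exactly as in Corollary~\ref{cor:zero}, since \(\delta\log(4e/\delta)\to0\).

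Combining the steps gives
\[
 \av{e_H}\le\frac{d-1}{4}\,q^2\Bigl(\frac{32}{2^n-1}+\frac4{2^n}\Bigr)=O_s(2^{-n}),
\]
with \(d=2^s\) and \(q=4^s-1\) fixed. Alternatively, one can apply Theorem~\ref{thm:fejer-local-band} with \(\varepsilon\) fixed and let \(T\to\infty\). This bounds every long-window average by \((d-1)q^2\chi_S(\varepsilon)\), and sending \(\varepsilon\downarrow0\) afterwards reaches the same conclusion while avoiding any question about whether the limit of \(e_H\) exists.

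I expect no genuine obstacle, since all spectral input is already in Theorem~\ref{thm:improved-chi}. The only point needing care is the order of limits: \(T\to\infty\) must be taken first at fixed \(\varepsilon\), and only then \(\varepsilon\downarrow0\). The second route handles this cleanly, so I would present it as the main proof.
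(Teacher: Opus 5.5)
Your main (second) route is exactly the paper's proof: fix \(\delta\in(0,1]\), let \(T\to\infty\) in Theorem~\ref{thm:fejer-local-band} to get the bound \((d-1)q^2\chi_S(\delta;H_{\rm FK})\), then send \(\delta\downarrow0\) using Theorem~\ref{thm:improved-chi}. Your first route, through the exact Ces\`aro formula of Theorem~\ref{thm:exact-return-spectral} and the atom bound \(\pi_\omega\le\chi_S(\delta)\), is also valid; it improves the constant by a factor of four and leaves the \(O_s(2^{-n})\) rate unchanged.
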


\begin{proof}
For fixed \(0<\delta\le1\), take \(T\to\infty\) in
Theorem~\ref{thm:fejer-local-band}.  Since
\[
  \lim_{T\to\infty}
  \left[
  1+\frac{2\pi^2}{3(T\delta)^2}
  \right]
  =1,
\]
we obtain
\[
  \av{
  \sup_{\rho_S}
  \Dt\left(
  \Phi_t(\rho_S),\frac{\Id_S}{d}
  \right)^2
  }
  \le
  (d-1)q^2\chi_S(\delta;H_{\rm FK}).
\]
Now take \(\delta\downarrow0\) and use
Theorem~\ref{thm:improved-chi}.  Its first term tends to zero and the
remaining terms are \(O(2^{-n})\).  Since \(d=2^s\) is fixed, the
prefactor is independent of \(n\).
\end{proof}

\begin{corollary}[Stationary local state]
\label{cor:LFSR-stationary-improved}
For the temporal channel
\(\overline{\Phi}\),
\begin{equation}
  \sup_{\rho_S}
  \Dt\left(
  \overline{\Phi}(\rho_S),\frac{\Id_S}{2^s}
  \right)
  =
  O_s(2^{-n}).
  \label{eq:LFSR-stationary-improved}
\end{equation}
\end{corollary}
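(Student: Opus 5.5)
The plan is to obtain the stationary bound from Theorem~\ref{thm:separated-stationary} together with the zero-atom estimate of Corollary~\ref{cor:zero}. There is one trap: the statement asks for the trace distance itself to be \(O_s(2^{-n})\), not its square. Squaring Theorem~\ref{thm:separated-stationary} would only give the weaker \(O_s(2^{-n/2})\) for the distance if \(\pi_{0,S}\) entered quadratically, so I will keep \(\pi_{0,S}\) linear.

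First, I take the square root of Theorem~\ref{thm:separated-stationary}. For every input,
\[
 \Dt\!\left(\overline\Phi(\rho_S),\frac{\Id_S}{d}\right)
 \le \frac{q\sqrt{d-1}}{2}\,\pi_{0,S}.
\]
This already gives a linear bound in \(\pi_{0,S}\). It rests on \(\|A_0\|_{\rm F}\le \Tr_{\calK}A_0=q\pi_{0,S}\), which in turn uses positivity of \(A_0=\mathsf P_S\mathsf Q_0\mathsf P_S\). Lemma~\ref{lem:temporal-transfer} ensures that \(A_0\) really is the traceless transfer block of \(\overline\Phi\) in the LFSR clock sector. No nondegeneracy is needed, which matters here because the cosine spectrum is highly degenerate (\(E_{m,q}\) and its mirror image \(2\pi-\theta\) coincide). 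The bound is uniform in \(\rho_S\), so I then take the supremum over inputs.

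Second, I insert Corollary~\ref{cor:zero}, which gives \(\pi_{0,S}\le 32/(2^n-1)+4/2^n\). Since \(d=2^s\) and \(q=4^s-1\) are fixed, the prefactor \(q\sqrt{d-1}/2\) depends only on \(s\). The result is
\[
 \sup_{\rho_S}\Dt\!\left(\overline\Phi(\rho_S),\Id_S/2^s\right)\le C_s\,2^{-n},
\]
as claimed.

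The main point needing care is that Corollary~\ref{cor:zero} really controls the full zero atom \(\mu_S(\{0\})\), including degenerate regular pairs \(E_j=E_k\) with \(j\ne k\) and the exceptional sector. This holds because Theorem~\ref{thm:improved-chi} bounds \(\mu_S([0,\delta))\) by counting all ordered pairs whose difference falls in the interval, with degeneracies included, and then the limit \(\delta\downarrow0\) is taken. I would state this check explicitly and otherwise treat the corollary as immediate.
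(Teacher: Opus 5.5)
Your proposal is correct and is exactly the paper's argument: the square root of Theorem~\ref{thm:separated-stationary} gives $\Dt(\overline\Phi(\rho_S),\Id_S/d)\le\frac{q\sqrt{d-1}}{2}\,\pi_{0,S}$ uniformly in $\rho_S$, and Corollary~\ref{cor:zero} supplies $\pi_{0,S}=O(2^{-n})$, with a prefactor that depends only on $s$. Your opening remark about the ``trap'' is misphrased: the danger would come from a bound on the \emph{squared} distance that is only linear in $\pi_{0,S}$, not from squaring. It plays no role in the proof.
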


\begin{proof}
Theorem~\ref{thm:separated-stationary} gives
\[
  \Dt\left(
  \overline{\Phi}(\rho_S),\frac{\Id_S}{d}
  \right)
  \le
  \frac{\sqrt{d-1}}2\,q\,\pi_{0,S}.
\]
Corollary~\ref{cor:zero} gives
\(\pi_{0,S}=O(2^{-n})\).
\end{proof}

\begin{theorem}[Direct every-window LFSR bound]
\label{thm:LFSR-direct-T}
For every fixed spin-only subsystem \(S\), there exists
\(C_s>0\) such that for every \(n\) and every \(T\ge1\),
\begin{equation}
  \sup_{t_0\ge0}
  \frac1T
  \int_{t_0}^{t_0+T}
  \sup_{\rho_S}
  \Dt\left(
  \Phi_t(\rho_S),\frac{\Id_S}{2^s}
  \right)^2dt
  \le
  C_s
  \left[
  \frac{n\log(eT)}{T}
  +
  2^{-n}
  \right].
  \label{eq:LFSR-direct-T}
\end{equation}
\end{theorem}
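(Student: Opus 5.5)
The plan is to combine the same-scale local concentration bound, Corollary~\ref{cor:no-time-resolution-overhead}, with the LFSR band estimate of Theorem~\ref{thm:improved-chi}, treating small and large \(T\) separately.

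\emph{Regime \(T\ge1\), choice \(\delta=1/T\).} Here \(\delta=T^{-1}\le1\), so Theorem~\ref{thm:improved-chi} applies and gives
\[
 \chi_S(T^{-1};H_{\rm FK})
 \le
 \frac{8n}{T}\log(4eT)
 +\frac{32}{2^n-1}
 +\frac{4}{2^n}.
\]
Corollary~\ref{cor:no-time-resolution-overhead} bounds the every-window worst-input error by \(C_d\,\chi_S(T^{-1};H_{\rm FK})\), with \(C_d=(d-1)q^2(1+2\pi^2/3)\). For the spin-only subsystem, \(d=2^s\), so \(C_d\) depends only on \(s\). It remains to absorb constants. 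Since \(T\ge1\),
\[
 \log(4eT)=\log4+\log(eT)\le(1+\log4)\log(eT),
\]
because \(\log(eT)\ge1\). Also \(32/(2^n-1)\le64\cdot2^{-n}\) for \(n\ge1\). Collecting these gives Eq.~\eqref{eq:LFSR-direct-T} with
\[
 C_s=C_{2^s}\max\{8(1+\log4),\,68\}.
\]

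\emph{Where the input supremum sits.} The supremum over \(\rho_S\) is already inside the time integral in Corollary~\ref{cor:no-time-resolution-overhead}. This is because Theorem~\ref{thm:fejer-local-band} uses the pointwise bound \eqref{eq:trace-distance-by-return-probability}, which holds uniformly in the input. The supremum over \(t_0\ge0\) is likewise already taken there, so no further argument is needed for translated windows.

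\emph{Main point to check.} The delicate step is only bookkeeping: the \(\delta\le1\) hypothesis of Theorem~\ref{thm:cosine-difference}, and hence of Theorem~\ref{thm:improved-chi}, is exactly what forces the restriction \(T\ge1\). Everything substantive — the cosine-difference count, the Ippoliti--Long matrix-element input, and the shell summation of the sinc kernel — has already been absorbed into those earlier results. If one preferred to avoid the same-scale corollary, an alternative is to use Theorem~\ref{thm:positive-kernel-return} with the triangular kernel and bound \(\mathcal F_S(T)\) through its sinc-squared spectral representation. That route would reproduce the same shell estimate and is not needed here.
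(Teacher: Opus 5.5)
Your proposal is correct and is essentially the paper's proof: the same-scale corollary you invoke is Theorem~\ref{thm:fejer-local-band} at $\varepsilon=T^{-1}$, which you combine, as the paper does, with Theorem~\ref{thm:improved-chi} and the estimate $\log(4eT)\le(1+\log4)\log(eT)$. The only difference is cosmetic: your bound $32/(2^n-1)\le 64\cdot2^{-n}$, valid for all $n\ge1$, replaces the paper's constant $47$ for $n\ge2$, so the paper's remark about absorbing small values of $n$ is not needed.
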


\begin{proof}
Set \(\varepsilon=T^{-1}\) in
Theorem~\ref{thm:fejer-local-band}.  Since \(T\ge1\),
\(0<\varepsilon\le1\), and Theorem~\ref{thm:improved-chi} gives
\[
  \chi_S(T^{-1};H_{\rm FK})
  \le
  \frac{8n}{T}\log(4eT)
  +
  \frac{32}{2^n-1}
  +
  \frac4{2^n}.
\]
Furthermore
\[
  \log(4eT)
  \le
  (1+\log4)\log(eT)
\]
because \(\log(eT)\ge1\).  For \(n\ge2\),
\[
  \frac{32}{2^n-1}+\frac4{2^n}
  \le47\,2^{-n}.
\]
The finitely many smaller values of \(n\) can be absorbed into the same
constant \(C_s\).  Equation~\eqref{eq:fejer-local-trace} with
\(T\varepsilon=1\) completes the proof.
\end{proof}

\begin{corollary}[Vanishing-error scaling]
\label{cor:LFSR-alpha-improved}
For every fixed \(\alpha>0\),
\begin{equation}
  T_n
  =
  n^{1+\alpha}\log(en)
  \label{eq:LFSR-T-alpha}
\end{equation}
implies
\begin{equation}
  \sup_{t_0\ge0}
  \frac1{T_n}
  \int_{t_0}^{t_0+T_n}
  \sup_{\rho_S}
  \Dt\left(
  \Phi_t(\rho_S),\frac{\Id_S}{2^s}
  \right)^2dt
  =
  O_{s,\alpha}(n^{-\alpha})
  +
  O_s(2^{-n}).
  \label{eq:LFSR-alpha-improved}
\end{equation}
In particular,
\begin{equation}
  T_n=n^2\log(en)
  \label{eq:LFSR-n2log}
\end{equation}
is sufficient for mean-square error \(O_s(n^{-1})\).
\end{corollary}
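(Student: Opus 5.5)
The plan is to substitute $T=T_n$ directly into Theorem~\ref{thm:LFSR-direct-T}. That theorem already bounds the left-hand side of Eq.~\eqref{eq:LFSR-alpha-improved}, uniformly in $t_0\ge0$ and with the input supremum inside the integral, by
\[
C_s\left[\frac{n\log(eT_n)}{T_n}+2^{-n}\right]
\]
whenever $T_n\ge1$. Since $n\ge2$ and $\alpha>0$, we have $T_n=n^{1+\alpha}\log(en)\ge 2\log(2e)>1$, so the hypothesis holds. The term $C_s2^{-n}$ passes through unchanged as the $O_s(2^{-n})$ contribution. Everything then reduces to an elementary estimate of $n\log(eT_n)/T_n$.

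For that estimate, write
\[
\frac{n\log(eT_n)}{T_n}=\frac{1}{n^{\alpha}}\cdot\frac{\log(eT_n)}{\log(en)}.
\]
Next expand
\[
\log(eT_n)=1+(1+\alpha)\log n+\log\log(en).
\]
Because $\log\log(en)\le\log(en)$ and $\log n\le\log(en)$, this is at most $(2+\alpha)\log(en)$, using $1\le\log(en)$. Hence the ratio $\log(eT_n)/\log(en)$ is bounded by $2+\alpha$. This gives $n\log(eT_n)/T_n\le(2+\alpha)n^{-\alpha}$, and therefore the $O_{s,\alpha}(n^{-\alpha})$ term, with constant $C_s(2+\alpha)$.

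The special case $T_n=n^2\log(en)$ is $\alpha=1$, which gives $O_s(n^{-1})+O_s(2^{-n})=O_s(n^{-1})$ because $2^{-n}\le n^{-1}$. There is no real obstacle here. The only point needing care is that the logarithm in Theorem~\ref{thm:LFSR-direct-T} is evaluated at $T_n$, not at $n$. The $\log(en)$ factor built into $T_n$ is exactly what cancels it up to an $\alpha$-dependent constant, and this cancellation is why the window needs the extra $\log(en)$ factor.
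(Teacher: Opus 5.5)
Your proposal is correct and follows the paper's proof: substitute $T_n$ into Theorem~\ref{thm:LFSR-direct-T}, expand $\log(eT_n)=1+(1+\alpha)\log n+\log\log(en)=O_\alpha(\log(en))$ to get $n\log(eT_n)/T_n=O_\alpha(n^{-\alpha})$, and set $\alpha=1$ for the special case. The explicit constant $2+\alpha$ and the checks $T_n\ge1$ and $2^{-n}\le n^{-1}$ are details the paper leaves implicit; the constant $2+\alpha$ comes from using $1+\log n=\log(en)$, since bounding the $1$ separately by $\log(en)$ would give only $3+\alpha$.
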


\begin{proof}
Equation~\eqref{eq:LFSR-T-alpha} gives
\[
  \log(eT_n)
  =
  1+(1+\alpha)\log n+\log\log(en)
  =
  O_\alpha(\log(en)).
\]
Therefore
\[
  \frac{n\log(eT_n)}{T_n}
  =
  O_\alpha(n^{-\alpha}).
\]
Apply Theorem~\ref{thm:LFSR-direct-T}.  Setting \(\alpha=1\) gives
Eq.~\eqref{eq:LFSR-n2log}.
\end{proof}

\begin{proposition}[Exponentially fine microscopic energy]
\label{prop:LFSR-small-gap-improved}
Let \(L=2^n-1\).  The smallest positive energy in the
one-clock-particle sector is exactly
\begin{equation}
  \Delta_n
  =
  1-
  \cos\left(
  \frac{2\pi}{nL}
  \right).
  \label{eq:LFSR-small-gap-improved}
\end{equation}
Moreover,
\begin{equation}
  \frac{8}{n^2L^2}
  \le
  \Delta_n
  \le
  \frac{2\pi^2}{n^2L^2},
  \label{eq:LFSR-small-gap-two-sided}
\end{equation}
and consequently
\begin{equation}
  \Delta_n^{-1}
  =
  \Theta(n^24^n).
  \label{eq:LFSR-small-gap-theta}
\end{equation}
\end{proposition}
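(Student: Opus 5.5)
The plan is to determine the one-clock-particle spectrum completely, read off the smallest positive value, and then apply elementary cosine inequalities. By Proposition~\ref{prop:clock-spectrum}, every Floquet eigenvector \(\ket\phi\) of \(U_F\) generates \(n\) static eigenstates \(\ket{\Psi_{m,\phi}}\) with energies \(1-\cos\theta_{m,\phi}\). By Lemma~\ref{lem:LFSR-cycle-phases}, the phases are \(\phi_q=-2\pi q/L\) for \(1\le q<L\), together with a two-dimensional \(+1\) eigenspace. Thus \(\theta_{m,q}=2\pi(mL+q)/(nL)\), and the exceptional vectors with \(\phi=0\) give \(\theta=2\pi m/n=2\pi(mL)/(nL)\).

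First I will check that these states exhaust the sector. There are \(n(L-1)\) regular states and \(2n\) exceptional ones (\(n\) for each \(+1\) vector), and \(n(L-1)+2n=n2^n=D\). Linear independence follows from clock-momentum orthogonality combined with orthogonality of the Floquet eigenvectors. Hence the spectrum is exactly the set \(\{1-\cos(2\pi k/(nL))\}\), where \(k\) runs over all residues \(0\le k<nL\). The residues \(k\equiv0\pmod L\) are doubly covered by the exceptional vectors, and every other residue occurs once among the regular states. The sector therefore realises the complete \(nL\)-point cosine grid.

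On this grid, \(1-\cos(2\pi k/(nL))\) vanishes only at \(k=0\). It is minimised among positive values at \(k=\pm1\), equivalently \(k=1\) or \(k=nL-1\), because \(1-\cos\theta\) increases on \([0,\pi]\) and is symmetric under \(\theta\mapsto2\pi-\theta\). The residue \(k=1\) is regular: take \(m=0,q=1\), which is allowed since \(L\ge3\) for \(n\ge2\). This gives \(\Delta_n=1-\cos(2\pi/(nL))\). One must still check that no energy lies strictly between \(0\) and \(\Delta_n\), and this is immediate from the grid description.

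For the two-sided bound, write \(x=2\pi/(nL)\le\pi/3\). The upper estimate comes from \(1-\cos x\le x^2/2\), which gives \(2\pi^2/(n^2L^2)\). The lower estimate comes from \(1-\cos x=2\sin^2(x/2)\) together with Jordan's inequality \(\sin y\ge 2y/\pi\) on \([0,\pi/2]\). These give \(1-\cos x\ge 2(x/\pi)^2=8/(n^2L^2)\). Since \(L^2=\Theta(4^n)\), the order \(\Delta_n^{-1}=\Theta(n^24^n)\) follows.

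The only real obstacle is the completeness and independence count. Without it, the sector could in principle contain eigenstates outside the listed family, and then the claimed minimality would fail.
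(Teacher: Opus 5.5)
Your proposal is correct and follows the paper's proof essentially step for step. Both use the complete \(nL\)-point cosine grid from Proposition~\ref{prop:clock-spectrum} and Lemma~\ref{lem:LFSR-cycle-phases}, locate the minimum at \(k=1\) and \(k=nL-1\) by monotonicity and symmetry of \(1-\cos\), and apply \(1-\cos x\le x^2/2\) and \(1-\cos x=2\sin^2(x/2)\ge 2x^2/\pi^2\). Your count \(n(L-1)+2n=n2^n\), together with orthonormality of the states \(\ket{\Psi_{m,\phi}}\), makes explicit the completeness step (that the sector has no other eigenvalues), which the paper uses only implicitly.
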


\begin{proof}
Lemma~\ref{lem:LFSR-cycle-phases} gives the regular Floquet labels
\(q=1,\ldots,L-1\), while the two exceptional Floquet eigenvectors
have phase zero.  Proposition~\ref{prop:clock-spectrum} therefore
produces static energies
\[
  1-\cos\left(
  \frac{2\pi a}{nL}
  \right),
  \qquad
  a=0,\ldots,nL-1.
\]
Indeed, the regular states give \(a=mL+q\) with
\(q=1,\ldots,L-1\), while the exceptional phase-zero states give
\(a=mL\).  Thus the whole \(nL\)-point cosine grid occurs in the
sector.

The only zero energy corresponds to \(a=0\) modulo \(nL\).  Since
\(1-\cos x\) increases on \([0,\pi]\) and depends only on the
distance of \(x\) from \(2\pi\mathbb Z\), the smallest positive value
is attained at \(a=1\) and \(a=nL-1\).  This proves
Eq.~\eqref{eq:LFSR-small-gap-improved}.

For \(0\le x\le\pi\),
\[
  \frac{2x^2}{\pi^2}
  \le
  1-\cos x
  \le
  \frac{x^2}{2}.
\]
The lower inequality follows from
\(1-\cos x=2\sin^2(x/2)\) and
\(\sin y\ge2y/\pi\) on \(0\le y\le\pi/2\).
Substituting \(x=2\pi/(nL)\) gives
Eq.~\eqref{eq:LFSR-small-gap-two-sided}.  Finally,
\(L=2^n-1=\Theta(2^n)\), which proves
Eq.~\eqref{eq:LFSR-small-gap-theta}.
\end{proof}

\subsection{Origin of the logarithm}

The Fej\'er window removes the harmonic-shell logarithm.  We now show that
the remaining logarithm in the cosine-difference concentration is genuine
for the \emph{unweighted} spectrum.

\begin{lemma}[Lower logarithmic singularity of the arcsine difference]
\label{lem:difference-lower}
Let \(g\) be the density in
Lemma~\ref{lem:difference-density}.  For every \(0<z\le1/2\),
\begin{equation}
  g(z)
  \ge
  \frac1{2\sqrt2\,\pi^2}
  \log\frac1z.
  \label{eq:g-lower}
\end{equation}
\end{lemma}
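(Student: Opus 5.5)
We work directly from the convolution form \eqref{eq:g-t} obtained in the proof of Lemma~\ref{lem:difference-density}. For \(0<z<2\), with \(L=2-z\),
\[
  g(z)=\frac1{\pi^2}\int_0^1\frac{dt}{\sqrt{t(1-t)(z+Lt)(2-Lt)}} .
\]
Every factor in the integrand is positive, so we may discard all of \([0,1]\) except a subinterval near \(t=0\). On that subinterval we bound the denominator factors from above; this gives a lower bound on the integrand. The logarithm will come from the \(\sqrt{t(z+t)}\)-type singularity, whose integral is an \(\operatorname{arsinh}\), exactly as in the upper-bound argument.

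Fix \(0<z\le1/2\), so that \(3/2\le L<2\). We restrict to \(t\in(0,1/2]\). There the three bounded factors satisfy
\[
  1-t\le1,\qquad 2-Lt\le2,\qquad z+Lt\le z+2t\le2(z+t).
\]
Hence the integrand is at least \(1/\bigl(2\sqrt{t(z+t)}\bigr)\), and
\[
  g(z)\ge\frac1{2\pi^2}\int_0^{1/2}\frac{dt}{\sqrt{t(z+t)}}
  =\frac1{\pi^2}\operatorname{arsinh}\sqrt{\frac1{2z}}.
\]
Here we used the substitution \(t=z\sinh^2u\) already recorded in that proof.

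It remains to compare \(\operatorname{arsinh}y\) with a logarithm, where \(y=(2z)^{-1/2}\ge1\). Since \(\operatorname{arsinh}y=\log\bigl(y+\sqrt{1+y^2}\bigr)\ge\log(2y)\), we get
\[
  \operatorname{arsinh}\sqrt{\frac1{2z}}\ge\log\sqrt{\frac2z}=\frac12\log\frac2z\ge\frac12\log\frac1z .
\]
This yields \(g(z)\ge(2\pi^2)^{-1}\log(1/z)\). That constant is larger than the claimed \(1/(2\sqrt2\,\pi^2)\), so the statement follows with room to spare.

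The only delicate point is to keep the bounds in the correct direction. We need upper bounds on the denominator factors, rather than the lower bounds used in Lemma~\ref{lem:difference-density}, and the estimate on \(z+Lt\) must remain comparable to \(z+t\) so that the logarithmic singularity survives. If a cruder chain is preferred, for instance bounding \(2-Lt\le2\) and \(z+Lt\le2(z+t)\) together with an extra \(\sqrt2\) lost elsewhere, the result still lands at or above the stated constant. Everything else is explicit integration.
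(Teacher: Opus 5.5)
Your proof is correct and follows essentially the paper's route. In both, you keep only a neighbourhood of the singular endpoint, bound the two regular factors above by $2$, and integrate the remaining $1/\sqrt{\text{(variable)}(\text{variable}+z)}$ kernel. The only difference is the last step. The paper works in $y=1-x$ on $[z,1]$ and uses $y+z\le 2y$ to reduce to a $1/y$ integral, which costs a factor $\sqrt2$. You keep the $\sqrt{t(z+t)}$ kernel on $[0,1/2]$ and use the exact $\operatorname{arsinh}$ integral, which gives the slightly stronger bound $g(z)\ge(2\pi^2)^{-1}\log(2/z)$.
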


\begin{proof}
Starting from Eq.~\eqref{eq:g-convolution}, substitute
\[
  y=1-x.
\]
Then
\[
  1-x^2=y(2-y),
\]
and
\[
  1-(x-z)^2
  =
  (y+z)(2-y-z).
\]
For \(0<z\le1/2\), restrict the integral to
\[
  z\le y\le1.
\]
On this interval,
\[
  2-y\le2,
  \qquad
  2-y-z\le2.
\]
Therefore
\[
  \sqrt{
  (1-x^2)(1-(x-z)^2)
  }
  \le
  2\sqrt{y(y+z)}.
\]
Also, because \(y\ge z\),
\[
  y+z\le2y,
\]
hence
\[
  \sqrt{y(y+z)}
  \le
  \sqrt2\,y.
\]
Thus
\[
  g(z)
  \ge
  \frac1{\pi^2}
  \int_z^1
  \frac{dy}{2\sqrt2\,y}
  =
  \frac1{2\sqrt2\,\pi^2}
  \log\frac1z.
\]
\end{proof}

\begin{corollary}[The continuum width-\(\delta\) mass is
\(\Theta(\delta\log(1/\delta))\)]
\label{cor:theta-log}
There exist absolute constants \(c,C>0\) such that for every
\(0<\delta\le1/2\),
\begin{equation}
  c\,\delta\log\frac1\delta
  \le
  \mathbb P(0\le Z\le\delta)
  \le
  C\,\delta\log\frac e\delta.
  \label{eq:theta-log}
\end{equation}
\end{corollary}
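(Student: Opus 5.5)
The plan is to integrate the two pointwise density bounds on \(g\) over \([0,\delta]\). Since \(Z\) has density \(g\) (Lemma~\ref{lem:difference-density}), we have \(\mathbb P(0\le Z\le\delta)=\int_0^\delta g(z)\,dz\), and the atom at \(z=0\) carries no mass.

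For the upper bound, I will use Eq.~\eqref{eq:g-log}. It gives
\[
  \int_0^\delta g\le\int_0^\delta\log\frac4z\,dz=\delta\log\frac{4e}{\delta}.
\]
This is exactly the computation already carried out in Lemma~\ref{lem:continuum-concentration}, restricted to a one-sided interval. To bring the bound into the stated form, I write \(\log(4e/\delta)=\log 4+\log(e/\delta)\). Since \(\log(e/\delta)\ge1\) for \(\delta\le1/2\), this is at most \((1+\log4)\log(e/\delta)\), so \(C=1+\log 4\) suffices.

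For the lower bound, I will use Lemma~\ref{lem:difference-lower}. Because \(\delta\le1/2\), every \(z\in(0,\delta]\) satisfies the hypothesis \(z\le1/2\). Integrating \(\log(1/z)\) gives
\[
  \int_0^\delta\log\frac1z\,dz=\delta\log\frac1\delta+\delta\ge\delta\log\frac1\delta.
\]
Hence the bound holds with \(c=1/(2\sqrt2\,\pi^2)\).

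No step is a genuine obstacle. The only point needing care is matching \(\log(4e/\delta)\) to \(\log(e/\delta)\) by a constant factor, which is handled above. The endpoint inclusion at \(0\) and \(\delta\) is immaterial, because the law of \(Z\) is absolutely continuous.
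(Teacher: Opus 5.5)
Your proof is correct and follows the paper's route: the lower bound integrates Lemma~\ref{lem:difference-lower} over \((0,\delta]\) exactly as the paper does. For the upper bound, the paper cites Lemma~\ref{lem:continuum-concentration}, whose two-sided bound \(2\delta\log(4e/\delta)\) your one-sided integral of Eq.~\eqref{eq:g-log} improves by a factor of two. You also make explicit the conversion \(\log(4e/\delta)\le(1+\log4)\log(e/\delta)\), which the paper leaves implicit.
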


\begin{proof}
The upper bound is
Lemma~\ref{lem:continuum-concentration}.  For the lower bound, integrate
Eq.~\eqref{eq:g-lower}:
\begin{align*}
  \mathbb P(0\le Z\le\delta)
  &\ge
  \frac1{2\sqrt2\,\pi^2}
  \int_0^\delta\log\frac1z\,dz\\
  &=
  \frac1{2\sqrt2\,\pi^2}
  \left[
  \delta\log\frac1\delta+\delta
  \right]\\
  &\ge
  \frac1{2\sqrt2\,\pi^2}
  \delta\log\frac1\delta.
\end{align*}
\end{proof}

\begin{proposition}[Discrete cosine differences inherit the logarithm]
\label{prop:discrete-lower}
Let \(N_M(\delta)\) be defined by Eq.~\eqref{eq:NM}.  There is an absolute
constant \(c>0\) such that, for \(0<\delta\le1/2\),
\begin{equation}
  N_M(\delta)
  \ge
  cM^2\delta\log\frac1\delta
  -
  8M.
  \label{eq:discrete-lower}
\end{equation}
In particular, if
\[
  M\delta\log(1/\delta)\longrightarrow\infty,
\]
then
\[
  N_M(\delta)
  =
  \Theta\left(
  M^2\delta\log\frac1\delta
  \right)
\]
up to absolute constants.
\end{proposition}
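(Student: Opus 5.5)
The plan is to reverse the argument of Theorem~\ref{thm:cosine-difference}, running the same discrepancy machinery with a lower bound on the continuum side. Fix \(0<\delta\le1/2\) and take the interval \(I=[0,\delta)\). Since \(N_M(\delta)\) is a supremum over \(x\), it suffices to bound from below the number of ordered pairs \((a,b)\) with \(e_a-e_b\in[0,\delta)\). Normalised by \(M^2\), this count equals \(\mathbb P(U-V\in I)\), where \(U,V\) are independent with law \(\nu_M\).

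Next, apply Lemma~\ref{lem:cosine-difference-discrepancy}, which holds for half-open intervals \([a,b)\) and is two-sided. It gives
\[
 \mathbb P(U-V\in[0,\delta))\ge \mathbb P(E_1-E_2\in[0,\delta))-\frac8M .
\]
Now observe that \(E_i=1-X_i\) with \(X_i\) arcsine on \((-1,1)\). Hence \(E_1-E_2=X_2-X_1\), which has the same law as \(Z\). The law of \(Z\) is absolutely continuous, so the endpoint \(\delta\) does not matter, and \(\mathbb P(E_1-E_2\in[0,\delta))=\mathbb P(0\le Z\le\delta)\). The lower half of Corollary~\ref{cor:theta-log} then gives at least \(c\,\delta\log(1/\delta)\) with \(c=1/(2\sqrt2\pi^2)\). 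Multiplying by \(M^2\) yields Eq.~\eqref{eq:discrete-lower}.

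For the ``in particular'' clause, combine this with the upper bound of Theorem~\ref{thm:cosine-difference}. That bound gives \(N_M(\delta)\le 2M^2\delta\log(4e/\delta)+8M\). For \(\delta\le1/2\) we have \(\log(4e/\delta)\le C'\log(1/\delta)\), because \(\log(1/\delta)\ge\log2\). When \(M\delta\log(1/\delta)\to\infty\), the additive \(8M\) terms are \(o(M^2\delta\log(1/\delta))\). They can therefore be absorbed into the constants for all large \(M\), on both sides, which gives the stated \(\Theta\).

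The only point needing care is the identification of the continuum difference law with \(Z\) and the harmless change of endpoint. Everything else is a direct reuse of the discrepancy lemma already proved. There is no real obstacle; the main risk is a sign or endpoint slip when matching \([0,\delta)\) to \(0\le Z\le\delta\), and the continuity of the law of \(Z\) settles this.
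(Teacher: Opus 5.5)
Your proposal is correct and takes essentially the same route as the paper. Like the paper, you apply the two-sided discrepancy lemma on \(I=[0,\delta)\), use the absence of atoms together with the lower bound of Corollary~\ref{cor:theta-log}, multiply by \(M^2\), and then combine with Theorem~\ref{thm:cosine-difference} and \(\log(4e/\delta)\le C'\log(1/\delta)\) on \(0<\delta\le 1/2\) to get the \(\Theta\) statement; your explicit check that \(E_1-E_2=X_2-X_1\) has the law of \(Z\) is a step the paper leaves implicit.
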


\begin{proof}
Use \(I=[0,\delta)\) in
Lemma~\ref{lem:cosine-difference-discrepancy}.  The continuous difference
law has no atoms, so Corollary~\ref{cor:theta-log} gives
\(\mathbb P(E_1-E_2\in I)\ge
 c\delta\log(1/\delta)\).
Hence its discrete counterpart is at least this quantity minus \(8/M\).
Multiplication by \(M^2\) proves the asserted lower bound for the
half-open interval \(I\), which is one of the intervals in the
definition of \(N_M\).
If \(M\delta\log(1/\delta)\to\infty\), the \(8M\) error is smaller than
half this leading lower bound for all sufficiently large sizes.
The upper bound from Theorem~\ref{thm:cosine-difference} has the same
order: on \(0<\delta\le1/2\),
\(\log(4e/\delta)\le
[1+\log(4e)/\log2]\log(1/\delta)\).
This proves the two-sided asymptotic assertion.
\end{proof}

\begin{remark}[The obstruction is to a counting step, not to the actual dynamics]
\label{rem:last-log}
Proposition~\ref{prop:discrete-lower} prevents replacing the unweighted
cosine-difference count by \(O(M^2\delta)\) throughout the stated
regime.  It does \emph{not} prove a matching
\(\log T/T\) lower bound on subsystem relaxation.
A direct weighted collision estimate, finer matrix-element information,
or another dynamical argument might improve the clock rate without
improving the maximum unweighted interval count.
The strongest clock rate established in these notes remains
Eq.~\eqref{eq:LFSR-direct-T}; no \(O_s(n/T)\) clock theorem is claimed.
\end{remark}

\section{Structural clock extension, prime powers, and sector preparation}\label{sec:supp5}
\subsection{A structural periodic-clock theorem}

The LFSR proof uses two model-specific properties:
distinct regular Floquet phases on a cyclic grid and a uniform
\(Q^{-1/2}\)-scale bound on local matrix elements.  It therefore admits a
model-independent periodic-clock formulation.

Let the spin Hilbert space have dimension \(Q\), let the periodic clock
have length \(K\ge2\), and let
\[
  D=KQ.
\]
Choose a complete orthonormal Floquet eigenbasis and partition it
into regular and exceptional eigenvectors.  Assume that the regular
eigenvectors have pairwise distinct eigenphases, each belonging to
\[
  \left\{-\frac{2\pi q}{L}:q\in\mathbb Z_L\right\},
  \qquad 1\le L\le Q.
\]
Thus each phase-grid label is assigned to at most one regular
eigenvector.  The corresponding regular static-clock states have
distinct lifted labels in \(\mathbb Z_{KL}\), and their energies belong to
\begin{equation}
  \left\{
  1-\cos\frac{2\pi a}{KL}:
  a\in\mathbb Z_{KL}
  \right\}.
  \label{eq:general-clock-grid}
\end{equation}
Let \(r_F\) be the number of exceptional Floquet eigenvectors, so there are
at most
\[
  r_{\rm stat}=Kr_F
\]
exceptional static-clock eigenstates.

\begin{assumption}[Uniform regular local matrix-element bound]
\label{ass:general-clock-matrix}
For every normalized traceless local unitary basis element \(O_a\) on the
fixed spin subsystem and every pair \(j,k\) of regular static-clock
eigenstates,
\begin{equation}
  |\bra jO_a\ket k|
  \le c_Q.
  \label{eq:general-clock-cQ}
\end{equation}
\end{assumption}

\begin{theorem}[General periodic-clock anti-concentration]
\label{thm:general-clock}
Under the distinct-phase grid hypothesis above and
Assumption~\ref{ass:general-clock-matrix}, for every \(0<\delta\le1\),
\begin{equation}
  \chi_S(\delta;H_{\rm clock})
  \le
  \frac{2KL^2c_Q^2}{Q}
  \delta\log\frac{4e}{\delta}
  +
  \frac{8Lc_Q^2}{Q}
  +
  \frac{2r_F}{Q}.
\label{eq:general-clock-chi}
\end{equation}
In particular, if
\[
  c_Q\le\frac{C}{\sqrt Q},
  \qquad
  L\le Q,
  \qquad
  r_F\le r_0
\]
with constants \(C,r_0\) independent of \(Q,K\), then
\begin{equation}
  \chi_S(\delta;H_{\rm clock})
  \le
  2C^2K\delta\log\frac{4e}{\delta}
  +
  \frac{8C^2+2r_0}{Q}.
  \label{eq:general-clock-simple}
\end{equation}
\end{theorem}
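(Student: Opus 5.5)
The plan is to repeat the proof of Theorem~\ref{thm:improved-chi}, replacing each LFSR-specific number by its structural counterpart. Fix a normalized traceless local unitary $O_a$ and a frequency interval $I$ of width $\delta$. The starting point is Eq.~\eqref{eq:local-weight-matrix}, which expresses $\langle O_a,\mathsf Q_I O_a\rangle$ as $D^{-1}$ times a sum of $|\bra jO_a\ket k|^2$ over ordered pairs $(j,k)$ with $E_j-E_k\in I$. We split these pairs into two classes: pairs with both states regular, and pairs with at least one exceptional state.

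\emph{Regular pairs.} By the distinct-phase hypothesis, distinct regular static states carry distinct lifted labels in $\mathbb Z_{KL}$. Their energies therefore form a subset of the full $M$-point cosine grid \eqref{eq:general-clock-grid}, with $M=KL$. It follows that the number of ordered regular pairs with energy difference in $I$ is at most $N_{KL}(\delta)$. This point needs care: the injection from states to labels is what lets the subset count be dominated by the full-grid count, and it is the only place where distinctness is used. Theorem~\ref{thm:cosine-difference} then gives $N_{KL}(\delta)\le 2K^2L^2\delta\log(4e/\delta)+8KL$. Each such pair contributes at most $c_Q^2$ by Assumption~\ref{ass:general-clock-matrix}. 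Dividing by $D=KQ$, the regular contribution is at most
\[
\frac{2KL^2c_Q^2}{Q}\,\delta\log\frac{4e}{\delta}+\frac{8Lc_Q^2}{Q},
\]
which supplies the first two terms of \eqref{eq:general-clock-chi}.

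\emph{Exceptional pairs.} Unitarity of $O_a$ gives $\sum_k|\bra jO_a\ket k|^2=1$ for each $j$. Summing over the at most $Kr_F$ exceptional first indices gives a contribution of at most $Kr_F/D$. The same bound holds for exceptional second indices, and double-counting pairs that are exceptional in both slots only enlarges the bound. The total exceptional contribution is therefore at most $2Kr_F/(KQ)=2r_F/Q$.

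\emph{Conclusion.} The resulting bound on $\langle O_a,\mathsf Q_I O_a\rangle$ does not depend on $a$. Averaging over the $q$ basis elements bounds $\mu_S(I)$, and taking the supremum over intervals $I$ proves \eqref{eq:general-clock-chi}. For \eqref{eq:general-clock-simple}, substitute $c_Q^2\le C^2/Q$, so that $L^2c_Q^2/Q\le C^2L^2/Q^2\le C^2$ and $Lc_Q^2/Q\le C^2L/Q^2\le C^2/Q$ using $L\le Q$, and use $r_F\le r_0$.
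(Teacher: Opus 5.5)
Your proposal is correct and follows the paper's proof essentially step by step. The paper uses the same injective map from regular static states into the $KL$-point cosine grid, combined with Theorem~\ref{thm:cosine-difference} and the uniform $c_Q^2$ bound, for the regular pairs; it then uses unitarity of $O_a$ to bound each exceptional slot by $r_F/Q$, and finishes with the substitution $c_Q^2\le C^2/Q$, $L\le Q$, $r_F\le r_0$ for the simplified form.
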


\begin{proof}
If a regular Floquet vector has phase label \(q\), its clock
branches have lifted labels \(mL+q\), \(0\le m<K\).
Distinct regular vectors have distinct \(q\), so the map from regular
static states to these \(M=KL\) labels is injective.  Equal cosine
energies can still occur; they are already counted with their distinct
grid labels in Theorem~\ref{thm:cosine-difference}.  Hence
Theorem~\ref{thm:cosine-difference} gives, for every frequency interval
\(I\) of width \(\delta\),
\[
  N_{\mathcal R}(I)
  \le
  2K^2L^2\delta\log\frac{4e}{\delta}
  +
  8KL.
\]
For a normalized local unitary \(O_a\), the regular-pair contribution to
the local spectral weight is at most
\[
  \frac{N_{\mathcal R}(I)c_Q^2}{D}
  =
  \frac{N_{\mathcal R}(I)c_Q^2}{KQ}.
\]
There are at most \(Kr_F\) exceptional static eigenstates.  Unitarity of
\(O_a\) gives
\[
  \sum_k|\bra jO_a\ket k|^2=1.
\]
Therefore pairs with an exceptional first index contribute at most
\[
  \frac{Kr_F}{D}
  =
  \frac{r_F}{Q},
\]
and the same bound holds for exceptional second index.  Adding the three
contributions gives Eq.~\eqref{eq:general-clock-chi}.

If \(c_Q^2\le C^2/Q\) and \(L\le Q\), then
\[
  \frac{L^2c_Q^2}{Q}
  \le C^2,
  \qquad
  \frac{Lc_Q^2}{Q}
  \le\frac{C^2}{Q}.
\]
Substitution proves Eq.~\eqref{eq:general-clock-simple}.
\end{proof}

\begin{corollary}[General periodic-clock every-window relaxation]
\label{cor:general-clock-window}
Under the simplified hypotheses of
Theorem~\ref{thm:general-clock}, for every fixed subsystem \(S\) there is
a constant \(C_S\) such that, for every \(T\ge1\),
\begin{equation}
  \sup_{t_0\ge0}
  \frac1T
  \int_{t_0}^{t_0+T}
  \sup_{\rho_S}
  \Dt\left(
  \Phi_t(\rho_S),\frac{\Id_S}{d}
  \right)^2dt
  \le
  C_S
  \left[
  \frac{K\log(eT)}{T}
  +
  \frac1Q
  \right].
  \label{eq:general-clock-window}
\end{equation}
For every fixed \(\alpha>0\), if \(Q\) grows exponentially in \(K\),
target mean-square error \(O(K^{-\alpha})\) is therefore reached on
\[
  T_K=O_\alpha(K^{1+\alpha}\log K).
\]
\end{corollary}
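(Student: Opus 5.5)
The plan is to combine the same-scale concentration bound, Corollary~\ref{cor:no-time-resolution-overhead}, with the simplified clock estimate Eq.~\eqref{eq:general-clock-simple} of Theorem~\ref{thm:general-clock}. This mirrors the proof of Theorem~\ref{thm:LFSR-direct-T}, with \(n\) replaced by \(K\) and \(2^n\) replaced by \(Q\).

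First, fix \(T\ge1\) and set \(\varepsilon=\delta=T^{-1}\in(0,1]\). Theorem~\ref{thm:fejer-local-band}, or equivalently Corollary~\ref{cor:no-time-resolution-overhead}, gives for every window origin \(t_0\ge0\)
\[
\frac1T\int_{t_0}^{t_0+T}\sup_{\rho_S}\Dt\!\left(\Phi_t(\rho_S),\frac{\Id_S}{d}\right)^2dt
\le (d-1)q^2\left(1+\frac{2\pi^2}{3}\right)\chi_S(T^{-1};H_{\rm clock}).
\]
The input supremum may sit inside the integral because the pointwise estimate Eq.~\eqref{eq:trace-distance-by-return-probability} is uniform in \(\rho_S\). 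Since the right-hand side does not depend on \(t_0\), we may then take the supremum over \(t_0\ge0\).

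Second, insert Eq.~\eqref{eq:general-clock-simple} with \(\delta=1/T\):
\[
\chi_S(T^{-1})\le \frac{2C^2K}{T}\log(4eT)+\frac{8C^2+2r_0}{Q}.
\]
For \(T\ge1\) we have \(\log(eT)\ge1\), so \(\log(4eT)=\log4+\log(eT)\le(1+\log4)\log(eT)\). Absorbing \(C\), \(r_0\) and the dimension prefactor \((d-1)q^2(1+2\pi^2/3)\) into \(C_S\) yields Eq.~\eqref{eq:general-clock-window}. Here \(C_S\) depends only on \(d\) and the fixed constants \(C,r_0\), not on \(K\) or \(Q\).

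Third, for the scaling claim take \(T_K=K^{1+\alpha}\log(eK)\). Then \(\log(eT_K)=1+(1+\alpha)\log K+\log\log(eK)=O_\alpha(\log(eK))\), so \(K\log(eT_K)/T_K=O_\alpha(K^{-\alpha})\). If \(Q\ge e^{cK}\), then \(1/Q=O(K^{-\alpha})\) for every \(\alpha\). Together these give mean-square error \(O(K^{-\alpha})\) on windows of length \(O_\alpha(K^{1+\alpha}\log K)\); the finitely many small \(K\) are absorbed into the constant.

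There is no real obstacle here; the argument is bookkeeping. The one point needing care is that the hypotheses of Theorem~\ref{thm:general-clock} (\(L\le Q\), \(c_Q\le C/\sqrt Q\), \(r_F\le r_0\)) are uniform in \(K\) and \(Q\). This uniformity is what keeps the constant in front of \(K\log(eT)/T\) independent of the system size.
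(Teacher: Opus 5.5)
Your proposal is correct and follows the paper's proof essentially step for step: you set $\delta=T^{-1}$ in Eq.~\eqref{eq:general-clock-simple}, apply Theorem~\ref{thm:fejer-local-band} with $T\delta=1$, and verify $K\log(eT_K)/T_K=O_\alpha(K^{-\alpha})$ for $T_K=K^{1+\alpha}\log(eK)$. You also make explicit two points the paper leaves implicit: the bound $\log(4eT)\le(1+\log4)\log(eT)$, and the fact that $C$ and $r_0$ do not depend on $K$ or $Q$.
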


\begin{proof}
Set \(\delta=T^{-1}\) in
Eq.~\eqref{eq:general-clock-simple} and apply
Theorem~\ref{thm:fejer-local-band}.  Since
\(T\delta=1\), the Fej\'er prefactor is an absolute constant, and all
remaining subsystem factors are fixed.  This proves
Eq.~\eqref{eq:general-clock-window}.  If
\(T_K=K^{1+\alpha}\log(eK)\), then
\[
  \log(eT_K)=O_\alpha(\log(eK))
\]
and therefore
\[
  \frac{K\log(eT_K)}{T_K}
  =
  O_\alpha(K^{-\alpha}).
\]
If \(Q\) grows exponentially in \(K\), the term \(Q^{-1}\) is smaller
than every inverse power of \(K\), completing the proof.
\end{proof}

\subsection{Prime-power moving-head family}

We now verify the structural hypotheses for every fixed prime-power local
dimension without referring back to a separate notes file.

Fix
\[
  q_0=p^r
\]
with \(p\) prime, take \(n\ge2\), and let
\[
  f(x)
  =
  x^n+f_{n-1}x^{n-1}+\cdots+f_1x+f_0
  \in\mathbb F_{q_0}[x]
\]
be primitive.  Put
\[
  Q=q_0^n,
  \qquad
  L=Q-1.
\]

\begin{lemma}[Companion map and local staircase]
\label{lem:companion-staircase}
On the computational basis
\(\ket{x_0,\ldots,x_{n-1}}\), define the two-site permutation gates
\begin{equation}
  g_i\ket{a,b}
  =
  \ket{b,a-f_i b},
  \qquad
  1\le i\le n-1,
  \label{eq:gi-prime-power}
\end{equation}
where \(g_i\) acts on sites \(i-1,i\), and define the one-site gate
\begin{equation}
  g_0\ket a
  =
  \ket{-f_0a}
  \label{eq:g0-prime-power}
\end{equation}
on site \(0\).  Then
\begin{equation}
  U_F
  :=
  g_0g_1\cdots g_{n-1},
  \label{eq:UF-staircase}
\end{equation}
with \(g_{n-1}\) applied first, implements the companion map
\begin{align}
  F(x_0,\ldots,x_{n-1})
  &=
  \bigl(
  -f_0x_{n-1},
  x_0-f_1x_{n-1},
  \ldots,
  x_{n-2}-f_{n-1}x_{n-1}
  \bigr).
  \label{eq:companion-map}
\end{align}
\end{lemma}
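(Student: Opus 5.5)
We verify the claim by tracking a basis vector \(\ket{x_0,\ldots,x_{n-1}}\) through the staircase in the stated order: first \(g_{n-1}\), then \(g_{n-2}\), down to \(g_1\), and finally \(g_0\). The key observation is that the last coordinate \(x_{n-1}\) is carried leftward like a moving head. After each two-site gate, it sits on the site to the left, and the entry that \(g_i\) leaves behind on site \(i\) is exactly the \(i\)-th component of \(F\).

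\textbf{Claim and induction.} We prove by induction on \(k=0,1,\ldots,n-1\) that after applying \(g_{n-1},\ldots,g_{n-k}\) the state is
\[
 \ket{x_0,\ldots,x_{n-k-2},\,x_{n-1},\,x_{n-k-1}-f_{n-k}x_{n-1},\,\ldots,\,x_{n-2}-f_{n-1}x_{n-1}}.
\]
Here site \(n-k-1\) carries \(x_{n-1}\), and each site \(j\ge n-k\) carries \(x_{j-1}-f_jx_{n-1}\). The case \(k=0\) is the input itself, with the head already on site \(n-1\).

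\textbf{Inductive step.} Suppose the claim holds for \(k\), and apply \(g_{n-k-1}\), which acts on sites \((n-k-2,\,n-k-1)\). By the inductive hypothesis these sites hold \((a,b)=(x_{n-k-2},x_{n-1})\). By Eq.~\eqref{eq:gi-prime-power}, \(g_{n-k-1}\) sends this pair to
\[
 \bigl(b,\;a-f_{n-k-1}b\bigr)=\bigl(x_{n-1},\;x_{n-k-2}-f_{n-k-1}x_{n-1}\bigr).
\]
This places the head on site \(n-k-2\) and writes the correct entry on site \(n-k-1\), which is the claim for \(k+1\).

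\textbf{Conclusion.} At \(k=n-1\) the state is \(\ket{x_{n-1},\,x_0-f_1x_{n-1},\,\ldots,\,x_{n-2}-f_{n-1}x_{n-1}}\). Applying \(g_0\) replaces the site-\(0\) entry \(x_{n-1}\) by \(-f_0x_{n-1}\), by Eq.~\eqref{eq:g0-prime-power}. The result is \(\ket{F(x)}\), which matches Eq.~\eqref{eq:companion-map}.

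\textbf{Well-definedness.} Each \(g_i\) is a permutation of \(\mathbb F_{q_0}^2\) (respectively \(\mathbb F_{q_0}\)), so it is unitary.

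\begin{itemize}
\item For \(i\ge1\), the map \((a,b)\mapsto(b,a-f_ib)\) is invertible for every \(f_i\), with inverse \((c,e)\mapsto(e+f_ic,\,c)\).
\item For \(g_0\), multiplication by \(-f_0\) is invertible because \(f_0\ne0\). This holds since \(f\) is primitive, hence irreducible of degree \(n\ge2\), so it is not divisible by \(x\).
\end{itemize}

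The only step that needs care is the indexing convention, namely that \(g_i\) acts on sites \(i-1,i\) and that \(g_{n-1}\) is applied first. With that fixed, there is no real obstacle: this is the standard realisation of the companion matrix of \(f\) as a sequence of adjacent operations.
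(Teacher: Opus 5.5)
Your proposal is correct and follows essentially the same argument as the paper: an induction that tracks \(x_{n-1}\) moving leftward through the staircase, followed by the final application of \(g_0\). The paper indexes its induction by the lowest gate applied rather than by the number of gates, and it does not include your invertibility check for \(g_0\) via \(f_0\ne0\), which is a harmless addition.
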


\begin{proof}
We prove by descending induction on \(k\) the effect of applying
\[
  g_k g_{k+1}\cdots g_{n-1}
\]
with \(g_{n-1}\) first.  After these gates, the entries on sites
\(0,\ldots,k-2\) are still
\[
  x_0,\ldots,x_{k-2},
\]
site \(k-1\) contains \(x_{n-1}\), and sites
\(j=k,\ldots,n-1\) contain
\[
  x_{j-1}-f_jx_{n-1}.
\]
For \(k=n-1\), this follows directly from
Eq.~\eqref{eq:gi-prime-power}.  Assume it holds for \(k\ge2\).
Gate \(g_{k-1}\) acts on sites \(k-2,k-1\), whose entries before that gate
are \(x_{k-2}\) and \(x_{n-1}\).  It maps them to
\[
  x_{n-1},
  \qquad
  x_{k-2}-f_{k-1}x_{n-1},
\]
and leaves all other entries unchanged.  This is exactly the induction
statement with \(k\) replaced by \(k-1\).

After the final two-site gate \(g_1\), the register therefore equals
\[
  \bigl(
  x_{n-1},
  x_0-f_1x_{n-1},
  \ldots,
  x_{n-2}-f_{n-1}x_{n-1}
  \bigr).
\]
Applying \(g_0\) to the first site gives
Eq.~\eqref{eq:companion-map}.
\end{proof}

\begin{lemma}[Maximal cycle and Floquet phases]
\label{lem:prime-power-cycle}
The map \(F\) fixes the zero vector and acts as one cycle of length
\(L=Q-1\) on
\(\mathbb F_{q_0}^n\setminus\{0\}\).  Consequently the permutation
unitary \(U_F\) has a two-dimensional \(+1\) eigenspace.  Its remaining
Floquet eigenphases may be labelled as
\begin{equation}
  \phi_q
  =
  -\frac{2\pi q}{L},
  \qquad
  q=1,\ldots,L-1.
  \label{eq:prime-power-phases}
\end{equation}
\end{lemma}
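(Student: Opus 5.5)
The statement has three parts: $F$ fixes zero, $F$ is one cycle of length $L$ on the nonzero vectors, and the eigenphases of $U_F$ are as given. I would first identify $F$ with multiplication by $x$ in the field $\mathbb F_{q_0}[x]/(f(x))\cong\mathbb F_{Q}$. Identify the register $(x_0,\ldots,x_{n-1})$ with the residue class $\sum_j x_j x^j$ modulo $f$. Multiplying by $x$ gives $\sum_{j=0}^{n-2}x_jx^{j+1}+x_{n-1}x^n$. Reducing with $x^n=-\sum_i f_ix^i$ gives the coefficient vector
\[
\bigl(-f_0x_{n-1},\,x_0-f_1x_{n-1},\,\ldots,\,x_{n-2}-f_{n-1}x_{n-1}\bigr).
\]
This is exactly Eq.~\eqref{eq:companion-map} from Lemma~\ref{lem:companion-staircase}. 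Hence $F$ is $\mathbb F_{q_0}$-linear, which gives $F(0)=0$. It is also bijective, since $f_0\neq0$ for a primitive (hence irreducible, $n\ge2$) polynomial, so $x$ is invertible in the quotient.

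Next I would use primitivity: by definition the class of $x$ generates the cyclic multiplicative group $\mathbb F_Q^\times$, which has order $Q-1=L$. So $x^k\alpha=\alpha$ for a nonzero $\alpha$ holds exactly when $L\mid k$. Consequently every nonzero vector has orbit length $L$. Since there are exactly $L$ nonzero vectors, they form a single cycle. This step carries the only real content, namely the definition of a primitive polynomial and the field structure. I expect it to be the main point to check, in particular that irreducibility is part of primitivity, so that the quotient really is a field.

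The spectral statement then follows exactly as in Lemma~\ref{lem:LFSR-cycle-phases}. List the cycle as $y_0,\ldots,y_{L-1}$ with $U_F\ket{y_t}=\ket{y_{t+1}}$, indices modulo $L$. The Fourier vectors
\[
\ket{\psi_q}=L^{-1/2}\sum_t e^{2\pi iqt/L}\ket{y_t}
\]
satisfy $U_F\ket{\psi_q}=e^{-2\pi iq/L}\ket{\psi_q}$. The eigenvalue $+1$ arises from $\ket{\psi_0}$ together with the fixed vector $\ket{0}$, and no other cycles exist. This gives a two-dimensional $+1$ eigenspace, and the remaining phases are $\phi_q=-2\pi q/L$ for $q=1,\ldots,L-1$. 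These are pairwise distinct, as the structural hypothesis of Theorem~\ref{thm:general-clock} requires.
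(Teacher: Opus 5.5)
Your proposal is correct and follows the paper's proof essentially step for step. Both identify $\mathbb F_{q_0}^n$ with $\mathbb F_Q$ through the basis $1,\alpha,\ldots,\alpha^{n-1}$ (your $\mathbb F_{q_0}[x]/(f)$), recognise $F$ as multiplication by a generator of the cyclic group $\mathbb F_Q^\times$ of order $L$, and read off the spectrum from the Fourier vectors on the single nonzero cycle together with the fixed vector $\ket{0}$. Your explicit reduction of $x\cdot\sum_j x_jx^j$ modulo $f$ fills in a step the paper only asserts, namely that the companion map of Eq.~\eqref{eq:companion-map} is exactly this multiplication.
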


\begin{proof}
Let \(\alpha\) be a root of the primitive polynomial \(f\).  The vector
space \(\mathbb F_{q_0}^n\) is canonically identified with
\(\mathbb F_{q_0^n}\) in the basis
\[
  1,\alpha,\ldots,\alpha^{n-1}.
\]
The companion map is multiplication by \(\alpha\).  Because \(f\) is
primitive, \(\alpha\) generates the multiplicative group
\[
  \mathbb F_{q_0^n}^{\times},
\]
which has order \(Q-1=L\).  Thus \(0\) is fixed and all nonzero vectors
form one orbit of length \(L\).

Let \(\ket{x_t}\), \(t\in\mathbb Z_L\), list that orbit so that
\(U_F\ket{x_t}=\ket{x_{t+1}}\).  The Fourier vectors
\[
  \ket{\psi_q}
  =
  \frac1{\sqrt L}
  \sum_{t=0}^{L-1}
  e^{2\pi iqt/L}\ket{x_t},
  \qquad
  q=0,\ldots,L-1,
\]
satisfy
\[
  U_F\ket{\psi_q}
  =
  e^{-2\pi iq/L}\ket{\psi_q}.
\]
The \(q=0\) Fourier vector and the fixed vector \(\ket0\) are linearly
independent \(+1\) eigenvectors, and there are no others because the
cycle decomposition has exactly two cycles.  This proves the claim.
\end{proof}

\begin{lemma}[Trace pairing and additive characters]
\label{lem:character-trace-pairing}
Let \(q_0=p^r\), \(Q=q_0^n\), and let
\(b_0,\ldots,b_{n-1}\) be an \(\mathbb F_{q_0}\)-basis of
\(\mathbb F_Q\).  There is a unique dual basis \(b_i^\vee\) satisfying
\[
 \operatorname{Tr}_{\mathbb F_Q/\mathbb F_{q_0}}(b_i^\vee b_k)
 =\delta_{ik}.
\]
For \(x=\sum_k x_kb_k\) and \(\beta_w=\sum_iw_i b_i^\vee\),
\[
 \operatorname{Tr}_{\mathbb F_Q/\mathbb F_p}(\beta_wx)
 =\operatorname{Tr}_{\mathbb F_{q_0}/\mathbb F_p}
     \left(\sum_iw_ix_i\right).
\]
If \(w\ne0\), the resulting additive character of \(\mathbb F_Q\)
\[
 \psi_w(x)=\exp\!\left[
  \frac{2\pi i}{p}\operatorname{Tr}_{\mathbb F_Q/\mathbb F_p}(\beta_wx)
 \right]
\]
is nontrivial.
\end{lemma}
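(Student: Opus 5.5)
The proof is standard finite-field linear algebra in three steps: build the dual basis, prove the trace identity, and then show the character is nontrivial.

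\emph{Dual basis.} The form $(u,v)\mapsto \operatorname{Tr}_{\mathbb F_Q/\mathbb F_{q_0}}(uv)$ is $\mathbb F_{q_0}$-bilinear on $\mathbb F_Q$, and I first check that it is nondegenerate. Suppose $u\neq0$ satisfies $\operatorname{Tr}(uv)=0$ for all $v$. Then, since $v\mapsto uv$ is a bijection of $\mathbb F_Q$, the trace would vanish identically on $\mathbb F_Q$. That is impossible: $\operatorname{Tr}(x)=\sum_{j=0}^{n-1}x^{q_0^j}$ is a polynomial of degree $q_0^{n-1}<Q$, so it cannot vanish at all $Q$ points. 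Nondegeneracy makes the map $u\mapsto \operatorname{Tr}(u\,\cdot)$ an $\mathbb F_{q_0}$-isomorphism from $\mathbb F_Q$ onto its dual space. The dual basis $b_i^\vee$ is then the preimage of the coordinate functionals $b_k\mapsto\delta_{ik}$, and existence and uniqueness both follow from bijectivity.

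\emph{Trace identity.} For $x=\sum_kx_kb_k$ and $\beta_w=\sum_i w_ib_i^\vee$, $\mathbb F_{q_0}$-linearity of the relative trace gives $\operatorname{Tr}_{\mathbb F_Q/\mathbb F_{q_0}}(\beta_wx)=\sum_{i,k}w_ix_k\delta_{ik}=\sum_iw_ix_i$. Applying transitivity of traces in the tower $\mathbb F_p\subset\mathbb F_{q_0}\subset\mathbb F_Q$, namely $\operatorname{Tr}_{\mathbb F_Q/\mathbb F_p}=\operatorname{Tr}_{\mathbb F_{q_0}/\mathbb F_p}\circ\operatorname{Tr}_{\mathbb F_Q/\mathbb F_{q_0}}$, yields the stated formula. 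Transitivity itself follows from the Frobenius-sum formula, since $\sum_{l<r}\bigl(\sum_{j<n}y^{q_0^j}\bigr)^{p^l}=\sum_{m<rn}y^{p^m}$.

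\emph{Nontriviality.} Since the traces are additive, $\psi_w$ is clearly a homomorphism from $(\mathbb F_Q,+)$ to the unit circle. Let $w\neq0$, say $w_{i_0}\neq0$. Choose $x=x_{i_0}b_{i_0}$ with all other coordinates zero. The exponent becomes $\operatorname{Tr}_{\mathbb F_{q_0}/\mathbb F_p}(w_{i_0}x_{i_0})$, and as $x_{i_0}$ ranges over $\mathbb F_{q_0}$ this takes every value of $\operatorname{Tr}_{\mathbb F_{q_0}/\mathbb F_p}$. By the same degree argument (degree $p^{r-1}<q_0$), that trace is not identically zero. It is therefore a nonzero, hence surjective, $\mathbb F_p$-linear map onto $\mathbb F_p$, so some $x$ gives exponent $1$ and $\psi_w(x)=e^{2\pi i/p}\neq1$.

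The only step that is not purely formal is the nondegeneracy of the trace form in the first step, which rests on the degree bound for the trace polynomial. Everything else is bookkeeping with linearity.
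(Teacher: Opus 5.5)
Your proof is correct and follows the paper's argument essentially step for step: the degree bound $q_0^{n-1}<Q$ on the trace polynomial gives nondegeneracy of the pairing, and hence the unique dual basis, while the Frobenius-sum computation gives transitivity of traces. The only difference is cosmetic and lies in the nontriviality step: you use the just-proved coordinate identity together with surjectivity of $\operatorname{Tr}_{\mathbb F_{q_0}/\mathbb F_p}$, whereas the paper sets $x=z/\beta_w$ with $\operatorname{Tr}_{\mathbb F_Q/\mathbb F_p}(z)=1$, which requires $\beta_w\neq0$ for $w\neq0$ (linear independence of the dual basis), a point your route sidesteps.
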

\begin{proof}
Write
\[
 T(x)=\operatorname{Tr}_{\mathbb F_Q/\mathbb F_{q_0}}(x)
     =\sum_{k=0}^{n-1}x^{q_0^k}.
\]
The identities \((x+y)^{q_0^k}=x^{q_0^k}+y^{q_0^k}\) and
\(a^{q_0^k}=a\) for \(a\in\mathbb F_{q_0}\) show that \(T\) is
\(\mathbb F_{q_0}\)-linear.  Also \(T(x)^{q_0}=T(x)\), because
\(x^{q_0^n}=x\), so its values belong to \(\mathbb F_{q_0}\).
The polynomial \(\sum_{k=0}^{n-1}X^{q_0^k}\) is nonzero and has degree
\(q_0^{n-1}<Q\).  A nonzero polynomial of degree \(d\) over a field has
at most \(d\) roots: each root gives a linear factor, and successive
factorization reduces the degree.  Thus \(T\) is not identically zero.
Choose \(y\) with \(T(y)\ne0\).  For any \(z\ne0\), the choice
\(x=y/z\) gives \(T(zx)=T(y)\ne0\).  The bilinear pairing
\((z,x)\mapsto T(zx)\) is therefore nondegenerate.  Its Gram matrix in
any basis is invertible, which gives existence and uniqueness of the
dual basis by solving the associated linear system.

Relative and absolute traces satisfy transitivity.  Explicitly,
\[
 \operatorname{Tr}_{\mathbb F_{q_0}/\mathbb F_p}(T(z))
 =\sum_{a=0}^{r-1}\sum_{k=0}^{n-1}z^{p^{a+rk}}
 =\sum_{j=0}^{rn-1}z^{p^j}
 =\operatorname{Tr}_{\mathbb F_Q/\mathbb F_p}(z).
\]
Applying this identity to \(z=\beta_wx\), the defining property of the
dual basis gives \(T(\beta_wx)=\sum_iw_ix_i\), proving the stated
formula.  The same polynomial argument applied to the absolute trace
shows that it is a nonzero \(\mathbb F_p\)-linear map into
\(\mathbb F_p\), hence surjective.  If \(w\ne0\), then
\(\beta_w\ne0\).  Choose \(z\) of absolute trace one and set
\(x=z/\beta_w\).  Then \(\psi_w(x)=e^{2\pi i/p}\ne1\), proving
nontriviality.  None of these steps requires \(p\nmid n\).
\end{proof}

\begin{lemma}[Gauss and Jacobi sums]
\label{lem:elementary-gauss-jacobi}
Let \(\mathbb F_Q\) be a finite field and \(\psi\) a nontrivial
additive character.  Extend every multiplicative character by zero at
zero.  Define
\[
 G(A)=\sum_{x\ne0}A(x)\psi(x),\qquad
 J(A,B)=\sum_{x\ne0,1}A(x)B(1-x).
\]
If \(A\) is nontrivial, \(|G(A)|=\sqrt Q\); if \(A=1\) on
\(\mathbb F_Q^\times\), then \(G(A)=-1\).
If \(A,B\) are nontrivial, then
\[
 |J(A,B)|=\sqrt Q\quad\hbox{when }AB\ne1,
 \qquad
 J(A,A^{-1})=-A(-1).
\]
\end{lemma}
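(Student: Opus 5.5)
The plan is to prove the four assertions of Lemma~\ref{lem:elementary-gauss-jacobi} by direct expansion of squared moduli, using only character orthogonality on $\mathbb F_Q^\times$ and on $(\mathbb F_Q,+)$. Two orthogonality relations do the work. First, for a nontrivial additive character, $\sum_{x\in\mathbb F_Q}\psi(ax)=Q\,\mathbbm 1[a=0]$, since $x\mapsto\psi(ax)$ is again a nontrivial character when $a\ne0$. Second, for a nontrivial multiplicative $A$, $\sum_{x\ne0}A(x)=0$. The trivial case is immediate: $G(1)=\sum_{x\ne0}\psi(x)=\sum_{x}\psi(x)-\psi(0)=0-1=-1$.

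\emph{Gauss sum.} Take $A$ nontrivial and write
\[
|G(A)|^2=\sum_{x,y\ne0}A(x)\overline{A(y)}\psi(x-y).
\]
Substitute $x=ty$ with $t\ne0$, so that $A(x)\overline{A(y)}=A(t)$. This gives
\[
|G(A)|^2=\sum_{t\ne0}A(t)\sum_{y\ne0}\psi\bigl((t-1)y\bigr).
\]
The inner sum equals $Q-1$ when $t=1$ and $-1$ otherwise. Hence
\[
|G(A)|^2=(Q-1)-\sum_{t\ne0,1}A(t)=(Q-1)-\bigl(0-A(1)\bigr)=Q.
\]

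\emph{Jacobi sums.} Take $A,B$ nontrivial. I would first prove the classical identity
\[
G(A)G(B)=J(A,B)\,G(AB)\qquad\text{whenever }AB\ne1.
\]
Expand $G(A)G(B)=\sum_{x,y}A(x)B(y)\psi(x+y)$ and group the terms by $s=x+y$.

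\begin{itemize}
\item For $s\ne0$, substitute $x=su$, $y=s(1-u)$. The terms contribute $AB(s)\psi(s)J(A,B)$; here the zero extension kills $u=0,1$.
\item For $s=0$, the contribution is $\sum_{x\ne0}A(x)B(-x)=B(-1)\sum_{x\ne0}AB(x)$. This vanishes when $AB\ne1$.
\end{itemize}

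Summing over $s\ne0$ gives $J(A,B)G(AB)$. Since $AB$ is nontrivial, $|G(AB)|=\sqrt Q$, and so
\[
|J(A,B)|=\frac{|G(A)||G(B)|}{\sqrt Q}=\sqrt Q .
\]

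For $B=A^{-1}$ I would compute directly instead:
\[
J(A,A^{-1})=\sum_{x\ne0,1}A\!\left(\frac{x}{1-x}\right).
\]
The map $x\mapsto x/(1-x)$ is a bijection from $\mathbb F_Q\setminus\{0,1\}$ onto $\mathbb F_Q\setminus\{0,-1\}$; its inverse is $y\mapsto y/(1+y)$. Therefore
\[
J(A,A^{-1})=\sum_{y\ne0}A(y)-A(-1)=-A(-1).
\]

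\emph{Main obstacle.} The only delicate point is bookkeeping, not ideas. In characteristic two, $-1=1$, so the bijection's excluded set and the $s=0$ term must be handled in a way that is valid uniformly; the argument above never divides by $2$. One must also track the zero-extension convention consistently, so that boundary terms such as $u=0,1$ or $y=0$ are dropped correctly.
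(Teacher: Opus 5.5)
Your proof is correct and follows the paper's argument essentially step for step. You use the same substitution $x=ty$ in $|G(A)|^2$, the same grouping by $s=x+y$ with the vanishing $s=0$ term to obtain $G(A)G(B)=G(AB)J(A,B)$, and the same bijection $x\mapsto x/(1-x)$ onto $\mathbb F_Q\setminus\{0,-1\}$ for $J(A,A^{-1})$; the only difference is that the paper proves that a nontrivial character sums to zero, which you take as standard.
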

\begin{proof}
A nontrivial character sums to zero over its finite group.  Indeed,
choose a group element on which it is not one; translation (for an
additive group) or multiplication (for a multiplicative group) preserves
the sum but multiplies it by a character value different from one.  The sum must
therefore vanish.  Hence
\(\sum_x\psi(x)=0\) and, for nontrivial \(A\),
\(\sum_{u\ne0}A(u)=0\).
For nontrivial \(A\), substitute \(x=uy\) in the squared Gauss sum:
\begin{align*}
 |G(A)|^2
 &=\sum_{x,y\ne0}A(x)\overline{A(y)}\psi(x-y)\\
 &=\sum_{u\ne0}A(u)\sum_{y\ne0}\psi((u-1)y)\\
 &=(Q-1)-\sum_{u\ne0,1}A(u)=Q.
\end{align*}
The inner sum is \(Q-1\) for \(u=1\) and \(-1\) otherwise, since
multiplication by \(u-1\ne0\) permutes \(\mathbb F_Q^\times\).
For the trivial multiplicative character, the Gauss sum is directly
\(\sum_{x\ne0}\psi(x)=-1\).

Suppose \(A,B,AB\) are nontrivial.  Expand \(G(A)G(B)\) and group the
terms by \(t=x+y\).  The \(t=0\) contribution is
\[
 \sum_{x\ne0}A(x)B(-x)
 =B(-1)\sum_{x\ne0}(AB)(x)=0.
\]
For \(t\ne0\), write \(x=tu\) and \(y=t(1-u)\).  This is a
bijection between the allowed pairs and \(t\ne0\), \(u\ne0,1\).
Consequently
\[
 G(A)G(B)
 =\sum_{t\ne0}(AB)(t)\psi(t)
       \sum_{u\ne0,1}A(u)B(1-u)
 =G(AB)J(A,B).
\]
Taking moduli and using the Gauss-sum result proves
\(|J(A,B)|=\sqrt Q\).
Finally, if \(B=A^{-1}\), the substitution
\(v=u/(1-u)\) is a bijection from
\(\mathbb F_Q\setminus\{0,1\}\) to
\(\mathbb F_Q^\times\setminus\{-1\}\), with inverse
\(u=v/(1+v)\).  Therefore
\[
 J(A,A^{-1})=\sum_{v\ne0,-1}A(v)=-A(-1).
\]
This also covers characteristic two.
\end{proof}

\begin{theorem}[Weyl matrix elements from finite-field character sums]
\label{thm:character-sum-external}
Let \(q_0=p^r\), \(Q=q_0^n\), and let
\(\ket{\psi_j}\), \(1\le j\le Q-2\), be the nontrivial Fourier
eigenvectors of Lemma~\ref{lem:prime-power-cycle}.  Use the finite-field
Weyl operators
\[
 X_v\ket{x}=\ket{x+v},\qquad
 Z_w\ket{x}=
 \exp\!\left[\frac{2\pi i}{p}
     \operatorname{Tr}_{\mathbb F_{q_0}/\mathbb F_p}(w\cdot x)
     \right]\ket{x},\qquad
 W_{v,w}=X_vZ_w,
\]
where \(x,v,w\in\mathbb F_{q_0}^n\).  For \((v,w)\ne(0,0)\),
\begin{equation}
 \left|\bra{\psi_j}W_{v,w}\ket{\psi_{j'}}\right|
 \le c_Q:=\frac{2\sqrt Q}{Q-1}
 \le\frac4{\sqrt Q}.
 \label{eq:character-sum-cQ}
\end{equation}
This holds for every prime-power \(q_0\), including characteristic two.
\end{theorem}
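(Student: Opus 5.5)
Write $Q=q_0^n$, $L=Q-1$, identify $\mathbb F_{q_0}^n$ with $\mathbb F_Q$ via the basis $1,\alpha,\dots,\alpha^{n-1}$, and list the nonzero orbit as $x_t=\alpha^t$. Then $\ket{\psi_j}=L^{-1/2}\sum_t e^{2\pi i jt/L}\ket{\alpha^t}$ is $L^{-1/2}\sum_{x\ne0}\chi_j(x)\ket x$, where $\chi_j(\alpha^t)=e^{2\pi i jt/L}$ is a multiplicative character of $\mathbb F_Q^\times$. The character is nontrivial precisely for $1\le j\le Q-2$. By Lemma~\ref{lem:character-trace-pairing}, the phase of $Z_w$ at $x$ equals $\psi_w(x)$, a nontrivial additive character of $\mathbb F_Q$ when $w\ne0$ and trivial when $w=0$. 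Under the identification, $X_v$ is translation by $v\in\mathbb F_Q$. The plan is to write every matrix element as a Gauss sum or a Jacobi sum and then apply Lemma~\ref{lem:elementary-gauss-jacobi}.

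Expand the matrix element:
\[
\bra{\psi_j}W_{v,w}\ket{\psi_{j'}}=\frac1L\sum_{x\ne0,\,x+v\ne0}\overline{\chi_j(x+v)}\,\chi_{j'}(x)\,\psi_w(x).
\]
The argument splits into three cases.

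\emph{Case $v=0$, $w\ne0$.} The sum is $\sum_{x\ne0}(\bar\chi_j\chi_{j'})(x)\psi_w(x)=G(\bar\chi_j\chi_{j'})$. Lemma~\ref{lem:elementary-gauss-jacobi} bounds its modulus by $\sqrt Q$, including the case of a trivial product character, where it equals $-1$. The matrix element is therefore at most $\sqrt Q/L$.

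\emph{Case $v\ne0$, $w=0$.} Substitute $x=-vu$. The sum becomes $\chi_{j'}(-v)\overline{\chi_j(-v)}\sum_{u\ne0,1}\chi_{j'}(u)\bar\chi_j(1-u)$, whose modulus is $|J(\chi_{j'},\bar\chi_j)|$. Both characters are nontrivial, so the lemma gives either $\sqrt Q$ or $1$, and again the element is at most $\sqrt Q/L$.

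\emph{Case $v\ne0$, $w\ne0$.} This is the main obstacle, since it is a mixed sum $\sum_x\bar\chi_j(x+v)\chi_{j'}(x)\psi_w(x)$ and Lemma~\ref{lem:elementary-gauss-jacobi} does not cover it directly. I would try to remove the additive character by multiplicative Fourier inversion. For $x\ne0$,
\[
\psi_w(x)=\frac1L\sum_{B}G(\bar B,\psi_w)\,B(x),
\]
where $B$ ranges over all $L$ multiplicative characters; this is orthogonality over $\mathbb F_Q^\times$. Substituting gives
\[
\frac1L\sum_B G(\bar B)\sum_x\bar\chi_j(x+v)(\chi_{j'}B)(x).
\]
Each inner sum is, up to a unimodular factor, a Jacobi sum of modulus at most $\sqrt Q$, or at most $Q-1$ if $\chi_{j'}B$ is trivial, in which case it reduces to $\sum\bar\chi_j=-\bar\chi_j(v)$, of modulus $1$. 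Each $|G(\bar B)|\le\sqrt Q$.

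The crude triangle inequality on this expansion yields only $O(Q)$, which is too weak, so the inversion will need refinement. I would instead aim for the standard Weil-type bound for mixed character sums, $\le 2\sqrt Q$ for a sum over a polynomial of degree two with a nontrivial additive twist. That gives $2\sqrt Q/L$, which is exactly the constant $c_Q=2\sqrt Q/(Q-1)$ in the statement, suggesting this case is where the factor $2$ originates. If an elementary route is wanted, I would first square the sum and use additive orthogonality to reduce it to Jacobi sums. If that fails, I would cite Weil's bound for $\sum\chi(f(x))\psi(g(x))$.

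Finally, $2\sqrt Q/(Q-1)\le 4/\sqrt Q$ holds whenever $Q\ge2$. No step uses the characteristic, consistent with the lemmas.
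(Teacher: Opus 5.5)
Your proposal follows the paper's proof: the same identification of $\ket{\psi_j}$ with a nontrivial multiplicative character and of the $Z_w$ phase with $\psi_w$, the same three-case split with Lemma~\ref{lem:elementary-gauss-jacobi} handling $v=0$ (Gauss sum) and $w=0$ (Jacobi sum), and, for $v\ne0$, $w\ne0$, a Weil-type mixed character-sum bound (the paper cites Cochrane--Pinner, Theorem~1.1) giving $(1+2-1)\sqrt Q=2\sqrt Q$. To complete that last step, check the hypotheses as the paper does, and drop the squaring/additive-orthogonality fallback, since with $w$ fixed it can only control an average over $w$ and not the individual sum: the count in the bound is the two distinct finite zeros/poles of $x^{j'}(x+v)^{-j}$ (not the degree of a polynomial), this argument is not a constant times a $(Q-1)$-th power because $1\le j,j'\le Q-2$, and $\beta_w x$ has degree one, so it is not of Artin--Schreier form.
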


\begin{proof}
Identify \(\mathbb F_{q_0}^n\) with \(\mathbb F_Q\) using the basis
\(b_i=\alpha^i\), \(0\le i<n\), from
Lemma~\ref{lem:prime-power-cycle}.  Let \(b_i^\vee\) be the dual basis
for the trace pairing,
\[
 \operatorname{Tr}_{\mathbb F_Q/\mathbb F_{q_0}}
       (b_i^\vee b_k)=\delta_{ik}.
\]
For a vector \(w\), put \(\beta_w=\sum_iw_i b_i^\vee\).
The phase of \(Z_w\) becomes the additive character
\[
 \psi_w(x)=\exp\!\left[
       \frac{2\pi i}{p}
       \operatorname{Tr}_{\mathbb F_Q/\mathbb F_p}(\beta_wx)
       \right].
\]
Lemma~\ref{lem:character-trace-pairing} gives this identity and shows that \(\psi_w\) is nontrivial whenever \(w\ne0\).
In particular, this argument does not require \(p\nmid n\).

Let \(L=Q-1\) and choose the multiplicative character \(\chi\) with
\(\chi(\alpha)=e^{2\pi i/L}\).  Taking the first point of the nonzero
cycle to be \(1\), we have
\[
 \ket{\psi_j}=\frac1{\sqrt L}
       \sum_{x\in\mathbb F_Q^\times}\chi(x)^j\ket{x}.
\]
Changing the starting point of the cycle only changes the overall phase
of each eigenvector.  Write \(a\in\mathbb F_Q\) for the field element
corresponding to \(v\).  Direct substitution gives
\begin{equation}
 \bra{\psi_j}W_{v,w}\ket{\psi_{j'}}
 =\frac1L\sum_{\substack{x\in\mathbb F_Q\\x\ne0,-a}}
       \chi(x)^{j'}\chi(x+a)^{-j}\psi_w(x).
 \label{eq:prime-power-character-reduction}
\end{equation}

If \(a=0\), then \(w\ne0\).  The numerator is the Gauss sum of
\(\chi^{j'-j}\) and the nontrivial additive character \(\psi_w\).
It has modulus \(\sqrt Q\) when \(j\ne j'\), and equals \(-1\)
when \(j=j'\).

If \(a\ne0\) and \(w=0\), the change of variable \(x=-ay\) gives,
up to a factor of modulus one, the Jacobi sum
\[
 J(\chi^{j'},\chi^{-j})
 =\sum_{y\ne0,1}\chi(y)^{j'}\chi(1-y)^{-j}.
\]
Both multiplicative characters are nontrivial.  Its modulus is
\(\sqrt Q\) if \(j\ne j'\), and one if \(j=j'\).
The required Gauss and Jacobi identities were proved in
Lemma~\ref{lem:elementary-gauss-jacobi}.

It remains to consider \(a\ne0\) and \(w\ne0\).  We use the mixed
character-sum bound for a degree-one additive polynomial and a rational
multiplicative argument with two distinct finite zeros or poles
\cite[Theorem~1.1]{CochranePinner2006}.  To check its hypotheses
explicitly, set
\[
 h=\gcd(j,j'),\qquad
 \widetilde\chi=\chi^h,\qquad
 G(x)=\frac{x^{j'/h}}{(x+a)^{j/h}}.
\]
Extend multiplicative characters by zero at zero.  The character
\(\widetilde\chi\) is nontrivial, since
\(1\le h\le Q-2\).  The zero and pole of \(G\) are distinct and have
coprime multiplicities, so \(G\) is not a nontrivial rational-function
power, even up to multiplication by a constant.  The additive argument
is \(F(x)=x\); \(\beta_w F(x)\) has a simple pole at infinity and hence is not of
Artin--Schreier form \(R(x)^p-R(x)+c\).  In the notation of the cited
bound, \(\deg F=1\) and there are two distinct finite zeros or poles of
\(G\).  Therefore
\[
 \left|\sum_{x\ne0,-a}
       \widetilde\chi(G(x))\psi_w(x)\right|
 \le(1+2-1)\sqrt Q=2\sqrt Q.
\]
This is exactly the numerator of
Eq.~\eqref{eq:prime-power-character-reduction}.  Dividing the three
case bounds by \(L\) proves the first inequality in
Eq.~\eqref{eq:character-sum-cQ}.  The second follows from
\(Q/(Q-1)\le2\).
\end{proof}

\begin{remark}
The non-elementary input is the classical mixed character-sum bound;
the reduction from Weyl matrix elements to that bound is given above.
The operators here use the additive group of the finite field
\(\mathbb F_{q_0}\).  Invertible \(\mathbb F_{q_0}\)-linear
permutations preserve this Weyl basis under conjugation, as required
below.  For \(r>1\), this convention should be distinguished from the
cyclic shift-and-phase convention on \(\mathbb Z_{p^r}\).
\end{remark}

\begin{lemma}[Moving-head clock length and locality]
\label{lem:moving-head-locality}
Define the gate sequence
\begin{align}
  u_t
  &=
  g_{n-1-t},
  &&0\le t\le n-2,
  \nonumber\\
  u_{n-1}
  &=
  g_0,
  \nonumber\\
  u_t
  &=
  \Id,
  &&n\le t\le2n-2.
  \label{eq:moving-head-sequence}
\end{align}
Its period is
\[
  K=2n-1,
\]
and its Floquet unitary is \(U_F\).  The associated periodic
Feynman--Kitaev Hamiltonian admits a static geometrically local
realisation on a strip of bounded width, with bounded-range terms and
bounded interaction degree for fixed \(q_0\).
\end{lemma}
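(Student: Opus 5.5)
The lemma has three parts: the period, the Floquet unitary, and the locality of a static realisation. I would check them in that order, the first two by direct bookkeeping and the third by an explicit layout.

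\emph{Period and Floquet unitary.} By Eq.~\eqref{eq:moving-head-sequence}, the labels run over \(t=0,\ldots,2n-2\). That gives \(n-1\) two-site gates, one single-site gate and \(n-1\) identities, so the period is \(K=2n-1\). The Floquet unitary is the time-ordered product \(u_{2n-2}\cdots u_1u_0\). The identities drop out, leaving \(u_{n-1}u_{n-2}\cdots u_0=g_0g_1\cdots g_{n-1}\), in which \(u_0=g_{n-1}\) is applied first. This is exactly Eq.~\eqref{eq:UF-staircase}. Lemma~\ref{lem:companion-staircase} then identifies it with the companion map, and Lemma~\ref{lem:prime-power-cycle} supplies its spectrum. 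So \(U_F\) is the staircase unitary, and Proposition~\ref{prop:clock-spectrum} applies with clock length \(K\).

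\emph{Geometric layout.} Place the \(n\) spin qudits, each of dimension \(q_0\), on a row at columns \(0,\ldots,n-1\). For \(0\le t\le n-2\), the gate \(u_t=g_{n-1-t}\) acts on sites \(n-2-t,n-1-t\), and \(u_{n-1}=g_0\) acts on site \(0\). The active region therefore moves one column to the left per step.

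I would use two further rows of clock modes, each mode a qubit with one excitation marking the clock position.
\begin{itemize}
\item The \emph{forward} row carries modes \(t=0,\ldots,n-1\). Mode \(t\) sits at column \(n-1-t\), adjacent to the support of \(u_t\).
\item The \emph{return} row carries the identity steps \(t=n,\ldots,2n-2\). Mode \(t\) sits at column \(t-n\), so this row runs back from column \(0\) to column \(n-2\).
\end{itemize}
The periodic link \(2n-2\to0\) then joins column \(n-2\) to column \(n-1\), and the link \(n-1\to n\) joins two modes in the same column \(0\). Every hopping pair \((t,t+1)\) therefore lies within horizontal distance one and vertical distance at most two.

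\emph{Local terms and the one-particle sector.} I would write each hopping as \(u_t\otimes\sigma^+_{t+1}\sigma^-_t+\text{h.c.}\), together with the constant \(\Id\). Each such term is supported on at most two spin sites and two clock modes within a bounded box. Each site meets \(O(1)\) terms, and both counts depend only on \(q_0\) through the local dimension, not on \(n\). The terms conserve the clock excitation number \(\sum_t\sigma^+_t\sigma^-_t\). On the one-excitation sector, \(\sigma^+_{t+1}\sigma^-_t\) acts as \(\ket{t+1}\!\bra t\), so the restriction reproduces Eq.~\eqref{eq:FK} with \(K=2n-1\).

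The step needing the most care is the locality of the return path. A naive linear clock would force a range-\(n\) hop from the last step back to the first. The U-turn layout with two clock rows is the fix, and I would verify it column by column for every consecutive pair, including the links \(n-1\to n\) and \(2n-2\to 0\).
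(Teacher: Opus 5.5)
Your proposal is correct and follows essentially the same proof as the paper. You count the $n+(n-1)$ time steps, identify $u_{K-1}\cdots u_0$ with $g_0g_1\cdots g_{n-1}=U_F$ via Lemma~\ref{lem:companion-staircase}, and use a U-turn layout: a forward clock row beside the moving gate support, then a return row whose last site sits next to the initial clock site, which closes the periodic edge locally. Your explicit columns (forward mode $t$ at column $n-1-t$, return mode $t$ at column $t-n$) simply make the paper's verbal description concrete.
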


\begin{proof}
By Eq.~\eqref{eq:moving-head-sequence}, the chronological nonidentity
gates are
\[
  g_{n-1},g_{n-2},\ldots,g_1,g_0.
\]
The Floquet product, with later gates multiplying on the left, is
therefore
\[
  u_{K-1}\cdots u_0
  =
  g_0g_1\cdots g_{n-1}
  =
  U_F
\]
by Lemma~\ref{lem:companion-staircase}.  The number of time steps is
\[
  n+(n-1)=2n-1.
\]

For locality, place the \(n\) data sites on one horizontal row.  During
the first \(n\) clock steps, place consecutive clock sites on a second row
so that the hopping edge \((t,t+1)\) lies adjacent to the one- or
two-site support of \(u_t\).  The gates \(g_{n-1},\ldots,g_1\) move
successively along neighbouring data bonds and \(g_0\) acts on the end
site.  During the remaining \(n-1\) identity steps, return the clock path
on a third adjacent row, and place its final site adjacent to the initial
clock site.  Hence the periodic closing edge also has bounded geometric
range.  Consecutive clock sites are therefore a bounded distance apart,
every controlled hopping term
\[
  u_t\otimes\ket{t+1}\!\bra t+\mathrm{h.c.}
\]
has support of bounded diameter, and each data or clock site belongs to
only a bounded number of terms.  Since \(q_0\) is fixed, all local Hilbert
spaces have fixed dimension.  This gives the claimed bounded-width local
strip.
\end{proof}

\begin{proposition}[Prime-power static spectrum]
\label{prop:prime-power-static-spectrum}
For the \(K\)-step moving-head clock, every regular Floquet phase
Eq.~\eqref{eq:prime-power-phases} produces static energies
\begin{equation}
  E_{m,q}
  =
  1-
  \cos\left[
  \frac{2\pi}{KL}(mL+q)
  \right],
  \qquad
  0\le m<K,
  \quad
  1\le q<L.
  \label{eq:prime-power-static-spectrum}
\end{equation}
Thus the regular energies are a subset of the
\(KL\)-point cosine grid.
\end{proposition}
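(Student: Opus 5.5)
The plan is to apply Proposition~\ref{prop:clock-spectrum} directly to the moving-head gate sequence of Lemma~\ref{lem:moving-head-locality}, with clock length $K=2n-1$ in place of $n$. First I will check that Proposition~\ref{prop:clock-spectrum} never uses the length of the clock or the specific LFSR gates. Its proof needs only three things: the periodic Feynman--Kitaev Hamiltonian of Eq.~\eqref{eq:FK}, with $n$ replaced by $K$; the Krylov vectors $U_{t:0}\ket\phi\otimes\ket t$; and the twisted boundary condition $U_{K:0}\ket\phi=U_F\ket\phi=e^{i\phi}\ket\phi$. Identity gates $u_t=\Id$ for $n\le t\le 2n-2$ are allowed, since the relation $u_tU_{t:0}=U_{t+1:0}$ still holds. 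By Lemma~\ref{lem:moving-head-locality}, the Floquet product $u_{K-1}\cdots u_0$ equals the companion-map unitary $U_F$. For every Floquet eigenvector with phase $\phi$ and every $m=0,\dots,K-1$, the state $\ket{\Psi_{m,\phi}}$ is therefore an eigenstate with energy $1-\cos\theta_{m,\phi}$, where $\theta_{m,\phi}=2\pi m/K-\phi/K$.

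Next I will substitute the regular phases from Lemma~\ref{lem:prime-power-cycle}, namely $\phi_q=-2\pi q/L$ for $q=1,\dots,L-1$. This gives
\[
\theta_{m,q}=\frac{2\pi m}{K}+\frac{2\pi q}{KL}=\frac{2\pi}{KL}(mL+q),
\]
which is Eq.~\eqref{eq:prime-power-static-spectrum}. Every label $mL+q$ lies in $\{0,\dots,KL-1\}$, so each regular energy belongs to the $KL$-point cosine grid. Because $0\le m<K$ and $1\le q<L$, the map $(m,q)\mapsto mL+q$ is injective. This also supplies the distinct-label hypothesis used in Theorem~\ref{thm:general-clock}.

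To justify calling this the regular spectrum, I will also note that the states $\ket{\Psi_{m,q}}$, together with the $2K$ states built from the two $+1$ Floquet vectors, form a complete set. They are orthonormal because different Floquet vectors give orthogonal Krylov rows for each clock label, and different $m$ give orthogonal discrete Fourier vectors. Counting gives $K(L-1)+2K=KQ=D$ states, so they span the sector.

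The only step that needs care is the first one: confirming that the identity-padded sequence, with its ordering convention (later gates multiply on the left, and $g_{n-1}$ is applied first), produces exactly $U_F$ as the boundary twist. Lemma~\ref{lem:moving-head-locality} already settles this, so I do not expect a genuine obstacle.
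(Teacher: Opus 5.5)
Your proposal is correct and follows the paper's proof. The paper likewise reruns the Krylov/plane-wave calculation of Proposition~\ref{prop:clock-spectrum} for a general $K$-step clock with boundary twist $U_{K:0}=U_F$, and then substitutes $\phi_q=-2\pi q/L$ to get $\theta_{m,q}=2\pi(mL+q)/(KL)$. Your extra orthonormality and dimension count, $K(L-1)+2K=KQ=D$, is also correct, though the statement as posed does not need it.
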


\begin{proof}
For a general \(K\)-step periodic clock and Floquet eigenvector
\[
  U_F\ket\phi=e^{i\phi}\ket\phi,
\]
define
\[
  U_{t:0}
  =
  u_{t-1}\cdots u_0,
  \qquad
  \ket{K_{t,\phi}}
  =
  U_{t:0}\ket\phi\otimes\ket t.
\]
The periodic boundary is
\[
  \ket{K_{K,\phi}}
  =
  e^{i\phi}\ket{K_{0,\phi}}.
\]
For \(m=0,\ldots,K-1\), choose
\[
  \theta_{m,\phi}
  =
  \frac{2\pi m}{K}-\frac{\phi}{K}
\]
and
\[
  \ket{\Psi_{m,\phi}}
  =
  \frac1{\sqrt K}
  \sum_{t=0}^{K-1}
  e^{i\theta_{m,\phi}t}
  \ket{K_{t,\phi}}.
\]
Exactly as in the direct calculation in
Proposition~\ref{prop:clock-spectrum}, the periodic-clock Hamiltonian acts
as the nearest-neighbour twisted Laplacian on this Krylov basis:
\[
  H_{\rm FK}\ket{\Psi_{m,\phi}}
  =
  (1-\cos\theta_{m,\phi})
  \ket{\Psi_{m,\phi}}.
\]
For
\[
  \phi=\phi_q=-\frac{2\pi q}{L},
\]
one obtains
\[
  \theta_{m,q}
  =
  \frac{2\pi}{KL}(mL+q),
\]
which proves Eq.~\eqref{eq:prime-power-static-spectrum}.
\end{proof}

\begin{lemma}[Prime-power static local matrix elements]
\label{lem:prime-power-static-matrix}
Let \(O_a\) be a nonidentity local Weyl basis operator on a fixed set of
data sites, extended trivially to the clock.  For every pair of regular
static-clock eigenstates,
\begin{equation}
  \abs{
  \bra{\Psi_{m,q}}
  O_a\otimes\Id_{\rm clock}
  \ket{\Psi_{m',q'}}
  }
  \le
  c_Q
  \label{eq:prime-power-static-matrix}
\end{equation}
with \(c_Q\) from
Eq.~\eqref{eq:character-sum-cQ}.
\end{lemma}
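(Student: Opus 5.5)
The plan is to follow the proof of Lemma~\ref{lem:LFSR-static-matrix}, replacing the Clifford/Pauli input by its prime-power analogue. Start from the explicit eigenstates of Proposition~\ref{prop:prime-power-static-spectrum}. Since $O_a\otimes\Id_{\rm clock}$ is diagonal in the clock label, clock orthogonality gives
\[
 \bra{\Psi_{m,q}}O_a\otimes\Id_{\rm clock}\ket{\Psi_{m',q'}}
 =\frac1K\sum_{t=0}^{K-1}e^{i(\theta_{m',q'}-\theta_{m,q})t}
 \bra{\psi_q}U_{t:0}^\dagger O_aU_{t:0}\ket{\psi_{q'}} ,
\]
where $\ket{\psi_q}$ are the regular Floquet Fourier vectors of Lemma~\ref{lem:prime-power-cycle}.

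The key structural step is to show that each conjugate $U_{t:0}^\dagger O_aU_{t:0}$ is again a nonidentity finite-field Weyl operator, up to a phase of modulus one. Every $U_{t:0}$ is a product of the gates $g_i$ from Eqs.~\eqref{eq:gi-prime-power}--\eqref{eq:g0-prime-power} (and identities), so it is the permutation unitary of an invertible $\mathbb F_{q_0}$-linear map $A$ of $\mathbb F_{q_0}^n$. Invertibility holds because $f_0\neq0$ for a primitive polynomial, and each $g_i$ has an evident inverse. For such a permutation, $U_A^\dagger X_vU_A=X_{A^{-1}v}$. Also $U_A^\dagger Z_wU_A=Z_{A^{\top}w}$, with the transpose taken for the standard dot product used in $Z_w$. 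Hence $U_A^\dagger W_{v,w}U_A=W_{A^{-1}v,A^{\top}w}$, and this is nonidentity whenever $(v,w)\ne(0,0)$.

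A local Weyl operator on a fixed set of data sites, extended by identities, is a global $W_{v,w}$ with $(v,w)$ supported on those sites. A nonidentity local operator gives $(v,w)\neq(0,0)$. This is the remark following Theorem~\ref{thm:character-sum-external}, and it is the step needing the most care: one must fix the ordering convention $W=X_vZ_w$, check that conjugation introduces no phase beyond modulus one, and confirm that one-site gates multiplying by $-f_0$ are covered.

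Finally, apply Theorem~\ref{thm:character-sum-external} to each term. This gives $|\bra{\psi_q}W_{v',w'}\ket{\psi_{q'}}|\le c_Q$ for every nontrivial pair, including $q=q'$, since the theorem covers all nontrivial Fourier labels. The triangle inequality over the $K$ clock steps, with the $1/K$ prefactor, then yields Eq.~\eqref{eq:prime-power-static-matrix}.
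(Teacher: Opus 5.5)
Your proposal is correct and follows essentially the same route as the paper. Clock orthogonality reduces the static matrix element to a $1/K$-average of Floquet matrix elements of $U_{t:0}^\dagger O_aU_{t:0}$. These conjugates are nonidentity Weyl operators because the gates are invertible $\mathbb F_{q_0}$-linear permutations, and Theorem~\ref{thm:character-sum-external} with the triangle inequality then gives the bound $c_Q$.

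Your version is slightly more explicit than the paper's in three places. You write the conjugation in the direction actually needed, $U_A^\dagger W_{v,w}U_A=W_{A^{-1}v,A^{\top}w}$, and in fact no extra phase arises. You also check $f_0\neq0$, which is what makes the one-site gate $g_0$ invertible.
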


\begin{proof}
Clock orthogonality gives
\begin{align}
  &\bra{\Psi_{m,q}}
  O_a\otimes\Id_{\rm clock}
  \ket{\Psi_{m',q'}}
  \nonumber\\
  &\qquad=
  \frac1K
  \sum_{t=0}^{K-1}
  e^{i(\theta_{m',q'}-\theta_{m,q})t}
  \bra{\psi_q}
  U_{t:0}^\dagger O_aU_{t:0}
  \ket{\psi_{q'}}.
  \label{eq:prime-power-clock-average}
\end{align}
Every gate \(g_i\) is an invertible linear permutation over
\(\mathbb F_{q_0}\). If \(U_A\ket{x}=\ket{Ax}\), then
\begin{equation}
  U_A X_vZ_wU_A^\dagger
  =X_{Av}Z_{(A^{-1})^{T}w},
  \label{eq:linear-permutation-weyl}
\end{equation}
where the transpose is taken with respect to the coordinate pairing
used in the definition of \(Z_w\). Thus conjugation maps the Weyl basis
to itself, and an invertible conjugation cannot map a nonidentity
operator to the identity. Every \(U_{t:0}^\dagger O_aU_{t:0}\) is
therefore a nonidentity Weyl operator.
Theorem~\ref{thm:character-sum-external} bounds every summand in
Eq.~\eqref{eq:prime-power-clock-average} by \(c_Q\) in modulus.  The
triangle inequality proves the result.
\end{proof}

\begin{corollary}[Prime-power local bands and time windows]
\label{cor:prime-power-improved-full}
For every fixed prime-power local dimension \(q_0\) and every fixed
data subsystem \(S\), the following bounds hold for
\(0<\delta\le1\) and \(T\ge1\), respectively:
\begin{align}
  \chi_S(\delta;H_{\rm FK}^{(q_0)})
  &=
  O_{q_0}\left(
  K\delta\log\frac e\delta
  \right)
  +
  O_{q_0}(q_0^{-n}),
  \label{eq:prime-power-chi-full}\\
  \sup_{t_0\ge0}
  \frac1T
  \int_{t_0}^{t_0+T}
  \sup_{\rho_S}
  \Dt\left(
  \Phi_t(\rho_S),\frac{\Id_S}{d}
  \right)^2dt
  &=
  O_{S,q_0}\left(
  \frac{K\log(eT)}{T}
  \right)
  +
  O_{S,q_0}(q_0^{-n}).
  \label{eq:prime-power-window-full}
\end{align}
Consequently,
\begin{equation}
  T_K
  =
  O_{\alpha,q_0}
  \left(
  K^{1+\alpha}\log K
  \right)
  \label{eq:prime-power-T-full}
\end{equation}
is sufficient for mean-square error \(O(K^{-\alpha})\).
\end{corollary}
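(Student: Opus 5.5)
The plan is to feed the prime-power data into the structural periodic-clock machinery, Theorem~\ref{thm:general-clock} and Corollary~\ref{cor:general-clock-window}, with clock length \(K=2n-1\), spin dimension \(Q=q_0^n\), cycle length \(L=Q-1\) and \(r_F=2\). This requires checking the three structural hypotheses.

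\emph{Distinct-phase grid hypothesis.} Lemma~\ref{lem:prime-power-cycle} gives a complete Floquet eigenbasis of \(U_F\). The regular vectors \(\ket{\psi_q}\), \(1\le q\le L-1\), have the pairwise distinct phases \(-2\pi q/L\) on the \(L\)-grid. The two \(+1\) eigenvectors, the fixed vector \(\ket0\) and the \(q=0\) Fourier vector, are declared exceptional, so \(r_F=2\le r_0\). Lemma~\ref{lem:moving-head-locality} identifies the Floquet unitary of the moving-head sequence with \(U_F\). Proposition~\ref{prop:prime-power-static-spectrum} then places the regular static energies in the \(KL\)-point cosine grid. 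Since \(L=Q-1\le Q\), all grid requirements hold.

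\emph{Matrix-element hypothesis.} Take the local operator basis on \(S\) to be the nonidentity Weyl operators \(W_{v,w}\) supported on the data sites of \(S\), suitably normalized. They are unitary and traceless, and they satisfy \(\Tr(P_aP_b^\dagger)=d\delta_{ab}\). They are not Hermitian, so I would check that the operator-space statements only need an orthonormal basis of \(\calV_S\). This holds because \(\varphi_S\) is basis-independent by Eq.~\eqref{eq:phi-basis-new}, and \(\chi_S\) is defined through \(\mathsf P_S\) alone. The weight formula Eq.~\eqref{eq:local-weight-matrix} and the unitarity step \(\sum_k|\bra jO_a\ket k|^2=1\) hold verbatim for unitary \(O_a\).

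Lemma~\ref{lem:prime-power-static-matrix} gives \(|\bra jO_a\ket k|\le c_Q\le 4/\sqrt Q\) for all regular static pairs. This is Assumption~\ref{ass:general-clock-matrix} with \(C=4\).

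\emph{Conclusions.} Equation~\eqref{eq:general-clock-simple} now gives
\[
\chi_S\le 32K\delta\log(4e/\delta)+(128+4)/Q .
\]
Using \(\log(4e/\delta)\le(1+\log4)\log(e/\delta)\) for \(\delta\le1\), this is Eq.~\eqref{eq:prime-power-chi-full}, with \(1/Q=q_0^{-n}\). Corollary~\ref{cor:general-clock-window}, equivalently Theorem~\ref{thm:fejer-local-band} at \(\varepsilon=T^{-1}\), yields Eq.~\eqref{eq:prime-power-window-full}. The prefactor \((d-1)q^2(1+2\pi^2/3)\) depends only on \(d=q_0^{|S|}\), which is fixed.

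For Eq.~\eqref{eq:prime-power-T-full}, take \(T_K=K^{1+\alpha}\log(eK)\). Then \(\log(eT_K)=O_\alpha(\log(eK))\), so \(K\log(eT_K)/T_K=O_\alpha(K^{-\alpha})\). The remaining term \(q_0^{-n}\) is exponentially small in \(K\sim2n\), hence \(O(K^{-\alpha})\).

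\emph{Main obstacle.} The delicate point is the bookkeeping that every static eigenstate is accounted for. The regular set must consist of the \(K(L-1)\) states built from \(\ket{\psi_q}\), \(q\ne0\), and the exceptional set of the \(2K\) states built from the two \(+1\) vectors, so that the exceptional contribution \(2r_F/Q\) is correct. The other point to verify is that the non-Hermitian Weyl basis causes no issue in the Frobenius and trace manipulations of Theorem~\ref{thm:fejer-local-band}. Those manipulations are all performed in operator space on \(\calV_S\), so the choice of orthonormal basis is immaterial there.
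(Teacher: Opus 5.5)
Your proposal is correct and follows the paper's proof essentially step for step. You verify $r_F=2$, the pairwise-distinct regular labels with $L=Q-1\le Q$, and $c_Q=O(Q^{-1/2})$ via Lemma~\ref{lem:prime-power-static-matrix}, then apply the simplified form of Theorem~\ref{thm:general-clock}, set $\delta=T^{-1}$, and take $T_K=K^{1+\alpha}\log(eK)$; your explicit constants ($C=4$, giving $32K\delta\log(4e/\delta)+132/Q$) and your check that the non-Hermitian Weyl basis is harmless, because $\mu_S$ and $R_S$ are basis-independent, are valid details that the paper leaves implicit.
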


\begin{proof}
By Lemma~\ref{lem:prime-power-cycle}, the exceptional Floquet dimension is
\[
  r_F=2.
\]
By Proposition~\ref{prop:prime-power-static-spectrum}, the regular static
energies have lifted grid labels with \(L=Q-1\le Q\).
The regular Floquet labels \(1,\ldots,Q-2\) are pairwise distinct,
so the injectivity hypothesis of Theorem~\ref{thm:general-clock} holds.  By
Lemma~\ref{lem:prime-power-static-matrix} and
Theorem~\ref{thm:character-sum-external},
\[
  c_Q=O_{q_0}(Q^{-1/2}).
\]
Thus the simplified form of
Theorem~\ref{thm:general-clock} applies and gives
Eq.~\eqref{eq:prime-power-chi-full}, because
\[
  Q^{-1}=q_0^{-n}.
\]
Equation~\eqref{eq:prime-power-window-full} follows by setting
\(\delta=T^{-1}\) in the local-band theorem.  Finally, for
\[
  T_K=K^{1+\alpha}\log(eK),
\]
one has
\[
  \frac{K\log(eT_K)}{T_K}
  =
  O_\alpha(K^{-\alpha}),
\]
while \(q_0^{-n}\) is exponentially small in
\(K=2n-1\).  This proves
Eq.~\eqref{eq:prime-power-T-full}.
\end{proof}

\subsection{Clock-sector factorisation, energy consistency, and operational typicality}

\begin{proposition}[Spin-local energy orthogonality]
\label{prop:FK-energy-orthogonal-new}
For any periodic Feynman--Kitaev Hamiltonian of the form
Eq.~\eqref{eq:FK} with at least two clock positions, and every
spin-only subsystem \(S\),
\begin{equation}
  \mathsf P_S
  \left(
  H_{\rm FK}
  -
  \frac{\Tr H_{\rm FK}}{D}\Id
  \right)
  =
  0.
  \label{eq:FK-energy-orthogonal-new}
\end{equation}
Thus the conserved total-energy direction is exactly orthogonal to the
traceless spin-local operator algebra.
\end{proposition}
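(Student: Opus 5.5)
By Lemma~\ref{lem:local-projection}, for any operator \(X\) on the one-particle sector,
\[
\mathsf P_S(X)=\Bigl[\tfrac1m\Tr_B X-\tfrac{\Tr X}{D}\Id_S\Bigr]\otimes\Id_B .
\]
It therefore suffices to show that the partial trace of \(H_{\rm FK}\) over the bath is proportional to \(\Id_S\). Here the bath consists of the remaining spins together with the whole one-particle clock space. The shift by \((\Tr H_{\rm FK}/D)\Id\) lies in the identity direction, and \(\mathsf P_S\) annihilates it, since \(\Tr_B\Id=m\Id_S\) and \(\Tr\Id=D\) give zero. The claim is thus reduced to showing that \(\mathsf P_S H_{\rm FK}=0\).

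I split \(H_{\rm FK}\) into its terms and use linearity. The constant term \(\Id\) is removed as just explained. Each hopping term has the form
\[
u_t\otimes\ket{t+1}\!\bra t,
\]
or its adjoint \(u_t^\dagger\otimes\ket t\!\bra{t+1}\), acting on the spin register tensored with the clock. The bath trace factors as the trace over the non-\(S\) spins times the clock trace, because the clock space is a tensor factor of \(\mathcal H_B\) in this sector. The clock trace is
\[
\Tr\ket{t+1}\!\bra t=\langle t\ket{t+1}=0,
\]
since \(t+1\ne t\) modulo the clock length whenever there are at least two clock positions. This holds also for the periodic closing edge and for clock length two, where \(t+1\equiv t\) would require length one. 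Hence every hopping term has vanishing partial trace over \(B\), independent of \(u_t\).

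Putting the pieces together, \(\Tr_B H_{\rm FK}=m\Id_S\) and \(\Tr H_{\rm FK}=D\). Substituting these into the projection formula gives zero, which proves Eq.~\eqref{eq:FK-energy-orthogonal-new}.

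The only subtle point is the identification of the one-particle clock sector as a genuine tensor factor of the bath, so that \(\Tr_B\) includes the clock trace. This is exactly the preparation specified in Methods, where
\[
\tau_B=\frac{\Id_{B_s}}{2^{n-s}}\otimes\frac{\Id^{(1)}_{\rm clock}}{n}.
\]
The argument then holds verbatim for any gate sequence, including the moving-head family.
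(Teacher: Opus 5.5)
Your proposal is correct and follows the paper's proof essentially step for step. Both arguments note that every hopping term contains an off-diagonal clock operator with zero clock trace, so that \(\Tr_B H_{\rm FK}=m\Id_S\) and \(\Tr H_{\rm FK}=D\), and then substitute these into the explicit projection formula of Lemma~\ref{lem:local-projection}. Your additional remarks are accurate refinements of that argument: the sufficiency of \(\Tr_B X\propto\Id_S\), the need for at least two clock positions, and the tensor-factor structure of the one-particle clock sector.
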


\begin{proof}
Let the bath contain all spins outside \(S\) and the full one-particle
clock space.  Every clock hopping term contains
\(\ket{t+1}\!\bra t\) or its adjoint, whose trace on the clock space is
zero.  Therefore the partial trace of all hopping terms over the clock
vanishes.  The identity term gives
\[
  \Tr_BH_{\rm FK}=m\Id_S.
\]
For the same reason every hopping term has zero full trace, so
\[
  \Tr H_{\rm FK}=D.
\]
The explicit projection formula from
Lemma~\ref{lem:local-projection} is
\[
  \mathsf P_S(X)
  =
  \left[
  \frac1m\Tr_BX
  -
  \frac{\Tr X}{D}\Id_S
  \right]\otimes\Id_B.
\]
Apply it to \(X=H_{\rm FK}\) and substitute the two trace identities.
\end{proof}

\begin{proposition}[One-clock-particle factorisation]
\label{prop:clock-factorisation-new}
In the physical second-quantized clock realisation,
\[
  N_c=\sum_t c_t^\dagger c_t
\]
commutes with \(H_{\rm FK}\).  In the invariant sector \(N_c=1\), a
spin-only subsystem factors as
\begin{equation}
  \mathcal H^{(1)}
  =
  \mathcal H_S
  \otimes
  \left(
  \mathcal H_{B_s}
  \otimes
  \mathcal H_{\rm clock}^{(1)}
  \right).
  \label{eq:clock-factorisation-new}
\end{equation}
The bath state used in all preceding theorems is therefore exactly
\begin{equation}
  \tau_B
  =
  \frac{\Id_{B_s}}{\dim\mathcal H_{B_s}}
  \otimes
  \frac{\Id_{\rm clock}^{(1)}}{K},
  \label{eq:clock-bath-new}
\end{equation}
where \(K\) is the clock length.  Every product input
\(\rho_S\otimes\tau_B\) has the same mean energy as the maximally mixed
state in the sector.
\end{proposition}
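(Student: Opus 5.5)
The argument has three parts, proved in order: conservation of $N_c$, the tensor factorisation of the $N_c=1$ sector, and the energy identity.

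\emph{Conservation of $N_c$.} In the second-quantized realisation each hopping term has the form $u_t\otimes c_{t+1}^\dagger c_t+\mathrm{h.c.}$, and the identity term is $\sum_t c_t^\dagger c_t$ (equal to $\Id$ on the sector) or a scalar. Each bilinear $c_{t+1}^\dagger c_t$ commutes with $N_c$, since $[N_c,c_{t+1}^\dagger c_t]=c_{t+1}^\dagger c_t-c_{t+1}^\dagger c_t=0$. This follows from the canonical relations $[N_c,c_s^\dagger]=c_s^\dagger$ and $[N_c,c_s]=-c_s$, and it holds for either fermionic or hard-core bosonic modes. The spin gates act on a different tensor factor and commute with $N_c$ trivially. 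Hence $[N_c,H_{\rm FK}]=0$, the eigenspace $N_c=1$ is invariant, and $e^{-itH}$ preserves it.

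\emph{Factorisation and the bath state.} The sector $N_c=1$ is the tensor product of the full spin space with $\mathrm{span}\{c_t^\dagger\ket{\mathrm{vac}}\}_{t=0}^{K-1}\cong\mathbb C^K$. The point is that $N_c$ acts only on the clock modes, so the projector onto the sector is $\Id_{\rm spin}\otimes\Pi^{(1)}_{\rm clock}$. Since the spin space is already $\mathcal H_S\otimes\mathcal H_{B_s}$, regrouping the factors gives Eq.~\eqref{eq:clock-factorisation-new}. Under the identification $c_t^\dagger\ket{\mathrm{vac}}\leftrightarrow\ket t$, the restricted Hamiltonian is exactly Eq.~\eqref{eq:FK}. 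The maximally mixed state on the bath factor is then the product of the two normalised identities, which is Eq.~\eqref{eq:clock-bath-new}. This is also the state used in Eq.~\eqref{eq:prep} with $m=2^{n-s}K$, so the preceding theorems apply verbatim.

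\emph{Energy identity.} For a product input, compute $\Tr[H_{\rm FK}(\rho_S\otimes\tau_B)]=\Tr_S[\rho_S\,\Tr_B(H_{\rm FK}\tau_B)]$. Each hopping term contributes the clock trace $\Tr\ket{t+1}\!\bra t=0$ (or of its adjoint), which vanishes for $K\ge2$ on the periodic ring; this is the same observation as in Proposition~\ref{prop:FK-energy-orthogonal-new}. Consequently $\Tr_B(H_{\rm FK}\tau_B)=\Id_S$, and the mean energy equals $\Tr\rho_S=1$. The same computation gives $\Tr H_{\rm FK}/D=1$ for the maximally mixed state, so the two mean energies agree.

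\emph{Main obstacle.} The only delicate point is the fermion-sign bookkeeping in the factorisation: one must check that the one-particle states carry no Jordan--Wigner strings that entangle the clock with the spins. This is resolved by noting that the spins are distinguishable degrees of freedom outside the fermionic algebra, so no string ever involves them, and within the one-particle sector $c_{t+1}^\dagger c_t$ has matrix elements $\ket{t+1}\!\bra t$ with no signs. For the periodic closing bond the sign is absent as well, because there is only one particle. If a sign did appear, it could be absorbed into a redefinition of $u_{K-1}$, which would leave the spectrum and the locality properties unchanged.
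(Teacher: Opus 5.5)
Your proposal is correct and follows the paper's proof essentially step by step. You get number conservation because each hopping term pairs one clock creation operator with one annihilation operator. You get the factorisation because $N_c$ acts only on the clock, so the sector is the full spin space tensored with the $K$-dimensional one-particle clock space. And you get the energy identity from $\frac1m\Tr_B H_{\rm FK}=\Id_S$, which rests on the vanishing clock trace of every hopping term.

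Your fermion-sign check is a harmless addition, since no sign arises in the one-particle sector. However, drop the fallback claim that a sign on the closing bond could be absorbed into $u_{K-1}$ without changing the spectrum. That absorption replaces $U_F$ by $-U_F$, which shifts every Floquet phase by $\pi$, i.e.\ threads a $\pi$ flux through the clock ring. For the binary LFSR this removes the zero energy and so changes the spectrum.
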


\begin{proof}
Each hopping term in the second-quantized clock contains one clock
creation and one clock annihilation operator, so it preserves clock
particle number.  The identity term also preserves it.  Hence
\[
  [H_{\rm FK},N_c]=0.
\]
In the sector containing exactly one clock particle, the clock Hilbert
space is the \(K\)-dimensional span of the one-particle position states
\(\{\ket t\}_{t=0}^{K-1}\).  If \(S\) contains only data spins, the
remaining data spins and the entire one-particle clock belong to the
bath, which proves Eq.~\eqref{eq:clock-factorisation-new} and the
maximally mixed bath formula
Eq.~\eqref{eq:clock-bath-new}.

Finally,
\[
  \Tr[
  H_{\rm FK}(\rho_S\otimes\tau_B)]
  =
  \Tr\left[
  \rho_S
  \frac1m\Tr_BH_{\rm FK}
  \right].
\]
The proof of
Proposition~\ref{prop:FK-energy-orthogonal-new} gives
\[
  \frac1m\Tr_BH_{\rm FK}=\Id_S.
\]
Since \(\Tr\rho_S=1\), the mean energy equals \(1\) for every local
input.  The maximally mixed state in the one-particle sector has the same
mean energy because \(\Tr H_{\rm FK}/D=1\).
\end{proof}

\begin{remark}[Why the subsystem is spin-only]
If clock sites themselves are split between system and bath, the fixed
one-clock-particle sector is a direct sum of local charge sectors rather
than a tensor product across that spatial cut.  The preparation
\(\rho_S\otimes\tau_B\) used throughout these notes then does not have the
assumed tensor-product form.  All explicit clock theorems therefore refer
to fixed spin-only subsystems.
\end{remark}

\begin{lemma}[Bad-time fraction]
\label{lem:bad-time-fraction-new}
Let \(f(t)\ge0\).  If
\[
  \frac1T
  \int_{t_0}^{t_0+T}f(t)^2dt
  \le E_T,
\]
then for every \(\eta>0\),
\begin{equation}
  \frac1T
  \meas
  \left\{
  t\in[t_0,t_0+T]:
  f(t)\ge\eta
  \right\}
  \le
  \frac{E_T}{\eta^2}.
  \label{eq:bad-time-fraction-new}
\end{equation}
\end{lemma}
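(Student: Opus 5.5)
This is Chebyshev--Markov's inequality applied to the normalised Lebesgue measure on the window. Fix \(t_0\), \(T>0\) and \(\eta>0\), and let
\[
B_\eta=\{t\in[t_0,t_0+T]:f(t)\ge\eta\}.
\]
I will take \(f\) to be measurable. This holds in every application, since there \(f\) is the square root of the Borel functions \(e_H\) or \(f_H\) from Lemma~\ref{lem:window-measurability}. With that assumption \(B_\eta\) is a Lebesgue-measurable subset of the window, and \(\meas B_\eta\) is well defined.

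The only step is the pointwise inequality
\[
\eta^2\,\mathbbm 1_{B_\eta}(t)\le f(t)^2,\qquad t\in[t_0,t_0+T].
\]
On \(B_\eta\) it holds because \(f\ge\eta\ge0\). Off \(B_\eta\) the left side is zero and the right side is nonnegative. Integrating this inequality over \([t_0,t_0+T]\) and dividing by \(T\) gives
\[
\frac{\eta^2}{T}\meas B_\eta\le\frac1T\int_{t_0}^{t_0+T}f(t)^2\,dt\le E_T.
\]
Dividing by \(\eta^2>0\) yields Eq.~\eqref{eq:bad-time-fraction-new}.

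There is no real obstacle here; measurability is the only point to check. I would also note how the lemma is used. Take \(f(t)=\sup_{\rho_S}\Dt(\Phi_t(\rho_S),\cdot)\), so that \(f^2\) is the pointwise worst-input error, and take \(E_T=\Ewin_H(T)\) or \(\Gwin_H(T)\). The lemma then bounds the fraction of bad times in every window. Because the input supremum sits inside \(f\), the complementary good times work for all inputs simultaneously.
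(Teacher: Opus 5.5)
Your proof is correct and is the same Chebyshev--Markov argument the paper uses: on the bad set \(f^2\ge\eta^2\), so integrating over the window and rearranging gives the bound. Your explicit measurability remark, which the paper leaves implicit, is a harmless and reasonable addition.
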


\begin{proof}
On the set where \(f(t)\ge\eta\), one has
\(f(t)^2\ge\eta^2\).  Therefore
\[
  E_T
  \ge
  \frac{\eta^2}{T}
  \meas\{t\in[t_0,t_0+T]:f(t)\ge\eta\}.
\]
Rearrange.
\end{proof}

\begin{corollary}[Quasi-linear fixed-accuracy windows]
\label{cor:quasilinear-fixed-accuracy-new}
For the binary LFSR clock, fix a trace-distance tolerance
\(\eta\in(0,1)\) and a tolerated bad-time fraction
\(\zeta\in(0,1)\).  For all sufficiently large \(n\),
\begin{equation}
  T
  =
  O_S\left(
  \frac{n}{\eta^2\zeta}
  \log\frac{en}{\eta^2\zeta}
  \right)
  \label{eq:quasilinear-window-new}
\end{equation}
is sufficient to guarantee, uniformly in the window starting time,
that the same good-time set works for all local inputs:
\begin{equation}
  \frac1T
  \meas
  \left\{
  t\in[t_0,t_0+T]:
  \sup_{\rho_S}
  \Dt\left(
  \Phi_t(\rho_S),\frac{\Id_S}{2^s}
  \right)
  \ge\eta
  \right\}
  \le\zeta.
  \label{eq:quasilinear-bad-new}
\end{equation}
For fixed \(\eta,\zeta\), this is \(T=O_S(n\log n)\).
\end{corollary}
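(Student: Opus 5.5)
The plan is to combine the every-window LFSR estimate, Theorem~\ref{thm:LFSR-direct-T}, with the bad-time Markov bound, Lemma~\ref{lem:bad-time-fraction-new}. Define
\[
f(t)=\sup_{\rho_S}\Dt\!\left(\Phi_t(\rho_S),\Id_S/2^s\right).
\]
Since the input supremum already sits inside the time integral, Theorem~\ref{thm:LFSR-direct-T} gives, for every \(t_0\ge0\) and every \(T\ge1\),
\[
\frac1T\int_{t_0}^{t_0+T}f(t)^2\,dt\le E_T:=C_s\left[\frac{n\log(eT)}{T}+2^{-n}\right].
\]
Lemma~\ref{lem:bad-time-fraction-new} then bounds the bad-time fraction by \(E_T/\eta^2\). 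The good set \(\{f<\eta\}\) is defined through \(f\), so it works simultaneously for all inputs, and the bound is uniform in \(t_0\). It therefore suffices to make \(E_T\le\eta^2\zeta\).

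I will split this requirement into two halves, each at most \(\eta^2\zeta/2\).

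\emph{Exponential term.} The condition \(C_s2^{-n}\le\eta^2\zeta/2\) holds for all \(n\ge n_0(s,\eta,\zeta)\). This is exactly where the hypothesis ``sufficiently large \(n\)'' enters.

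\emph{Dynamical term.} Here the condition is \(n\log(eT)/T\le\eta^2\zeta/(2C_s)\). I write \(a=\eta^2\zeta/(2C_s)\), so the requirement is \(\log(eT)/T\le a/n\), and I choose
\[
T=\frac{Cn}{a}\,\log\frac{en}{a}
\]
with \(C\) an absolute constant. The key step is the elementary inversion of \(x\mapsto x/\log(ex)\): for \(x\ge1\), \(T\ge 2x\log(ex)\) implies \(T/\log(eT)\ge x\). I would check this directly. Because \(u\mapsto u/\log(eu)\) is increasing on \(u\ge1\), it is enough to verify the inequality at \(T=2x\log(ex)\). There one needs \(\log(eT)\le2\log(ex)\), and this reduces to
\[
e\cdot2x\log(ex)\le(ex)^2,
\]
i.e.\ \(2\log(ex)\le ex\), which is true for \(x\ge1\).

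Apply this with \(x=n/a\). Since \(a\le1/2\) whenever \(C_s\ge1\), we have \(x\ge1\). Absorbing \(C_s\) into the \(O_S\) constant then gives the stated window
\[
T=O_S\!\left(\frac{n}{\eta^2\zeta}\log\frac{en}{\eta^2\zeta}\right).
\]
The condition \(T\ge1\) required by Theorem~\ref{thm:LFSR-direct-T} is automatic for this choice.

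Finally, fixing \(\eta\) and \(\zeta\) makes the window \(O_S(n\log n)\). The only slightly delicate point is this inversion step. Its key input is that the \(\log T\) factor in Theorem~\ref{thm:LFSR-direct-T} costs only one extra logarithm rather than a worse power of \(n\); everything else is bookkeeping with constants.
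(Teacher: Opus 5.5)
Your proposal is correct and follows the paper's own proof. Both combine Theorem~\ref{thm:LFSR-direct-T} with Lemma~\ref{lem:bad-time-fraction-new} and split \(E_T\le\eta^2\zeta\) into two halves: the dynamical term, handled by an \(S\)-dependent constant in the choice of \(T\), and the \(2^{-n}\) term, handled by taking \(n\) large. Your explicit inversion \(T\ge 2x\log(ex)\Rightarrow T/\log(eT)\ge x\) simply makes precise the paper's ``\(C\) large enough'' step.
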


\begin{proof}
Theorem~\ref{thm:LFSR-direct-T} gives
\[
  E_T
  \le
  C_S
  \left[
  \frac{n\log(eT)}{T}
  +
  2^{-n}
  \right].
\]
Choose
\[
  T
  =
  C\frac{n}{\eta^2\zeta}
  \log\frac{en}{\eta^2\zeta}
\]
with \(C\) large enough, depending only on \(S\), that the first term is
at most \(\eta^2\zeta/2\).  For sufficiently large \(n\), the
exponential term is also at most \(\eta^2\zeta/2\).  Hence
\(E_T\le\eta^2\zeta\). Apply
Lemma~\ref{lem:bad-time-fraction-new} with
\[
 f(t)=\sup_{\rho_S}
 \Dt\left(\Phi_t(\rho_S),\frac{\Id_S}{2^s}\right).
\]
\end{proof}

\begin{remark}
The physical spin--clock Hamiltonians are static and geometrically local,
but their one-clock-particle sectors are exactly solvable and decompose
into Floquet-labelled tight-binding sectors.  The explicit clock theorems
are therefore not generic interacting-Hamiltonian theorems.  Their role is
to demonstrate rigorously that locally accessible spectral
anti-concentration can coexist with exponentially fine microscopic
spectral structure.
\end{remark}

\section{The structured random commuting ensemble}\label{sec:supp6}
\subsection{Exact ensemble specification and locality}

Take an even integer \(M\ge2\), an integer \(\xi\ge1\), and
\[
 N=M\xi,\qquad Q=2^\xi,\qquad D=2^N.
\]
Partition the qubit chain into \(M\) consecutive patches of exactly
\(\xi\) qubits.  We consider sizes \(N=M\xi\), so every patch has
the same dimension.  On patch
\(i\), define
\[
 D_i(J)=\sum_{z=1}^{Q}J_{i,z}\ket z\!\bra z_i,\qquad -1\le J_{i,z}\le1.
\]
Let
\[
 V=V_{\mathrm{odd}}V_{\mathrm{even}},\qquad
 V_{\mathrm{odd}}=\bigotimes_{\substack{i=1\\i\ {\rm odd}}}^{M-1}V_{i,i+1},
 \qquad
 V_{\mathrm{even}}=\bigotimes_{\substack{i=2\\i\ {\rm even}}}^{M-2}V_{i,i+1}.
\]
The two-patch gates are mutually independent Haar unitaries; an empty
product is the identity.  They are independent of all patch energies.
Define one time-independent Hamiltonian by
\begin{equation}
 H_N(J,V)=V^\dagger\left(\sum_{i=1}^{M}D_i(J)\right)V
         =\sum_{i=1}^{M}h_i(J,V).
 \label{eq:nearly-local-H}
\end{equation}
Our reference energy law is independent uniform \(J_{i,z}\) on
\([-1,1]\).  Later we allow independent, nonuniform bounded energy
densities, without changing the law of \(V\).

\begin{proposition}[Commutativity, bounded term incidence, and conserved operators]
\label{prop:structured-scope}
The terms \(h_i=V^\dagger D_iV\) have norm at most one and commute
pairwise.  Each is supported on at most five adjacent patches.  A fixed
physical qubit belongs to at most five such term supports.
Every operator
\[
 \Pi_{i,z}=V^\dagger
       \left(\ket z\!\bra z_i\otimes\Id_{\bar i}\right)V
\]
commutes with \(H_N\).  Thus the model has fixed on-site qubit dimension,
\(O(\xi)\)-body interactions of range \(O(\xi)\), and an extensive
commuting algebra of conserved operators.
\end{proposition}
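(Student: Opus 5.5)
The plan is to verify each assertion of Proposition~\ref{prop:structured-scope} directly from the definition in Eq.~\eqref{eq:nearly-local-H}, using only that conjugation by the fixed unitary \(V\) preserves norms and commutators. This is a structural check, and I expect no analytic difficulty.

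\textbf{Norms and commutativity.} Since each \(J_{i,z}\in[-1,1]\), the diagonal operator \(D_i(J)\) satisfies \(\|D_i\|_\infty\le1\). Unitary conjugation preserves the operator norm, so \(\|h_i\|_\infty\le1\). All \(D_i\otimes\Id_{\bar i}\) are diagonal in the common computational product basis, so they commute pairwise, and
\[
 [h_i,h_j]=V^\dagger[D_i,D_j]V=0 .
\]
The projectors \(\ket z\!\bra z_i\otimes\Id_{\bar i}\) are diagonal in the same basis. They therefore commute with \(\sum_jD_j\), and conjugating by \(V\) shows that \(\Pi_{i,z}\) commutes with \(H_N\). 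Taking all \(i\) and \(z\) gives an extensive commuting family, with \(MQ\) projectors spanning an algebra of dimension \(\ge M\).

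\textbf{Locality, the only step needing care.} Write \(h_i=V_{\rm even}^\dagger V_{\rm odd}^\dagger D_iV_{\rm odd}V_{\rm even}\), noting that \(V=V_{\rm odd}V_{\rm even}\), so \(V^\dagger=V_{\rm even}^\dagger V_{\rm odd}^\dagger\). In the inner conjugation by \(V_{\rm odd}\), every gate not touching patch \(i\) cancels. Exactly one odd-matching gate contains patch \(i\), so the intermediate operator is supported on that pair: \(\{i,i+1\}\) if \(i\) is odd, \(\{i-1,i\}\) if \(i\) is even. In the outer conjugation by \(V_{\rm even}\), only gates meeting this pair survive, which is at most two adjacent even gates. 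This enlarges the support by at most one patch on each side, giving a set contained in \(\{i-2,\ldots,i+2\}\) after truncation at the boundary. Hence each term lives on at most five adjacent patches, that is, on \(\le5\xi\) qubits within range \(O(\xi)\).

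\textbf{Incidence.} Conversely, a qubit in patch \(j\) can lie in \(\operatorname{supp}h_i\) only if \(|i-j|\le2\). That leaves at most five indices \(i\), so each qubit belongs to at most five term supports. Since each physical site is a qubit and each term acts on \(O(\xi)\) qubits, the stated summary follows.
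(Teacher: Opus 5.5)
Your proof is correct and follows the paper's argument essentially step by step. It uses norm and commutator preservation under conjugation by \(V\), support growth of one patch per matching layer, and the converse incidence count \(|i-j|\le2\); your layer-by-layer tracking in fact gives at most four patches per term.

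The one point to tighten is extensivity. The paper certifies it with the \(N\) trace-orthonormal conserved operators \(V^\dagger Z_jV\). Your stated bound \(\ge M=N/\xi\) is only \(\Theta(N/\log N)\). However, the linear span of the \(\Pi_{i,z}\) actually has dimension \(1+M(Q-1)\ge N+1\), so the fix is immediate.
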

\begin{proof}
Unitary conjugation preserves operator norm, and
\(\|D_i\|_\infty\le1\).  Since \(D_i,D_j\) act on disjoint tensor
factors,
\[
 [h_i,h_j]=V^\dagger[D_i,D_j]V=0.
\]
Conjugation by one matching layer can enlarge the support of an operator
only by adding patches paired to its present support.  After two layers,
an operator initially on patch \(i\) has support within
\(\{i-2,i-1,i,i+1,i+2\}\cap\{1,\ldots,M\}\).
Conversely, patch \(j\) can belong to the support of \(h_i\) only if
\(|i-j|\le2\); at most five indices satisfy this.  The same conjugation
argument applies to \(\Pi_{i,z}\), and its unconjugated form commutes with
every \(D_j\).  More explicitly, the \(N\) operators
\(Q_j=V^\dagger Z_jV\), where \(Z_j\) is the Pauli operator at
physical site \(j\), commute with one another and with \(H_N\).
Unitary invariance gives
\(D^{-1}\Tr(Q_jQ_k)=D^{-1}\Tr(Z_jZ_k)=\delta_{jk}\),
so there are \(N\) independent conserved directions.  This proves
the stated extensivity.
\end{proof}

\begin{remark}[Support of evolved observables]
``Bounded incidence'' here counts Hamiltonian terms containing a qubit;
it does not assert a bounded number of other qubits in each interaction.
For a local observable \(O\), terms \(h_i\) initially disjoint from its
support commute with \(O\) and with all other \(h_j\).  In the product
\(e^{itH}Oe^{-itH}\), their two exponential factors can therefore be
brought next to \(O\) and canceled.  Only terms initially overlapping
\(O\) contribute.  For a fixed-size subsystem their union has
\(O(\xi)\) qubits, independently of time.  Local relaxation in this
model does not establish transport through an extended noncommuting
chain.
\end{remark}

\subsection{Reference-time design estimate}

For a unitary ensemble \(\mathcal U\) on dimension \(D\), write
\[
 \mathcal M_{\mathcal U}^{(2)}(X)
 =\mathbb E_{U\sim\mathcal U}
     U^{\otimes2}XU^{\dagger\otimes2}.
\]
An additive two-design error \(\varepsilon\) means
\(\|\mathcal M_{\mathcal U}^{(2)}-
\mathcal M_{\rm Haar}^{(2)}\|_\diamond\le\varepsilon\).
The diamond norm is the supremum of the induced trace norm after
tensoring the map with an identity map on an arbitrary finite ancilla.
In particular it bounds the induced trace norm with no ancilla.

\begin{theorem}[Reference-time design input: exact-Haar specialization]
\label{thm:Cui-design-input}
There is a universal constant \(C_0\) such that, for the ensemble
Eq.~\eqref{eq:nearly-local-H} with the reference uniform energy law,
\begin{equation}
 \xi\ge2\log_2(2N/\varepsilon)+C_0
 \quad\Longrightarrow\quad
 \left\|\mathcal M_{\{e^{-i\pi H_N}\}}^{(2)}
       -\mathcal M_{\rm Haar}^{(2)}\right\|_\diamond\le\varepsilon.
 \label{eq:reference-design-condition}
\end{equation}
This is the \(k=2\), exact-Haar-gate specialization of Theorem 3 of
Ref.~\cite{CuiSchusterMaoHuangBrandao2025}. At time \(\pi\),
independent uniform energies on \([-1,1]\) give independent uniform
phases on the unit circle. A non-adaptive two-copy experiment is a
special case of the \(k=2\) experiments allowed in the measurable-error
definition. The cited theorem therefore also bounds the additive
two-copy moment map used below.
\end{theorem}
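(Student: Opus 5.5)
The proof will be a reduction to the cited result rather than an independent design argument. I will carry it out in three steps: identify the time-\(\pi\) unitary with the ensemble treated in Theorem~3 of Ref.~\cite{CuiSchusterMaoHuangBrandao2025}, check the parameter condition, and translate that theorem's notion of error into the diamond norm of the two-copy moment map.

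\emph{First, the unitary at time \(\pi\).} Since \(H_N=V^\dagger(\sum_iD_i)V\) and the \(D_i\) act on disjoint patches,
\[
 e^{-i\pi H_N}=V^\dagger\Bigl(\bigotimes_{i=1}^{M}\Theta_i\Bigr)V,
 \qquad
 \Theta_i=\sum_{z}e^{-i\pi J_{i,z}}\ket z\!\bra z_i .
\]
If \(J\) is uniform on \([-1,1]\), then \(\pi J\) is uniform on \([-\pi,\pi]\). Hence every \(\Theta_i\) is a diagonal unitary with independent uniform phases, and these phases are independent of the Haar gates in \(V=V_{\rm odd}V_{\rm even}\).

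I will then match this product, factor by factor, with the low-depth patched ensemble of the cited theorem. That ensemble uses patches of \(\xi\) qubits, two matching layers of independent exact-Haar two-patch gates on the pairings \((2,3),(4,5),\ldots\) and \((1,2),(3,4),\ldots\), and i.i.d.\ random phases in the middle. The conjugate layer \(V^\dagger\) is accounted for by Haar invariance: each \(V_{i,i+1}^\dagger\) is Haar-distributed and independent of the other factors. The theorem is stated for general approximate local gates, and I will specialize it to exact Haar gates, \(k=2\), and \(n=N\) qubits.

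\emph{Second, the threshold.} The cited bound says measurable error \(\le\varepsilon'\) once \(\xi\ge 2\log_2(N k/\varepsilon')+O(1)\), or an equivalent \(\log(Nk^2/\varepsilon')\)-type form. At \(k=2\) this is our condition up to the constant.

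\emph{Third, the error notions.} The measurable error bounds the total-variation distance of outcome distributions for any \(k\)-query experiment. A non-adaptive two-copy experiment prepares \(\rho\) on two copies plus an arbitrary ancilla, applies \(U^{\otimes2}\otimes\mathrm{id}\) with \(U\) drawn from the ensemble, and measures. Its output state is \((\mathcal M^{(2)}\otimes\mathrm{id})(\rho)\). By Helstrom's theorem, the supremum over \(\rho\) and POVMs of the total-variation distance equals \(\tfrac12\|\mathcal M^{(2)}_{\pi}-\mathcal M^{(2)}_{\rm Haar}\|_\diamond\). Applying the cited theorem with \(\varepsilon'=\varepsilon/2\) therefore gives diamond-norm error \(\le\varepsilon\). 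The resulting change \(2\log_2 2=2\) in the threshold is absorbed into the universal constant \(C_0\).

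I expect the main obstacle to be the first step. One must verify carefully that the cited theorem's ensemble, including its conventions on which layer comes first, its boundary patches, and its possibly reversed order of layers inside and outside the phase layer, coincides with ours in distribution. I would first try to settle this by writing both ensembles as explicit products of independent factors and invoking left/right Haar invariance gate by gate. If the orderings genuinely differ, I would fall back on unitary invariance of \(\mathcal M^{(2)}_{\rm Haar}\) and the invariance of the diamond norm under conjugation by fixed unitaries.
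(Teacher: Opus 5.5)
Your proposal is correct and matches the paper's argument: the statement is the $k=2$, exact-Haar specialization of Theorem~3 of Ref.~\cite{CuiSchusterMaoHuangBrandao2025}, using that uniform energies give uniform phases at $t=\pi$ and that a non-adaptive two-copy experiment is a special $k=2$ experiment, and your Helstrom factor of two (absorbed into $C_0$ by applying the cited theorem at $\varepsilon/2$) is a correct detail the paper leaves implicit. The one point to drop is the claim that $V^\dagger$ is ``accounted for by Haar invariance'': $V^\dagger$ is the adjoint of the same $V$, not an independent layer, and no such step is needed because the cited theorem covers exactly the conjugated form $V^\dagger e^{-itD}V$ that defines $H_N$.
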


This input is used only for the reference distribution at time \(\pi\).
No all-late-time global design theorem is assumed or claimed in the
argument below.  Exact Haar gates also avoid confusing a usual two-design
with the stronger local-gate design notion required in the source.

\begin{lemma}[Admissible logarithmic patch sizes]
\label{lem:admissible-patches}
For every fixed \(b>0\), there are arbitrarily large system sizes
\(N=M\xi\), with \(M\) even, \(\xi=O_b(\log N)\), satisfying
Eq.~\eqref{eq:reference-design-condition} with
\(\varepsilon=N^{-b}\).
\end{lemma}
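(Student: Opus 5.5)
We construct the sizes explicitly by choosing the patch length first and then the number of patches. Fix $b>0$ and let $C_0$ be the universal constant of Theorem~\ref{thm:Cui-design-input}. With $\varepsilon=N^{-b}$, the condition in Eq.~\eqref{eq:reference-design-condition} reads
\[
 \xi\ge 2\log_2\!\bigl(2N^{1+b}\bigr)+C_0=2+2(1+b)\log_2 N+C_0 .
\]
We want arbitrarily large $N=M\xi$ with $M$ even and $\xi$ of order $\log N$.

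For a large integer $k$, set $M_k=2^{k}$, which is even, and let $\xi_k$ be the smallest integer with
\[
 \xi_k\ge 2+2(1+b)\bigl(k+\log_2\xi_k\bigr)+C_0 .
\]
Since $N_k=M_k\xi_k$, we have $\log_2 N_k=k+\log_2\xi_k$. The inequality defining $\xi_k$ is therefore exactly the design condition at $N=N_k$.

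The only step needing care is that $\xi$ enters on both sides. Consider the function
\[
 g(x)=x-2(1+b)\log_2 x-2-2(1+b)k-C_0 .
\]
It eventually increases and tends to $+\infty$, so a smallest admissible integer $\xi_k$ exists. To bound $\xi_k$ from above, try the candidate $x=c_b k$ with $c_b=4(1+b)$. For this $x$, the left side grows like $4(1+b)k$, while the right side grows like $2(1+b)k+O_b(\log k)$. Hence $g(c_bk)\ge0$ for all $k\ge k_0(b,C_0)$, which gives $\xi_k\le c_bk$. The trivial lower bound $\xi_k\ge 2(1+b)k$ follows from $\log_2\xi_k\ge0$.

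Consequently $N_k=2^k\xi_k\to\infty$. Moreover $\log_2N_k=k+O(\log k)$, so $\xi_k\le c_bk\le c_b\log_2N_k=O_b(\log N_k)$.

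These $N_k$ are the required sizes. Each has $M_k$ even, exactly $\xi_k$ qubits per patch with no leftover patch, $\xi_k=O_b(\log N_k)$, and satisfies Eq.~\eqref{eq:reference-design-condition} with $\varepsilon=N_k^{-b}$. No density of admissible sizes is claimed beyond this sequence.
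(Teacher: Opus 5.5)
Your proposal is correct and is essentially the paper's argument. The paper lets $M$ range over large even integers, sets $\xi=\lceil 4(1+b)\log_2M+C_0+2\rceil$ explicitly, and uses $\xi\le M$ (so $\log_2N\le 2\log_2M$) to verify the design condition; this is the same check you perform on your candidate $x=4(1+b)k$ with $k=\log_2M$, so your minimal-solution step is an optional refinement rather than a different idea. One small point: $c_bk$ need not be an integer, so from $g(c_bk)\ge0$ you should conclude $\xi_k\le\lceil c_bk\rceil$ (using that $g$ is increasing for $x>2(1+b)/\ln 2$), and this does not affect $\xi_k=O_b(\log N_k)$.
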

\begin{proof}
Let \(M\) run over sufficiently large even integers and choose
\[
 \xi=\left\lceil4(1+b)\log_2 M+C_0+2\right\rceil,\qquad N=M\xi.
\]
For all sufficiently large \(M\), \(\xi\le M\): the ratio
\(\log M/M\) tends to zero, as follows for example from
\(\log x=\int_1^x u^{-1}du\le2\sqrt x\) for \(x\ge1\).
It follows that \(\log_2N\le2\log_2M\).  Hence
\[
 2\log_2(2N/N^{-b})+C_0
 =2+2(1+b)\log_2N+C_0
 \le 4(1+b)\log_2M+C_0+2
 \le\xi.
\]
Also \(\xi=O_b(\log M)=O_b(\log N)\), and \(N\) is unbounded.
\end{proof}

\subsection{Two-copy moments}

\begin{lemma}[Haar two-copy twirl, with its elementary commutant proof]
\label{lem:haar-two-copy-audited}
Let \(F\) swap the two copies of \(\mathbb C^D\), where \(D\ge2\).
For every operator \(X\),
\begin{equation}
 \mathcal M_{\rm Haar}^{(2)}(X)
 =\frac{D\Tr X-\Tr(FX)}{D(D^2-1)}\,\Id
  +\frac{D\Tr(FX)-\Tr X}{D(D^2-1)}\,F.
 \label{eq:haar-two-copy-audited}
\end{equation}
\end{lemma}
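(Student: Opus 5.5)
The proof has two steps. First I show that $\mathcal M_{\rm Haar}^{(2)}(X)$ lies in the commutant of $\{U^{\otimes2}\}$ and that this commutant is $\mathrm{span}\{\Id,F\}$. Then I fix the two coefficients by matching two traces.

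\emph{Invariance.} Haar invariance gives $V^{\otimes2}\mathcal M_{\rm Haar}^{(2)}(X)V^{\dagger\otimes2}=\mathcal M_{\rm Haar}^{(2)}(X)$ for every unitary $V$. So the image commutes with every $V^{\otimes2}$.

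\emph{Commutant.} This is the step that needs real work, and I would do it elementarily. The span of $\{V^{\otimes2}\}$ is the algebra generated by all $V\otimes V$. Differentiating $e^{isA}\otimes e^{isA}$ at $s=0$ for Hermitian $A$ shows that any $Y$ in the commutant also commutes with $A\otimes\Id+\Id\otimes A$ for every Hermitian $A$, and by complex linearity for every matrix $A$.

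Now take $A=\ket{i}\!\bra{j}$ and write $Y=\sum Y_{ab,cd}\ket{ab}\!\bra{cd}$. The commutation relations give linear constraints on the entries.
\begin{itemize}
\item Using diagonal $A=\ket i\!\bra i$, i.e.\ commutation with $\mathrm{diag}$ phases $V=\mathrm{diag}(e^{i\theta_k})$, forces $Y_{ab,cd}=0$ unless $\{a,b\}=\{c,d\}$ as multisets.
\item Using permutation unitaries $V$ shows that the surviving entries depend only on the pattern: $Y_{ab,ab}=\alpha$ and $Y_{ab,ba}=\beta$ for $a\ne b$, and $Y_{aa,aa}=\gamma$.
\item Using $A=\ket{1}\!\bra{2}$ on the relevant two-dimensional slice relates the three constants. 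Evaluating $[A\otimes\Id+\Id\otimes A,Y]$ on $\ket{22}$ gives $\gamma=\alpha+\beta$.
\end{itemize}
Hence $Y=\alpha\Id+\beta F$, using $D\ge2$ so that off-diagonal patterns exist. This is also the one place where a representation-theoretic shortcut (Schur--Weyl duality) could be cited instead.

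\emph{Coefficients.} Write $\mathcal M^{(2)}_{\rm Haar}(X)=\alpha\Id+\beta F$. The map $X\mapsto U^{\otimes2}XU^{\dagger\otimes2}$ preserves $\Tr X$. Because $F$ commutes with $U^{\otimes2}$, it also preserves $\Tr(FX)$. Using $\Tr\Id=D^2$, $\Tr F=D$ and $F^2=\Id$, I get
\[
\Tr X=\alpha D^2+\beta D,\qquad \Tr(FX)=\alpha D+\beta D^2.
\]
The determinant of this system is $D^2(D^2-1)\ne0$ for $D\ge2$. Solving gives
\[
\alpha=\frac{D\Tr X-\Tr(FX)}{D(D^2-1)},\qquad \beta=\frac{D\Tr(FX)-\Tr X}{D(D^2-1)},
\]
which is Eq.~\eqref{eq:haar-two-copy-audited}.
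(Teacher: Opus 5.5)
Your proof is correct and follows the same route as the paper. Haar invariance puts the twirl in the commutant of $\{U^{\otimes2}\}$. Diagonal phases and permutations reduce any commutant element to $\alpha\Id+\beta F+(\gamma-\alpha-\beta)\sum_a\ket{aa}\!\bra{aa}$, one further relation gives $\gamma=\alpha+\beta$, and the coefficients then come from the same $2\times2$ trace system.

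The only difference is how you get $\gamma=\alpha+\beta$. You use the generator $\ket1\!\bra2\otimes\Id+\Id\otimes\ket1\!\bra2$ acting on $\ket{22}$. The paper instead uses a finite unitary sending $\ket1\mapsto(\ket1+\ket2)/\sqrt2$ and compares the $\ket{12}$ coefficients of $Z\ket{++}$ and $a\ket{++}$; both yield the same constraint.
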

\begin{proof}
Haar invariance implies that the twirled operator commutes with
\(U\otimes U\) for every unitary \(U\).
We verify that this commutant is spanned by \(\Id,F\).
Commutation with all diagonal phase matrices implies that its matrix
element \(\langle ij|Z|kl\rangle\) vanishes unless the multisets
\(\{i,j\}\) and \(\{k,l\}\) coincide.  To see this, vary each diagonal
phase independently; any nonzero difference of its two multiplicities
would change that matrix element by a nontrivial phase.
Commutation with all permutation matrices then makes all coefficients
of each of the three remaining types equal.  Consequently
\[
 Z=b\Id+cF+(a-b-c)\sum_i\ket{ii}\!\bra{ii}.
\]
A unitary taking \(\ket1\) to
\(\ket+=(\ket1+\ket2)/\sqrt2\) must take the eigenvector \(\ket{11}\),
of eigenvalue \(a\), to an eigenvector \(\ket{++}\) of the same
eigenvalue.  The coefficient of \(\ket{12}\) in \(Z\ket{++}\) is
\((b+c)/2\), while that in \(a\ket{++}\) is \(a/2\).
Thus \(a=b+c\), proving the commutant assertion.

Write the twirl as \(\alpha\Id+\beta F\).  Unitary invariance of trace,
and \(F(U\otimes U)=(U\otimes U)F\), give
\[
 D^2\alpha+D\beta=\Tr X,\qquad
 D\alpha+D^2\beta=\Tr(FX).
\]
Solving these two equations gives Eq.~\eqref{eq:haar-two-copy-audited}.
\end{proof}

\begin{lemma}[Scalar autocorrelation at the reference time]
\label{lem:design-suppresses-phi}
Let \(S\) consist of \(s\ge1\) qubits, \(d=2^s\), \(q=d^2-1\),
and let \(O_1,\ldots,O_q\) be its embedded nonidentity Pauli basis.
Define
\[
 \varphi_S(U)=\frac1q\sum_a c_a(U),\qquad
 c_a(U)=D^{-1}\Tr(O_aUO_aU^\dagger).
\]
If \(\mathcal U\) has additive two-design error at most \(\varepsilon\),
then
\begin{align}
 \mathbb E_{\mathcal U}|\varphi_S(U)|^2
 &\le \frac1{q(D^2-1)}+\varepsilon,\label{eq:design-suppresses-phi}\\
 \left|\mathbb E_{\mathcal U}\varphi_S(U)\right|
 &\le\varepsilon.\label{eq:design-controls-first}
\end{align}
\end{lemma}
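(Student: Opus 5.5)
The plan is to write both quantities as expectations of linear functionals of the two-copy and one-copy moment operators. After that, compare with Haar using the additive design error and evaluate the Haar values with Lemma~\ref{lem:haar-two-copy-audited}.

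\textbf{Second moment.} Since each \(O_a\) is Hermitian and \(c_a(U)\) is real, we can write
\[
 |\varphi_S(U)|^2=\frac1{q^2D^2}\sum_{a,b}\Tr\!\left[(O_a\otimes O_b)\,U^{\otimes2}(O_a\otimes O_b)U^{\dagger\otimes2}\right].
\]
Set \(X_{ab}=O_a\otimes O_b\) and \(Y=q^{-2}D^{-2}\sum_{a,b}X_{ab}\otimes(\cdot)\). The cleanest route is to define the linear functional \(\ell(Z)=q^{-2}D^{-2}\sum_{a,b}\Tr[X_{ab}Z_{ab}]\). Then \(\mathbb E|\varphi_S|^2=\sum_{a,b}q^{-2}D^{-2}\Tr[X_{ab}\mathcal M^{(2)}(X_{ab})]\).

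For the comparison with Haar, the goal is to bound
\[
 \frac1{q^2D^2}\sum_{a,b}\left|\Tr\!\left[X_{ab}(\mathcal M^{(2)}_{\mathcal U}-\mathcal M^{(2)}_{\rm Haar})(X_{ab})\right]\right|.
\]
Each summand is at most \(\|X_{ab}\|_\infty\|(\Delta\mathcal M)(X_{ab})\|_1\le\varepsilon\|X_{ab}\|_1=\varepsilon D^2\), using \(\|X_{ab}\|_\infty=1\) for Pauli strings and the induced trace-norm bound contained in the diamond norm. Summing the \(q^2\) terms gives at most \(\varepsilon\).

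The Haar value comes from Lemma~\ref{lem:haar-two-copy-audited}. Here \(\Tr X_{ab}=0\) because the \(O_a\) are traceless, and \(\Tr(FX_{ab})=\Tr(O_aO_b)=D\delta_{ab}\). The twirl is therefore
\[
 \mathcal M^{(2)}_{\rm Haar}(X_{ab})=\delta_{ab}\frac{-D\,\Id+D^2F}{D(D^2-1)}.
\]
Pairing with \(X_{aa}\) gives \(\Tr[X_{aa}\Id]=0\) and \(\Tr[X_{aa}F]=\Tr(O_a^2)=D\). Each diagonal term thus equals \(D^2/(D^2-1)\). Summing over \(q\) diagonal pairs and multiplying by \(q^{-2}D^{-2}\) gives \(1/[q(D^2-1)]\), which proves Eq.~\eqref{eq:design-suppresses-phi}.

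\textbf{First moment.} Here \(c_a(U)=D^{-1}\Tr[O_a\,\mathcal M^{(1)}(O_a)]\), where the first-moment map comes from the two-copy map by partial trace. One can write \(\mathcal M^{(1)}(O_a)\otimes \Id/D\)-type relations, but it is more direct to note that \(\Tr_2\) of \(\mathcal M^{(2)}(O_a\otimes\Id)\) equals \(D\,\mathcal M^{(1)}(O_a)\). Partial trace is trace-norm contractive, so
\[
 \|\mathcal M^{(1)}_{\mathcal U}(O_a)-\mathcal M^{(1)}_{\rm Haar}(O_a)\|_1\le D^{-1}\varepsilon\|O_a\otimes\Id\|_1=\varepsilon D.
\]
The Haar one-copy twirl gives \(\Tr(O_a)\Id/D=0\). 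Hence \(|\mathbb E c_a|\le D^{-1}\|O_a\|_\infty\varepsilon D=\varepsilon\), and averaging over \(a\) gives Eq.~\eqref{eq:design-controls-first}.

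The only delicate step is the norm bookkeeping: ensuring the additive diamond error is converted with \(\|X\|_\infty\|\cdot\|_1\) duality, so that the \(D^2\) from \(\|X_{ab}\|_1\) exactly cancels the normalisation \(D^{-2}\). Once that cancellation is checked, everything else is the explicit Haar computation.
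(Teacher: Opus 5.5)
Your proposal is correct and follows essentially the same argument as the paper: the real two-copy contraction, the term-by-term estimate $\bigl|\Tr[X_{ab}(\mathcal M^{(2)}_{\mathcal U}-\mathcal M^{(2)}_{\rm Haar})(X_{ab})]\bigr|\le\varepsilon\|X_{ab}\|_\infty\|X_{ab}\|_1=\varepsilon D^2$ absorbed by the $q^{-2}D^{-2}$ normalisation, and the Haar value from Lemma~\ref{lem:haar-two-copy-audited}. The only cosmetic difference is in the first moment, where the paper pairs $O_a\otimes\Id$ directly with the two-copy moment map (both $\Tr(O_a\otimes\Id)$ and $\Tr[F(O_a\otimes\Id)]$ vanish), whereas you take a partial trace to the one-copy map; the two routes are equivalent.
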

\begin{proof}
Every \(c_a(U)\) is real because conjugating its trace and cycling the
two Hermitian factors leaves it unchanged.  Therefore the correct
two-copy representation is
\begin{equation}
 |\varphi_S(U)|^2
 =\frac1{q^2D^2}\sum_{a,b}
   \Tr\!\left[
    (O_a\otimes O_b)U^{\otimes2}
    (O_a\otimes O_b)U^{\dagger\otimes2}\right].
 \label{eq:correct-two-copy-contraction}
\end{equation}
For each pair \(a,b\),
\(\|O_a\otimes O_b\|_\infty=1\) and
\(\|O_a\otimes O_b\|_1=D^2\).
The two-design norm bound therefore changes the expectation of each
normalized summand by at most \(\varepsilon\).  Indeed,
\(|\Tr XY|\le\|X\|_\infty\|Y\|_1\): writing a singular-value
decomposition \(Y=\sum_j s_j\ket{u_j}\!\bra{v_j}\) bounds the trace by
\(\sum_j s_j|\langle v_j|X|u_j\rangle|
 \le\|X\|_\infty\sum_j s_j\).
Averaging over \(q^2\) pairs preserves the same error bound.

For the Haar value, \(\Tr(O_a\otimes O_b)=0\) and
\(\Tr[F(O_a\otimes O_b)]=\Tr(O_aO_b)=D\delta_{ab}\).
Using Eq.~\eqref{eq:haar-two-copy-audited} and contracting again with
\(O_a\otimes O_b\) gives
\[
 \mathbb E_{\rm Haar}c_a(U)c_b(U)=\frac{\delta_{ab}}{D^2-1}.
\]
Summation over \(a,b\) gives the Haar value \(1/[q(D^2-1)]\).

For the first moment, use instead
\[
 c_a(U)=D^{-2}\Tr\!\left[
 (O_a\otimes\Id)U^{\otimes2}(O_a\otimes\Id)U^{\dagger\otimes2}\right].
\]
The same norm estimate bounds its design error by \(\varepsilon\).
Both \(\Tr(O_a\otimes\Id)\) and
\(\Tr[F(O_a\otimes\Id)]\) vanish, so its Haar expectation is zero.
Averaging over \(a\) proves Eq.~\eqref{eq:design-controls-first}.
\end{proof}

\subsection{Structural Bohr-frequency labels and the collision plateau}

Write \(\mathbf z=(z_1,\ldots,z_M)\) for a patch configuration.
The eigenvectors are \(V^\dagger\ket{\mathbf z}\), independent of \(J\),
with energies \(E_{\mathbf z}(J)=\sum_iJ_{i,z_i}\).
Define the nonnegative pair weights
\begin{equation}
 w_{\mathbf z\mathbf w}(V)
 =\frac1{qD}\sum_a
       |\langle\mathbf z|VO_aV^\dagger|\mathbf w\rangle|^2.
 \label{eq:structured-pair-weights}
\end{equation}
For every ordered pair define the integer vector, indexed by \((i,u)\),
\[
 \ell^{\mathbf z\mathbf w}_{i,u}
 =\mathbbm1_{\{u=z_i\}}-\mathbbm1_{\{u=w_i\}}.
\]
Let \(\mathcal I\) be the finite set of these vectors and group their
weights:
\begin{equation}
 W_\ell(V)=\sum_{\ell^{\mathbf z\mathbf w}=\ell}
                 w_{\mathbf z\mathbf w}(V),\qquad
 \beta(V)=\sum_{\ell\in\mathcal I}W_\ell(V)^2.
 \label{eq:structural-collision}
\end{equation}
The dot product \(\ell\cdot J\) sums over all indices \((i,u)\).

\begin{lemma}[Positive spectral representation and almost-sure frequency identification]
\label{lem:structural-frequencies}
The weights satisfy \(W_\ell\ge0\) and \(\sum_\ell W_\ell=1\), and
\begin{equation}
 \varphi_S(t;J,V)=\sum_{\ell\in\mathcal I}
           W_\ell(V)e^{-it\ell\cdot J}.
 \label{eq:structured-phi}
\end{equation}
If the \(J_{i,u}\) are independent and absolutely continuous, then outside
a probability-zero set, all frequencies with different vectors \(\ell\)
are distinct, and only \(\ell=0\) has frequency zero.  On this event
\begin{align}
 \pi_{0,S}(J,V)&=W_0(V),\\
 \widetilde\varphi_S(t;J,V)
     &=\sum_{\ell\ne0}W_\ell(V)e^{-it\ell\cdot J},\\
 \overline{|\varphi_S|^2}&=\beta(V).
 \label{eq:structural-zero-and-plateau}
\end{align}
No assumption of nondegenerate ordered energy gaps is made.
\end{lemma}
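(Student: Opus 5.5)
The plan is to diagonalise the Liouvillian explicitly in the product eigenbasis $\{V^\dagger\ket{\mathbf z}\}$ and then use the general representation of $\varphi_S$ as the characteristic function of $\mu_S$ from Lemma~\ref{lem:phi-autocorr-new}. Since $H_N=V^\dagger(\sum_iD_i)V$ and $\sum_iD_i$ is diagonal in the computational patch basis, the operators $V^\dagger\ket{\mathbf z}\!\bra{\mathbf w}V$ form an orthonormal (in the normalised inner product, up to the factor $D$) eigenbasis of $\calL$. Their eigenvalues are $E_{\mathbf z}-E_{\mathbf w}=\sum_i(J_{i,z_i}-J_{i,w_i})=\ell^{\mathbf z\mathbf w}\cdot J$.

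Expanding each $O_a$ in this basis gives
\[
\langle O_a,e^{-it\calL}O_a\rangle=\frac1D\sum_{\mathbf z,\mathbf w}|\langle\mathbf z|VO_aV^\dagger|\mathbf w\rangle|^2e^{-it(E_{\mathbf z}-E_{\mathbf w})}.
\]
Averaging over $a$ and grouping the pairs by $\ell$ yields Eq.~\eqref{eq:structured-phi}. Nonnegativity of $W_\ell$ is immediate. The normalisation $\sum_\ell W_\ell=1$ follows from Parseval: for each $a$, $\frac1D\sum_{\mathbf z,\mathbf w}|\langle\mathbf z|VO_aV^\dagger|\mathbf w\rangle|^2=\frac1D\Tr(O_a^2)=1$, and averaging over $q$ terms gives $1$. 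All of this is deterministic in $J$.

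For the almost-sure statements, note that $\mathcal I$ is finite. For $\ell\ne\ell'$ in $\mathcal I$, the event $(\ell-\ell')\cdot J=0$ requires a nonzero integer linear form in independent absolutely continuous variables to vanish. Conditioning on all coordinates except one index $(i,u)$ with $(\ell-\ell')_{i,u}\ne0$ shows that this event has probability zero, because a single absolutely continuous variable hits a fixed point with probability zero; Fubini then finishes the step. Taking $\ell'=0$ shows that only $\ell=0$ gives frequency zero. A finite union of null sets is null.

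On the complement of this null set, distinct $\ell$ give distinct frequencies, so the spectral measure is $\mu_S=\sum_\ell W_\ell\delta_{\ell\cdot J}$ with distinct atoms. Hence:
\begin{itemize}
\item $\pi_{0,S}=\mu_S(\{0\})=W_0$;
\item subtracting the zero atom gives the formula for $\widetilde\varphi_S$;
\item Ces\`aro orthogonality (Lemma~\ref{lem:Cesaro-orthogonality}) applied to $|\varphi_S|^2=\sum_{\ell,\ell'}W_\ell W_{\ell'}e^{-it(\ell-\ell')\cdot J}$ keeps only the diagonal $\ell=\ell'$, giving $\overline{|\varphi_S|^2}=\beta(V)$.
\end{itemize}

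The only delicate point, and the main obstacle, is to make clear that no gap-nondegeneracy is assumed. Distinct ordered pairs $(\mathbf z,\mathbf w)$ with the same $\ell$, such as all pairs with $\mathbf z=\mathbf w$ (which give $\ell=0$), are deliberately merged into a single atom. Genericity of $J$ is used only to separate different integer vectors $\ell$, which is exactly what the conditioning argument above provides.
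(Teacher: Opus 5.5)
Your proposal is correct and matches the paper's proof essentially step for step. Like the paper, you expand the normalised trace in the dressed product eigenbasis and group the pairs by $\ell$, use Parseval for $\sum_\ell W_\ell=1$, and use a one-coordinate conditioning argument with a finite union bound to separate frequencies almost surely. You then get the temporal identities from Ces\`aro orthogonality, and you merge pairs with the same $\ell$ into one atom, as the paper does when it absorbs structural resonances into $W_\ell$.
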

\begin{proof}
For each \(a\), summing squared matrix entries of \(VO_aV^\dagger\)
gives \(\Tr(O_a^\dagger O_a)=D\).  Thus the pair weights sum to one.
Expanding the normalized trace in the eigenbasis gives the pair
representation of \(\varphi_S\), and grouping pairs with equal vectors
gives Eq.~\eqref{eq:structured-phi}.

For \(\ell\ne\ell'\), choose a coordinate on which their difference is
nonzero.  Conditional on all other energies,
\((\ell-\ell')\cdot J=0\) fixes at most one value of that energy.
Absolute continuity makes its conditional probability zero.
There are finitely many pairs of vectors, so the union of these events
also has probability zero.  The same argument with \(\ell'=0\)
excludes additional zero frequencies.
The temporal identities now follow from
Lemma~\ref{lem:Cesaro-orthogonality}.  Distinct ordered pairs can still
have the same vector \(\ell\); their structural resonances have already
been included in \(W_\ell\).
\end{proof}

\begin{lemma}[Reference phases measure the collision weight exactly]
\label{lem:reference-collision}
For fixed \(V\), let the superscript ``ref'' denote expectation over the
independent uniform energy law.  Then
\begin{align}
 \mathbb E_J^{\rm ref}|\varphi_S(\pi;J,V)|^2&=\beta(V),\\
 \mathbb E_J^{\rm ref}\varphi_S(\pi;J,V)&=W_0(V).
 \label{eq:reference-collision-identities}
\end{align}
Consequently, under Eq.~\eqref{eq:reference-design-condition},
\begin{equation}
 \mathbb E_V\beta(V)\le b_N,\qquad
 \mathbb E_VW_0(V)\le\varepsilon,\qquad
 b_N:=\varepsilon+\frac1{q(D^2-1)}.
 \label{eq:plateau-design-bound}
\end{equation}
These conclusions about \(V\) remain valid when the actual energy
distribution is changed.
\end{lemma}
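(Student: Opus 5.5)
The plan is to start from the structural representation in Lemma~\ref{lem:structural-frequencies}, Eq.~\eqref{eq:structured-phi}, which is valid for every fixed $J$ and $V$:
\[
 \varphi_S(\pi;J,V)=\sum_{\ell\in\mathcal I}W_\ell(V)e^{-i\pi\ell\cdot J}.
\]
Under the reference law the variables $e^{-i\pi J_{i,u}}$ are independent and uniform on the unit circle. Each $\ell\in\mathcal I$ has integer entries in $\{-1,0,1\}$; at most two of them are nonzero in each patch, and whenever $z_i\ne w_i$ exactly one $+1$ and one $-1$ appear. Hence $\mathbb E_J^{\rm ref}e^{-i\pi\ell\cdot J}=\prod_{(i,u)}\mathbb E\,e^{-i\pi\ell_{i,u}J_{i,u}}=\mathbbm1_{\{\ell=0\}}$, because a uniform phase raised to a nonzero integer power has mean zero. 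Taking the reference expectation term by term gives $\mathbb E_J^{\rm ref}\varphi_S(\pi)=W_0(V)$.

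For the second moment I will expand $|\varphi_S(\pi)|^2=\sum_{\ell,\ell'}W_\ell W_{\ell'}e^{-i\pi(\ell-\ell')\cdot J}$. Here $\ell-\ell'$ is again an integer vector, and it is nonzero exactly when $\ell\ne\ell'$. The same product argument kills every off-diagonal term and leaves $\sum_\ell W_\ell^2=\beta(V)$. The one point to be careful about is that the identification is by the vector $\ell$ and not by the numerical frequency. This is precisely why $W_\ell$ was defined by grouping ordered pairs with equal vectors, and it means no almost-sure nondegeneracy argument is needed at time $\pi$.

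Next I will average over $V$. Write $U=e^{-i\pi H_N}$. Since $V$ and $J$ are independent, the joint law of $U$ is the ensemble appearing in Theorem~\ref{thm:Cui-design-input}. Fubini then gives $\mathbb E_V\beta(V)=\mathbb E_{V,J}^{\rm ref}|\varphi_S(U)|^2\le \varepsilon+1/[q(D^2-1)]=b_N$ by Eq.~\eqref{eq:design-suppresses-phi}. For the first moment, $W_0(V)\ge0$, so $\mathbb E_VW_0=\mathbb E\varphi_S(U)\le|\mathbb E\varphi_S(U)|\le\varepsilon$ by Eq.~\eqref{eq:design-controls-first}. I also need to check that $\varphi_S(U)$ in Lemma~\ref{lem:design-suppresses-phi} coincides with $\varphi_S(\pi;J,V)$, which is the sign convention $U=e^{-itH}$ at $t=\pi$. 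Since $\varphi_S$ is real, as shown in that lemma's proof, the convention does not matter.

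Finally, $\beta(V)$ and $W_0(V)$ are functions of $V$ alone, and the law of $V$ does not change when the energy law changes. The bounds therefore persist for any energy distribution independent of $V$. I expect the main obstacle to be justifying the use of the design input at the reference law specifically, via the two-copy moment reading of Theorem~\ref{thm:Cui-design-input}; everything else is bookkeeping.
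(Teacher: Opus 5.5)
Your proposal is correct and follows the paper's proof essentially step for step. At $t=\pi$ the uniform phases annihilate every nonzero integer vector, so the first moment reduces to $W_0(V)$ and the second moment to the diagonal sum $\beta(V)$. Averaging over $V$ with the reference-time design input and the two-copy moment lemma, and using $W_0\ge 0$, gives the stated bounds, which persist under a change of energy law because $\beta$ and $W_0$ depend on $V$ alone.
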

\begin{proof}
For an integer \(k\) and \(J\) uniform on \([-1,1]\),
\[
 \mathbb E e^{-itkJ}=
 \begin{cases}\sin(kt)/(kt),&kt\ne0,\\1,&kt=0.\end{cases}
\]
At \(t=\pi\) this equals zero for nonzero \(k\).
Independence applied to Eq.~\eqref{eq:structured-phi} leaves just
\(\ell=0\) in the first moment.  In its squared modulus it leaves just
\(\ell=\ell'\), because all vector differences are integer.
This proves both identities.  Average over \(V\), apply
Theorem~\ref{thm:Cui-design-input} and
Lemma~\ref{lem:design-suppresses-phi}, and note \(W_0\ge0\).
The last assertion holds because \(W_\ell,\beta\) are functions of \(V\)
alone, not of any energy variables.
\end{proof}

\subsection{Dimension-free scalar dephasing, including nonuniform spectra}

\begin{lemma}[Fourier estimate for bounded regular energy densities]
\label{lem:bounded-density-fourier}
Suppose a random energy has a probability density \(p\) on \([-1,1]\)
which is absolutely continuous there, with derivative in \(L^1\).
If
\[
 |p(-1)|+|p(1)|+\int_{-1}^{1}|p'(x)|dx\le L_0,
\]
then its characteristic function \(g(u)=\int_{-1}^{1}p(x)e^{-iux}dx\)
satisfies
\[
 |g(u)|\le\min\{1,L_0/|u|\}\quad(u\ne0).
\]
The uniform density has \(L_0=1\).
\end{lemma}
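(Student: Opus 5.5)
The bound $|g(u)|\le1$ is immediate: since $p\ge0$ integrates to one, $|g(u)|\le\int_{-1}^1 p(x)\,dx=1$. For the decay bound we integrate by parts once. Because $p$ is absolutely continuous on $[-1,1]$ with $p'\in L^1$, the product $p(x)e^{-iux}$ is absolutely continuous there for fixed $u\ne0$. The fundamental theorem of calculus for absolutely continuous functions then gives
\[
 g(u)=\left[\frac{p(x)e^{-iux}}{-iu}\right]_{-1}^{1}
 +\frac1{iu}\int_{-1}^{1}p'(x)e^{-iux}\,dx .
\]

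Taking moduli, and using that $|e^{-iux}|=1$, we obtain
\[
 |g(u)|\le\frac{|p(1)|+|p(-1)|+\int_{-1}^1|p'(x)|\,dx}{|u|}\le\frac{L_0}{|u|}.
\]
Combining this with the trivial bound gives $\min\{1,L_0/|u|\}$. Note that the density is only defined on $[-1,1]$. The boundary values therefore enter as jump contributions rather than being assumed to vanish, which is why they appear in $L_0$.

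For the uniform density $p=\tfrac12$ on $[-1,1]$, we have $p'=0$ and $|p(-1)|+|p(1)|=1$, so $L_0=1$. This is consistent with $|g(u)|=|\sin u/u|\le 1/|u|$.

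The only point needing care is justifying the integration by parts for merely absolutely continuous $p$. The standard product rule for absolutely continuous functions on a compact interval covers this, so there is no real obstacle.
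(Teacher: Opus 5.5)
Your proposal is correct and follows essentially the same route as the paper: the trivial bound from nonnegativity and normalisation, a single integration by parts with the boundary terms kept, and a direct check for the uniform density. Your explicit justification of the integration by parts, via the product rule for absolutely continuous functions on a compact interval, is a detail the paper leaves implicit.
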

\begin{proof}
The bound by one follows from nonnegativity and normalization of \(p\).
For \(u\ne0\), integration by parts gives
\[
 g(u)=\left[\frac{p(x)e^{-iux}}{-iu}\right]_{-1}^{1}
       +\frac1{iu}\int_{-1}^{1}p'(x)e^{-iux}dx.
\]
Taking absolute values proves the stated estimate.
For \(p=1/2\), the boundary terms sum to one and the derivative vanishes.
\end{proof}

\begin{theorem}[Scalar dephasing with no Hilbert-space or patch-count prefactor]
\label{thm:scalar-dephasing}
Let all energies be independent, independent of \(V\), and have densities
satisfying Lemma~\ref{lem:bounded-density-fourier} with a common fixed
\(L_0\ge1\).  Put
\[
 r_{L_0}(t)=
 \begin{cases}
 1,&|t|\le L_0,\\
 L_0^2/t^2,&|t|>L_0.
 \end{cases}
\]
For each fixed \(V\),
\begin{align}
 \left|\mathbb E_J|\varphi_S(t)|^2-\beta(V)\right|
 &\le r_{L_0}(t)\,[1-\beta(V)],\\
 \left|\mathbb E_J|\widetilde\varphi_S(t)|^2
          -[\beta(V)-W_0(V)^2]\right|
 &\le r_{L_0}(t).
 \label{eq:scalar-dephasing}
\end{align}
In particular, under the reference-design hypotheses,
\begin{equation}
 \mathbb E|\varphi_S(t)|^2\le b_N+r_{L_0}(t),\qquad
 \mathbb E|\widetilde\varphi_S(t)|^2\le b_N+r_{L_0}(t).
 \label{eq:scalar-dephasing-averaged}
\end{equation}
For the uniform law this gives a remainder \(\min\{1,t^{-2}\}\).
\end{theorem}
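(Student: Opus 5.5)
The plan is to expand the squared correlation in the structural frequency representation of Lemma~\ref{lem:structural-frequencies} and average over the independent energies term by term. By Eq.~\eqref{eq:structured-phi},
\[
 |\varphi_S(t)|^2=\sum_{\ell,\ell'\in\mathcal I}W_\ell(V)W_{\ell'}(V)\,e^{-it(\ell-\ell')\cdot J}.
\]
The weights depend only on \(V\), and the \(J_{i,u}\) are independent of \(V\) and of one another. The expectation of each exponential therefore factorises as
\[
 \prod_{(i,u)}g_{i,u}\bigl(t(\ell-\ell')_{i,u}\bigr),
\]
where \(g_{i,u}\) is the characteristic function of \(J_{i,u}\). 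The diagonal terms \(\ell=\ell'\) contribute exactly \(\sum_\ell W_\ell^2=\beta(V)\). It remains to bound every off-diagonal term in modulus by \(r_{L_0}(t)\,W_\ell W_{\ell'}\).

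The key structural step is a parity observation about the vectors \(\ell\). For each patch \(i\),
\[
 \sum_u\ell^{\mathbf z\mathbf w}_{i,u}=1-1=0 .
\]
The same per-patch sum rule then holds for every difference \(k=\ell-\ell'\). If \(k\ne0\), choose a nonzero coordinate \(k_{i,u}\). The zero patch sum forces a second nonzero coordinate \(k_{i,u'}\) with \(u'\ne u\) in the same patch.

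Both nonzero coordinates are integers of modulus at least one. Lemma~\ref{lem:bounded-density-fourier} gives, for each of them,
\[
 |g(tk)|\le\min\{1,L_0/(|t||k|)\}\le\min\{1,L_0/|t|\}.
\]
All remaining factors are at most one in modulus. The product of these two factors is at most \(r_{L_0}(t)\) in both regimes \(|t|\le L_0\) and \(|t|>L_0\). Summing over \(\ell\ne\ell'\) and using \(W_\ell\ge0\) and \(\sum_\ell W_\ell=1\) gives
\[
 \sum_{\ell\ne\ell'}W_\ell W_{\ell'}=1-\beta(V),
\]
which proves the first inequality.

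For the stationary-subtracted correlation, I will use the almost-sure identification from Lemma~\ref{lem:structural-frequencies}: outside a null set,
\[
 \widetilde\varphi_S(t)=\sum_{\ell\ne0}W_\ell e^{-it\ell\cdot J}.
\]
Null sets do not affect \(\mathbb E_J\). The same computation applies with \(\ell,\ell'\) restricted to \(\mathcal I\setminus\{0\}\). The diagonal part is \(\sum_{\ell\ne0}W_\ell^2=\beta-W_0^2\). The off-diagonal part is bounded by \(r_{L_0}(t)(\sum_{\ell\ne0}W_\ell)^2\le r_{L_0}(t)\).

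For Eq.~\eqref{eq:scalar-dephasing-averaged}, I will take \(\mathbb E_V\) and use \(1-\beta\le1\) and \(\beta-W_0^2\le\beta\). Lemma~\ref{lem:reference-collision} then gives \(\mathbb E_V\beta\le b_N\), and this bound holds regardless of the actual energy law. The uniform case follows from \(L_0=1\), for which \(r_1(t)=\min\{1,t^{-2}\}\).

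The only delicate point is the per-patch sum rule, which supplies two decaying Fourier factors rather than one. A single factor would give only \(L_0/|t|\), whose time integral diverges. Everything else is bookkeeping, and no factor of \(D\) or \(M\) appears because the weights sum to one.
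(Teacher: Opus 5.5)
Your proposal is correct and follows the paper's proof essentially step for step. You expand via the structural representation, factorise over the independent energies, and use the per-patch zero-sum rule to obtain two decaying Fourier factors. You then sum the nonnegative off-diagonal weights to $1-\beta(V)$, or to at most one after removing $\ell=0$. You bound that restricted sum by $(\sum_{\ell\ne0}W_\ell)^2$ instead of the paper's exact value $(1-W_0)^2-(\beta-W_0^2)$, and both are at most one. Finally you average over $V$ using $\mathbb E_V\beta\le b_N$.
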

\begin{proof}
Expanding Eq.~\eqref{eq:structured-phi} gives
\[
 \mathbb E_J|\varphi_S(t)|^2
 =\sum_{\ell,\ell'}W_\ell W_{\ell'}
      \prod_{i,u}g_{i,u}\bigl(t(\ell_{i,u}-\ell'_{i,u})\bigr).
\]
For \(\ell=\ell'\), the product equals one and the corresponding sum is
\(\beta(V)\).  If \(\ell\ne\ell'\), choose a patch on which
\(m_{i,u}=\ell_{i,u}-\ell'_{i,u}\) is nonzero.  Each vector satisfies
\(\sum_u\ell_{i,u}=0\), so \(\sum_um_{i,u}=0\).
Therefore at least two coordinates of \(m\) are nonzero.  These
coordinates refer to independent energy variables; each nonzero integer
has modulus at least one.  For \(|t|>L_0\), at least two factors in
the product consequently have modulus at most \(L_0/|t|\), and all
remaining factors have modulus at most one.  For \(|t|\le L_0\) every
factor has modulus at most one.  The product is thus bounded by
\(r_{L_0}(t)\) in all cases.

The coefficients \(W_\ell W_{\ell'}\) are nonnegative.  Their sum over
\(\ell\ne\ell'\) is
\((\sum_\ell W_\ell)^2-\sum_\ell W_\ell^2=1-\beta(V)\).
Taking absolute values of this finite off-diagonal sum proves the first
inequality.

For \(\widetilde\varphi_S\), use
Lemma~\ref{lem:structural-frequencies} to restrict both indices to
nonzero vectors.  The diagonal sum is \(\beta-W_0^2\); the remaining
coefficients sum to
\[
 (1-W_0)^2-(\beta-W_0^2)\le1.
\]
The identical product estimate proves the second inequality.
The exceptional probability-zero set in that lemma does not change any
expectation.  Finally average over \(V\) and use
Eq.~\eqref{eq:plateau-design-bound}.
\end{proof}

\subsection{Ensemble relaxation and probability bounds}

\begin{theorem}[Autocorrelation and every-window relaxation at rate \(b_N+1/T\)]
\label{thm:generic-nearly-local-AS}
Assume the reference-design condition and the independent energy-density
hypotheses of Theorem~\ref{thm:scalar-dephasing}.  Then for every \(T>0\),
\begin{align}
 \mathbb E A_S(T)&\le b_N+\frac{2L_0}{T},\\
 \mathbb E\mathcal A_S(T;H)&\le b_N+\frac{2L_0}{T},\\
 \mathbb E\mathcal F_S(T;H)&\le b_N+\frac{4L_0}{T},\\
 \mathbb E\mathscr E_H(T)
      \le\mathfrak c_d\left(b_N+\frac{4L_0}{T}\right).
 \label{eq:structured-every-window-main}
\end{align}
Moreover,
\[
 \mathbb E\pi_{0,S}\le\varepsilon,\qquad
 \mathbb E\pi_{0,S}^2\le\varepsilon.
\]
The expectation in Eq.~\eqref{eq:structured-every-window-main} is outside
both the starting-time supremum and the input supremum.
\end{theorem}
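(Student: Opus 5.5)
The plan is to integrate the pointwise dephasing estimate of Theorem~\ref{thm:scalar-dephasing} in time and then convert the result into window errors with Theorem~\ref{thm:positive-kernel-return}. First, by Eq.~\eqref{eq:scalar-dephasing-averaged}, $\mathbb E|\widetilde\varphi_S(t)|^2\le b_N+r_{L_0}(t)$ and $\mathbb E|\varphi_S(t)|^2\le b_N+r_{L_0}(t)$ for every $t$. These are nonnegative functions, so Tonelli allows the expectation to pass through the time integrals defining $A_S(T)$, $\mathcal A_S(T;H)$ and $\mathcal F_S(T;H)$.

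The key elementary computation is $\int_0^\infty r_{L_0}(t)\,dt=L_0+\int_{L_0}^\infty L_0^2t^{-2}dt=2L_0$. For the rectangular means, $A_S(T)$ and $\mathcal A_S(T;H)$ are averages over $[0,T]$ by evenness. Hence
\[
\mathbb E A_S(T)\le b_N+\frac1T\int_0^T r_{L_0}\le b_N+\frac{2L_0}{T},
\]
and the same bound holds for $\mathcal A_S$. For the Fej\'er mean, the triangular weight is at most $1/T$ on $[-T,T]$, so $\mathbb E\mathcal F_S(T;H)\le b_N+\frac1T\int_{-T}^T r_{L_0}\le b_N+4L_0/T$. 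Here $b_N$ survives with coefficient exactly one, because the weight has unit mass.

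Next, Eq.~\eqref{eq:positive-kernel-full} gives $\mathscr E_H(T)\le\mathfrak c_d\,\mathcal F_S(T;H)$ for each realization. This bound is deterministic and already contains the suprema over window origins and inputs, so taking expectations yields Eq.~\eqref{eq:structured-every-window-main}. Measurability of $\mathscr E_H(T)$ is supplied by Lemma~\ref{lem:window-measurability}, and this ordering is why the expectation sits outside both suprema.

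For the zero mode, Lemma~\ref{lem:structural-frequencies} gives $\pi_{0,S}=W_0(V)$ almost surely; this requires only absolute continuity of the energies, which the density hypotheses ensure. Lemma~\ref{lem:reference-collision} then gives $\mathbb E\pi_{0,S}=\mathbb E_VW_0\le\varepsilon$, since $W_0$ depends only on $V$. Finally, $0\le\pi_{0,S}\le1$ implies $\pi_{0,S}^2\le\pi_{0,S}$, so $\mathbb E\pi_{0,S}^2\le\varepsilon$.

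The only delicate point is that the dephasing remainder for $\widetilde\varphi_S$ was controlled relative to $\beta-W_0^2$ rather than $\beta$. The inequality $\beta-W_0^2\le\beta$ makes this harmless, and the rest is bookkeeping.
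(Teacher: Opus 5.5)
Your proposal is correct and follows the paper's proof essentially step for step. You integrate the averaged dephasing bounds $\mathbb E|\widetilde\varphi_S(t)|^2\le b_N+r_{L_0}(t)$ and $\mathbb E|\varphi_S(t)|^2\le b_N+r_{L_0}(t)$ using $\int_0^T r_{L_0}\le 2L_0$, with unit mass of the triangular weight handling $b_N$ and its $1/T$ bound handling the remainder; you then apply $\mathscr E_H(T)\le\mathfrak c_d\,\mathcal F_S(T;H)$ for each realization before taking expectations, and obtain the zero-mode bounds from $\pi_{0,S}=W_0(V)$ almost surely, $\mathbb E_V W_0\le\varepsilon$, and $W_0^2\le W_0$.
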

\begin{proof}
For \(T\le L_0\), \(\int_0^T r_{L_0}(t)dt=T\le2L_0\).
For \(T>L_0\), direct integration gives
\[
 \int_0^T r_{L_0}(t)dt
 =L_0+L_0^2\left(\frac1{L_0}-\frac1T\right)
 =2L_0-\frac{L_0^2}{T}\le2L_0.
\]
All squared correlations are nonnegative and bounded by one, so
interchanging the finite time integral and expectation is justified.
Equation~\eqref{eq:scalar-dephasing-averaged} and the last integral
estimate prove the two rectangular bounds.  The triangular weight has
normalized integral one and is at most one, so the same argument with
its \(1/T\) normalization gives the third bound.
Apply Eq.~\eqref{eq:positive-kernel-full} for each individual
Hamiltonian \emph{before} taking expectation to obtain the last bound.

The zero-mode statements follow from
\(\pi_{0,S}=W_0\) almost surely,
Eq.~\eqref{eq:plateau-design-bound}, and \(0\le W_0^2\le W_0\le1\).
\end{proof}

\begin{corollary}[Explicit polynomial rates and high probability]
\label{cor:generic-nearly-local-bands}
Fix \(b,c>0\).  Choose the reference error \(\varepsilon=N^{-b}\),
logarithmic patches as in Lemma~\ref{lem:admissible-patches}, and a common
density constant \(L_0\) independent of \(N\).  At \(T_N=N^c\),
\begin{equation}
 \mathbb E A_S(T_N)
     =O_{b,L_0}\!\left(N^{-\min\{b,c\}}\right),\qquad
 \mathbb E\mathscr E_H(T_N)
     =O_{s,b,L_0}\!\left(N^{-\min\{b,c\}}\right).
 \label{eq:structured-polynomial-rates}
\end{equation}
For every \(0<a<\min\{b,c\}\),
\begin{equation}
 \mathbb P[\mathscr E_H(N^c)>N^{-a}]
 \le \mathfrak c_d
       \left(N^{a-b}+\frac{N^a}{q(2^{2N}-1)}
                   +4L_0N^{a-c}\right).
 \label{eq:structured-high-probability}
\end{equation}
For example, \(b=c=2\) gives expected every-window mean-square error
\(O_s(N^{-2})\), and error at most \(N^{-1}\) with failure probability
\(O_s(N^{-1})\), on windows \(N^2\).
\end{corollary}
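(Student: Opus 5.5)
The rates follow by substituting $\varepsilon=N^{-b}$ and $T=T_N=N^c$ into Theorem~\ref{thm:generic-nearly-local-AS}. The probability bound then follows from Markov's inequality applied to the measurable functional $\mathscr E_H(N^c)$.

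\emph{Parameters.} By Lemma~\ref{lem:admissible-patches}, the reference-design condition Eq.~\eqref{eq:reference-design-condition} holds with $\varepsilon=N^{-b}$ along the admissible sizes. Hence
\[
b_N=N^{-b}+\frac1{q(D^2-1)}, \qquad D=2^N .
\]

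\emph{Expected rates.} Theorem~\ref{thm:generic-nearly-local-AS} gives
\[
\mathbb E A_S(N^c)\le N^{-b}+\frac1{q(2^{2N}-1)}+2L_0N^{-c}.
\]
The middle term is exponentially small, so it is $O(N^{-b})$ with an absolute constant. The whole bound is therefore $O_{b,L_0}(N^{-\min\{b,c\}})$. Likewise,
\[
\mathbb E\mathscr E_H(N^c)\le \mathfrak c_d\bigl(b_N+4L_0N^{-c}\bigr).
\]
Since $\mathfrak c_d$ depends only on $s$, this gives the second rate in Eq.~\eqref{eq:structured-polynomial-rates}.

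\emph{High probability.} By Lemma~\ref{lem:window-measurability}, $\mathscr E_H(N^c)$ is a nonnegative Borel function of the random Hamiltonian, taking values in $[0,1]$. Markov's inequality at level $u=N^{-a}$ gives
\[
\mathbb P\bigl[\mathscr E_H(N^c)>N^{-a}\bigr]\le N^{a}\,\mathbb E\mathscr E_H(N^c)\le \mathfrak c_d\Bigl(N^{a-b}+\frac{N^a}{q(2^{2N}-1)}+4L_0N^{a-c}\Bigr),
\]
which is Eq.~\eqref{eq:structured-high-probability}. The condition $a<\min\{b,c\}$ makes each term vanish. The event concerns all inputs and window origins simultaneously, because the suprema sit inside $\mathscr E_H$ and the expectation is taken outside them.

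\emph{Example.} Take $b=c=2$. The expected error is $O_s(N^{-2})$. Choosing $a=1$ gives failure probability
\[
\mathfrak c_d\bigl(N^{-1}+o(N^{-1})+4L_0N^{-1}\bigr)=O_s(N^{-1}).
\]

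\emph{Main obstacle.} The only delicate point is the order of expectation and suprema. Theorem~\ref{thm:generic-nearly-local-AS} already applies the positive-kernel bound realization-wise before averaging, which is what legitimises Markov's inequality here. No further work is needed beyond tracking the constants.
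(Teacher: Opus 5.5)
Your proposal is correct and follows the paper's proof essentially verbatim: substitute $\varepsilon=N^{-b}$, $T=N^c$ and $D=2^N$ into Theorem~\ref{thm:generic-nearly-local-AS}, then apply Markov's inequality at threshold $N^{-a}$ to the nonnegative Borel variable $\mathscr E_H(N^c)$, whose suprema sit inside the expectation. One small imprecision: over general $N$, the constant absorbing $1/[q(2^{2N}-1)]$ into $N^{-b}$ depends on $b$ and is not absolute, but this is harmless because the claimed rate is only $O_{b,L_0}$.
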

\begin{proof}
Substitute \(T=N^c\), \(D=2^N\), and
\(b_N=N^{-b}+1/[q(2^{2N}-1)]\) into
Theorem~\ref{thm:generic-nearly-local-AS}.  The exponential term is
smaller than any fixed inverse power of \(N\).  Apply the elementary
Markov inequality to the nonnegative Borel random variable
\(\mathscr E_H(N^c)\), with threshold \(N^{-a}\).
Setting \(b=c=2\) and \(a=1\) gives the example.
\end{proof}

\begin{proposition}[The \(1/T\) dependence is sharp for the every-window criterion]
\label{prop:structured-short-time-lower}
In the structured ensemble, fix \(s\ge1\) qubits, put \(d=2^s\) and
\(a_d=1-1/d\).  Every realization satisfies, for \(T\ge1\),
\begin{equation}
 \mathscr E_H(T)\ge\frac{a_d^3}{40s\,T}.
 \label{eq:structured-short-time-lower}
\end{equation}
Consequently, for \(\varepsilon=N^{-b}\) and \(1\le T\le N^b\),
\(\mathbb E\mathscr E_H(T)=\Theta_{s,b,L_0}(T^{-1})\) for all
sufficiently large admissible \(N\).
This lower bound records the unavoidable initial transient in a
criterion that includes the window starting at zero; it is not a
pointwise long-time lower bound.
\end{proposition}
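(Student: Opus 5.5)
We use the window with origin $t_0=0$, which is included in the supremum defining $\mathscr E_H(T)$, so a lower bound for $\frac1T\int_0^T e_H(t)\,dt$ suffices. Fix a pure input $\rho_S=\ket\psi\!\bra\psi$. Its eigenvalues are $1$ and $0$ (multiplicity $d-1$), so $\Dt(\rho_S,\Id_S/d)=\frac12[(1-1/d)+(d-1)/d]=a_d$.

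The first step is a Lipschitz bound in time. By the triangle inequality and contractivity of the partial trace,
\[
\Dt(\Phi_t(\rho_S),\rho_S)\le\tfrac12\|e^{-itH}\rho_0e^{itH}-\rho_0\|_1\le\tfrac12 t\,\|[H,\rho_0]\|_1 .
\]
By Proposition~\ref{prop:structured-scope}, terms $h_i$ whose support is disjoint from $S$ commute with $\rho_0=\rho_S\otimes\Id_B/m$. Here we use that the bath state is the identity, and that $h_i$ is supported away from $S$. Each qubit of $S$ lies in at most five term supports, so at most $5s$ terms meet $S$, each of norm at most one. For such a term, $\|[h_i,\rho_0]\|_1\le2\|h_i\|_\infty\|\rho_0\|_1\le2$. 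Hence $\|[H,\rho_0]\|_1\le10s$, and therefore $\Dt(\Phi_t(\rho_S),\rho_S)\le5st$.

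The second step is the reverse triangle inequality:
\[
\Dt(\Phi_t(\rho_S),\Id_S/d)\ge a_d-5st .
\]
Put $t_*=a_d/(10s)\le1/20<1\le T$. On $[0,t_*]$ this gives $e_H(t)\ge(a_d/2)^2$, and $[0,t_*]\subset[0,T]$. Integrating,
\[
\mathscr E_H(T)\ge\frac1T\int_0^{t_*}\frac{a_d^2}{4}\,dt=\frac{a_d^3}{40sT}.
\]
This is a deterministic bound valid for every realization. A sharper integral of $(a_d-5st)^2$ is possible but not needed.

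For the $\Theta$ statement, the lower bound follows by taking expectations. The upper bound uses Theorem~\ref{thm:generic-nearly-local-AS}:
\[
\mathbb E\mathscr E_H(T)\le\mathfrak c_d\bigl(b_N+4L_0/T\bigr),\qquad b_N=N^{-b}+1/[q(D^2-1)].
\]
For $T\le N^b$ we have $N^{-b}\le1/T$. For large $N$ the exponentially small term is also at most $1/T$, since $1/T\ge N^{-b}$. Hence $\mathbb E\mathscr E_H(T)\le\mathfrak c_d(2+4L_0)/T$.

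The only delicate point is the commutation claim in the first step. Namely, terms not meeting $S$ commute with $\rho_S\otimes\Id_B$, and the count uses term supports rather than the number of qubits per term. This is exactly the bounded-incidence statement of Proposition~\ref{prop:structured-scope}.
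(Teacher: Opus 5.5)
Your proposal is correct and follows the paper's proof essentially step for step. Both arguments bound $\|[H,\rho_0]\|_1\le 10s$ using the fact that at most $5s$ terms meet $S$, obtain $a_d-5st\ge a_d/2$ on $[0,a_d/(10s)]$ inside the window starting at zero, and get the matching upper bound from Theorem~\ref{thm:generic-nearly-local-AS} together with $N^{-b}\le 1/T$ and the exponentially small Haar term.
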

\begin{proof}
Choose a pure subsystem input \(\rho_S\) and
\(\rho_0=\rho_S\otimes\Id_B/m\).
A term whose support misses \(S\) commutes with \(\rho_0\).
There are at most \(5s\) terms meeting \(S\), each of norm at most one,
by Proposition~\ref{prop:structured-scope}.  The triangle inequality and
\(\|A X\|_1\le\|A\|_\infty\|X\|_1\) therefore give
\[
 \|[H,\rho_0]\|_1
 \le\sum_{i:\,\operatorname{supp}h_i\cap S\ne\varnothing}
                 2\|h_i\|_\infty\|\rho_0\|_1
 \le10s.
\]
Unitary evolution satisfies
\([H,\rho(t)]=e^{-itH}[H,\rho_0]e^{itH}\).
Integrating its derivative thus gives
\(\|\rho(t)-\rho_0\|_1\le10s\,t\) for \(t\ge0\).

Partial trace contracts trace norm on Hermitian differences.
For completeness, if \(X=X_+-X_-\) is its positive/negative
spectral decomposition, positivity of partial trace and the triangle
inequality give
\[
 \|\Tr_BX\|_1
 \le\Tr(\Tr_BX_+)+\Tr(\Tr_BX_-)
 =\Tr X_++\Tr X_-=\|X\|_1.
\]
Hence \(\Dt(\Phi_t(\rho_S),\rho_S)\le5s\,t\).
The eigenvalues of \(\rho_S-\Id_S/d\) are \(1-1/d\) and
\(d-1\) copies of \(-1/d\), so
\(\Dt(\rho_S,\Id_S/d)=a_d\).
The triangle inequality yields
\[
 \Dt(\Phi_t(\rho_S),\Id_S/d)\ge a_d-5s\,t.
\]
For \(0\le t\le t_*=a_d/(10s)\), this distance is at least \(a_d/2\).
Since \(t_*\le1\), the length-\(T\) window starting at zero contains this
whole interval whenever \(T\ge1\).  Its average squared error is
therefore at least \(t_*a_d^2/(4T)=a_d^3/(40sT)\).
This proves Eq.~\eqref{eq:structured-short-time-lower}.

For \(T\le N^b\), the plateau \(N^{-b}\) is at most \(1/T\).
The exponential Haar term is also at most \(1/T\) for sufficiently large
\(N\).  The upper bound in
Theorem~\ref{thm:generic-nearly-local-AS} is consequently
\(O_{s,b,L_0}(1/T)\); the lower bound just proved holds for every
realization and hence after expectation.
\end{proof}

\begin{corollary}[One event covers every longer window and every local input]
\label{cor:all-longer-windows}
For a fixed Hamiltonian, if \(\mathscr E_H(T)\le u\), then for every
\(L\ge T\), every \(t_0\ge0\), and every trace-distance tolerance
\(\eta>0\),
\begin{equation}
 \frac1L\meas\left\{t\in[t_0,t_0+L]:
       \sup_{\rho_S}\Dt(\Phi_t(\rho_S),\Id_S/d)\ge\eta\right\}
 \le\frac{2u}{\eta^2}.
 \label{eq:common-good-times}
\end{equation}
For \(L=T\) the factor two may be omitted.
\end{corollary}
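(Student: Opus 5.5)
The statement to prove is Corollary~\ref{cor:all-longer-windows}. The plan has two steps. First, reduce every window of length \(L\ge T\) to windows of length \(T\), which are controlled by the hypothesis \(\mathscr E_H(T)\le u\). Second, convert the resulting mean-square bound into a bad-time fraction with Lemma~\ref{lem:bad-time-fraction-new}.

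Put \(f(t)=\sup_{\rho_S}\Dt(\Phi_t(\rho_S),\Id_S/d)\), so that \(f(t)^2=e_H(t)\). The hypothesis says that every window \([s,s+T]\) with \(s\ge0\) satisfies \(\int_s^{s+T}e_H\,dt\le uT\).

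Fix \(L\ge T\) and \(t_0\ge0\), and let \(k=\lceil L/T\rceil\). Then \(k\ge1\) and \(kT<L+T\le 2L\). Cover \([t_0,t_0+L]\) by the \(k\) consecutive windows \([t_0+jT,t_0+(j+1)T]\), \(j=0,\ldots,k-1\). All of them have nonnegative origins, so the hypothesis applies to each. Since \(e_H\ge0\), extending the last window past \(t_0+L\) only increases the integral. Hence
\[
\frac1L\int_{t_0}^{t_0+L}e_H(t)\,dt\le\frac{kTu}{L}\le 2u .
\]
The bound \(kT\le 2L\) is the one step that needs care, and it is where \(L\ge T\) is used: \(kT<L+T\le 2L\). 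When \(L=T\) we have \(k=1\) and \(kT/L=1\), so the factor two disappears.

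Finally, apply Lemma~\ref{lem:bad-time-fraction-new} to \(f\) on \([t_0,t_0+L]\), with \(T\) replaced by \(L\) and \(E_L=2u\), or \(E_L=u\) when \(L=T\). This gives the fraction bound \(2u/\eta^2\), respectively \(u/\eta^2\). Because \(f\) already contains the supremum over inputs, the set of good times is the same for every \(\rho_S\).
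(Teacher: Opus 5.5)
Your proof is correct and follows essentially the paper's argument. The paper writes \(L=kT+r\) and bounds the remainder by one extra length-\(T\) window, obtaining \((k+1)Tu\le 2Lu\), which is equivalent to your covering by \(\lceil L/T\rceil\) windows; it then applies the same Chebyshev-type step, noting that \(e_H(t)\ge\eta^2\) on the bad set.
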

\begin{proof}
Write \(L=kT+r\), where \(k\ge1\) is an integer and \(0\le r<T\).
Split the integral of \(e_H\) into \(k\) complete windows and a final
interval of length \(r\).  The latter integral is at most that over the
length-\(T\) window starting at the same point, because \(e_H\ge0\).
Thus the whole integral is at most \((k+1)Tu\le2Lu\).
For \(L=T\), the definition gives \(Tu\) directly.
On the set in Eq.~\eqref{eq:common-good-times}, \(e_H(t)\ge\eta^2\).
Integrating this lower bound over that set proves the claim.
No union bound over a continuum of times or states is used.
\end{proof}

\begin{remark}[Several subsystems]
For a specified collection \(\mathcal S_N\) of at most \(N^r\)
subsystems of a fixed maximum size, a union bound applied to
Eq.~\eqref{eq:structured-high-probability} gives the corresponding
simultaneous statement with failure probability multiplied by \(N^r\).
For threshold \(N^{-a}\), the failure probability therefore tends
to zero whenever \(\min\{b,c\}>a+r\).  The maximum subsystem size
remains fixed.
\end{remark}

\subsection{Norm-small perturbations}

\begin{theorem}[Perturbation stability without resolving microscopic gaps]
\label{thm:positive-kernel-stability}
Let \(G,H\) be arbitrary finite Hamiltonians on the same Hilbert space
and suppose \(\|H-G\|_\infty\le\delta\).  For every \(T>0\),
\begin{equation}
 \mathscr E_H(T)
 \le\mathfrak c_d\left[
       2\mathcal F_S(T;G)+\frac43\delta^2T^2\right].
 \label{eq:positive-kernel-stability}
\end{equation}
In particular, if \(G\) is drawn from the structured ensemble and
\(H=G+W_N\), with \(W_N\) an arbitrary measurable perturbation,
possibly dependent on \(G\), and
\(\|W_N\|_\infty\le\delta_N\) for every realization, then
\begin{equation}
 \mathbb E\mathscr E_H(T)
 \le\mathfrak c_d\left[
       2b_N+\frac{8L_0}{T}+\frac43\delta_N^2T^2\right].
 \label{eq:structured-stability}
\end{equation}
The perturbation need not preserve the commuting terms or their
conserved projectors.
\end{theorem}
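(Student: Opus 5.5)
The plan is to combine Theorem~\ref{thm:positive-kernel-return} (the full-correlation form $\mathscr E_H(T)\le\mathfrak c_d\mathcal F_S(T;H)$) with a Duhamel comparison of the two complete-local autocorrelations $\varphi_S(t;H)$ and $\varphi_S(t;G)$ on the finite interval $[-T,T]$ only. The key point is that the positive-kernel theorem already converts correlation data on $[-T,T]$ into a bound valid uniformly over all window origins. We therefore never need to compare the two dynamics at late times, where $\|e^{-itH}-e^{-itG}\|$ could be of order one.

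\emph{Step 1 (Duhamel).} For each $a$,
\[
 \bigl|\langle O_a,\mathcal U_t^H O_a\rangle-\langle O_a,\mathcal U_t^G O_a\rangle\bigr|
 \le \|\mathcal U^H_t(O_a)-\mathcal U^G_t(O_a)\|_\infty\le 2\delta|t|,
\]
since the normalised trace of $O_aY$ is at most $\|Y\|_\infty$. The last bound comes from
\[
 e^{-itH}-e^{-itG}=-i\int_0^t e^{-i(t-s)H}(H-G)e^{-isG}\,ds,
\]
which has norm at most $\delta|t|$, and the difference of the two conjugations picks up two such factors. Averaging over $a$ gives $|\varphi_S(t;H)-\varphi_S(t;G)|\le2\delta|t|$.

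\emph{Step 2 (square and average).} Use $|x+y|^2\le2|x|^2+2|y|^2$ pointwise to get
\[
 |\varphi_S(t;H)|^2\le2|\varphi_S(t;G)|^2+8\delta^2t^2 .
\]
Integrate against the triangular weight $T^{-1}(1-|t|/T)_+$, whose second moment is $T^2/6$. This yields
\[
 \mathcal F_S(T;H)\le2\mathcal F_S(T;G)+\tfrac43\delta^2T^2 .
\]
Inserting this into the full bound of Theorem~\ref{thm:positive-kernel-return} gives Eq.~\eqref{eq:positive-kernel-stability}. No spectral structure of $H$ is used, so the perturbation may break commutativity or conserved projectors.

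\emph{Step 3 (ensemble).} Apply Eq.~\eqref{eq:positive-kernel-stability} realisation by realisation, before taking any expectation, and then take $\mathbb E$. Since $\|W_N\|_\infty\le\delta_N$ holds for every realisation, the last term is deterministic. The first term is controlled by the bound $\mathbb E\mathcal F_S(T;G)\le b_N+4L_0/T$ from Theorem~\ref{thm:generic-nearly-local-AS}, which gives Eq.~\eqref{eq:structured-stability}. Measurability of $\mathscr E_H$ for a measurable $W_N$ follows from Lemma~\ref{lem:window-measurability} composed with the measurable map $G\mapsto G+W_N$.

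The only delicate point is conceptual rather than computational. One must make sure the Duhamel estimate is applied only to the correlations on $[-T,T]$ and never to the late-window channels themselves. Uniformity in $t_0$ must come entirely from the positive-Fourier structure of $H$'s own spectral blocks, which holds for arbitrary $H$.
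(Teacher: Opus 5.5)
Your proposal follows the paper's proof step for step. The paper also uses a Duhamel comparison of $\varphi_S(\cdot;H)$ and $\varphi_S(\cdot;G)$ on $[-T,T]$ only, the pointwise bound $|\varphi_S(t;H)|^2\le2|\varphi_S(t;G)|^2+8\delta^2t^2$, and the triangular second moment $T^2/6$. It then applies Theorem~\ref{thm:positive-kernel-return} realisation by realisation and Theorem~\ref{thm:generic-nearly-local-AS} for the ensemble term.

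The one step that needs tightening is in Step~1. Writing $\|\mathcal U_t^H(O_a)-\mathcal U_t^G(O_a)\|_\infty\le2\delta|t|$ really gives $2\delta|t|\,\|O_a\|_\infty$, so it silently assumes $\|O_a\|_\infty\le1$. That holds for Pauli strings, so your ensemble part is fine. It does not hold for the paper's general Hermitian basis in the first, arbitrary-$d$ statement. For odd $d$, a traceless Hermitian $P_a$ with $\Tr P_a^2=d$ cannot satisfy $\|P_a\|_\infty\le1$, since that would force all eigenvalues to be $\pm1$ in equal numbers.

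There are two one-line repairs:
\begin{itemize}
\item Pass to a unitary orthonormal basis of $\calV_S$, such as Weyl operators. This is allowed because $\varphi_S$ is basis independent by Eq.~\eqref{eq:phi-basis-new}.
\item Follow the paper. Use $\|AX\|_{2,D}\le\|A\|_\infty\|X\|_{2,D}$ to get $\|e^{-itH}O_ae^{itH}-e^{-itG}O_ae^{itG}\|_{2,D}\le2\delta|t|$, then Cauchy--Schwarz with $\|O_a\|_{2,D}=1$. This needs only orthonormality, not any operator-norm bound.
\end{itemize}
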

\begin{proof}
For \(t\ge0\), differentiating
\(e^{-i(t-s)H}e^{-isG}\) and integrating from \(s=0\) to \(s=t\)
gives
\[
 \|e^{-itH}-e^{-itG}\|_\infty\le t\|H-G\|_\infty.
\]
For negative \(t\) apply the same argument to the adjoints.
For an operator \(O\), inserting and subtracting
\(e^{-itG}Oe^{itH}\), and using unitary invariance of the
Hilbert--Schmidt norm, gives
\[
 \|e^{-itH}Oe^{itH}-e^{-itG}Oe^{itG}\|_{2,D}
 \le2\delta |t|\,\|O\|_{2,D}.
\]
Every basis element \(O_a\) has normalized Hilbert--Schmidt norm one.
Cauchy--Schwarz in the defining trace and averaging over \(a\) therefore
give
\[
 |\varphi_S(t;H)-\varphi_S(t;G)|\le2\delta|t|.
\]
It follows that
\(|\varphi_S(t;H)|^2\le2|\varphi_S(t;G)|^2+8\delta^2t^2\).
The normalized second moment of the triangular weight is
\[
 \frac1T\int_{-T}^{T}
  \left(1-\frac{|t|}{T}\right)t^2dt
 =\frac2T\left(\frac{T^3}{3}-\frac{T^3}{4}\right)=\frac{T^2}{6}.
\]
Consequently
\(\mathcal F_S(T;H)\le
2\mathcal F_S(T;G)+(4/3)\delta^2T^2\).
Apply Theorem~\ref{thm:positive-kernel-return}, then take expectations
and use Theorem~\ref{thm:generic-nearly-local-AS}.
\end{proof}

\begin{remark}[Scope of the stability result]
At \(T=N^c\) and \(\delta_N\le N^{-v}\), the perturbation term is
\(O(N^{-2(v-c)})\); it is small when \(v>c\).
A sum of at most \(C N\) bounded local perturbations, each of norm at
most \(\lambda_N\), meets this hypothesis with
\(\delta_N=C N\lambda_N\).
Thus sufficiently small noncommuting perturbations are allowed, but
their required strength vanishes with \(N\).
This is not a theorem for generic order-one noncommuting couplings.
The uniformity in the starting time comes from the positive-kernel
certificate, not from a Duhamel estimate at the possibly enormous
starting time \(t_0\).
\end{remark}
\end{document}